\documentclass{article}
\usepackage{graphicx} 
\usepackage{geometry}
\usepackage[utf8]{inputenc}
\usepackage[english]{babel}
\usepackage{authblk}
\usepackage{multirow}
\usepackage{adjustbox}
\usepackage{ragged2e}
\usepackage{float}
\usepackage{enumitem}
\usepackage{titlesec}
\usepackage{stackrel}
\usepackage{physics}
\usepackage{dsfont} 
\usepackage{booktabs}
\usepackage{letltxmacro}
\usepackage{relsize}
\usepackage[font=footnotesize]{caption}
\usepackage{algorithm,algpseudocode}
\usepackage{blindtext}
\usepackage{mathtools}
\mathtoolsset{showonlyrefs}
\usepackage{etoolbox}
\usepackage{booktabs}
\usepackage{lmodern}
\usepackage{tikz}
\usepackage{scalerel}
\usepackage{hyperref}
\hypersetup{
    colorlinks=true,
    linkcolor=blue,
    citecolor=magenta,      
    urlcolor=cyan,
    pdfpagemode=FullScreen,
    }
\usepackage{natbib}
\usepackage{amsmath, amsfonts, amssymb, amsthm}

\newtheorem{theorem}{Theorem}
\newtheorem{proposition}{Proposition}
\newtheorem{lemma}{Lemma}
\newtheorem{definition}{Definition}

\newtheorem{assumption}{Assumption}

\newtheorem{remark}{Remark}
\newtheorem{cor}{Corollary}
\usepackage{wrapfig}
\usepackage{subcaption}
\usepackage{subfloat}
\graphicspath{ {images/} }

\usepackage{arydshln}
\usepackage{tikz}
\usetikzlibrary{patterns,decorations.pathreplacing}
\definecolor{controlblue}{RGB}{194,214,236}
\definecolor{treatedred}{RGB}{234,132,121}
\definecolor{questionred}{RGB}{234,51,35}

\DeclareRobustCommand{\Rb}{\mathbb{R}}
\DeclareRobustCommand{\bD}{\boldsymbol{D}}
\DeclareRobustCommand{\bZ}{\boldsymbol{Z}}
\DeclareRobustCommand{\bA}{\boldsymbol{A}}
\DeclareRobustCommand{\bH}{\boldsymbol{H}}

\DeclareRobustCommand{\bU}{\boldsymbol{U}}
\DeclareRobustCommand{\bV}{\boldsymbol{V}}
\DeclareRobustCommand{\bS}{\boldsymbol{S}}
\DeclareRobustCommand{\bM}{\boldsymbol{M}}
\DeclareRobustCommand{\bQ}{\boldsymbol{Q}}
\DeclareRobustCommand{\bP}{\boldsymbol{P}}
\DeclareRobustCommand{\bX}{\boldsymbol{X}}

\DeclareRobustCommand{\bB}{\boldsymbol{B}}
\DeclareRobustCommand{\bW}{\boldsymbol{W}}

\newcommand{\bbW}{\bar{\bW}}
\newcommand{\bbY}{\bar{\bY}}
\newcommand{\bbZ}{\bar{\bZ}}
\newcommand{\bbz}{\bar{\bz}}

\newcommand{\bI}{\boldsymbol{I}}

\newcommand{\bY}{\boldsymbol{Y}}

\DeclareRobustCommand{\bhA}{\widehat{\boldsymbol{A}}}

\DeclareRobustCommand{\hbP}{\widehat{\boldsymbol{P}}}

\DeclareRobustCommand{\Ac}{\mathcal{A}}

\DeclareRobustCommand{\Hc}{\mathcal{H}}

\DeclareRobustCommand{\Ic}{\mathcal{I}}
\DeclareRobustCommand{\Ec}{\mathcal{E}}

\DeclareRobustCommand{\Oc}{\mathcal{O}}
\newcommand{\Gc}{\mathcal{G}}

\newcommand{\Fc}{\mathcal{F}}
\newcommand{\Vc}{\mathcal{V}}
\newcommand{\hVc}{\widehat{\Vc}}
\DeclareRobustCommand{\Sc}{\mathcal{S}}
\DeclareRobustCommand{\Ac}{\mathcal{A}}

\DeclareRobustCommand{\Rb}{\mathbb R}
\DeclareRobustCommand{\Nb}{\mathbb N}
\DeclareRobustCommand{\Pb}{\mathbb P}
\DeclareRobustCommand{\Zb}{\mathbb Z}

\DeclareMathOperator{\Var}{Var}

\DeclareMathOperator*{\argmin}{\arg\!\min}

\newcommand{\hV}{\widehat{V}}
\newcommand{\bz}{\boldsymbol{z}} 
\newcommand{\bDelta}{\boldsymbol{\Delta}} 
 
\newcommand{\bXi}{\boldsymbol{\Xi}}
\newcommand{\bomega}{\boldsymbol{\omega}}

\newcommand{\bphi}{\boldsymbol{\phi}}

\newcommand{\hsigma}{\widehat{\sigma}}

\newcommand{\bq}{\boldsymbol{q}}
\newcommand{\hbq}{\widehat{\bq}} 

\newcommand{\distas}[1]{\mathbin{\overset{#1}{\kern\z@\sim}}}%
\newsavebox{\mybox}\newsavebox{\mysim}
\newcommand{\distras}[1]{%
  \savebox{\mybox}{\hbox{\kern3pt$\scriptstyle#1$\kern3pt}}%
  \savebox{\mysim}{\hbox{$\sim$}}%
  \mathbin{\overset{#1}{\kern\z@\resizebox{\wd\mybox}{\ht\mysim}{$\sim$}}}%
}

\newcommand{\htheta}{\widehat{\theta}}

\newcommand{\boldeta}{\boldsymbol{\eta}}

\newcommand{\Ex}{\mathbb{E}}

\newcommand\independent{\protect\mathpalette{\protect\independenT}{\perp}}
\def\independenT#1#2{\mathrel{\rlap{$#1#2$}\mkern2mu{#1#2}}}

\newcommand{\htau}{\widehat{\tau}}

\newcommand{\hs}{\widehat{s}}
\newcommand{\hS}{\widehat{S}}
\newcommand{\bv}{\boldsymbol{v}}

\newcommand{\bx}{\boldsymbol{x}}
\newcommand{\bu}{\boldsymbol{u}}

\newcommand{\bbeta}{\boldsymbol{\beta}} 
\newcommand{\balpha}{\boldsymbol{\alpha}} 
\newcommand{\hbbeta}{\widehat{\bbeta}}
\newcommand{\bzero}{\boldsymbol{0}}
\newcommand{\hbU}{\widehat{\bU}}
\newcommand{\hbV}{\widehat{\bV}} 
\newcommand{\hbalpha}{\widehat{\balpha}}

\newcommand{\Nc}{\mathcal{N}}

\newcommand{\by}{\boldsymbol{y}}

\newcommand{\bg}{\boldsymbol{g}}
 
\newcommand{\bgamma}{\boldsymbol{\gamma}} 
\newcommand{\hbgamma}{\widehat{\bgamma}} 
\newcommand{\hbA}{\widehat{\bA}} 
\newcommand{\hbS}{\widehat{\bS}} 
\newcommand{\bzeta}{\boldsymbol{\zeta}} 

\newcommand{\txtop}{\text{op}}
\newcommand{\eop}{\emph{op}}

\newcommand{\SI}{\texttt{SI}}

\newcommand{\mssa}{\texttt{mSSA}}

\newcommand{\SC}{\texttt{SC}}

\newcommand{\HSVT}{\texttt{HSVT}}

\newcommand{\TWSF}{\texttt{TWSF}}

\newcommand{\PCR}{\texttt{PCR}}

\newcommand{\pre}{\texttt{pre}}
\newcommand{\post}{\texttt{post}}
\newcommand{\lag}{\texttt{lag}}
\newcommand{\tnext}{\texttt{next}}
\newcommand{\epre}{\emph{\pre}}

\newcommand{\elag}{\emph{\lag}}
\newcommand{\etnext}{\emph{\tnext}}
\newcommand{\proj}{\texttt{proj}}

\newcommand{\col}{\text{col}}
\newcommand{\row}{\text{row}}
\newcommand{\ecol}{\emph{\col}}
\newcommand{\erow}{\emph{\row}} 

\newcommand{\noise}{\texttt{noise}}

\newcommand{\tcross}{\texttt{cross}}

\newcommand{\bbX}{\bar{\bX}}

\newcommand{\bby}{\bar{\by}}
\newcommand{\ba}{\boldsymbol{a}}

\newcommand{\bxi}{\boldsymbol{\xi}}

\newcommand{\bdelta}{\boldsymbol{\delta}}
\newcommand{\bPi}{\boldsymbol{\Pi}}

\newcommand{\hbx}{\widehat{\bx}}
\newcommand{\be}{\boldsymbol{e}}

\newcommand{\hUpsilon}{\widehat{\Upsilon}}

\newcommand{\direct}{\texttt{dir}}
\newcommand{\edirect}{\emph{\direct}}
\newcommand{\rec}{\texttt{rec}}
\newcommand{\erec}{\emph{\rec}}

\newcommand{\tQ}{\tilde{Q}}

\newcommand{\bJ}{\boldsymbol{J}}
\newcommand{\hbJ}{\widehat{\bJ}}

\newcommand{\hbY}{\widehat{\bY}}
\newcommand{\hbZ}{\widehat{\bZ}}

\newcommand{\tbq}{\tilde{\bq}}
\newcommand{\tq}{\tilde{q}}

\newcommand{\bolde}{\boldsymbol{e}}
\newcommand{\bvarrho}{\boldsymbol{\varrho}}

\newcommand{\bmu}{\boldsymbol{\mu}}
\newcommand{\bw}{\boldsymbol{w}}
\newcommand{\full}{\texttt{full}}

\title{\scshape From Control to Treatment: \\ Causal Forecasting Beyond the Observed Panel}
\author{Dennis Shen}

\affil{\small Department of Data Sciences \& Operations, USC Marshall \\ \texttt{dennis.shen@marshall.usc.edu}}

\date{}

\begin{document}

\maketitle

\vspace{-15pt}
\begin{abstract}
%
Conventional panel data methods recover retrospective counterfactuals within the observed horizon. Many policy decisions, however, are prospective: how would an untreated unit evolve beyond the observed panel under an intervention it has not previously experienced? We develop a causal forecasting framework that combines the counterfactual logic of synthetic controls with multivariate time-series extrapolation. Under a latent factor model, we impose low-rank structure on the treated-state time factors. This yields the Two-Way Synthetic Forecasting estimator, which learns cross-unit weights from pre-treatment outcomes and temporal dynamics from treated donors' post-treatment trajectories. We establish pointwise identification, finite-sample error bounds, consistency, asymptotic normality, and inference for prespecified linear summaries over fixed forecast horizons. Simulations support the theory, and an application to the opening of NFL stadiums during the 2020 season illustrates how the method can inform a prospective policy change using only information available at the decision date.

\end{abstract}

\medskip
\noindent
{\smaller
\textbf{Keywords:}
synthetic controls; synthetic interventions; multivariate singular spectrum analysis; pointwise inference
}

\section{Introduction} \label{sec:intro} 
In September 2020, amid the COVID-19 pandemic, the National Football League (NFL) prepared for a season unlike any other in its history. Each team faced a highly consequential and localized policy choice: should spectators be admitted to the stadium or should games be played behind closed doors?
In coordination with public-health authorities, teams weighed local guidelines, disease prevalence, risk tolerance, and operational constraints. 
Policies varied sharply across cities: the Baltimore Ravens, for example, admitted spectators, whereas the Buffalo Bills kept their stadium closed. 
Teams that opened their doors adopted mitigation measures such masking requirements, social distancing, and capacity limits \citep{mac21a}. 
This uneven policy landscape created natural policy experiments and raised causal questions about the public-health consequences of opening stadiums. 


For a city that opened its stadium, the natural counterfactual is immediate: {\em what would have happened had it remained closed?}
For example, how would COVID-19 case counts in Baltimore have evolved had the Ravens not admitted spectators?
Synthetic controls (\SC) offers an elegant answer \citep{abadie1, abadie2}. 
It combines closed-stadium cities into a weighted ``synthetic Baltimore'' chosen to resemble Baltimore before the season. 
After Baltimore opens, the same weights trace an estimated path for Baltimore under continued closure. 
Comparing that path with Baltimore’s observed trajectory yields an estimate of the effect of opening.
%

A city that remained closed poses the mirror-image question: {\em what would have happened had it opened?}
For example, how would COVID-19 case counts in Buffalo have evolved had the Bills admitted spectators?
Synthetic interventions (\SI) addresses this question by extending the logic of \SC~to multiple intervention states \citep{synth_iv}. 
As with \SC, \SI~constructs a ``synthetic Buffalo,'' but it does so using cities that admitted spectators. 
The relevant cross-city relationship is learned during the period in which all cities were closed and then transported to the period in which the donor cities were open.

At first glance, \SC~and \SI~appear to complete the causal picture: 
\SC~reconstructs what treated units would have experienced under control, whereas \SI~reconstructs what control units would have experienced under treatment. 
Yet this symmetry also reveals a common boundary: the donor outcomes must be observed throughout the period for which the counterfactual is sought. 
%
%
In this precise sense, both methods are retrospective. 
They can reconstruct an alternative past, but they cannot forecast an alternative future. 


This distinction becomes consequential when policy decisions must be made in real time. 
Consider a closed-stadium city midway through the season. 
The relevant question is no longer what would have happened after a decision already made, but what will happen after a decision yet to be made: {\em how will case counts evolve if the city opens for the next game?} 
%
The target trajectory is now doubly out of sample. 
It lies beyond the observed panel and under a policy the city has never experienced. 
It is at once a forecast and a counterfactual.

Conventional time-series methods offer a natural source of extrapolative structure, but they forecast the continuation of the regime already in place: Baltimore under continued opening or Buffalo under continued closure. 
On their own, they do not predict the consequences of switching a target unit into a treatment state absent from its own history. 
Thus, \SC~and \SI~move across policy states but remain within the observed horizon; standard time-series forecasting moves beyond that horizon but remains within an observed policy state. 
This gap motivates the central question of this article: 
{\em Can we forecast beyond the observed panel to predict how a control unit will evolve under a treatment it has not previously experienced?} 


\subsection{Contributions} 
We develop a framework that combines the retrospective counterfactual logic of \SC~and \SI~with the prospective structure of time-series forecasting. 
Our approach builds on their latent factor foundations, under which potential outcomes depend on latent unit and time factors. 
As in \SI, unit factors are shared across intervention states, while time factors may vary with the intervention. 
We augment this model with low-dimensional time-series structure on the treated temporal factors. 
The resulting model expands the estimands supported by conventional \SC~and \SI~designs from retrospective counterfactual reconstruction within the observed panel to prospective causal forecasting beyond it.
We provide an identification result for these causal forecasts. 

To operationalize this strategy, we introduce the Two-Way Synthetic Forecasting (\TWSF) estimator, which combines cross-sectional and temporal structure. As in \SC~and \SI, \TWSF~learns cross-unit relationships from the pre-treatment period. As in multivariate time-series forecasting, it learns dynamic relationships from the post-treatment trajectories of treated units. 
By synthesizing these two components, \TWSF~forecasts the treated potential outcome of a control unit beyond the observed panel. Under suitable conditions, we establish finite-sample pointwise error bounds, consistency, asymptotic normality, and feasible inference for prespecified linear summaries over fixed horizons.

Finally, we support our statistical claims through simulation studies and revisit the NFL stadium-opening study of \cite{nfl_pnas}. In the empirical application, we ask what would have happened had closed-stadium cities switched to opening at a future decision point using only information available up to that time. The findings are broadly consistent with the original analysis: opening stadiums does not appear to produce a systematic increase in local case counts, although the timing of the decision may matter substantively.

\subsection{Related Work} 
This article connects causal panel models with multivariate time-series forecasting. 
Synthetic control (\SC) estimates a treated unit’s path under control from other units, while the synthetic interventions (\SI) framework extends this argument to multiple intervention states \citep{abadie1, abadie2, synth_iv}.  
These methods are often formalized through low-rank factor or matrix completion models \citep{bai2019matrix, fernandez2020low, CAHAN2023113}.  
Another line relates unit- and time-side regressions \citep{athey2021matrix, sameroot} and combines their adjustments through augmented or doubly robust procedures \citep{asc, sdid}. 
Neural forecasting, Bayesian autoregressive factors, and multitask Gaussian processes enrich the temporal model but retain a within-panel target 
\citep{synbeats, Pang_Liu_Xu_2022, eli23, chen23}. 


Multivariate singular spectrum analysis (\mssa) extrapolates beyond the panel by exploiting low-rank trajectory matrices \citep{mssa}. 
%
%
Although \mssa~does not address counterfactual regime changes, its treatment of temporal dynamics informs our approach. 
The closest work is that of \cite{focus}, who also derive pointwise guarantees for causal forecasting in panel data under a low-rank factor model. 
Their setting is complementary. 
Observationally, their design requires target units to have previously experienced treatment, whereas we study target units that remain untreated throughout the observed panel. 
From a modeling standpoint, they impose a stationary autoregressive structure on the temporal factors while we impose a low-rank temporal structure. 

\subsection{Notation} 
For a positive integer $a$, let $[a] = \{1, \dots, a\}$. For a vector $\bv \in \Rb^a$, let $\| \bv \|_p$ denote its $\ell_p$-norm. Define the inner product between $\bu, \bv \in \Rb^a$ as $\langle \bu, \bv \rangle = \bu^\top \bv = \sum_{\ell=1}^a u_\ell v_\ell$. For a matrix $\bX \in \Rb^{a \times b}$, denote its operator and Frobenius norms as $\| \bX \|_\txtop$ and $\| \bX \|_F$. Let $\| \bx \|_{\psi_2}$ denote the subgaussian norm of a random vector $\bx$. Denote by $\bI_a$ the $a \times a$ identity matrix, $\bzero_{a \times b}$ the $a \times b$ zero matrix, and $\be_a$ the $a$th standard basis vector. Let $\dagger$ denote the Moore-Penrose pseudoinverse. 
 Convergence in probability and distribution are $\xrightarrow{p}$ and $\rightsquigarrow$.

\section{Causal Forecasting Panel Framework} \label{sec:framework}
Consider a panel of $N$ units observed across $T \coloneqq T_0 + T_1$ time periods. 
Let $Y_{it}(0)$ and $Y_{it}(1)$ denote the potential outcomes of unit $i \in [N]$ at time $t \in [T]$ under control and treatment \citep{neyman, rubin1976}. 
With $D_{it} \in \{0,1\}$ as the treatment status, the observed outcome is 
\begin{align}
	Y_{it} \coloneqq D_{it} \cdot Y_{it}(1) + (1-D_{it}) \cdot Y_{it}(0). 
	\label{eq:sutva}
\end{align} 
The observation law of \eqref{eq:sutva} encodes the stable unit treatment value assumption, which implicitly rules out spillovers and network effects \citep{imbens_rubin_2015}. 

We study the observation pattern depicted in Figure~\ref{fig:data}. 
During the first $T_0$ periods, which we call the pre-treatment period, all $N$ units are observed under control.
Over the next $T_1$ periods, which we call the post-treatment period, unit $N$ remains under control while the other units receive treatment. 
Let $\Ic_0 \coloneqq \{N\}$, $\Ic_1 \coloneqq [N-1]$, and $N_1 \coloneqq | \Ic_1 |$. 
Thus, $D_{it} = 0$ for every $i \in [N]$ and $t \le T_0$. 
For $T_0 < t \le T$, $D_{it} = 1$ for $i \in \Ic_1$ and $D_{it} = 0$ for $i = N$. 
Let $\bD \in \{0,1\}^{N \times T}$ and $\bY \in \Rb^{N \times T}$ collect the treatment assignments and observed outcomes. 


\begin{figure}[t]
\centering

\scalebox{0.80}{%
\begin{tikzpicture}[
    x=1.2cm,
    y=1.2cm,
    every node/.style={font=\footnotesize}
]

    \fill[controlblue]
        (0,0) rectangle (2,2);

    \fill[controlblue]
        (2,0) rectangle (4,0.4);

    \path[
        pattern=north east lines,
        pattern color=treatedred
    ]
        (2,0.4) rectangle (4,2);

    \fill[white]
        (4,0) rectangle (4.4,2);

    \draw[line width=0.9pt]
        (0,0) rectangle (4.4,2);

    \draw[densely dotted,line width=0.55pt]
        (2,0) -- (2,2);

    \draw[densely dotted,line width=0.55pt]
        (4,0) -- (4,2);

    \draw[densely dotted,line width=0.55pt]
        (0,0.4) -- (4.4,0.4);

    \draw[
        decorate,
        decoration={
            brace,
            amplitude=4pt
        }
    ]
        (-0.14,0.4)
        --
        node[midway,left=7pt] {$\mathcal{I}_1$}
        (-0.14,2);

    \node[left=7pt]
        at (0,0.2)
        {$N$};

    \node[
        text=questionred,
        font=\normalsize\bfseries
    ]
        at (4.2,0.2)
        {?};

    \draw[
        decorate,
        decoration={
            brace,
            mirror,
            amplitude=3pt
        }
    ]
        (0,-0.08)
        --
        node[midway,below=5pt] {$T_0$}
        (2,-0.08);

    \draw[
        decorate,
        decoration={
            brace,
            mirror,
            amplitude=3pt
        }
    ]
        (2,-0.08)
        --
        node[midway,below=5pt] {$T_1$}
        (4,-0.08);
\end{tikzpicture}%
}
\caption{
Observation pattern for $Y$.
Blue denotes observed control outcomes,
red stripes denote observed treated outcomes,
white denotes unobserved future outcomes,
and the red ``?'' denotes $\theta$.
}
\label{fig:data}

\end{figure}


\subsection{Tensor Factor Model} 
We adopt the factor model underlying the \SI~framework. 

\begin{assumption} 
\label{assump:lfm}
For $i \in [N]$, $t \in \Zb$, and $d \in \{0,1\}$, let 
$$Y_{it}(d) \coloneqq \sum_{a=1}^r U_{i a} V_{t a}(d) + \varepsilon_{it}(d),$$
where $\bu_i \coloneqq \big[U_{ia}: a \in [r] \big] \in \Rb^r$ is the latent factor for unit $i$, $\bv_t(d) \coloneqq \big[V_{ta}(d): a \in [r] \big] \in \Rb^r$ is the latent factor at time $t$ under state $d$, and $\varepsilon_{it}(d) \in \Rb$ is idiosyncratic noise. 
\end{assumption}
The central restriction is that the unit factor $\bu_i$ is invariant across both time and intervention states. 
This invariance enables cross-unit relationships learned under control to be transported to treatment. 
By contrast, the time factor $\bv_t(d)$ may vary flexibly across intervention states, allowing treatment to alter the temporal evolution of potential outcomes.  
%

\subsubsection{Causal Forecasting Estimand} \label{sec:estimand}
We define the one-step-ahead causal forecasting estimand as 
\begin{align}
	\theta \coloneqq \Ex \left[Y_{N, T+1}(1) \mid \bu_N, \bv_{T+1}(1) \right]. \label{eq:estimand}
\end{align}
In words, $\theta$ is the conditional mean of the target unit’s treated potential outcome one period beyond the observed panel. In the NFL example, if Baltimore first admitted spectators on November 1, then $\theta$ would represent its expected case count on November 2 under the open-stadium policy. Section~\ref{sec:forecast.horizon} extends this object to fixed multi-step summaries.

\subsubsection{Unobserved Confounding}
Treatment assignment may depend on local characteristics, recent trends, and anticipated outcomes. We allow such confounding to operate through the latent factors. Define 
\begin{align}
	\Ec \coloneqq \sigma \left( \left\{ \bu_i: i \in [N] \right\}, \left\{ \bv_t(d): t \in [T+1], d \in \{0,1\} \right\}, \bD \right)
\end{align} 
as the sigma-field generated by the latent factors and intervention assignments. 

\begin{assumption} 
\label{assump:mean_ind}
For every $(i, t, d)$, let $\Ex[\varepsilon_{it}(d) \mid \Ec] = 0$.  
\end{assumption}
Conditional on the latent structure, the idiosyncratic noise is mean zero. Together with Assumption~\ref{assump:lfm}, Assumption~\ref{assump:mean_ind} implies that potential outcomes are mean independent of treatment assignment once the latent factors are held fixed. Thus, the latent factors play the role of unobserved confounders, analogous to observed covariates under a classical selection-on-observables assumption.
This latent ignorability condition parallels assumptions used in causal panel models \citep{athey2021matrix, asc, synth_iv} and related contexts \citep{kallus2018causal}. As with any restriction on unobserved confounding, Assumption~\ref{assump:mean_ind} is not directly testable and should be assessed using domain knowledge about the data-generating process. 

\subsubsection{Latent Unit Structure} \label{sec:unit.structure} 
%

\begin{assumption} 
\label{assump:units}
Let $\bu_N \in \emph{span}\{\bu_j: j \in \Ic_1\}$. 
\end{assumption}
Assumption~\ref{assump:units} requires the latent factor of the control unit to lie within the linear span of the latent factors of the treated donor units. 
The low-rank structure in Assumption~\ref{assump:lfm} makes this condition plausible but does not guarantee it. 
In practical terms, Assumption~\ref{assump:units} requires a sufficiently large and rich donor pool. 
The condition is thus analogous to common support, where treated and control units must overlap in their covariate distributions. 
Although the latent factors are unobserved, Assumption~\ref{assump:units} can be assessed indirectly through pre-treatment fit diagnostics, as is standard in the \SC~literature \citep{abadie_survey}. 


%
\begin{proposition} \label{prop:beta} 
Let Assumptions~\ref{assump:lfm}-\ref{assump:units} hold. Then, there exists a $\bbeta \in \Rb^{N_1}$ such that
\begin{enumerate} [label=(\alph*)]
	\item $\theta = \sum_{j \in \Ic_1} \beta_j \cdot \Ex[Y_{j, T+1} (1) \mid \Ec]$,
	
	\item $\Ex[Y_{Nt} \mid \Ec ] = \sum_{j \in \Ic_1} \beta_j \cdot \Ex[ Y_{jt} \mid \Ec]$ for all $t \le T_0$. 
\end{enumerate}
\end{proposition}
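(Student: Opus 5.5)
Assumption~\ref{assump:units} provides the weights directly, and I then check that they transfer through the factor model of Assumption~\ref{assump:lfm}. Since $\bu_N \in \text{span}\{\bu_j : j \in \Ic_1\}$, there exists $\bbeta \in \Rb^{N_1}$ with $\bu_N = \sum_{j \in \Ic_1} \beta_j \bu_j$. This $\bbeta$ is $\Ec$-measurable, because it can be chosen as the minimum-norm solution $\bbeta = \bU_1^{\dagger} \bu_N$, where $\bU_1$ is the $r \times N_1$ matrix with columns $\bu_j$. That measurability is what lets $\bbeta$ be pulled outside the conditional expectations below.

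Next, for any $(i,t,d)$ with $t \in [T+1]$, Assumptions~\ref{assump:lfm} and~\ref{assump:mean_ind} give
\[
\Ex[Y_{it}(d) \mid \Ec] = \langle \bu_i, \bv_t(d) \rangle .
\]
This holds because $\bu_i$ and $\bv_t(d)$ are $\Ec$-measurable and the noise has conditional mean zero.

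For part (a), I first identify $\theta$. The sigma-field generated by $(\bu_N, \bv_{T+1}(1))$ is contained in $\Ec$. By the tower property and the display above,
\[
\theta = \Ex\big[\langle \bu_N, \bv_{T+1}(1)\rangle \mid \bu_N, \bv_{T+1}(1)\big] = \langle \bu_N, \bv_{T+1}(1)\rangle .
\]
Substituting the linear combination and using linearity of the inner product gives
\[
\theta = \sum_{j \in \Ic_1} \beta_j \langle \bu_j, \bv_{T+1}(1)\rangle = \sum_{j \in \Ic_1} \beta_j \, \Ex[Y_{j,T+1}(1) \mid \Ec].
\]

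For part (b), fix $t \le T_0$. Every unit has $D_{it}=0$, so by \eqref{eq:sutva} $Y_{it} = Y_{it}(0)$ for all $i$. Since $\bD$ is $\Ec$-measurable, this gives $\Ex[Y_{it} \mid \Ec] = \langle \bu_i, \bv_t(0)\rangle$. The same linear-combination argument, with $\bv_t(0)$ in place of $\bv_{T+1}(1)$, then yields (b) with the same $\bbeta$.

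The only delicate step is part (a). There the conditioning set of $\theta$ is coarser than $\Ec$, so I must check that the conditional mean given $(\bu_N,\bv_{T+1}(1))$ still equals the inner product. Mean zero of the noise given $\Ec$ implies mean zero given the smaller sigma-field by the tower property, which settles this. The rest is linear algebra.
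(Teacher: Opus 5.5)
Your proposal is correct and follows essentially the same route as the paper's proof: take $\bbeta$ from Assumption~\ref{assump:units}, use Assumptions~\ref{assump:lfm}--\ref{assump:mean_ind} to write each conditional mean as $\langle \bu_i, \bv_t(d) \rangle$, then specialize to $(i,t,d) = (N, T+1, 1)$ for (a) and to $(N, t \le T_0, 0)$ together with \eqref{eq:sutva} for (b). Your explicit tower-property step for the coarser conditioning in $\theta$ and the measurable minimum-norm choice of $\bbeta$ are harmless refinements that the paper leaves implicit. In fact your argument never needs the measurability, since $\bbeta$ only multiplies conditional means that have already been computed.
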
 

Proposition~\ref{prop:beta}, which parallels \cite[Proposition 1]{synth_iv}, shows that the same unit weights can be used on both sides of the treatment boundary. Part (b) provides an observable period in which those weights can be learned: during the pre-treatment period, the target and donors are all observed under control. 
Part (a) then transports the learned cross-unit relation to the treated state. 
It does not, however, identify $\theta$ by itself because the donors' treated outcomes at $T+1$ lie beyond the observed panel. 

\subsubsection{Latent Temporal Structure} \label{sec:time.structure} 
For each latent treated-state component 
$a \in [r]$, define the time series $f_{a}(t) \coloneqq V_{ta} (1)$ for $t \in \mathbb{Z}$. 
For positive integers $m, n$ and shift $s \in \Zb$, define the corresponding Hankel matrix by 
$\bH(f_a; m, n, s) \coloneqq \big[f_a(s+i+j-2): i \in [m], j \in [n] \big] \in \Rb^{m \times n}$. 
%
%

\begin{assumption} 
\label{assump:hankel}
For every $a \in [r]$, positive integers $m, n \in \mathbb{N}$, and shift $s \in \mathbb{Z}$, let $\emph{rank}(\bH(f_a; m, n, s)) \le Q$.
\end{assumption}
Assumption~\ref{assump:hankel} imposes that each latent treated temporal factor admits a low-rank Hankel representation. 
This condition accommodates finite sums of products of harmonics, polynomials, and exponentials, and thus covers numerous time series with trend or periodicity \citep{mssa}. 
%
The class of time series with low-rank Hankel structure is also closed under component-wise addition and multiplication. 
The uniform rank restriction implies that each $f_a$ satisfies a linear recurrence of order at most $Q$. Combining the $r$ components yields a common recurrence of order at most $rQ$ for all treated donors.



\begin{proposition} \label{prop:alpha}
Let Assumptions~\ref{assump:lfm}, \ref{assump:mean_ind}, and \ref{assump:hankel} hold. Fix a lag length $L \in \Zb_+$ satisfying $rQ \le L \le T_1-1$. Then, there exists an $\balpha \in \Rb^{L}$ such that 
\begin{enumerate} [label=(\alph*)]
	\item $\Ex[Y_{j, T+1}(1) \mid \Ec] = \sum_{\ell =1}^{L} \alpha_\ell \cdot \Ex[Y_{j, T -L  + \ell} \mid \Ec]$ for all $j \in \Ic_1$, 
	
	\item $\Ex[Y_{jt} \mid \Ec] = \sum_{\ell = 1}^{L} \alpha_\ell \cdot \Ex[Y_{j, t-L-1+\ell} \mid \Ec]$ for all $j \in \Ic_1$ and $t \in \{T_0+L+1, \dots, T\}$. 
\end{enumerate}
\end{proposition}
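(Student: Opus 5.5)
The plan is to reduce both parts to one deterministic linear recurrence satisfied by the treated time factors $\bv_t(1)$, with coefficients common to all components $a \in [r]$ and valid for every $t \in \Zb$. Then I take inner products with $\bu_j$ and use Assumption~\ref{assump:mean_ind} to pass to conditional means.

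\emph{Step 1 (conditional means).} Since $\bu_j$ and $\bv_t(d)$ are $\Ec$-measurable, Assumptions~\ref{assump:lfm} and~\ref{assump:mean_ind} give $\Ex[Y_{jt}(d)\mid\Ec] = \langle \bu_j, \bv_t(d)\rangle$. For $j \in \Ic_1$ and $T_0 < t \le T$ we have $D_{jt}=1$, and $D_{jt}$ is $\Ec$-measurable, so \eqref{eq:sutva} gives $\Ex[Y_{jt}\mid\Ec] = \langle \bu_j, \bv_t(1)\rangle$. Likewise $\Ex[Y_{j,T+1}(1)\mid\Ec] = \langle \bu_j, \bv_{T+1}(1)\rangle$. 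It therefore suffices to find $\balpha \in \Rb^L$ with
\begin{align}
\bv_t(1) = \sum_{\ell=1}^L \alpha_\ell \, \bv_{t-L-1+\ell}(1) \quad \text{for all } t \in \Zb,
\end{align}
and then to check that the indices $t-L,\dots,t-1$ lie in $\{T_0+1,\dots,T\}$. For $t=T+1$ this needs $T+1-L \ge T_0+1$, i.e.\ $L \le T_1$. For part (b) it needs $t \ge T_0+L+1$, which is exactly the stated range; the range is nonempty because $L \le T_1-1$.

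\emph{Step 2 (recurrence for each component; the main obstacle).} Fix $a$ and set $\bx_t = (f_a(t),\dots,f_a(t+Q)) \in \Rb^{Q+1}$. Any finite collection of the $\bx_t$ consists of columns of some $\bH(f_a; Q+1, n, s)$, whose rank is at most $Q$ by Assumption~\ref{assump:hankel}. Hence the spans of increasing finite collections form a nested chain of subspaces, each of dimension at most $Q$. Their union, which is $\text{span}\{\bx_t : t\in\Zb\}$, therefore also has dimension at most $Q$. So there is a nonzero $\bm c \in \Rb^{Q+1}$ with $\sum_{k=0}^{Q} c_k f_a(t+k) = 0$ for all $t \in \Zb$.

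Let $q_a$ be the largest index with $c_{q_a} \neq 0$ and normalize so that $c_{q_a}=1$. This gives a monic annihilating polynomial $p_a(z)$ of degree $q_a \le Q$ in the forward shift operator $(Sf)(t)=f(t+1)$. If $q_a=0$, then $f_a \equiv 0$ and any polynomial annihilates it. I expect this step to need the most care, in two places: the passage from finite-window rank bounds to one global annihilator, handled by the dimension-of-a-nested-union argument, and the possibility that the top coefficient vanishes, handled by truncating to the top nonzero index.

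\emph{Step 3 (common recurrence and padding).} Let $P(z) = \prod_{a=1}^r p_a(z)$. It is monic of degree $d \le rQ \le L$. Because polynomials in $S$ commute, $P(S) f_a = 0$ for every $a$. Writing $P(z) = z^d + \sum_{k<d} P_k z^k$ gives $f_a(t) = -\sum_{k=0}^{d-1} P_k f_a(t-d+k)$ for all $t$ and all $a$. Now define $\alpha_\ell = 0$ for $\ell \le L-d$ and $\alpha_{L-d+1+k} = -P_k$ for $0\le k\le d-1$. With this choice $\balpha$ yields the displayed vector recurrence, uniformly in $a$. Combining this with Step 1 gives (a) and (b) with the same $\balpha$.
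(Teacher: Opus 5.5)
Your proof is correct and follows the paper's route. Both arguments take a monic annihilating polynomial of degree at most $Q$ for each $f_a$, form their product as a common annihilator of degree at most $rQ \le L$, zero-pad its coefficients to length $L$, and then pass to conditional means using Assumptions~\ref{assump:lfm}--\ref{assump:mean_ind} and \eqref{eq:sutva}, together with the index check $T_0 < t-L$. The one difference is that the paper cites the step from finite Hankel rank to a recurrence as a standard characterization, whereas you prove it yourself: you bound the dimension of the span of the length-$(Q+1)$ windows and truncate at the top nonzero coefficient.
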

Proposition~\ref{prop:alpha} is similar in spirit to \cite[Proposition 4.1]{mssa}. 
Part (a) expresses each donor unit's treated outcome at time $T+1$ as a common linear function of its recent treated outcomes. Part (b) shows that the same relation is visible throughout the observed post-treatment period. It therefore supplies an observable temporal regression with which the forecasting rule may be learned.

\subsection{A Two-Way Structural Representation} \label{sec:estimation.result}
%

\begin{theorem} \label{thm:estimation}
Let Assumptions~\ref{assump:lfm}-\ref{assump:hankel} hold. 
For any $(\balpha, \bbeta)$ satisfying Propositions~\ref{prop:beta}-\ref{prop:alpha}, 
$$
	\theta = \sum_{\ell = 1}^{L} \sum_{j \in \Ic_1} \alpha_\ell \cdot \beta_j \cdot \Ex[Y_{j, T - L + \ell} \mid \Ec]. 
$$
%
%
\end{theorem}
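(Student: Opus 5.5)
The plan is to compose the two identification results, Proposition~\ref{prop:beta}(a) and Proposition~\ref{prop:alpha}(a), and interchange the resulting finite sums. First, Proposition~\ref{prop:beta}(a) writes the target as a $\bbeta$-weighted combination of the donors' conditional means at the forecast date:
\begin{align}
	\theta = \sum_{j \in \Ic_1} \beta_j \cdot \Ex[Y_{j,T+1}(1) \mid \Ec].
\end{align}
Second, for each fixed donor $j \in \Ic_1$, Proposition~\ref{prop:alpha}(a) replaces $\Ex[Y_{j,T+1}(1) \mid \Ec]$ by $\sum_{\ell=1}^{L} \alpha_\ell \cdot \Ex[Y_{j,T-L+\ell} \mid \Ec]$. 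The key point is that the same vector $\balpha$ works for every donor, so the substitution can be made inside the sum over $j$ without $\balpha$ depending on $j$.

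Substituting gives the double sum
\begin{align}
	\theta = \sum_{j \in \Ic_1} \beta_j \sum_{\ell=1}^{L} \alpha_\ell \cdot \Ex[Y_{j,T-L+\ell} \mid \Ec].
\end{align}
Both index sets are finite, so exchanging the order of summation gives exactly the displayed identity. The coefficients $\alpha_\ell$ and $\beta_j$ are deterministic (or at least fixed given $\Ec$), so they pass outside the conditional expectations without further justification.

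The only step that needs care is checking that the observed outcomes in the final expression are treated outcomes. Since $L \le T_1 - 1$, the indices satisfy $T - L + \ell \ge T - L + 1 \ge T_0 + 2 > T_0$. Hence $D_{j,T-L+\ell} = 1$ for $j \in \Ic_1$, and by \eqref{eq:sutva} we have $Y_{j,T-L+\ell} = Y_{j,T-L+\ell}(1)$. This is already built into the statement of Proposition~\ref{prop:alpha}(a), so no separate argument is required. Note also that Proposition~\ref{prop:beta}(a) conditions on $\Ec$ while $\theta$ is defined by conditioning on $(\bu_N, \bv_{T+1}(1))$; I would take that reconciliation as given by Proposition~\ref{prop:beta}. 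I expect no genuine obstacle: the proof is a two-line composition, and the main thing to emphasize is that the existence of a single $\balpha$ shared across all donors is what makes the substitution legitimate.
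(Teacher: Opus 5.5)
Your proposal is correct and follows the paper's own argument, which simply composes Proposition~\ref{prop:beta}(a) with Proposition~\ref{prop:alpha}(a) and rearranges the finite double sum. Your two remarks, that a single $\balpha$ is shared across donors and that every lag index $T-L+\ell$ exceeds $T_0$, are the right points to flag, though both are already built into the statements of the cited propositions.
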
 

\begin{proof}
The result follows directly from Propositions~\ref{prop:beta}(a) and \ref{prop:alpha}(a). 
\end{proof}

Theorem~\ref{thm:estimation} demonstrates that $\theta$ can be represented using only donor outcomes observed up to time $T$ by combining two distinct forms of transport. The unit weights $\bbeta$ transport information from treated donors to the target unit, while the time weights $\balpha$ transport information from the observed panel into the future. The observable relations in Propositions~\ref{prop:beta}(b) and \ref{prop:alpha}(b) motivate the estimation of these two coefficient vectors. 

\section{The Two-Way Synthetic Forecasting Estimator} \label{sec:alg}
%
Motivated by Theorem~\ref{thm:estimation}, we introduce the Two-Way Synthetic Forecasting (\TWSF) estimator.

\subsection{Setup} \label{sec:alg.setup}

\noindent \textbf{Unit-side data.}
Collect the pre-treatment outcomes of the target unit $N$ in $$\by_{N, \pre} \coloneqq \left[Y_{Nt}: t \le T_0 \right] \in \Rb^{T_0},$$ 
and arrange the corresponding donor outcomes as $$\bY_{\Ic_1, \pre} \coloneqq \left[Y_{jt}: j \in \Ic_1, t \le T_0 \right] \in \Rb^{N_1 \times T_0}.$$ 

\medskip \noindent \textbf{Time-side data.}
Let $L \in \mathbb{Z}_+$ be the lag length, and define $B \coloneqq T_1 / (L+1) \in \mathbb{N}$ and $M \coloneqq (B-1)N_1$. 
For donor $j \in \Ic_1$, organize the first $B-1$ nonoverlapping post-treatment blocks into the Page matrix $(\bP_j)_{ab} \coloneqq Y_{j, T_0+(b-1)(L+1)+a}$ for $a \in [L+1]$ and $b \in [B-1]$.
%
Within each column, the first $L$ entries form a lag vector, and the final entry is its one-step-ahead response. Stacking the donor-specific Page matrices horizontally gives 
\begin{align}
	\bP \coloneqq
	\left[ \bP_{1} ~ \cdots ~ \bP_{N_1} \right]
	\eqqcolon 
	\begin{pmatrix}
		\bZ_\lag
		\\
		\hdashline
		\bz_{\tnext}^\top
	\end{pmatrix}
	\in \Rb^{(L+1) \times M}. \label{eq:Zlag.znext} 
\end{align}
Here, $\bZ_\lag \in \Rb^{L \times M}$ contains the first $L$ rows of the stacked Page matrix, while $\bz_{\tnext} \in \Rb^M$ contains the final row. 
The Page matrix excludes the final $L+1$ post-treatment observations as they are reserved for forecasting the target outcome at time $T+1$.
The most recent $L$ observations form the forecast state
$$\bW \coloneqq \big[ Y_{j, T-L+\ell}: j \in \Ic_1, \ell \in [L] \big] \in \Rb^{N_1 \times L}.$$ 
%

\medskip \noindent \textbf{Spectral denoising.} 
A key ingredient of \TWSF~is a spectral de-noising step based on hard singular value thresholding (\HSVT). 
For a matrix $\bA \in \Rb^{a \times b}$, write its singular value decomposition as $\bA = \sum_{\ell=1}^{\min\{a,b\}} s_\ell \bu_\ell \bv_\ell^\top$. 
For any $k \le \min\{a, b\}$, define its rank-$k$ approximation as 
$\bA^{(k)} \coloneqq \HSVT(\bA, k) = \sum_{\ell=1}^k s_\ell \bu_\ell \bv_\ell^\top$. 
%

\subsection{Description} \label{sec:desc}
%
The \TWSF~estimator proceeds in four steps. 
\begin{enumerate} [label=(\alph*)]

\item For $k_y \le \min\{N_1, T_0\}$ and $k_z \le \min\{L, M\}$, define the de-noised design matrices as 
\begin{align}
	\bY_{\Ic_1, \pre}^{(k_y)} &\coloneqq \HSVT\left(\bY_{\Ic_1, \pre}, k_y \right), 
	\quad
	\bZ_{\lag}^{(k_z)} \coloneqq \HSVT\left(\bZ_\lag, k_z \right). \label{eq:hsvt.YZ}
\end{align}

\item Learn the cross-unit weights from the pre-treatment period by solving 
\begin{align}
	\hbbeta &\coloneqq \left(\bY_{\Ic_1, \pre}^{(k_y)} \right)^{\top, \dagger} \by_{N, \pre} \in \argmin_{\bomega \in \Rb^{N_1}} \left\| \by_{N, \pre} - \left( \bY_{\Ic_1, \pre}^{(k_y)} \right)^\top \bomega \right\|_2^2. \label{eq:pcr.beta.hat}
\end{align}

\item Learn the temporal forecasting weights from the treated donor trajectories by solving 
\begin{align}
	\hbalpha &\coloneqq \left(\bZ_{\lag}^{(k_z)} \right)^{\top, \dagger} \bz_{\tnext} \in \argmin_{\bomega \in \Rb^L} \left\| \bz_{\tnext} - \left(\bZ_{\lag}^{(k_z)}\right)^\top \bomega \right\|_2^2. \label{eq:pcr.alpha.hat}
\end{align}

\item Combine the unit and temporal weights to produce the one-step-ahead causal forecast 
\begin{align}
	\htheta &\coloneqq \left\langle \hbalpha, \bW^\top \hbbeta \right \rangle = \left \langle \hbbeta, \bW \hbalpha \right \rangle. \label{eq:htheta}
\end{align} 

\end{enumerate}

\subsubsection{Interpretation} 
Taking Baltimore as the target unit, \eqref{eq:htheta} admits two equivalent interpretations. 
Under the first, the vector $\bW^\top \hbbeta$ represents Baltimore's imputed counterfactual trajectory over the final $L$ observed periods had it already been operating under the open-stadium policy. 
Applying the temporal rule $\hbalpha$ then forecasts Baltimore's treated outcome at $T+1$. 
Under the second, the vector $\bW \hbalpha$ contains a one-step-ahead treated forecast for every donor unit. 
The unit weights $\hbbeta$ then synthesize these donor forecasts into the forecast for Baltimore. 
The two routes coincide algebraically and hence, this duality explains the term {\em two-way}: \TWSF~transports information across both units and time, where its unit stage follows the logic of \SI, while its temporal stage follows the forecasting logic of \mssa. 
Accordingly, \TWSF~can be read either as ``impute, then forecast'' or as ``forecast, then impute.'' 


\subsubsection{Principal Component Regression}
The operations in \eqref{eq:hsvt.YZ}-\eqref{eq:pcr.alpha.hat} are instances of principal component regression (\PCR).
Each observed design matrix is first replaced by a low-rank approximation and then used in a least squares regression. 
Under the latent factor structure of Section~\ref{sec:framework}, the conditional means $\Ex[\bY_{\Ic_1, \pre} \mid \Ec]$ and $\Ex[\bZ_\lag \mid \Ec]$ are low rank, and their observed counterparts are contaminated by idiosyncratic noise. 
Spectral truncation retains the dominant signal directions while limiting noise amplification in weakly identified directions. 

%

The spectral thresholds $k_y$ and $k_z$ may be selected by cross-validation, by retaining enough principal components to explain a prescribed fraction of spectral energy, or by applying a data-dependent singular-value threshold such as those proposed in \cite{donoho14} and \cite{Chatterjee15}. 


\section{Statistical Guarantees} \label{sec:results}
This section establishes statistical guarantees for \TWSF. 
Throughout, for any random object $\bX$, write $\bbX \coloneqq \Ex[\bX \mid \Ec]$. 
All results below are understood conditional on $\Ec$. 

\subsection{Recoverability under \PCR} \label{sec:identification} 
We first address an important subtlety. 
Because the population design matrices may be rank deficient, \PCR~need not identify the coefficient vectors $\balpha$ and $\bbeta$ uniquely. 
Instead, \PCR~identifies only the projection onto the row spaces of the population designs. 
The following condition ensures that these recoverable components nevertheless suffice to identify the causal forecast.

\begin{assumption} 
\label{assump:transport}
	Let $\ecol(\bbW) \subseteq \ecol(\bbY_{\Ic_1, \epre})$ 
	and $\erow(\bbW) \subseteq \erow(\bbZ_\elag^\top)$. 
\end{assumption} 
Assumption~\ref{assump:transport} is an out-of-sample generalization condition. 
It requires the unit- and time-side directions relevant to the forecast to be represented in the designs used to estimate the corresponding weights.
In ordinary statistical learning, the analogous requirement is overlap between the training and test covariate distributions. Here, that requirement is expressed in the linear-algebraic language of a panel subject to a treatment-induced distribution shift. 

\begin{cor} \label{cor:identification}
Let the setup of Theorem~\ref{thm:estimation} hold. 
Define $\balpha^* \coloneqq \bbZ_\elag^{\top, \dagger} \bbZ_\elag^\top  \balpha$ and $\bbeta^* \coloneqq \bbY_{\Ic_1, \epre}^{\top, \dagger} \bbY_{\Ic_1, \epre}^\top  \bbeta$. 
If Assumption~\ref{assump:transport} holds, then
$$
	\theta = \sum_{\ell = 1}^{L} \sum_{j \in \Ic_1} \alpha^*_\ell \cdot \beta^*_j \cdot \Ex[Y_{j, T-L+\ell} \mid \Ec]. 
$$
\end{cor}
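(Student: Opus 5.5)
Theorem~\ref{thm:estimation} already gives the representation in matrix form, $\theta = \bbeta^\top \bbW \balpha$. Here $\bbW = \Ex[\bW \mid \Ec]$ has $(j,\ell)$ entry $\Ex[Y_{j,T-L+\ell}\mid\Ec]$. The plan is to show that replacing $\bbeta$ by $\bbeta^*$ and $\balpha$ by $\balpha^*$ leaves this bilinear form unchanged, so that
\[
\bbeta^\top \bbW \balpha = \bbeta^{*\top} \bbW \balpha^* .
\]
Since the right-hand side is exactly the double sum in the statement, this suffices.

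The key linear-algebra facts concern orthogonal projections. The matrix $\bPi_Y \coloneqq \bbY_{\Ic_1,\pre}^{\top,\dagger}\bbY_{\Ic_1,\pre}^\top$ is the orthogonal projection onto $\row(\bbY_{\Ic_1,\pre}^\top) = \col(\bbY_{\Ic_1,\pre}) \subseteq \Rb^{N_1}$. Likewise, $\bPi_Z \coloneqq \bbZ_\lag^{\top,\dagger}\bbZ_\lag^\top$ is the orthogonal projection onto $\row(\bbZ_\lag^\top) = \col(\bbZ_\lag)\subseteq \Rb^L$. This follows from the standard identity that $\bA^\dagger \bA$ projects onto $\row(\bA)$, applied with $\bA = \bbY_{\Ic_1,\pre}^\top$ and $\bA = \bbZ_\lag^\top$. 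Both projections are symmetric, so $\bbeta^* = \bPi_Y\bbeta$ and $\balpha^* = \bPi_Z\balpha$.

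\emph{Unit side.} Assumption~\ref{assump:transport} gives $\col(\bbW)\subseteq\col(\bbY_{\Ic_1,\pre})$. Hence $\bPi_Y \bbW = \bbW$, and by symmetry of $\bPi_Y$,
\[
\bbeta^{*\top}\bbW = \bbeta^\top\bPi_Y\bbW = \bbeta^\top\bbW .
\]

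\emph{Time side.} Here I need to interpret the condition $\row(\bbW)\subseteq\row(\bbZ_\lag^\top)$ correctly. The rows of $\bbW$ lie in $\Rb^L$, and the rows of $\bbZ_\lag^\top$ are $M$-vectors. I would read the condition as requiring the row space of $\bbW$, a subspace of $\Rb^L$, to lie in the space $\Rb^L$ that $\bPi_Z$ projects onto. With $\bbZ_\lag^\top\in\Rb^{M\times L}$, that space is $\row(\bbZ_\lag^\top) = \col(\bbZ_\lag)\subseteq\Rb^L$, which is consistent with the dimensions. This is the step I expect to require the most care. Under this reading, $\bbW\bPi_Z = \bbW$, because each row of $\bbW$ is fixed by the projection. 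Therefore
\[
\bbW\balpha^* = \bbW\bPi_Z\balpha = \bbW\balpha .
\]

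Combining the two sides,
\[
\bbeta^{*\top}\bbW\balpha^* = \bbeta^\top\bbW\balpha^* = \bbeta^\top\bbW\balpha = \theta,
\]
where the last equality is Theorem~\ref{thm:estimation}. Expanding the bilinear form gives the stated double sum.

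As a remark, Propositions~\ref{prop:beta}(b) and \ref{prop:alpha}(b) are not needed for this identity. They only motivate why $\bbeta^*$ and $\balpha^*$ are the targets that \PCR{} recovers: the projected weights satisfy the same observable regressions, since $\bbY_{\Ic_1,\pre}^\top\bbeta^* = \bbY_{\Ic_1,\pre}^\top\bbeta$, and similarly for $\balpha^*$.
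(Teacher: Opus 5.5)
Your proof is correct and is essentially the paper's argument: both use that $\bbY_{\Ic_1,\pre}^{\top,\dagger}\bbY_{\Ic_1,\pre}^\top$ and $\bbZ_\lag^{\top,\dagger}\bbZ_\lag^\top$ are symmetric orthogonal projectors which, under Assumption~\ref{assump:transport}, fix $\bbW$ from the left and from the right respectively, so $\bbeta^{*\top}\bbW\balpha^* = \bbeta^\top\bbW\balpha = \theta$ by Theorem~\ref{thm:estimation}. The one slip is your dimensional worry: $\bbZ_\lag^\top \in \Rb^{M \times L}$ has its rows in $\Rb^L$ (not $\Rb^M$), so $\row(\bbW)\subseteq\row(\bbZ_\lag^\top)$ is literally an inclusion of subspaces of $\Rb^L$ and needs no reinterpretation.
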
 
Corollary~\ref{cor:identification} reformulates Theorem~\ref{thm:estimation} in terms of the uniquely defined, recoverable coefficients $\balpha^*$ and $\bbeta^*$. Assumption~\ref{assump:transport} is therefore complementary to previous structural conditions. 
Namely,  Assumption~\ref{assump:units} guarantees that a valid unit representation exists, while Assumption~\ref{assump:transport} ensures that its identifiable component is sufficient for forecasting.  
Assumptions~\ref{assump:hankel} and \ref{assump:transport} play the analogous roles on the temporal side.  

\subsection{Additional Assumptions} \label{sec:assump.consistency} 

\begin{assumption} 
\label{assump:bounded} 
	For every $(i,t,d)$, let $\langle \bu_i, \bv_t(d) \rangle \in [-1,1]$. 
\end{assumption}
The bound is without loss of generality and can be extended to $[a,b]$ for any $a \le b$. 

\begin{assumption} 
\label{assump:subg}
	Conditional on $\Ec$, $\varepsilon_{it}(d)$ are independent subgaussian variables satisfying $\emph{Var}(\varepsilon_{it}(d)) \le \sigma^2$ and $\| \varepsilon_{it}(d) \|_{\psi_2} \le C_\varepsilon \sigma$ for some constant $C_\varepsilon > 0$. 
\end{assumption} 
Assumption~\ref{assump:subg} permits the latent temporal factors to be correlated and to obey rich time-series dynamics. The independence requirement applies only to the idiosyncratic shocks. Although restrictive, it provides a transparent starting point for a finite-sample analysis of \TWSF. Comparable independence conditions appear in early analyses of \SC, \SI, and \mssa~\citep{abadie2, synth_iv, mssa}, as well as in related work \citep{focus}. Extending  to more general dependent noise structures remains future work.

\begin{assumption} 
\label{assump:spectra}
	The condition number $\kappa_y$ of $\bbY_{\Ic_1, \epre}$ satisfies $\kappa_y^{-1} \ge c_y$ and $\| \bbY_{\Ic_1, \epre} \|_F^2 \ge c'_y N_1 T_0$ for constants $c_y, c'_y > 0$. 
	Similarly, the condition number $\kappa_z$ of $\bbZ_\elag$ satisfies $\kappa_z^{-1} \ge c_z$ and $\| \bbZ_\elag\|_F^2 \ge c'_z L M$ for constants $c_z, c'_z > 0$. 
\end{assumption} 
Assumption~\ref{assump:spectra} requires the nonzero singular values of the population design matrices to be sufficiently strong and well balanced, so that the low-rank signal can be reliably separated from noise. 
It is analogous to spectral gap, pervasiveness, and beta-min conditions used in factor models, matrix completion, and high-dimensional regression \citep{chamberlainfactor, bai2019matrix, fan2018eigenvector, beta_min}. 
In applications, the plausibility of this condition can be investigated through scree plots, estimated condition numbers, and related spectral diagnostics. 

%

\subsection{Parameter Estimation Error} \label{sec:param.est}
We first control the errors of the two \PCR~subroutines used to recover $(\balpha^*, \bbeta^*)$. 
Define the population ranks $r_y \coloneqq \rank(\bbY_{\Ic_1, \pre})$ and $r_z \coloneqq \rank(\bbZ_{\lag})$.
For $a, m, n \in \Nb$, let  
\begin{align}
	\Lambda(a,m,n) \coloneqq \frac{a}{\min\{\sqrt{m}, \sqrt{n}\}} + \frac{ \sqrt{ a (1 + \log(mn))}}{\sqrt{n}}. 
\end{align}
Define $\Lambda_\alpha \coloneqq \Lambda(r_z, L, M)$ and $\Lambda_\beta \coloneqq \Lambda(r_y, N_1, T_0)$. 
Hereafter, the symbols $\lesssim$ and $\gtrsim$ suppress constants that do not depend on the model dimensions. 

\begin{theorem} \label{prop:parameter.recovery}
Let Assumptions~\ref{assump:lfm}-\ref{assump:spectra} hold. 
Suppose $k_y = r_y$ and 
\begin{align}
	N_1 T_0 \gtrsim \sigma^2 r_y  \left( \sqrt{N_1} + \sqrt{T_0} + \sqrt{\log(N_1T_0)} \right)^2. 
	\label{eq:sep.y}
\end{align} 
Then, with probability at least $1 - \Oc\left((N_1 T_0)^{-10} \right)$, $\| \hbbeta - \bbeta^*  \|_2 \lesssim \sigma \Lambda_{\beta} / \sqrt{N_1}$. 
%
%
Similarly, suppose $k_z = r_z$ and 
\begin{align}
	L M \gtrsim \sigma^2 r_z  \left( \sqrt{L} + \sqrt{M} + \sqrt{\log(LM)} \right)^2, 
	\label{eq:sep.z}
\end{align}
with $L \ge rQ$. 
Then, with probability at least $1 - \Oc\left((LM)^{-10} \right)$, 
$\| \hbalpha - \balpha^*  \|_2 \lesssim \sigma \Lambda_\alpha / \sqrt{L}$. 
%
%
\end{theorem}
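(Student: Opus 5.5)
Both bounds are instances of one generic \PCR~parameter-recovery result, so I will prove a single lemma and instantiate it twice. Let $\bX = \bbX + \bE \in \Rb^{m \times n}$ with $\rank(\bbX) = k$, condition number bounded by $c^{-1}$, $\|\bbX\|_F^2 \gtrsim mn$, and $\bE$ having independent mean-zero subgaussian entries (conditional on $\Ec$). Let the response be $\by = \bbX^\top \bomega + \bbeta_\varepsilon$ with independent subgaussian noise, and set $\bomega^* = \bbX^{\top,\dagger}\bbX^\top\bomega$. The lemma asserts that $\widehat{\bomega} = (\bX^{(k)})^{\top,\dagger}\by$ satisfies $\|\widehat{\bomega}-\bomega^*\|_2 \lesssim \sigma\Lambda(k,m,n)/\sqrt{m}$ whenever $mn \gtrsim \sigma^2 k(\sqrt m+\sqrt n+\sqrt{\log mn})^2$.

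For $\bbeta$, take $\bX = \bY_{\Ic_1,\pre}$, so $m=N_1$ and $n=T_0$. Proposition~\ref{prop:beta}(b) gives $\bby_{N,\pre} = \bbY_{\Ic_1,\pre}^\top\bbeta = \bbY_{\Ic_1,\pre}^\top\bbeta^*$. For $\balpha$, take $\bX = \bZ_\lag$, so $m=L$ and $n=M$. Proposition~\ref{prop:alpha}(b) applies to every Page column, since each column's response index lies in $\{T_0+L+1,\dots,T\}$; with $L\ge rQ$ this yields $\bbz_\tnext = \bbZ_\lag^\top\balpha$. The Page construction uses nonoverlapping blocks, so the entries of $\bZ_\lag$ and $\bz_\tnext$ are distinct observations. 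Assumption~\ref{assump:subg} then makes all noise entries independent, and the response noise is independent of the design noise. This is exactly why the Page matrix is used rather than a Hankel matrix.

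To prove the lemma, write $\widehat{\bomega} - \bomega^* = \bX^{(k),\top,\dagger}(\bby + \bbeta_\varepsilon) - \bomega^*$, with $\bby = \bbX^\top\bomega^*$. I would decompose the error into three pieces.

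First, the projection piece $\widehat\bP\bomega^* - \bomega^*$, where $\widehat\bP$ is the projector onto the top-$k$ left singular space of $\bX$. Its norm is at most $\|\widehat\bP - \bP\|_\txtop \|\bomega^*\|_2$. Wedin's theorem bounds it by $\|\bE\|_\txtop / s_k(\bbX)$. I will also need $\|\bomega^*\|_2 \lesssim 1/\sqrt{m}$-type control. This follows from $\bomega^*$ being the minimum-norm solution, the bound $\|\bby\|_2 \le \sqrt n$ from Assumption~\ref{assump:bounded}, and $s_k(\bbX) \gtrsim \sqrt{mn/k}$ from Assumption~\ref{assump:spectra}.

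Second, the design-perturbation piece $\bX^{(k),\top,\dagger}(\bbX - \bX^{(k)})^\top\bomega^*$.

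Third, the noise piece $\bX^{(k),\top,\dagger}\bbeta_\varepsilon$. Its norm is at most $\|\widehat\bV_k^\top\bbeta_\varepsilon\|_2 / s_k(\bX^{(k)})$. Conditional on the design, $\widehat\bV_k^\top\bbeta_\varepsilon$ is a $k$-dimensional subgaussian projection, so by Hanson–Wright its norm is $\lesssim \sigma\sqrt{k(1+\log mn)}$ with probability $1 - (mn)^{-10}$. Weyl's inequality together with the separation condition gives $s_k(\bX) \ge s_k(\bbX)/2 \gtrsim \sqrt{mn/k}$. Dividing yields $\sigma\sqrt{k(1+\log mn)}/\sqrt{mn}\cdot\sqrt{k}$. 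Once normalized by $\sqrt m$, this is the second term of $\Lambda$ up to the $\sqrt k$ factor, which I will absorb using the precise form of the bound.

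For the first two pieces, standard random matrix theory gives $\|\bE\|_\txtop \lesssim \sigma(\sqrt m + \sqrt n + \sqrt{\log mn})$ with the required probability. Also $\|\bX^{(k)} - \bbX\|_\txtop \le 2\|\bE\|_\txtop$. Both pieces are therefore of order $\|\bE\|_\txtop\sqrt{k}/(s_k(\bbX)\sqrt{m}) \cdot$(bounded factors), which is about $\sigma k/(\sqrt{m}\min(\sqrt m,\sqrt n))$, the first term of $\Lambda$. The separation conditions \eqref{eq:sep.y} and \eqref{eq:sep.z} are exactly what make $\|\bE\|_\txtop \le s_k(\bbX)/2$.

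I expect the main obstacle to be the bookkeeping of the $\sqrt k$ and $\sqrt m$ normalizations so that each piece lands exactly on $\sigma\Lambda/\sqrt m$. In particular, getting the $\|\bomega^*\|_2$ bound with the right dimension scaling requires the well-conditioning assumption: it gives $s_k(\bbX)^2 \asymp \|\bbX\|_F^2/k \gtrsim mn/k$. A union bound over the noise events then gives the stated probabilities.
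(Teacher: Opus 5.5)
Your proposal is correct and is essentially the paper's proof. The paper likewise proves one generic \PCR~lemma through the same three-term split (projection error, design perturbation via $\|\bX^{(k)}-\bbX\|_{\txtop}\le 2\|\bX-\bbX\|_{\txtop}$, and projected noise), uses Weyl with the separation condition and $s_k(\bbX)^2\gtrsim mn/k$ from Assumption~\ref{assump:spectra} as you do, and instantiates it on $\bY_{\Ic_1,\pre}^\top$ and $\bZ_\lag^\top$, differing only cosmetically (a direct $\sin\Theta$ argument and an $\epsilon$-net instead of Wedin and Hanson--Wright).

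The one step to tighten is the noise piece. You need the sharp bound $\|\widehat{\bV}_k^\top\bbeta_\varepsilon\|_2\lesssim\sigma(\sqrt{k}+\sqrt{\log(mn)})$, which Hanson--Wright does deliver, rather than $\sigma\sqrt{k(1+\log(mn))}$. The latter overshoots $\sigma\Lambda/\sqrt{m}$ by a factor as large as $\min\{\sqrt{k},\sqrt{\log(mn)}\}$. The sharp form gives $\sigma(k+\sqrt{k\log(mn)})/\sqrt{mn}$, whose first part is dominated by the first term of $\Lambda$.
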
 
Theorem~\ref{prop:parameter.recovery} provides finite-sample rates for recovering the identifiable components of the two population weight vectors. 
The following regime provides a convenient interpretation. 

\begin{definition} [Balanced regime] \label{def:balanced}
Let $N_1 \asymp T_0 \asymp L \asymp M \asymp n$ and $r_y \asymp r_z \asymp r_n$. 
\end{definition}

With fixed ranks, Theorem~\ref{prop:parameter.recovery} yields an error rate of $\Oc(n^{-1})$, up to logarithmic factors. 
This sharpens the $\Oc(n^{-3/4})$ bound in \cite{pcr_jmlr}. 
Because \SI, \mssa, and related de-noising methods \citep{corrupt} all rely on \PCR~subroutines, the improved bound may have applications beyond the present setting. 

Conditions \eqref{eq:sep.y} and \eqref{eq:sep.z} require the smallest nonzero singular values of the population signal matrices dominate the operator-norm fluctuation of the noise blocks. 
The theorem also assumes oracle knowledge of $r_y$ and $r_z$. 
A formal analysis of rank misspecification is left for future work, although existing results suggest that overestimating the rank is generally less harmful than underestimating it \citep{pcr_jmlr}.  

\subsection{Finite-Sample Forecast Error} \label{sec:forecast.error} 
The parameter bounds imply the following finite-sample guarantee. 

\begin{theorem} 
\label{thm:error}
Let the setup of Theorem~\ref{prop:parameter.recovery} hold. 
Define $\varphi_{\min}^2 \coloneqq 1 + \log(\min\{N_1 T_0, LM \})$. 
Then, with probability at least $1- \Oc((N_1 T_0)^{-10} + (LM)^{-10})$, 
\begin{align}
	\left| \htheta - \theta \right| 
	\lesssim 
	\sigma \left(\Lambda_{\alpha} + \Lambda_{\beta}\right) + \sigma^2 \Lambda_{\alpha} \Lambda_{\beta} + \frac{ \sigma \varphi_{\min}  \left(\sigma \Lambda_{\alpha} + \sqrt{r_z}\right) \left(\sigma \Lambda_{\beta} + \sqrt{r_y} \right)}{\sqrt{N_1L}}. 
	\label{eq:hp.error.bound}
\end{align} 
%
%
\end{theorem}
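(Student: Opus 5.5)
Write $\bDelta_\alpha \coloneqq \hbalpha - \balpha^*$, $\bDelta_\beta \coloneqq \hbbeta - \bbeta^*$, and $\bW = \bbW + \bXi$, where $\bXi$ is the noise in the forecast state. The starting point is Corollary~\ref{cor:identification}, which gives $\theta = \langle \bbeta^*, \bbW \balpha^* \rangle$. Then $\htheta - \theta$ expands as
\begin{align}
\htheta - \theta
= \langle \bDelta_\beta, \bbW \balpha^* \rangle + \langle \bbeta^*, \bbW \bDelta_\alpha \rangle + \langle \bDelta_\beta, \bbW \bDelta_\alpha \rangle + \langle \hbbeta, \bXi \hbalpha \rangle.
\end{align}
I would bound each term on the intersection of the two events of Theorem~\ref{prop:parameter.recovery}. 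This gives the stated probability by a union bound, plus a further high-probability event for the noise term, which has the same order.

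\textbf{Deterministic signal terms.} By Assumption~\ref{assump:bounded}, every entry of $\bbW$ lies in $[-1,1]$, so $\|\bbW\|_\txtop \le \sqrt{N_1 L}$. For the first term, Assumption~\ref{assump:transport} lets me write $\bbW\balpha^* = \bbW\balpha$. Using Proposition~\ref{prop:alpha}(a), its entries are the donors' mean outcomes $\Ex[Y_{j,T+1}(1)\mid\Ec] \in [-1,1]$, so $\|\bbW\balpha^*\|_2 \le \sqrt{N_1}$. Together with $\|\bDelta_\beta\|_2 \lesssim \sigma\Lambda_\beta/\sqrt{N_1}$, the first term is $\lesssim \sigma\Lambda_\beta$. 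Symmetrically, $\bbW^\top\bbeta^*$ is the target's mean trajectory by Proposition~\ref{prop:beta}(a) together with the column-space part of Assumption~\ref{assump:transport}. Hence $\|\bbW^\top\bbeta^*\|_2 \le \sqrt{L}$, and the second term is $\lesssim \sigma\Lambda_\alpha$. The third term is at most $\|\bDelta_\beta\|_2\|\bbW\|_\txtop\|\bDelta_\alpha\|_2 \lesssim \sigma^2\Lambda_\alpha\Lambda_\beta$. These three bounds account for the first two groups in \eqref{eq:hp.error.bound}.

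\textbf{Noise term: the main obstacle.} The difficulty is that $\hbbeta$ and $\hbalpha$ are random. I will argue they are independent of $\bXi$, since $\bXi$ uses only the last $L$ periods. $\hbbeta$ uses pre-treatment data only. $\hbalpha$ uses the first $B-1$ Page blocks, which exclude the final $L+1$ post-treatment observations. By Assumption~\ref{assump:subg}, conditionally on $(\hbalpha,\hbbeta)$ the bilinear form $\langle \hbbeta, \bXi\hbalpha\rangle = \sum_{j,\ell}\hat\beta_j\hat\alpha_\ell \Xi_{j\ell}$ is subgaussian with parameter $\lesssim \sigma\|\hbbeta\|_2\|\hbalpha\|_2$. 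A tail bound at level $\varphi_{\min}$ then gives, with probability $1-\Oc(\min\{N_1T_0,LM\}^{-10})$,
\begin{align}
|\langle \hbbeta, \bXi \hbalpha\rangle| \lesssim \sigma\varphi_{\min}\|\hbbeta\|_2\|\hbalpha\|_2 .
\end{align}

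\textbf{Norms of the estimated weights.} It remains to bound $\|\hbbeta\|_2 \le \|\bbeta^*\|_2 + \|\bDelta_\beta\|_2$. For $\bbeta^*$ I use that it lies in the row space of $\bbY_{\Ic_1,\pre}^\top$, together with $\bbY_{\Ic_1,\pre}^\top\bbeta^* = \bby_{N,\pre}$, so
\begin{align}
\|\bbeta^*\|_2 \le \|\bby_{N,\pre}\|_2 / s_{r_y}(\bbY_{\Ic_1,\pre}) \le \sqrt{T_0}/s_{r_y}.
\end{align}
Assumption~\ref{assump:spectra} gives $s_{r_y}^2 \gtrsim \|\bbY_{\Ic_1,\pre}\|_F^2/(r_y\kappa_y^2) \gtrsim N_1T_0/r_y$. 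Hence $\|\bbeta^*\|_2 \lesssim \sqrt{r_y/N_1}$ and $\|\hbbeta\|_2 \lesssim (\sqrt{r_y}+\sigma\Lambda_\beta)/\sqrt{N_1}$. The same argument with $\bbZ_\lag$ and $\bz_{\tnext}$ gives $\|\hbalpha\|_2 \lesssim (\sqrt{r_z}+\sigma\Lambda_\alpha)/\sqrt{L}$, using $M\ge L$ or the $\min$ in the bound. Multiplying the two norm bounds yields the final term of \eqref{eq:hp.error.bound}. I expect the independence bookkeeping and this weight-norm bound to be the most delicate points; everything else is direct.
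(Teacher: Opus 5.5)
Your proposal is correct and is essentially the paper's proof. It uses the same expansion around $\theta = \langle \balpha^*, \bbW^\top \bbeta^* \rangle$, the same signal bounds $\|\bbW\balpha^*\|_2 \le \sqrt{N_1}$, $\|\bbW^\top\bbeta^*\|_2 \le \sqrt{L}$ and $\|\bbW\|_{\txtop} \le \sqrt{N_1 L}$, and the same Page-construction independence for the noise. The one cosmetic difference is that you keep the four $\bXi_w$-terms bundled as $\langle \hbbeta, \bXi \hbalpha \rangle$ and split $\|\hbbeta\|_2 \le \|\bbeta^*\|_2 + \|\bDelta_\beta\|_2$, which reproduces exactly the sum of the paper's four applications of Lemma~\ref{lemma:noise}.

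Two small points. First, the entrywise identity for $\bbW^\top\bbeta$ at times $T-L+\ell$ follows from $\bu_N = \sum_{j} \beta_j \bu_j$ (the proof of Proposition~\ref{prop:beta}), not from part (a), which only concerns $T+1$. Second, no relation between $M$ and $L$ is needed, since $\sqrt{M}/\lambda_z \lesssim \sqrt{r_z/L}$ follows directly from $\lambda_z^2 \gtrsim LM/r_z$.
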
 
Under Definition~\ref{def:balanced}, $\big| \htheta - \theta \big| = \Oc( n^{-1/2} a_n  +  n^{-1} a_n^2)$, where $a_n \coloneqq r_n + \sqrt{r_n \log(n)}$. 
Consequently, when $r_n \gg \log(n)$, \TWSF~is consistent if $r_n = o (\sqrt{n})$; when $r_n \lesssim \log(n)$, it is sufficient that $r_n \log(n) = o(n)$. 

\subsection{Asymptotic Normality} \label{sec:normality}
We next derive the asymptotic distribution of $\htheta$. 
The proof in Appendix~\ref{sec:proofs.inference} uses an orthogonalization expansion that cancels the first-order \PCR~errors in $\hbalpha$ and $\hbbeta$.
This expansion introduces the dual weights 
\begin{align}
	\bq^*_{\beta} &\coloneqq \bbY_{\Ic_1, \pre}^\dagger  \bbW \balpha^*, 
	\quad
	\bq^*_{\alpha} \coloneqq \bbZ_\lag^\dagger  \bbW^\top \bbeta^*.  \label{eq:riesz.q}
\end{align}
Under Assumption~\ref{assump:transport}, these vectors solve the feasible equations $\bbY_{\Ic_1, \pre} \bq^*_{\beta} = \bbW \balpha^*$ and $\bbZ_\lag  \bq^*_{\alpha} = \bbW^\top \bbeta^*$. 
%
%
Thus, $\bq^*_{\beta}$ gives the minimum $\ell_2$-norm combination of pre-treatment columns needed to reproduce the donor forecast vector $\bbW \balpha^*$; likewise, $\bq^*_{\alpha}$ is the minimum $\ell_2$-norm combination of temporal training examples required to reproduce the lag history $\bbW^\top \bbeta^*$ at which the forecast is evaluated. 
We can now state our normality result. 

\begin{theorem} 
\label{thm:inference} 
Let the setup of Theorem~\ref{thm:error} hold with Assumption~\ref{assump:subg} specialized to Gaussian noise with homoskedastic variance $\sigma^2$. 
Then, $\htheta - \theta = G + R$, where $G \sim \mathcal{N}\left(0, \Vc^2 \right)$ and 
$\Vc^2 \coloneqq \sigma^2 \{ \| \balpha^* \|_2^2 \cdot \| \bbeta^* \|_2^2
	+ \| \bq^*_{\beta} \|_2^2 (1 + \| \bbeta^* \|_2^2 )
	+ \| \bq^*_{\alpha} \|_2^2 (1 + \| \balpha^* \|_2^2 ) 
	\}$;
moreover, with probability at least $1 - \Oc(\rho)$, $| R | \lesssim \Psi$, where $\rho \coloneqq (N_1T_0)^{-10} + (LM)^{-10} + (N_1L)^{-10}$ and $\Psi$ is a deterministic remainder envelope defined below in \eqref{eq:remainder.upper.bound}. 
Thus, if $\Psi / \Vc = o(1)$, then, as $N_1, T_0, L, M \rightarrow \infty$, $\big(\htheta - \theta \big) / \Vc \rightsquigarrow \Nc(0,1)$. 
\end{theorem}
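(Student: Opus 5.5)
We decompose $\htheta - \theta$ into a Gaussian linear part and a remainder. Write $\bY_{\Ic_1,\pre} = \bbY_{\Ic_1,\pre} + \bE_y$, $\bZ_\lag = \bbZ_\lag + \bE_z$, $\bW = \bbW + \bE_W$, and $\by_{N,\pre} = \bby_{N,\pre} + \bepsilon_N$. Corollary~\ref{cor:identification} gives $\theta = \langle \bbeta^*, \bbW\balpha^*\rangle$. We then expand bilinearly:
\begin{align}
\htheta - \theta = \langle \bbeta^*, \bE_W \balpha^*\rangle + \langle \hbbeta - \bbeta^*, \bW\balpha^*\rangle + \langle \bbeta^*, \bW(\hbalpha-\balpha^*)\rangle + \langle \hbbeta-\bbeta^*, \bW(\hbalpha-\balpha^*)\rangle.
\end{align}
The first term is exactly Gaussian with variance $\sigma^2\|\balpha^*\|_2^2\|\bbeta^*\|_2^2$. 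The last term is second order and is controlled by Theorem~\ref{prop:parameter.recovery} together with a high-probability bound on $\|\bW\|_\txtop$; it goes into $\Psi$.

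For the middle terms we linearize each \PCR\ error. On the unit side, $\hbbeta - \bbeta^*$ is to first order $\bbY_{\Ic_1,\pre}^{\top,\dagger}(\bepsilon_N - \bE_y^\top\bbeta^*)$, plus a projection-perturbation piece. That piece is orthogonal to the population row space, and by Assumption~\ref{assump:transport} it is annihilated when paired with $\bbW\balpha^*$, up to the error of HSVT subspace estimation. Pairing the main piece with $\bbW\balpha^*$ and using $\bbY_{\Ic_1,\pre}^{\dagger}\bbW\balpha^* = \bq^*_\beta$ gives the term $\langle \bq^*_\beta, \bepsilon_N - \bE_y^\top\bbeta^*\rangle$. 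This is Gaussian with variance $\sigma^2\|\bq^*_\beta\|_2^2(1+\|\bbeta^*\|_2^2)$, since $\bepsilon_N$ and $\bE_y$ are independent with iid entries.

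The temporal side is symmetric. The leading term is $\langle \bq^*_\alpha, \bepsilon_\tnext - \bE_z^\top\balpha^*\rangle$, where $\bepsilon_\tnext$ is the noise in $\bz_\tnext$, and its variance is $\sigma^2\|\bq^*_\alpha\|_2^2(1+\|\balpha^*\|_2^2)$. Replacing $\bW$ by $\bbW$ in the middle terms costs cross terms such as $\langle \hbbeta-\bbeta^*, \bE_W\balpha^*\rangle$. These are bounded using the independence of $\bE_W$ from the training noise: the forecast block uses the final $L+1$ periods, which are disjoint from both the Page matrix and the pre-treatment data. Conditionally, each such term is subgaussian with scale $\sigma\|\hbbeta-\bbeta^*\|_2\|\balpha^*\|_2$, which produces the $\varphi$ and $(N_1L)^{-10}$ factors.

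Independence of the three noise sources (pre-period, Page block, forecast block) makes the Gaussian terms independent, so their sum $G$ has variance $\Vc^2$ and there are no covariance terms. One check is needed here. The Page matrix and $\bW$ come from disjoint time windows, but $\bz_\tnext$ and $\bZ_\lag$ share noise only through distinct entries within a column, so the entries of $\bepsilon_\tnext$ and $\bE_z$ are still distinct independent variables, and the variances add as claimed. The remainder $R$ collects the second-order \PCR\ terms: the HSVT subspace perturbation, bounded by $\|\bE\|_\txtop^2/s_{\min}^2$ via Wedin and Davis--Kahan; the terms involving $\bE_y\,\bE_y^\top$; the cross terms; and the product term. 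Each is bounded by matrix concentration under Assumptions~\ref{assump:subg}--\ref{assump:spectra} together with conditions \eqref{eq:sep.y}--\eqref{eq:sep.z}, and together they define $\Psi$. The final claim then follows by Slutsky: $G/\Vc\sim\Nc(0,1)$ and $R/\Vc\to 0$ in probability.

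The main obstacle is the first-order expansion of the HSVT-based \PCR\ estimator. One must show that the component of $\hbbeta-\bbeta^*$ lying outside the population row space contributes only at second order once paired with $\bbW\balpha^*$, using Assumption~\ref{assump:transport}, and that the residual terms, for instance $\bbY^{\dagger}$ applied to $\bE_y\bE_y^\top$-type products, are $o(\Vc)$. I would attempt this via the exact identity
\begin{align}
\bY^{(k),\dagger} - \bbY^{\dagger} = -\bbY^{\dagger}\bDelta\,\bbY^{\dagger} + \text{projection terms} + \Oc\big(\|\bDelta\|_\txtop^2/s_{\min}^3\big),
\end{align}
where $\bDelta \coloneqq \bY^{(k)} - \bbY$, and then apply Theorem~\ref{prop:parameter.recovery}-style bounds to the higher-order pieces.
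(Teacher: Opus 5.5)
Your proposal is correct, and at the key step it takes a genuinely different route from the paper.

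\textbf{How the paper proceeds.} The paper never writes $\hbbeta - \bbeta^*$ explicitly. It adds the zero-valued corrections $\langle \tbq_\beta, \by - \bY^\top \hbbeta\rangle$ and $\langle \tbq_\alpha, \bz - \bZ^\top \hbalpha\rangle$, built from the projected representers $\tbq_\beta = \hbP_y \bq^*_\beta$ and $\tbq_\alpha = \hbP_z \bq^*_\alpha$; these vanish by the \PCR{} normal equations. The identities $\bbY \bq^*_\beta = \bbW\balpha^*$ and $\bbZ\bq^*_\alpha = \bbW^\top\bbeta^*$ then cancel the two nuisance-linear terms exactly (Lemma~\ref{lemma:decomp}). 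What remains are products of $\bDelta_\alpha, \bDelta_\beta$ (already controlled by Theorem~\ref{prop:parameter.recovery}) with noise, with the one-sided projection errors $\bDelta_{\tq}$, or with each other.

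\textbf{How your route differs.} You compute the linear terms directly, and no linearization is actually needed. Since $\hbY^{\top,\dagger}\bY^\top = \hbY\hbY^\dagger$, one has exactly $\hbbeta - \bbeta^* = \hbY^{\top,\dagger}\bdelta_\beta - (\bI - \hbY\hbY^\dagger)\bbeta^*$ with $\bdelta_\beta = \bxi_y - \bXi_y^\top \bbeta^*$. Hence
\[
\langle \hbbeta - \bbeta^*, \bbW\balpha^*\rangle = \langle \bq^*_\beta, \bdelta_\beta\rangle + \langle \bdelta_\beta, (\hbY^\dagger - \bbY^\dagger)\bbW\balpha^*\rangle - \langle (\bI - \hbY\hbY^\dagger)\bbeta^*, \bbW\balpha^*\rangle ,
\]
and the time side is identical. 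The only extra input is $\|\hbY^\dagger - \bbY^\dagger\|_\txtop \lesssim \|\bXi_y\|_\txtop/\lambda_y^2$, as in the proof of Lemma~\ref{lemma:riesz}.

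\textbf{What each buys.} Your route is more elementary and shows exactly where Assumption~\ref{assump:transport} enters. The paper's orthogonal-score template expresses the remainder through quantities it has already bounded. It also transfers verbatim to the recursive estimator via $\bJ_\eta$.

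Two steps of your sketch need sharpening, because as written they would not give $o(\Vc)$.

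\emph{The projection piece.} It is orthogonal to the \HSVT{} subspace, not the population one. Its pairing with $\bbW\balpha^*$ is small only because $\bbeta^*$ and $\bbW\balpha^*$ both lie in $\col(\bbY)$, the latter by Assumption~\ref{assump:transport}. The pairing is therefore an inner product of two vectors, each multiplied by $(\bI - \hbY\hbY^\dagger)\bbY\bbY^\dagger$. Lemma~\ref{lemma:pcr.peturb.2} then bounds it by $\eta_y^2\sqrt{N_1T_0}/\lambda_y^3$. The square is essential: a single projection factor gives order $\sigma r_n n^{-1/2}$ in the balanced regime.

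\emph{The response-noise cross term.} The term $\langle \bxi_y, (\hbY^\dagger - \bbY^\dagger)\bbW\balpha^*\rangle$, and its analogue with $\bxi_z$ and $\hbZ^\dagger - \bbZ^\dagger$, is not handled by matrix concentration. Cauchy--Schwarz with $\|\bxi_y\|_2 \asymp \sigma\sqrt{T_0}$ gives $\sigma^2 r_y(T_0^{-1/2} + N_1^{-1/2})$, again not negligible. You need to condition on $\bY$. Then $\hbY$ is fixed and $\bxi_y$ is independent of it given $\Ec$, so the term is conditionally $\Nc\big(0, \sigma^2 \|(\hbY^\dagger - \bbY^\dagger)\bbW\balpha^*\|_2^2\big)$. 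This gives $\sigma\varphi\sqrt{N_1}\eta_y/\lambda_y^2$, which is the paper's event $\Gc_{\xi\beta}$. The $\bXi_y^\top\bbeta^*$ part of the same term can be bounded crudely by $\eta_y^2\sqrt{N_1T_0}/\lambda_y^3$.

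With these two fixes, each piece of your remainder is dominated by a term already in the paper's bound, so you recover $|R| \lesssim \Psi$.
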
 

\noindent \textbf{Leading variance.} 
The three terms in $\Vc^2$ correspond to noise in the terminal donor-history block, estimation of the unit-side rule, and estimation of the time-side rule. The dual-weight norms measure extrapolative leverage: weak representation of the forecast direction in either design increases the variance transmitted to the final estimate. Conditional independence of the relevant noise blocks makes these contributions additive. Gaussianity is not conceptually essential but it avoids additional Lyapunov-type conditions and yields a cleaner asymptotic statement. 


\medskip \noindent \textbf{Remainder envelope.} 
Define $\varphi_{\max}^2 \coloneqq 1 +  \log(\max\{ N_1T_0 , LM, N_1L \})$, 
\begin{align}
	\Omega_{\alpha} &\coloneqq\frac{r_z}{\min\{L, M\}} + \frac{\sqrt{r_y r_z}}{\sqrt{M}  \min\{\sqrt{N_1}, \sqrt{L}\}} + \frac{\sqrt{r_z}  \Lambda_{\beta}}{\sqrt{M }},
	\\
	\Omega_{\beta} &\coloneqq \frac{r_y}{\min\{N_1, T_0\}} + \frac{\sqrt{r_y r_z}}{\sqrt{T_0}  \min\{\sqrt{N_1}, \sqrt{L}\}} + \frac{\sqrt{r_y}  \Lambda_{\alpha}}{\sqrt{T_0 }}.
\end{align} 
The remainder envelope is $\Psi / \sigma^2 \coloneqq \Psi_{\alpha \beta} + \Psi_{\alpha} + \Psi_{\beta}$, where 
\begin{align}
	\Psi_{\alpha \beta}
	&\coloneqq 
	\varphi_{\max} \left\{ \frac{ \sqrt{r_y}  \Lambda_{\alpha} + \sqrt{r_z}  \Lambda_{\beta} + \sigma  \Lambda_{\alpha} \Lambda_{\beta}}{\sqrt{N_1 L}} \right\},
	\\ 
	\Psi_\alpha
	&\coloneqq  
	\Omega_{\alpha} \left\{ \frac{\sqrt{r_z  M}  }{\min\{\sqrt{L}, \sqrt{M}\}} + \varphi_{\max} \right\},
	~~
	\Psi_\beta
	\coloneqq
	\Omega_{\beta} \left\{ \frac{\sqrt{r_y  T_0} }{\min\{\sqrt{N_1}, \sqrt{T_0}\}} + \varphi_{\max} \right\}.
	\quad ~ 
	\label{eq:remainder.upper.bound} 
\end{align}
Here, $\Psi_{\alpha \beta}$ collects interactions between the two \PCR~stages, while $\Psi_\alpha$ and $\Psi_\beta$ capture perturbations of the time- and unit-side dual weights.

\medskip \noindent \textbf{Interpretation.} 
Theorem~\ref{thm:inference} establishes asymptotic normality whenever the remainder $\Psi$ is negligible relative to the leading standard deviation $\Vc$.
Under Definition~\ref{def:balanced}, suppose additionally that $\| \bq^*_{\beta} \|_2 + \| \bq^*_{\alpha} \|_2 \gtrsim n^{-1/2}$.
Then, $\Vc \gtrsim n^{-1/2}$, and 
$\Psi / \Vc = \Oc\big( n^{-1/2}  r_n ( \sqrt{r_n} + \sqrt{1+\log(n)} )^2 \big)$. 
When $r_n \gg \log(n)$, the condition $\Psi/ \Vc = o(1)$ holds if $r_n = o(n^{1/4})$. 
When $r_n \lesssim \log(n)$, it is sufficient that $r_n \log(n) = o(\sqrt{n})$.

\section{Pointwise Inference} \label{sec:inference}
Theorem~\ref{thm:inference} identifies the leading Gaussian fluctuation of \TWSF. Feasible inference requires consistent estimates of both the  idiosyncratic variance $\sigma^2$ and the leading variance $\Vc^2$. 
As in Section~\ref{sec:results}, all results below are understood conditional on $\Ec$. 


\subsection{Estimating the Idiosyncratic Variance} \label{sec:var.sigma}
%
%
We consider a variance estimator based on the pooled unit- and time-side \PCR~residuals: 
\begin{align}
	\hsigma^2 &\coloneqq \frac{\| \by_{N, \pre} - \bY_{\Ic_1, \pre}^\top \hbbeta \|_2^2 ~+~ \| \bz_{\tnext} - \bZ_\lag^\top \hbalpha \|_2^2}{(T_0 - k_y) + (M - k_z)}. \label{eq:var.sigma.pool} 
\end{align}
For $a, m, n \in \Nb$ with $a < m$, define  
\begin{align}
	\delta(a, m, n) \coloneqq 
	\frac{a}{m - a}  \left( 1 + \frac{m + \log(mn)}{n} \right)
	+ \frac{\sqrt{\log(mn)}}{\sqrt{m - a}} + \frac{\log(mn)}{m - a}. 
	\label{eq:delta.var.sigma}
\end{align}

\begin{proposition} \label{prop:var.sigma}
Let the setup of Theorem~\ref{thm:inference} hold. 
Then, with probability at least $1 - \Oc((N_1T_0)^{-10} + (LM)^{-10})$, 
\begin{align}
	\left| \frac{\hsigma^2}{\sigma^2}  - 1 \right| 
	&\lesssim 
	\frac{(T_0 - r_y) \cdot \delta(r_y, T_0, N_1) + (M - r_z) \cdot \delta(r_z, M, L)}{(T_0 - r_y) + (M - r_z)}. 
\end{align}
\end{proposition}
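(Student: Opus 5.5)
We prove the bound separately for each residual sum and then pool. Since the pooled estimator is a weighted average,
\[
\frac{\hsigma^2}{\sigma^2} - 1 = \sum_{\text{side}} w_{\text{side}}\left(\frac{\|\text{res}_{\text{side}}\|_2^2}{\sigma^2 (m-a)} - 1\right),
\qquad w_{\text{side}} = \frac{m-a}{(T_0-r_y)+(M-r_z)},
\]
it suffices to show that for the unit side ($a=r_y$, $m=T_0$, $n=N_1$), with probability $1-\Oc((N_1T_0)^{-10})$,
\[
\left|\|\by_{N,\pre} - \bY_{\Ic_1,\pre}^\top\hbbeta\|_2^2 - \sigma^2(T_0-r_y)\right| \lesssim \sigma^2 (T_0-r_y)\,\delta(r_y,T_0,N_1).
\]
The analogous bound for the time side follows by the same argument with $(r_z, M, L)$. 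Taking a union bound then gives the proposition. I treat only the unit side below.

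\textbf{Residual decomposition.} Write $\by_{N,\pre} = \bbY_{\Ic_1,\pre}^\top\bbeta^* + \bvarrho$, where by Proposition~\ref{prop:beta}(b) (and since $\bbeta-\bbeta^*$ lies in the null space of $\bbY_{\Ic_1,\pre}^\top$) $\bvarrho$ is the Gaussian noise vector of unit $N$. This noise is independent of the donor noise $\bE \coloneqq \bY_{\Ic_1,\pre} - \bbY_{\Ic_1,\pre}$. Let $\hbP$ be the projection onto the row space of $\bY^{(k_y)}_{\Ic_1,\pre}$, a rank-$r_y$ projection in $\Rb^{T_0}$. PCR gives $(\bY^{(k_y)})^\top\hbbeta = \hbP\by_{N,\pre}$, so the residual equals
\[
(\bI-\hbP)\by_{N,\pre} - (\bY-\bY^{(k_y)})^\top\hbbeta .
\]
Expanding, the residual is the sum of three pieces: (i) $(\bI-\hbP)\bvarrho$; (ii) the signal leakage $(\bI-\hbP)\bbY^\top\bbeta^*$; and (iii) the tail term $(\bY-\bY^{(k_y)})^\top\hbbeta$.

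\textbf{Main term (i).} Because $\hbP$ depends only on $\bE$, which is independent of $\bvarrho$, conditioning on $\bE$ makes $\|(\bI-\hbP)\bvarrho\|_2^2/\sigma^2$ exactly $\chi^2_{T_0-r_y}$. Laurent--Massart then gives a deviation of order $\sqrt{(T_0-r_y)\log(N_1T_0)} + \log(N_1T_0)$. After dividing by $T_0-r_y$, this produces the second and third terms of $\delta$.

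\textbf{Bias terms (ii), (iii), and cross terms.} Use Wedin's theorem, the spectral assumption (Assumption~\ref{assump:spectra}), and the separation condition \eqref{eq:sep.y}. These give $\|(\bI-\hbP)\bbY^\top\|_\txtop \lesssim \|\bE\|_\txtop \lesssim \sigma(\sqrt{N_1}+\sqrt{T_0}+\sqrt{\log})$. Since $\sigma_{\min}(\bbY)\gtrsim\sqrt{N_1T_0/r_y}$ implies $\|\bbeta^*\|_2^2 \lesssim r_y/N_1$, piece (ii) satisfies
\[
\|(\mathrm{ii})\|_2^2 \lesssim \sigma^2 r_y\, \frac{N_1+T_0+\log(N_1T_0)}{N_1},
\]
which matches the first term of $\delta$ times $T_0-r_y$. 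For piece (iii), $\|\bY-\bY^{(k_y)}\|_\txtop \le \sigma_{r_y+1}(\bY) \le \|\bE\|_\txtop$, and $\|\hbbeta\|_2 \le \|\bbeta^*\|_2 + \sigma\Lambda_\beta/\sqrt{N_1}$ by Theorem~\ref{prop:parameter.recovery}, giving the same order.

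The cross terms between (i) and (ii)--(iii) are the step I expect to be the main obstacle, because (iii) depends on $\bvarrho$ through $\hbbeta$. I would handle this by writing $\hbbeta = (\bY^{(k_y)})^{\top,\dagger}\by_{N,\pre}$ and splitting it into a $\bvarrho$-free part plus $(\bY^{(k_y)})^{\top,\dagger}\bvarrho$. The $\bvarrho$-free part gives Gaussian linear forms conditional on $\bE$, each of size $\sigma\|\cdot\|_2\sqrt{\log}$. The remaining part gives quadratic forms in $\bvarrho$ whose matrix has rank $r_y$ and small operator norm, because $\|\bY-\bY^{(k_y)}\|_\txtop / \sigma_{r_y}(\bY^{(k_y)}) \lesssim \sqrt{r_y}\,\sigma/\sqrt{\min\{N_1,T_0\}}$. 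Hanson--Wright then bounds these by $\sigma^2(r_y+\log)$ times a small factor. By Cauchy--Schwarz, the cross terms are at most the geometric mean of the squared terms already bounded, which stays within $\sigma^2(T_0-r_y)\,\delta$. Combining everything gives the per-side bound, and weighting as above yields the proposition.
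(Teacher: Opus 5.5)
You follow the paper's route: reduce the residual to a projection of $\by_{N,\pre}$, bound the pure-noise piece by a $\chi^2_{T_0-r_y}$ argument conditional on the donor block, bound the leakage $(\bI-\hbP)\bbY^\top\bbeta^*$ by $\|\boldsymbol{E}\|_\txtop\|\bbeta^*\|_2$ with $\|\bbeta^*\|_2^2\lesssim r_y/N_1$, and pool with degree-of-freedom weights. Two points need attention.

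First, a simplification. Your term (iii) is identically zero, so the step you flag as the main obstacle never arises. The vector $\hbbeta=(\bY^{(k_y)}_{\Ic_1,\pre})^{\top,\dagger}\by_{N,\pre}$ lies in the span of the leading $k_y$ left singular vectors of $\bY_{\Ic_1,\pre}$. That span is orthogonal to the column space of $\bY_{\Ic_1,\pre}-\bY^{(k_y)}_{\Ic_1,\pre}$. Hence $\bY_{\Ic_1,\pre}^\top\hbbeta=(\bY^{(k_y)}_{\Ic_1,\pre})^\top\hbbeta$, and the residual is exactly $(\bI-\hbP)\by_{N,\pre}$; this is the identity the paper uses. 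The $\hbbeta$-splitting and Hanson--Wright arguments can be deleted.

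Second, a step that fails as written. The one surviving cross term is $2\bvarrho^\top(\bI-\hbP)\bbY^\top\bbeta^*$. Your closing Cauchy--Schwarz step bounds it by $2\|(\mathrm{i})\|_2\|(\mathrm{ii})\|_2$. Since $\|(\mathrm{i})\|_2^2\approx\sigma^2(T_0-r_y)$, this gives a relative error of order $\sqrt{\delta_1}$, where $\delta_1\coloneqq\frac{r_y}{T_0-r_y}\bigl(1+\frac{T_0+\log(N_1T_0)}{N_1}\bigr)$ is the first term of $\delta(r_y,T_0,N_1)$. That is not $\lesssim\delta$. In the balanced regime it equals $\sqrt{r_n/n}$, which is of larger order than $r_n/n+\sqrt{\log n/n}$ whenever $\log n\ll r_n\ll n$.

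The fix is the conditional-Gaussian device you already invoke for the ``$\bvarrho$-free part,'' applied to this term.
Given the donor noise, $\bvarrho^\top(\bI-\hbP)\bbY^\top\bbeta^*\sim\mathcal{N}(0,\sigma^2\|(\mathrm{ii})\|_2^2)$. So with probability $1-\Oc((N_1T_0)^{-10})$ it is $\lesssim\sigma\|(\mathrm{ii})\|_2\sqrt{\log(N_1T_0)}\le\|(\mathrm{ii})\|_2^2+\sigma^2\log(N_1T_0)$. Both pieces lie within $\sigma^2(T_0-r_y)\,\delta(r_y,T_0,N_1)$: the first through $\delta_1$, the second through the term $\log(N_1T_0)/(T_0-r_y)$ of $\delta$. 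With these two changes your proof coincides with the paper's.
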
 
Within Definition~\ref{def:balanced},
$|\hsigma^2/\sigma^2 - 1| = \Oc\big( n^{-1} r_n + n^{-1/2}  \sqrt{1 + \log(n)} \big)$. 
Hence, the variance estimator in \eqref{eq:var.sigma.pool} is consistent whenever $r_n = o(n)$. 

\subsection{Estimating the Leading Variance}
Define the scale-free factor $\Upsilon \coloneqq \Vc^2 / \sigma^2$, and its plug-in estimator by 
%
%
\begin{align}
	\hUpsilon &\coloneqq  \| \hbalpha \|_2^2 \cdot \| \hbbeta \|_2^2
	+ \| \hbq_{\beta} \|_2^2  \left(1 + \| \hbbeta \|_2^2 \right)
	+ \| \hbq_{\alpha} \|_2^2  \left(1 + \| \hbalpha \|_2^2 \right),
	\label{eq:var.est.asymp}
\end{align}
where the empirical analogues of the dual weights in \eqref{eq:riesz.q} are 
$\hbq_{\beta} \coloneqq \big(\bY_{\Ic_1, \pre}^{(k_y)} \big)^\dagger \bW \hbalpha$ 
and $\hbq_{\alpha} \coloneqq \big(\bZ_{\lag}^{(k_z)} \big)^\dagger \bW^\top \hbbeta$. 
Using \eqref{eq:var.sigma.pool}, define the leading variance estimator as $\hVc^2 \coloneqq \hsigma^2 \hUpsilon$. 

\begin{proposition} \label{prop:var.asymp}
Let the setup of Theorem~\ref{thm:inference} hold. 
Then, with probability at least $1 - \Oc( \rho )$, $\big| \hUpsilon - \Upsilon \big| \lesssim \Gamma$, 
where $\Gamma$ is a deterministic variance envelope defined below in \eqref{eq:asymp.var.ub}. 
Therefore, if $\Gamma / \Upsilon = o(1)$, then $\hUpsilon / \Upsilon \xrightarrow{p} 1$. 
If, in addition, $\hsigma^2 / \sigma^2 \xrightarrow{p} 1$, then $\hVc / \Vc \xrightarrow{p} 1$.
\end{proposition}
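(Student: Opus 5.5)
The plan is to decompose $\hUpsilon - \Upsilon$ into three differences, one for each summand, and bound each by products of norm perturbations. Write $a = \|\balpha^*\|_2^2$, $b = \|\bbeta^*\|_2^2$, $p = \|\bq^*_\beta\|_2^2$, $q = \|\bq^*_\alpha\|_2^2$, with hatted analogues. Then
\begin{align}
\hUpsilon - \Upsilon = (\hat a \hat b - ab) + (\hat p(1+\hat b) - p(1+b)) + (\hat q(1+\hat a) - q(1+a)).
\end{align}
Each product difference is expanded as $\hat x\hat y - xy = (\hat x - x) y + x(\hat y - y) + (\hat x - x)(\hat y - y)$. For squared norms we use $|\|\hat v\|_2^2 - \|v\|_2^2| \le \|\hat v - v\|_2(2\|v\|_2 + \|\hat v - v\|_2)$. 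This reduces everything to four vector errors: $\|\hbalpha - \balpha^*\|_2$, $\|\hbbeta - \bbeta^*\|_2$, $\|\hbq_\beta - \bq^*_\beta\|_2$, and $\|\hbq_\alpha - \bq^*_\alpha\|_2$. The first two are already controlled by Theorem~\ref{prop:parameter.recovery} on an event of probability $1 - \Oc((N_1T_0)^{-10} + (LM)^{-10})$. The envelope $\Gamma$ is then defined as the resulting combination of these rates with the population norms $\|\balpha^*\|_2$, $\|\bbeta^*\|_2$, $\|\bq^*_\beta\|_2$, and $\|\bq^*_\alpha\|_2$.

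The substantive step is bounding the dual-weight errors, which I do for $\hbq_\beta$; $\hbq_\alpha$ is symmetric. I decompose
\begin{align}
\hbq_\beta - \bq^*_\beta = \big(\bY_{\Ic_1,\pre}^{(k_y)}\big)^\dagger (\bW - \bbW)\hbalpha + \big(\bY_{\Ic_1,\pre}^{(k_y)}\big)^\dagger \bbW(\hbalpha - \balpha^*) + \Big(\big(\bY_{\Ic_1,\pre}^{(k_y)}\big)^\dagger - \bbY_{\Ic_1,\pre}^\dagger\Big)\bbW\balpha^*.
\end{align}

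For the first term, $\bW - \bbW$ is independent of the pre-treatment block and of the Page-matrix block, which use disjoint time indices. Conditional on $\hbalpha$, the vector $(\bW - \bbW)\hbalpha$ is therefore Gaussian with covariance $\sigma^2\|\hbalpha\|_2^2 \bI$. Projecting onto the $k_y$-dimensional row space and applying a chi-square tail bound gives roughly $\sigma\|\hbalpha\|_2\sqrt{r_y + \log}\,/\,s_{r_y}(\bY^{(k_y)})$. By Weyl's inequality and Assumption~\ref{assump:spectra}, $s_{r_y} \gtrsim \sqrt{N_1T_0/r_y}$ under condition \eqref{eq:sep.y}.

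The second term is bounded by $\|\bbW\|_\txtop\|\hbalpha - \balpha^*\|_2 / s_{r_y}$. Here $\|\bbW\|_\txtop \le \sqrt{N_1 L}$ by Assumption~\ref{assump:bounded}. I also use Assumption~\ref{assump:transport} to note that $\bbW\balpha^*$ lies in $\col(\bbY_{\Ic_1,\pre})$.

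The third term needs a pseudoinverse perturbation bound (Wedin): $\|\bA^\dagger - \bB^\dagger\|_\txtop \lesssim \|\bA - \bB\|_\txtop / (s_{\min}(\bA)s_{\min}(\bB))$ for equal-rank matrices. Combined with $\|\bY^{(k_y)} - \bbY\|_\txtop \lesssim \sigma(\sqrt{N_1} + \sqrt{T_0} + \sqrt{\log})$, this gives a bound of order $\|\bq^*_\beta\|$-scale times $\Lambda_\beta$-type rates. These are the quantities appearing in $\Omega_\beta$, so $\Gamma$ is naturally expressed through $\Omega_\alpha$, $\Omega_\beta$, $\Lambda_\alpha$, and $\Lambda_\beta$, as in $\Psi$.

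The main obstacle is the third term. A crude Wedin bound loses a factor relative to the rates in $\Omega_\beta$. I would instead split the perturbation of the pseudoinverse into a row-space-projection error and an on-subspace inverse error, reusing the subspace-perturbation (Davis–Kahan/Wedin $\sin\Theta$) lemmas already invoked in the proof of Theorem~\ref{prop:parameter.recovery}. This exploits the fact that $\bbW\balpha^*$ already lies in $\col(\bbY_{\Ic_1,\pre})$.

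A union bound over the events from Theorem~\ref{prop:parameter.recovery}, the Gaussian tail bounds for the terminal block $\bW$, and the operator-norm bounds yields probability $1 - \Oc(\rho)$. The conclusions then follow directly. If $\Gamma/\Upsilon = o(1)$, then $\hUpsilon/\Upsilon - 1 = \Oc_p(\Gamma/\Upsilon) \to 0$. Since $\hVc^2/\Vc^2 = (\hsigma^2/\sigma^2)(\hUpsilon/\Upsilon)$, Slutsky's lemma and the continuous mapping $x \mapsto \sqrt{x}$ give $\hVc/\Vc \xrightarrow{p} 1$.
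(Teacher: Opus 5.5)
Your proposal is correct and follows the paper's proof essentially step for step. It uses the same three-way split of $\hUpsilon-\Upsilon$, the same squared-norm inequality, and a three-term decomposition of $\hbq_\beta-\bq^*_\beta$ essentially as in Lemma~\ref{lemma:riesz}. Your conditional-Gaussian bound on $\hbY^\dagger\bXi_w\hbalpha$ is valid because $\bW$ is time-disjoint from the pre-treatment and Page blocks. It is a harmless sharpening: it replaces $\|\bXi_w\|_\txtop\lesssim\sigma(\sqrt{N_1}+\sqrt{L}+\sqrt{\log(N_1L)})$ in the paper's operator-norm bound by $\sigma(\sqrt{r_y}+\sqrt{\log(N_1L)})$, so it is still dominated by $\Gamma$.

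The one misjudgment is your ``main obstacle.'' The crude equal-rank pseudoinverse perturbation bound already suffices; the paper obtains it via Stewart's identity, Lemma~\ref{lemma:pseudoinverse.stewart}. On $\{\|\bXi_y\|_\txtop\le\eta_y\}$ it gives $\|(\hbY^\dagger-\bbY^\dagger)\bbW\balpha^*\|_2\lesssim\sqrt{N_1}\,\eta_y/\lambda_y^2\lesssim\sigma r_y/\min\{N_1,T_0\}$. This is exactly the first summand of $\sigma\Omega_\beta$, so no refined $\sin\Theta$ splitting is needed.
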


\noindent \textbf{Variance envelope.} 
For $a, b, m, n \in \Rb_+$, define $\Gamma_1(a,b,m,n) \coloneqq a(a+b)(m+n)^2$ and $\Gamma_2(a,b,m,n) \coloneqq a(a+b)\left\{1 + (m+n)^2 \right\}$. 
Set $\Gamma \coloneqq \Gamma_{\alpha \beta} + \Gamma_{\alpha} + \Gamma_{\beta}$, where 
\begin{align}
	\Gamma_\alpha &\coloneqq 
	\Gamma_1\left(\frac{\sigma \Lambda_\alpha}{\sqrt{L}}, \frac{\sqrt{r_z}}{\sqrt{L}}, \frac{\sqrt{r_z}}{\sqrt{M}}, \sigma \Omega_\alpha\right)
	+ \Gamma_2\left(\sigma \Omega_\alpha, \frac{\sqrt{r_z}}{\sqrt{M}}, \frac{\sqrt{r_z}}{\sqrt{L}}, \frac{\sigma \Lambda_\alpha}{\sqrt{L}} \right),
	\\
	\Gamma_\beta &\coloneqq 
	\Gamma_1\left(\frac{\sigma \Lambda_\beta}{\sqrt{N_1}}, \frac{\sqrt{r_y}}{\sqrt{N_1}}, \frac{\sqrt{r_y}}{\sqrt{T_0}}, \sigma \Omega_\beta \right)
	+ \Gamma_2\left(\sigma \Omega_\beta, \frac{\sqrt{r_y}}{\sqrt{T_0}}, \frac{\sqrt{r_y}}{\sqrt{N_1}}, \frac{\sigma \Lambda_\beta}{\sqrt{N_1}} \right),
	\\ 
	\Gamma_{\alpha \beta} &\coloneqq 
	\Gamma_1\left(\frac{\sigma \Lambda_\alpha}{\sqrt{L}}, \frac{\sqrt{r_z}}{\sqrt{L}}, \frac{\sqrt{r_y}}{\sqrt{N_1}}, \frac{\sigma \Lambda_\beta}{\sqrt{N_1}} \right)
	+ \Gamma_1\left(\frac{\sigma \Lambda_\beta}{\sqrt{N_1}}, \frac{\sqrt{r_y}}{\sqrt{N_1}}, \frac{\sqrt{r_z}}{\sqrt{L}}, \frac{\sigma \Lambda_\alpha}{\sqrt{L}} \right).  
	\label{eq:asymp.var.ub}
\end{align} 
%
The term $\Gamma_{\alpha \beta}$ controls the perturbation of the product of the two primal weights. The terms $\Gamma_{\alpha}$ and $\Gamma_{\beta}$ control estimation of the time- and unit-side dual weights. 

\medskip \noindent \textbf{Interpretation.} 
Under Definition~\ref{def:balanced}, suppose additionally that $\| \bq^*_{\beta} \|_2 + \| \bq^*_{\alpha} \|_2 \gtrsim n^{-1/2}$. 
Then, $\Upsilon \gtrsim n^{-1}$, and Proposition~\ref{prop:var.asymp} implies 
$\Gamma / \Upsilon = \Oc \big( n^{-1/2} r_n^{3/2}  (\sqrt{r_n} + \sqrt{1 + \log(n)} ) \big)$. 
When $r_n \gg \log(n)$, this ratio vanishes if $r_n = o(n^{1/4})$; when $r_n \lesssim \log(n)$, it is sufficient that $r_n^3 \log(n) = o(n)$. 
Thus, $\hVc^2$ is consistent under the same broad scaling regimes that make the remainder $\Psi$ in Theorem~\ref{thm:inference} negligible.

\subsection{A Pointwise Confidence Interval}
We assemble the preceding ingredients to yield the studentized central limit theorem. 

\begin{cor} \label{cor:inference} 
Let the setup of Proposition~\ref{prop:var.asymp} hold. 
Thus, if $\Psi / \Vc = o(1)$, $\Gamma / \Upsilon = o(1)$, and $\hsigma^2 / \sigma^2 \xrightarrow{p} 1$, 
then $\big(\htheta - \theta\big) / \hVc \rightsquigarrow \Nc(0,1)$. 
\end{cor}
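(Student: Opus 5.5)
The corollary is a studentized CLT. I would obtain it from Theorem~\ref{thm:inference} together with Proposition~\ref{prop:var.asymp}, using a standard Slutsky argument. Write
\begin{align}
	\frac{\htheta - \theta}{\hVc} = \frac{\Vc}{\hVc} \cdot \frac{\htheta - \theta}{\Vc}.
\end{align}
The argument then has two parts: first show that $(\htheta - \theta)/\Vc \rightsquigarrow \Nc(0,1)$, and then show that $\Vc/\hVc \xrightarrow{p} 1$.

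\textbf{Step 1: the unstudentized CLT.} By Theorem~\ref{thm:inference}, $\htheta - \theta = G + R$ with $G/\Vc \sim \Nc(0,1)$ exactly. On an event of probability at least $1 - \Oc(\rho)$ we also have $|R| \lesssim \Psi$. Since $\rho \to 0$ as $N_1, T_0, L, M \to \infty$ and $\Psi/\Vc = o(1)$ by hypothesis, this gives $R/\Vc \xrightarrow{p} 0$. Slutsky's lemma then yields $(\htheta - \theta)/\Vc \rightsquigarrow \Nc(0,1)$, which is the last claim of Theorem~\ref{thm:inference}.

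\textbf{Step 2: ratio consistency of $\hVc$.} Proposition~\ref{prop:var.asymp} gives $|\hUpsilon - \Upsilon| \lesssim \Gamma$ with probability $1 - \Oc(\rho)$. Combined with $\Gamma/\Upsilon = o(1)$, this implies $\hUpsilon/\Upsilon \xrightarrow{p} 1$. We are also assuming $\hsigma^2/\sigma^2 \xrightarrow{p} 1$. Because $\hVc^2/\Vc^2 = (\hsigma^2/\sigma^2)(\hUpsilon/\Upsilon)$, the product converges in probability to $1$. The map $x \mapsto x^{-1/2}$ is continuous at $1$, so the continuous mapping theorem gives $\Vc/\hVc \xrightarrow{p} 1$. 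In particular, $\hVc > 0$ with probability tending to one, so the studentized ratio is well defined on that event.

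\textbf{Step 3: conclusion.} Applying Slutsky once more to the product decomposition gives the claim. I do not expect a genuine obstacle here. The only point needing care is that all statements are conditional on $\Ec$. Convergence in probability and in distribution should therefore be read conditionally along the sequence of growing dimensions. This is consistent with how both Theorem~\ref{thm:inference} and Proposition~\ref{prop:var.asymp} are stated, so the Slutsky arguments apply verbatim under the conditional law.
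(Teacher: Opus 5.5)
Your proposal is correct and follows the paper's own proof: the paper obtains the corollary in one line from Theorem~\ref{thm:inference}, Proposition~\ref{prop:var.asymp}, and Slutsky's theorem, which is the same chain you write out. Your remarks that $\hVc > 0$ with probability tending to one, and that all limits are read conditionally on $\Ec$, are sensible details that the paper leaves implicit.
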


\begin{proof}
The result follows from Theorem~\ref{thm:inference}, Proposition~\ref{prop:var.asymp}, and Slutsky's theorem.
\end{proof}

For any significance level $\gamma \in (0,1)$, Corollary~\ref{cor:inference} justifies a pointwise $(1-\gamma)100\%$ confidence interval:  
\begin{align}
	\texttt{CI}_\theta(\gamma) \coloneqq \left[ \htheta \pm \Phi^{-1}(1-\gamma/2) \cdot \hVc \right],
	\label{eq:conf.iv} 
\end{align} 
where $\Phi$ is the standard normal distribution function. 
This provides pointwise inference for a prospective causal forecast, complementing the growing literature on inference in panel-data and \SC~settings \citep{victor18, victor21, li2020, Choi20092024, yan2024entrywiseinferencemissingpanel}.

\section{Forecasting over a Fixed Multi-Step Horizon} \label{sec:forecast.horizon}
We now extend \TWSF~from a one-step causal forecast to a prespecified linear summary of a finite future path. 
Fix an ambient horizon $h \ge 1$ and a nonzero deterministic weight vector $\boldeta = (\eta_1, \dots, \eta_h )^\top \in \Rb^h$. 
Define its support and effective horizon by $\Sc_\eta \coloneqq \{\ell \in [h]: \eta_\ell \neq 0\}$ and $h_{\eta} \coloneqq \max \Sc_\eta$. 
Let $\Fc_{\eta} \coloneqq \sigma(\bu_N, \{\bv_{T+\ell}(1): \ell \in \Sc_\eta \})$. 
The multi-step estimand is 
\begin{align}
	\theta_{\eta} \coloneqq \Ex\left[ \sum_{\ell \in \Sc_\eta} \eta_\ell \cdot Y_{N, T+\ell}(1) \mid \Fc_{\eta} \right]. 
	\label{eq:estimand.horizon}
\end{align}
Several familiar targets are special cases. Taking $\boldeta = \bolde_\ell$ gives the causal forecast at lead $\ell$. 
Taking $\boldeta = h^{-1}\boldsymbol{1}$ gives the expected average outcome over the forecast horizon.  
More generally, cumulative, discounted, and contrast-based summaries follow from other choices of $\boldeta$. 

Define $\Ec_{\eta} \coloneqq \sigma(\Ec \cup \Fc_{\eta})$ as the smallest sigma-field containing the original latent environment and the future treated time factors relevant to $\boldeta$. 
Throughout this section, guarantees are interpreted conditional on $\Ec_{\eta}$. 
%
%
Below, we consider two extensions of one-step \TWSF. 

\renewcommand{\theassumption}{D\arabic{assumption}}
\renewcommand{\thecor}{D\arabic{cor}}
\renewcommand{\thetheorem}{D\arabic{theorem}}
\setcounter{assumption}{0}
\setcounter{cor}{0}
\setcounter{theorem}{0}

\subsection{Direct Estimation} \label{sec:direct}
Direct \TWSF~treats each forecast lead as a separate regression problem. 
The appeal of this strategy is that it avoids compounding one-step forecast errors. Its cost is that each training block must be longer, so fewer non-overlapping Page blocks are available for estimation.

\subsubsection{Direct \TWSF~Estimator} \label{sec:alg.direct} 
Fix a lag length $L$, and define $B_{\eta} \coloneqq T_1 / (L + h_\eta) \in \Nb$ and $M_{\eta} \coloneqq (B_{\eta} - 1)N_1$.
%
Using the first $B_{\eta}-1$ post-treatment blocks, construct the stacked direct Page matrix 
$[\bP^{\direct}_\eta]_{a, (j,b)} \coloneqq Y_{j, T_0+(b-1)(L+h_\eta)+a}$ for $a \in [L+h_\eta]$ and $(j,b) \in \Ic_1 \times [B_{\eta}-1]$. 
We partition $\bP^{\direct}_\eta$ analogous to \eqref{eq:Zlag.znext}: 
\begin{align}
	\bP_{\eta}^{\direct} 
	=  
	\begin{pmatrix}
		\bZ_{\direct\text{-}\lag}
		\\
		\hdashline
		\bz_{\direct\text{-}\tnext, 1}^\top
		\\
		\vdots
		\\
		\bz_{\direct\text{-}\tnext, h_{\eta}}^\top
	\end{pmatrix}
	\in \Rb^{(L+h_\eta) \times M_{\eta}}. 
\end{align}
Here, 
$\bZ_{\direct\text{-}\lag} \in \Rb^{L \times M_{\eta}}$ contains the first $L$ rows, while $\bz_{\direct\text{-}\tnext, \ell} \in \Rb^{M_{\eta}}$ collects the $\ell$-step-ahead responses across all donor units and training blocks. 
The terminal donor histories used at forecast time remain $\bW$. 

\begin{enumerate} [label=(\alph*)]

	\item Define $\bY_{\Ic_1, \pre}^{(k_y)}$ by \eqref{eq:hsvt.YZ}. For $k_z \le \min\{L, M_{\eta}\}$, define 
	$\bZ_{\direct\text{-}\lag}^{(k_z)} \coloneqq \HSVT \left(\bZ_{\direct\text{-}\lag}, k_z\right)$. 
	
	\item Define $\hbbeta$ by \eqref{eq:pcr.beta.hat}. For each $\ell \in \Sc_\eta$, define 
	$\hbalpha^\direct_\ell \coloneqq (\bZ_{\direct\text{-}\lag}^{(k_z)})^{\top, \dagger} \cdot \bz_{\direct\text{-}\tnext, \ell}$
	%
	and aggregate these coefficients as $\hbalpha^\direct_{\eta} \coloneqq \sum_{\ell \in \Sc_\eta} \eta_\ell \hbalpha^\direct_\ell$.
	
	\item Define the direct point estimate as $\htheta^\direct_{\eta} \coloneqq \left\langle \hbalpha^\direct_{\eta}, \bW^\top \hbbeta \right\rangle$. 
	
	\item Define the leading variance estimate as $(\hVc_{\eta}^\direct)^2 \coloneqq \hsigma^2 \hUpsilon^\direct$, where
	\begin{align}
		\hUpsilon^\direct &\coloneqq \| \hbalpha^\direct_{\eta} \|_2^2 \| \hbbeta \|_2^2
		+ \| \hbq^\direct_{\beta \eta} \|_2^2 \left(1 + \| \hbbeta \|_2^2 \right)
		+ \| \hbq^\direct_{\alpha} \|_2^2 \left(\| \boldeta \|_2^2 + \| \hbalpha^\direct_{\eta} \|_2^2 \right),
		\label{eq:direct.var.est}
		\\
		\hbq^\direct_{\beta \eta} &\coloneqq \left(\bY_{\Ic_1, \pre}^{(k_y)} \right)^\dagger \bW \hbalpha^\direct_{\eta},
		\quad
		\hbq^\direct_{\alpha} \coloneqq \left(\bZ_{\direct\text{-}\lag}^{(k_z)} \right)^\dagger \bW^\top  \hbbeta.
		\label{eq:riesz.direct}
	\end{align}
	%

\end{enumerate}
%
The vector $\bW^\top \hbbeta$ remains the imputed treated history of unit $N$ over the final $L$ observed periods. 
Each $\hbalpha^\direct_\ell$ maps this common history directly to lead $\ell$, while $\hbalpha^\direct_{\eta}$ maps it directly to the selected linear summary. 
Thus, direct \TWSF~estimates one temporal rule for each active lead in $\Sc_\eta$ using a common set of unit weights.

\subsubsection{Formal Results for Direct \TWSF} \label{sec:results.direct}

\begin{theorem} \label{thm:direct.normality}
Let the setup of Theorem~\ref{thm:inference} hold  with $(\bZ_{\edirect\text{-}\elag}, \bz_{\edirect\text{-}\etnext, \ell})$ for $\ell \in \Sc_\eta$ replacing $(\bZ_{\elag}, \bz_\etnext)$.
Let $r^\edirect_z \coloneqq \rank(\Ex[\bZ_{\edirect\text{-}\elag} \mid \Ec_{\eta}])$, and define $(\rho_{\eta}, \Psi_{\eta}, \Gamma_{\eta})$ from 
$(\rho, \Psi, \Gamma)$ by replacing $(M, r_z)$ with $(M_{\eta}, r^\edirect_z)$. 
Then, $\htheta^\edirect_{\eta} - \theta_{\eta} = G^\edirect_{\eta} + R^\edirect_{\eta}$, 
with $G^\edirect_{\eta} \sim \mathcal{N}\big(0, (\Vc^\edirect_{\eta})^2\big)$; 
moreover, with probability at least $1 - \Oc(\rho_{\eta})$, 
$\big| R^\edirect_{\eta} \big| \lesssim \| \boldeta \|_1 \cdot \Psi_{\eta}$
and
$\big| \hUpsilon^\edirect - \Upsilon^\edirect \big| \lesssim \| \boldeta \|_1^2 \Gamma_{\eta}$, where $\Upsilon^\edirect \coloneqq (\Vc^\edirect_{\eta})^2 / \sigma^2$. 
Thus, if $\| \boldeta \|_1 \Psi_{\eta} / \Vc^\edirect_{\eta} = o(1)$, $\| \boldeta \|_1^2 \Gamma_{\eta} / \Upsilon^\edirect = o(1)$, and $\hsigma^2 / \sigma^2 \xrightarrow{p} 1$, then 
$\big(\htheta^\edirect_{\eta} - \theta_{\eta}\big) / \hVc^\edirect_{\eta} \rightsquigarrow \Nc(0,1)$. 
\end{theorem}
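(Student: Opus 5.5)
The plan is to reduce direct \TWSF~to the one-step analysis lead by lead, and then aggregate linearly using $\boldeta$. For each $\ell \in \Sc_\eta$, I would first establish an identification result analogous to Proposition~\ref{prop:alpha} and Corollary~\ref{cor:identification}. Under Assumption~\ref{assump:hankel}, each $f_a$ obeys a linear recurrence of order at most $rQ \le L$. Iterating this recurrence gives an $\ell$-step representation $\Ex[Y_{j,T+\ell}(1)\mid\Ec_\eta] = \sum_{k=1}^L \alpha_{\ell,k}\,\bar Y_{j,T-L+k}$, valid for all $j \in \Ic_1$. The same relation holds on every column of the direct Page matrix, so $\bar\bz_{\direct\text{-}\tnext,\ell} = \bbZ_{\direct\text{-}\lag}^\top \balpha_\ell$. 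Define the projected coefficients $\balpha^*_\ell \coloneqq \bbZ_{\direct\text{-}\lag}^{\top,\dagger}\bbZ_{\direct\text{-}\lag}^\top\balpha_\ell$ and set $\balpha^*_\eta \coloneqq \sum_\ell \eta_\ell\balpha^*_\ell$. Combining Proposition~\ref{prop:beta} with Assumption~\ref{assump:transport} (applied to the direct design) then gives $\theta_\eta = \langle \balpha^*_\eta, \bbW^\top\bbeta^*\rangle$.

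Next I would decompose the error. Linearity of the pseudoinverse regression gives $\hbalpha^\direct_\eta = (\bZ^{(k_z)}_{\direct\text{-}\lag})^{\top,\dagger}\bz_\eta$, where $\bz_\eta \coloneqq \sum_\ell \eta_\ell \bz_{\direct\text{-}\tnext,\ell}$. Hence $\htheta^\direct_\eta$ has exactly the form of the one-step estimator \eqref{eq:htheta}: the response $\bz_\tnext$ is replaced by $\bz_\eta$, the design is $\bZ_{\direct\text{-}\lag}$, and the sample size is $M_\eta$.

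The one structural difference is the response noise. It is now $\sum_\ell \eta_\ell\bvarepsilon_\ell$. Its entries within a column are independent across leads but correlated with one another through the shared $\eta$-combination. Across columns they remain independent, with variance $\sigma^2\|\boldeta\|_2^2$. The noise is still independent of the lag rows, because the blocks are non-overlapping and the response rows are disjoint from the first $L$ rows. So I would rerun the orthogonalization expansion from the proof of Theorem~\ref{thm:inference} with this aggregated response. The Gaussian leading term $G^\direct_\eta$ collects three pieces: the noise in $\bW$ against $\balpha^*_\eta$ and $\bbeta^*$; the pre-treatment noise weighted by $\bq^\direct_{\beta\eta}$; and the time-side noise weighted by $\bq^\direct_\alpha$. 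In that last piece, the factor $1$ from the one-step case becomes $\|\boldeta\|_2^2$, which is why $\|\boldeta\|_2^2$ appears in \eqref{eq:direct.var.est}. Additivity of the three variance contributions follows from conditional independence of the three noise blocks, exactly as in the one-step case.

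For the remainders, I would avoid redoing the perturbation bounds. Instead I would write $R^\direct_\eta = \sum_\ell \eta_\ell R_\ell$, where each $R_\ell$ is the remainder of a one-step-type problem with response $\bz_{\direct\text{-}\tnext,\ell}$. Theorem~\ref{prop:parameter.recovery} and the remainder bounds from Theorem~\ref{thm:inference} apply to each lead with $(M,r_z)$ replaced by $(M_\eta, r^\direct_z)$. The triangle inequality then gives $|R^\direct_\eta| \lesssim \|\boldeta\|_1\Psi_\eta$. A union bound over $|\Sc_\eta| \le h$ leads costs only a constant factor in the probability, since $h$ is fixed.

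The same reasoning handles the variance estimator. The quantities $\hbalpha^\direct_\eta - \balpha^*_\eta$ and the dual-weight error are $\eta$-combinations of per-lead errors. The differences of squared norms in $\hUpsilon^\direct$ are products of two such quantities, which gives the bound $\|\boldeta\|_1^2\Gamma_\eta$ via Proposition~\ref{prop:var.asymp}. For $\hsigma^2$, I would use Proposition~\ref{prop:var.sigma} as stated for the one-step residuals, or the analogous result for the direct residuals. Slutsky's theorem then concludes the proof, as in Corollary~\ref{cor:inference}.

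The main obstacle I expect is the cross-term in the expansion that couples $\hbalpha^\direct_\eta$ with $\hbbeta$ through $\bW$. It must be checked that the terminal block $\bW$ is disjoint from the training blocks used for $\bZ_{\direct\text{-}\lag}$, since only the first $B_\eta - 1$ blocks are used. It must also be checked that the per-lead linearization remains valid when the dual weights are computed from the aggregated coefficient $\hbalpha^\direct_\eta$ rather than from each $\hbalpha^\direct_\ell$ separately. I would try to show that the dual-weight perturbation is linear in $\hbalpha^\direct_\eta$, so that it inherits the $\|\boldeta\|_1$ scaling.
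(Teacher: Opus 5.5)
Your proposal is correct and follows essentially the paper's route. Like the paper, you identify each lead via an iterated recurrence (Lemma~\ref{lemma:dir.alpha}), observe that $\hbalpha^\direct_\eta$ is the \PCR~fit to the aggregated response $\bz_\eta$ so that the direct estimator is a single bilinear \TWSF~estimator, and obtain the Gaussian leading term with the factor $1$ replaced by $\|\boldeta\|_2^2$.

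The only difference is how you get the $\|\boldeta\|_1$ scaling of the remainder and variance envelopes. The paper rescales the aggregated regression by $\|\boldeta\|_1$, so the signal stays in $[-1,1]$ and the noise has $\psi_2$-norm $\lesssim \sigma$ since $\|\boldeta\|_2 \le \|\boldeta\|_1$, and it reruns the one-step argument once. You instead split $R^\direct_\eta = \sum_\ell \eta_\ell R_\ell$, which is valid because the leading term is linear in the response--coefficient pair, and add a union bound over at most $h$ leads; for fixed $h$ this gives the same $\|\boldeta\|_1 \Psi_\eta$ bound.
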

The population standard deviation $\Vc^\direct_{\eta}$ is defined in Section~\ref{sec:proof.normality.direct} (Supplementary Material). 
Each active lead gives rise to a temporal regression with the same structure as one-step regression. The principal difference is the reduced temporal sample size $M_{\eta}$. 
Because $\boldeta$ is fixed, the balanced-regime conditions that justify one-step inference continue to suffice.

%
\renewcommand{\theassumption}{R\arabic{assumption}}
\renewcommand{\thecor}{R\arabic{cor}}
\renewcommand{\thetheorem}{R\arabic{theorem}}
\setcounter{assumption}{0}
\setcounter{cor}{0}
\setcounter{theorem}{0}
\subsection{Recursive Estimation} \label{sec:recursive}
Recursive \TWSF~estimates the one-step temporal rule once and iterates it through the forecast horizon. It preserves the temporal sample size of the one-step estimator, but estimation error may propagate through the recursion.


\subsubsection{Recursive \TWSF~Estimator} \label{sec:alg.recursive}
For any vector $\bx \in \Rb^L$, define the companion map 
\begin{align}
	\bPi(\bx) &\coloneqq \begin{pmatrix}
	\bzero_{(L-1) \times 1} & \bI_{L-1}
	\\
	\bx^\top
	\end{pmatrix} \in \Rb^{L \times L}. 
	\label{eq:pi.matrix}
\end{align}
For $\ell \ge 1$, define $\bg_\ell(\bx) \coloneqq \{ \bPi(\bx)^\ell \}^\top \be_L$, which maps the initial $L$-lag state directly to its forecast at lead $\ell$. Let $\bJ_\ell(\bx)$ denote its Jacobian, which admits the representation
$$
	\bJ_\ell(\bx) = \sum_{a=0}^{\ell - 1} \big\{\be_L^\top  \bPi(\bx)^a  \be_L \big\} \cdot \big\{\bPi(\bx)^{\ell - 1 - a} \big\}^\top. 
$$
Define $\bJ_{\eta}(\bx) \coloneqq \sum_{\ell \in \Sc_\eta} \eta_\ell \bJ_\ell(\bx)$. 

\begin{enumerate} [label=(\alph*)]

	\item Define $\bY_{\Ic_1, \pre}^{(k_y)}$ and $\bZ_{\lag}^{(k_z)}$ by \eqref{eq:hsvt.YZ}. 
	
	\item Define $\hbbeta$ and $\hbalpha$ by \eqref{eq:pcr.beta.hat} and \eqref{eq:pcr.alpha.hat}. For each $\ell \in \Sc_\eta$, define $\hbalpha^\rec_\ell \coloneqq \bg_\ell(\hbalpha)$, and aggregate these coefficients as $\hbalpha^\rec_{\eta} \coloneqq \sum_{\ell \in \Sc_\eta} \eta_\ell \hbalpha^\rec_\ell$. 
	
	\item Define the recursive point estimate as $\htheta_{\eta}^\rec \coloneqq \left\langle \hbalpha^\rec_{\eta}, \bW^\top \hbbeta \right\rangle$. 
	
	\item Define the leading variance estimate as $(\hVc_{\eta}^\rec)^2 \coloneqq \hsigma^2 \hUpsilon^\rec$, where
	\begin{align}
		\hUpsilon^\rec &\coloneqq 
		\| \hbalpha^\rec_{\eta} \|_2^2  \cdot \| \hbbeta \|_2^2 
		+ \| \hbq^\rec_{\beta \eta} \|_2^2 \left(1 + \| \hbbeta \|_2^2 \right)
		+ \| \hbq^\rec_{\alpha \eta} \|_2^2 \left(1 + \| \hbalpha \|_2^2 \right),
		\label{eq:rec.var.est} 
		\\
		\hbq^\rec_{\beta \eta} &\coloneqq \left(\bY_{\Ic_1, \pre}^{(k_y)} \right)^\dagger \bW \hbalpha^\rec_{\eta},
		\quad
		\hbq^\rec_{\alpha \eta} \coloneqq \left(\bZ_{\lag}^{(k_z)} \right)^\dagger \bJ_{\eta}(\hbalpha)^\top \bW^\top \hbbeta. 
	\end{align} 
	%

\end{enumerate}
The vector $\bW^\top \hbbeta$ is again the imputed treated $L$-lag state of the target unit at time $T$. 
The map $\hbalpha^\rec_{\eta}$ advances this state recursively and returns the selected linear summary of the resulting forecasts. Its Jacobian measures how a first-order perturbation in the estimated one-step rule propagates into that summary. 

\subsubsection{Formal Results for Recursive \TWSF} \label{sec:results.recursive}
The recursive method requires a strengthened recoverability condition. 

\begin{assumption} [Recursive recovery] \label{assump:recursive} 
Let $\ecol(\Ex[\bW \mid \Ec_{\eta}]) \subseteq \ecol(\Ex[\bY_{\Ic_1, \epre} \mid \Ec_{\eta}])$. 
Further, for any $\balpha$ satisfying Proposition~\ref{prop:alpha} and each $\ell \in \{0, \dots, h_{\eta}-1\}$, let $\erow(\Ex[\bW \mid \Ec_{\eta}] \cdot \{\bPi(\balpha)^\ell \}^\top) \subseteq \erow(\Ex[\bZ_{\elag}^\top \mid \Ec_{\eta}])$.
\end{assumption}
When $h_{\eta}=1$, Assumption~\ref{assump:recursive} reduces to Assumption~\ref{assump:transport} used in the one-step analysis. For $h_{\eta} > 1$, it requires not only the terminal lag block but also its recursively shifted versions to lie in the temporal training span. The condition therefore becomes more demanding as the horizon grows and is most plausible for short or moderate horizons. %

To quantify the recursive error amplification, define $\mathfrak{C}_\eta \coloneqq C_{h_{\eta}} \big(1 + \sigma \Lambda_{\alpha}/ \sqrt{L} \big)^{h_{\eta}-1}$, where $C_{h_{\eta}} \ge 1$ depends only on $h_{\eta}$ and the constants within Assumption~\ref{assump:spectra}. 

\begin{theorem} \label{thm:recursive.normality}
Let the setup of Theorem~\ref{thm:inference} hold 
with Assumption~\ref{assump:recursive} replacing Assumption~\ref{assump:transport}. 
Then, $\htheta^\erec_{\eta} - \theta_{\eta} = G^\erec_{\eta} + R^\erec_{\eta}$, 
with $G^\erec_{\eta} \sim \mathcal{N}\big(0, (\Vc^\erec_{\eta})^2\big)$; moreover, with probability at least $1 - \Oc(\rho)$, 
$\big| R^\erec_{\eta} \big| \lesssim \mathfrak{C}_\eta \| \boldeta \|_1 \Psi$ 
and 
$\big| \hUpsilon^\erec - \Upsilon^\erec \big| \lesssim \mathfrak{C}_\eta^2 \| \boldeta \|_1^2 \Gamma$, where $\Upsilon^\erec \coloneqq (\Vc^\erec_{\eta})^2/ \sigma^2$. 
Thus, if $\mathfrak{C}_\eta \| \boldeta \|_1 \Psi / \Vc^\erec_{\eta} = o(1)$, $\mathfrak{C}_\eta^2 \| \boldeta \|_1^2 \Gamma / \Upsilon^\erec = o(1)$, and $\hsigma^2 / \sigma^2 \xrightarrow{p} 1$, then 
$\big(\htheta^\erec_{\eta} - \theta_{\eta}\big) / \hVc^\erec_{\eta} \rightsquigarrow \Nc(0,1)$. 
\end{theorem}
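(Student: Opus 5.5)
The plan is to reduce the recursive estimator to the one-step analysis of Theorem~\ref{thm:inference}, treating $\hbalpha^\rec_\eta = \sum_{\ell \in \Sc_\eta} \eta_\ell \bg_\ell(\hbalpha)$ as a smooth polynomial map of the one-step PCR estimate $\hbalpha$.

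\textbf{Identification at each lead.} First we establish population identification along the recursion. For each $\ell \in \Sc_\eta$, iterating Proposition~\ref{prop:alpha}(a) through the companion map shows that the $L$-lag state of donor $j$ at $T+\ell$ equals $\bPi(\balpha)^\ell$ applied to the terminal state. Hence $\Ex[Y_{j,T+\ell}(1)\mid\Ec_\eta] = \langle \bg_\ell(\balpha), \bar\bW_j\rangle$. Combined with Proposition~\ref{prop:beta}(a), applied at $T+\ell$, this gives $\theta_\eta = \langle \balpha^\rec_\eta, \bbW^\top\bbeta\rangle$, with $\balpha^\rec_\eta \coloneqq \sum_{\ell \in \Sc_\eta} \eta_\ell \bg_\ell(\balpha)$. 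Assumption~\ref{assump:recursive} then lets us replace $(\balpha, \bbeta)$ by the recoverable $(\balpha^*, \bbeta^*)$, exactly as in Corollary~\ref{cor:identification}. The key observation is that $\bg_\ell(\balpha) - \bg_\ell(\balpha^*)$ is built from telescoping products in which a difference $\balpha - \balpha^*$, lying in the null space of $\bbZ_\lag^\top$'s row space complement, hits a shifted state $\bbW\{\bPi^a\}^\top$. The row-space inclusion kills every such term.

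\textbf{Linearization.} Next we write $\bg_\ell(\hbalpha) - \bg_\ell(\balpha^*) = \bJ_\ell(\balpha^*)(\hbalpha - \balpha^*) + \bdelta_\ell$. The second-order remainder satisfies $\|\bdelta_\ell\|_2 \le C_{h_\eta}(1+\|\hbalpha - \balpha^*\|_2)^{h_\eta-1}\|\hbalpha - \balpha^*\|_2^2$. This is where we use that $\|\bPi(\bx)\|_{\txtop}$ is bounded by a constant depending on $\|\bx\|_2$, with $\|\balpha^*\|_2$ controlled via Assumption~\ref{assump:spectra}. On the high-probability event of Theorem~\ref{prop:parameter.recovery}, $\|\hbalpha - \balpha^*\|_2 \lesssim \sigma\Lambda_\alpha/\sqrt{L}$, which produces the factor $\mathfrak{C}_\eta$.

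With this expansion, the error decomposes as
\[
\htheta^\rec_\eta - \theta_\eta = \langle \balpha^{*,\rec}_\eta, \bW^\top\hbbeta - \bbW^\top\bbeta^*\rangle + \langle \hbalpha - \balpha^*, \bJ_\eta(\balpha^*)^\top\bW^\top\hbbeta\rangle + \text{(quadratic terms)}.
\]
Both leading pieces have exactly the structure handled in the proof of Theorem~\ref{thm:inference}:
\begin{itemize}
\item In the first, the fixed time vector $\balpha^*$ is replaced by the deterministic vector $\balpha^{*,\rec}_\eta$, whose norm is at most $C_{h_\eta}\|\boldeta\|_1$.
\item In the second, the evaluation direction $\bbW^\top\bbeta^*$ is replaced by $\bJ_\eta(\balpha^*)^\top\bbW^\top\bbeta^*$. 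By the Jacobian formula and Assumption~\ref{assump:recursive}, this direction lies in $\row(\bbZ_\lag^\top)$.
\end{itemize}
We then rerun the orthogonalization expansion with the dual weights $\bq^{\rec}_{\beta\eta} = \bbY_{\Ic_1,\pre}^\dagger\bbW\balpha^{*,\rec}_\eta$ and $\bq^\rec_{\alpha\eta} = \bbZ_\lag^\dagger\bJ_\eta(\balpha^*)^\top\bbW^\top\bbeta^*$. This yields the Gaussian term $G^\rec_\eta$, with variance of the same three-term form as $\hUpsilon^\rec$; conditional independence of the noise blocks gives additivity. Every remainder term from Theorem~\ref{thm:inference} is multiplied by at most $\mathfrak{C}_\eta\|\boldeta\|_1$. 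The additional $\bdelta_\ell$ contribution is bounded by $\|\boldeta\|_1\mathfrak{C}_\eta\,\sigma^2\Lambda_\alpha^2 L^{-1}\|\bW^\top\hbbeta\|_2$, which is absorbed into $\Psi_\alpha$.

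\textbf{Variance consistency and conclusion.} For the variance estimate we reuse Proposition~\ref{prop:var.asymp}, together with Lipschitz bounds on $\bg_\ell$ and $\bJ_\eta$ near $\balpha^*$ (again costing $\mathfrak{C}_\eta$ per occurrence), to get $|\hUpsilon^\rec - \Upsilon^\rec| \lesssim \mathfrak{C}_\eta^2\|\boldeta\|_1^2\Gamma$. Slutsky's theorem then finishes the proof as in Corollary~\ref{cor:inference}.

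\textbf{Main obstacle.} I expect the hardest step to be the identification/transport step: showing that the null-space ambiguity of $\balpha$ does not propagate through powers of $\bPi(\balpha)$, since $\bPi$ depends nonlinearly on $\balpha$. I would handle it by induction on $\ell$ using the telescoping identity $\bPi(\bx)^\ell - \bPi(\by)^\ell = \sum_a \bPi(\bx)^a(\bPi(\bx)-\bPi(\by))\bPi(\by)^{\ell-1-a}$, noting that $\bPi(\bx) - \bPi(\by) = \be_L(\bx-\by)^\top$.
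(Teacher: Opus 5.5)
Your proposal is correct and follows the paper's proof essentially step for step. The shared steps are: the telescoping transport argument under Assumption~\ref{assump:recursive} via $\bPi(\bx)-\bPi(\by)=\be_L(\bx-\by)^\top$; the expansion $\bg_\eta(\hbalpha)=\bg_\eta(\balpha^*)+\bJ_\eta(\balpha^*)\bDelta_{\alpha}+\bvarrho_\eta$, with companion-map bounds generating $\mathfrak{C}_\eta$; the orthogonalized decomposition with dual weights $\bbY_{\Ic_1,\pre}^\dagger\bbW\bg_\eta(\balpha^*)$ and $\bbZ_{\lag}^\dagger\bJ_\eta(\balpha^*)^\top\bbW^\top\bbeta^*$; and a Jacobian-Lipschitz extension of the variance-estimation argument, closed by Slutsky.

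The only difference is cosmetic. You bound the Taylor-remainder contribution $\langle\bvarrho_\eta,\bW^\top\hbbeta\rangle$ in a single Cauchy--Schwarz step, whereas the paper splits off the $\bXi_w$ pieces and applies its conditional Hoeffding bound (Lemma~\ref{lemma:noise}); under the separation conditions, your cruder bound is still absorbed into $\mathfrak{C}_\eta\|\boldeta\|_1\Psi$.
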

The population standard deviation $\Vc^\rec_{\eta}$ is defined in Section~\ref{sec:proofs.recursive} (Supplementary Material). 
Relative to the one-step result, the primary novelty is the multiplicative factor $\mathfrak{C}_\eta$. 
Since this factor remains controlled for fixed $h_\eta$, the conditions that justify one-step inference continue to justify recursive multi-step inference.
%


\subsection{Takeaways} \label{sec:h.summary}
Theorems~\ref{thm:direct.normality} and \ref{thm:recursive.normality} formalize the trade-off between the two approaches. 
Direct \TWSF~avoids recursive error propagation but its guarantees depend on the reduced temporal sample size $M_{\eta}$. 
Recursive \TWSF~retains the one-step temporal sample size but its error bounds carry the amplification factor $\mathfrak{C}_\eta$. 
At the same time, for $m \in \{\direct, \rec\}$ and $\gamma \in (0,1)$, Theorems~\ref{thm:direct.normality} and \ref{thm:recursive.normality} justify the $(1-\gamma)100\%$ pointwise confidence interval 
\begin{align}
	\texttt{CI}^{m}_{\theta_{\eta}}(\gamma) \coloneqq \left[ \htheta^{m}_{\eta} \pm \Phi^{-1}(1-\gamma/2) \cdot \hVc^{m}_{\eta} \right].
	\label{eq:ci.horizon}
\end{align} 
The interval covers the prespecified scalar summary and does not provide simultaneous coverage of the full forecast path. 

\begin{remark}[Prospective treatment effects]
A prospective treatment effect is obtained by combining the treated-state forecast with a separate forecast of the target unit’s control-state path. 
Appendix~\ref{sec:tau} gives the formal construction and conditions. 
%
%
%
\end{remark}


%

\section{Simulation Studies} \label{sec:simulations} 
We conduct simulations to evaluate the finite-sample accuracy and pointwise confidence interval coverage of direct and recursive \TWSF~under the balanced, fixed-rank regime in Definition~\ref{def:balanced}, which motivates the theoretical comparisons in Sections~\ref{sec:results}-\ref{sec:forecast.horizon}. 

\subsection{Simulation Design} \label{sec:sims.setup}

%
Consider horizons $h \in \{1, 5, 10\}$ and index simulations by $n \in \{25, 50, 75, 100, 125, 150\}$. 
%
%
For each $n$, let $N_1=T_0=L=n$ and $T_1(n) \coloneqq 8(n+10)$. 
At each design point, generate 100 independent latent panels and 10 independent noise realizations conditional on each panel, yielding 1000 replications. 
Standard errors are clustered by latent panel.

%
The target unit factor is a convex combination of eight donor factors. 
The control- and treated-state time factors are generated from a fixed collection of harmonic components, including a low-frequency component that appears only under treatment. 
We scale the signals so that every absolute conditional mean is bounded by $0.8$, then add independent Gaussian noise with standard deviation $\sigma = 0.1$.
The population design ranks are $r_y=4$ and $r_z=8$. 
We supply the estimators with the oracle ranks and lag length so that the experiment isolates the finite-sample behavior described by the theory. 
Appendix~\ref{sec:sims.design} gives the full harmonic specification, nested-panel construction, and implementation details.


\subsection{Simulation Results} \label{sec:sims.results}
For each horizon, the estimand is the average treated outcome over the forecast period, corresponding to $\boldeta(h) = h^{-1} \boldsymbol{1}$. 
Panels~\ref{fig:bias.1}--\ref{fig:rmse.10} report the bias and root mean squared error (RMSE) results; Appendix~\ref{sec:sims.metrics} provides formal definitions of these performance measures. 
At $h=1$, the two estimators coincide exactly. 
The one-step estimator exhibits a modest negative bias over much of the grid, with magnitude below approximately \(0.01\). 
At $h \in \{5,10\}$, both estimators have positive bias when $n=25$; for $n \ge 50$, their biases fluctuate near zero and are small relative to their RMSEs, which decline steadily as the panel grows.

Panels~\ref{fig:cov.1}--\ref{fig:len.10} report pointwise coverage and average length at the 90\% nominal level. 
The one-step interval mildly undercovers across the grid. 
At longer horizons, undercoverage is concentrated in the smaller panels and is more pronounced for direct \TWSF. 
Recursive intervals are generally closer to nominal coverage but are somewhat wider. 
Coverage improves and average interval length decreases as the panel grows, and the differences between the two procedures become modest in the larger panels.

Overall, the results accord with the theory: estimation error and interval length decline with the panel dimensions, while pointwise coverage generally moves toward the nominal level. 
Our findings also illustrate the practical trade-off between the two procedures. 
Direct estimation is more fragile when the panel is small and the horizon is long; 
recursive estimation uses the temporal data more parsimoniously but pays a modest variance cost. 
Because horizon averages may conceal poor performance at the most distant lead, Appendix~\ref{sec:sims.results.terminal} also considers the terminal-lead estimand for $h \in \{5,10\}$ using $\boldeta(h) = \be_h$.  

\begin{figure}[!tp]
	\centering 
	\begin{subfigure}[b]{0.32\textwidth}
		\centering 
		\includegraphics[width=\linewidth]{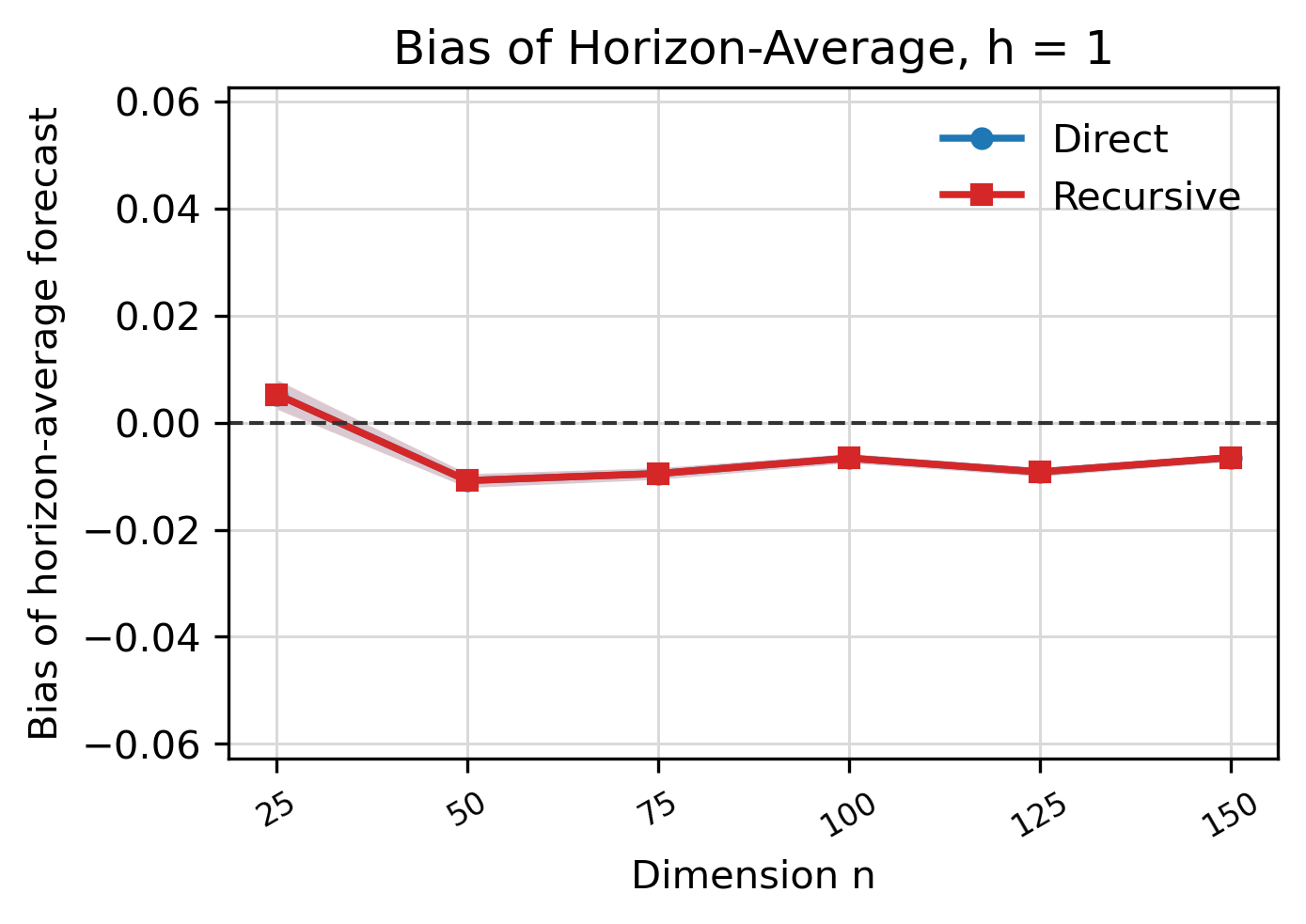}
		\caption{Bias $h=1$}
		\label{fig:bias.1}
	\end{subfigure} 
	\begin{subfigure}[b]{0.32\textwidth}
		\centering 
		\includegraphics[width=\linewidth]{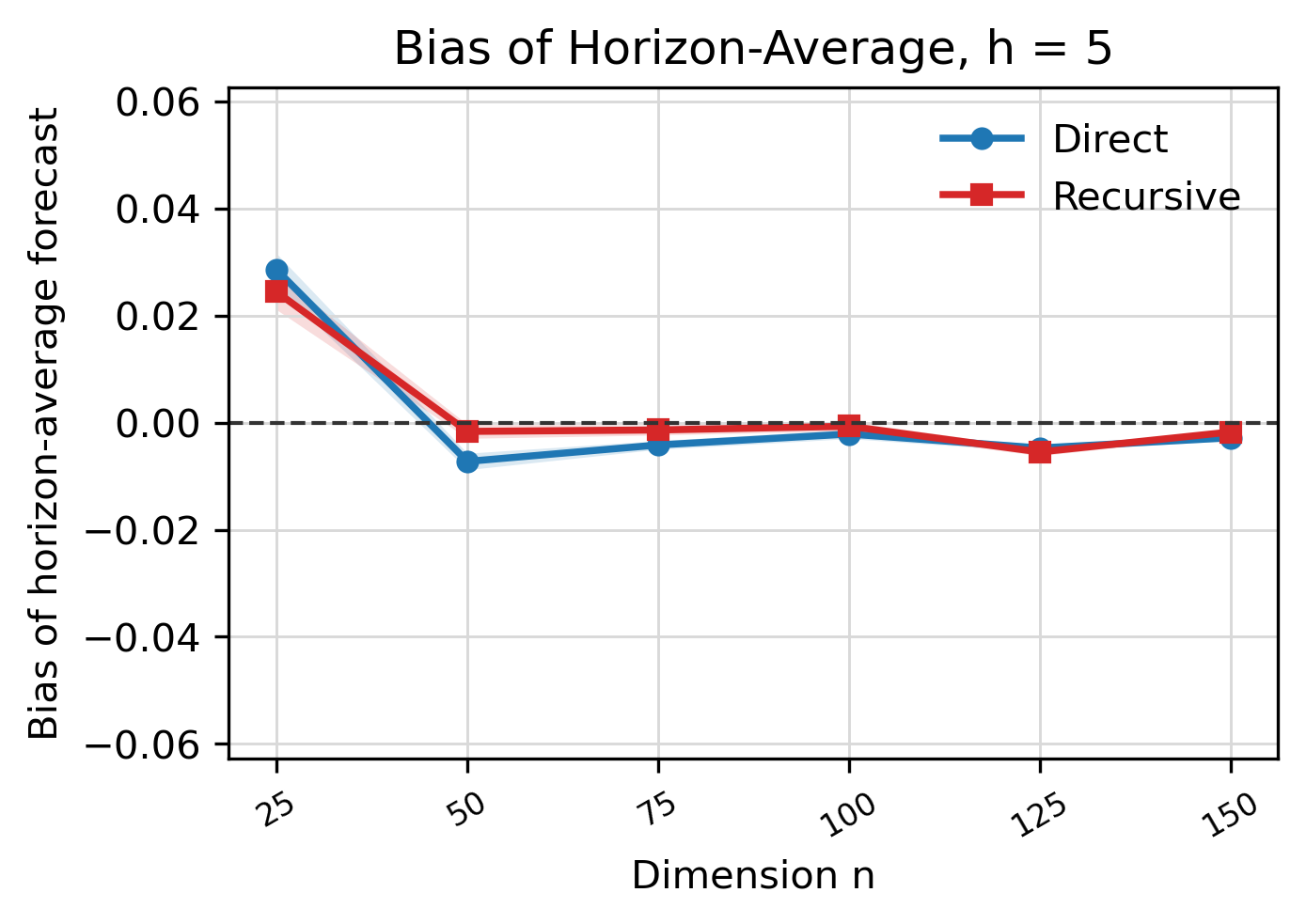}
		\caption{Bias $h=5$} 
		\label{fig:bias.5}
	\end{subfigure} 
	\begin{subfigure}[b]{0.32\textwidth}
		\centering 
		\includegraphics[width=\linewidth]{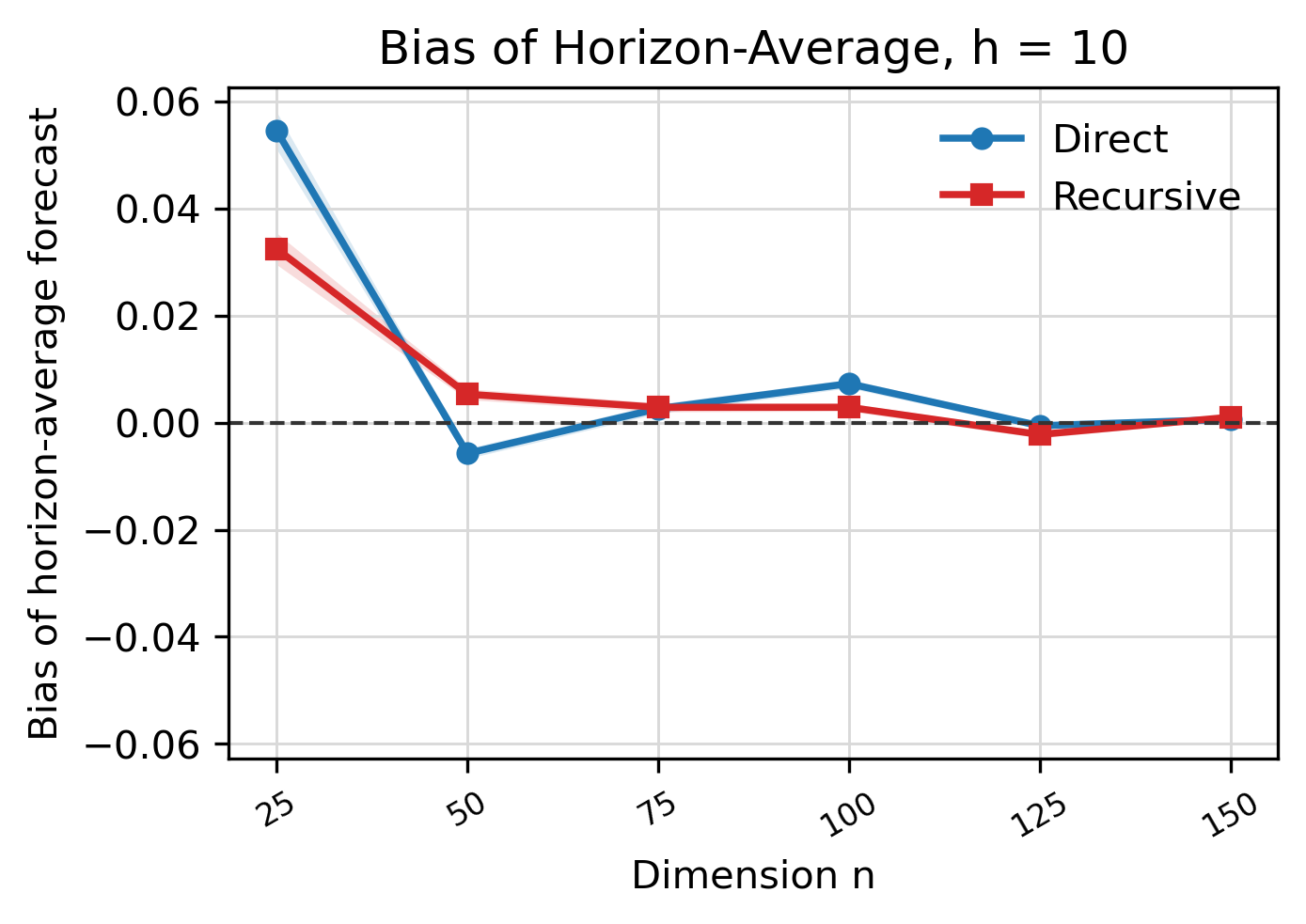}
		\caption{Bias $h=10$} 
		\label{fig:bias.10}
	\end{subfigure}
	\\
	\begin{subfigure}[b]{0.32\textwidth}
		\centering 
		\includegraphics[width=\linewidth]{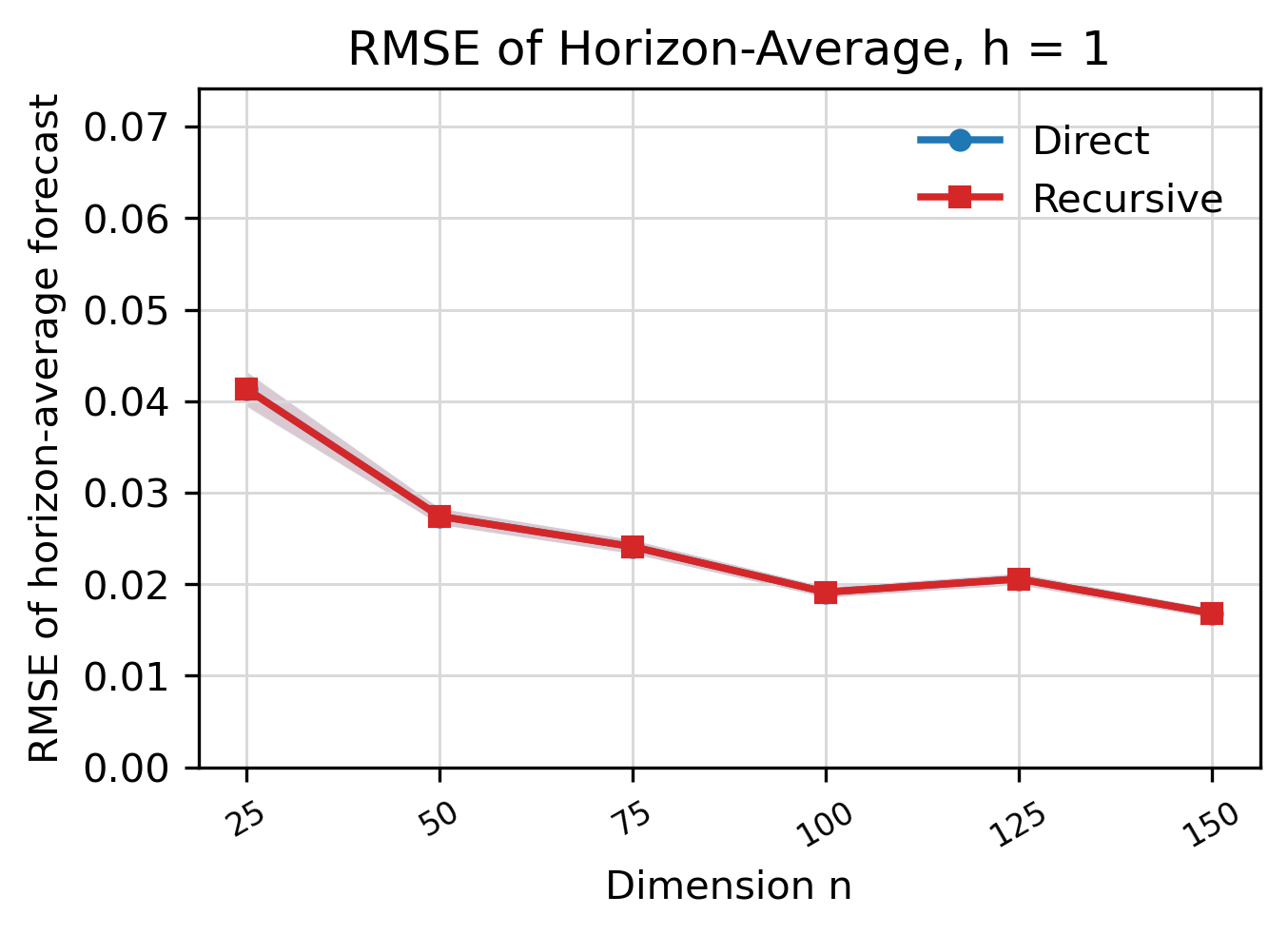}
		\caption{RMSE $h=1$} 
		\label{fig:rmse.1}
	\end{subfigure} 
	\begin{subfigure}[b]{0.32\textwidth}
		\centering 
		\includegraphics[width=\linewidth]{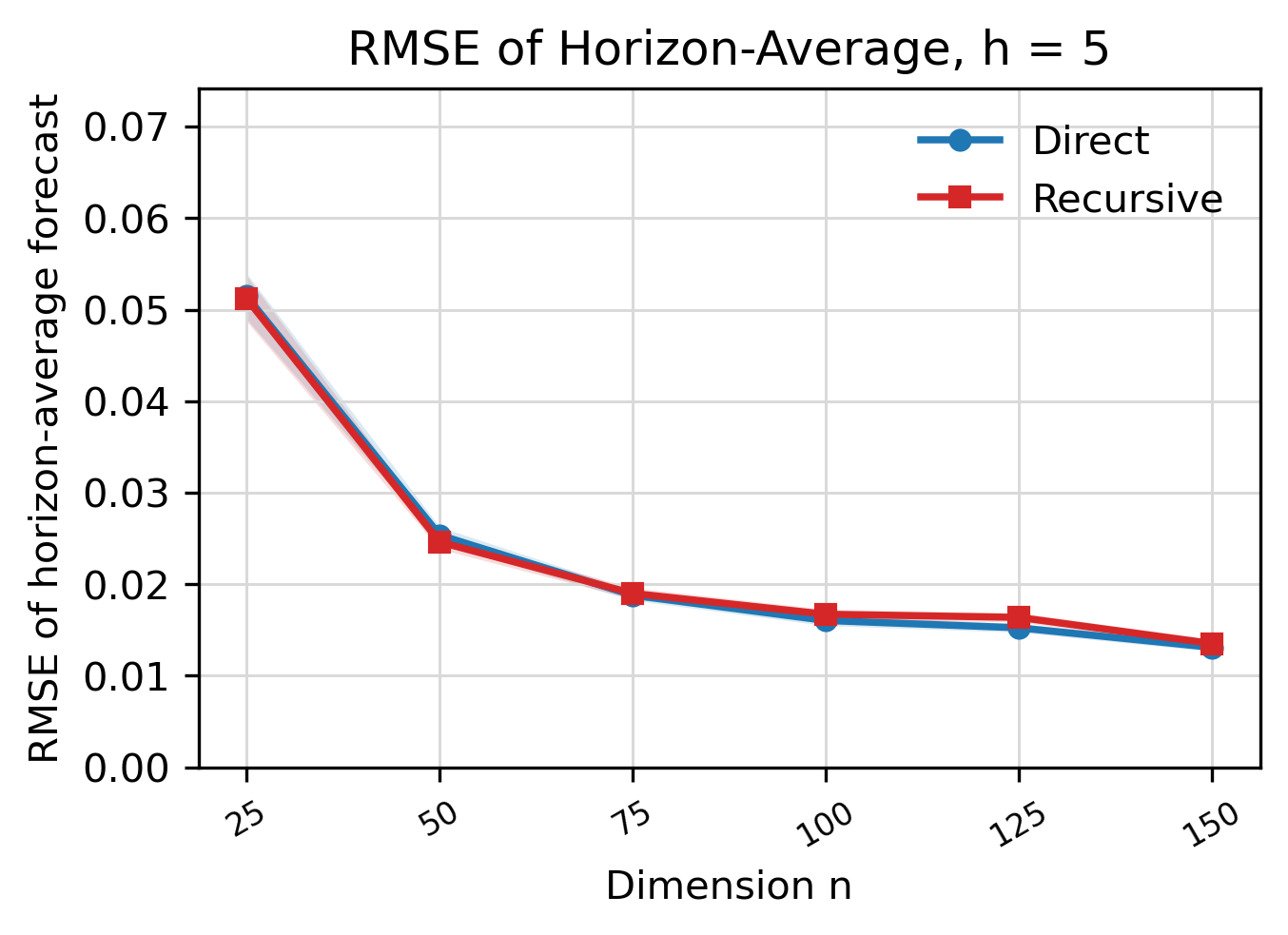}
		\caption{RMSE $h=5$} 
		\label{fig:rmse.5}
	\end{subfigure} 
	\begin{subfigure}[b]{0.32\textwidth}
		\centering 
		\includegraphics[width=\linewidth]{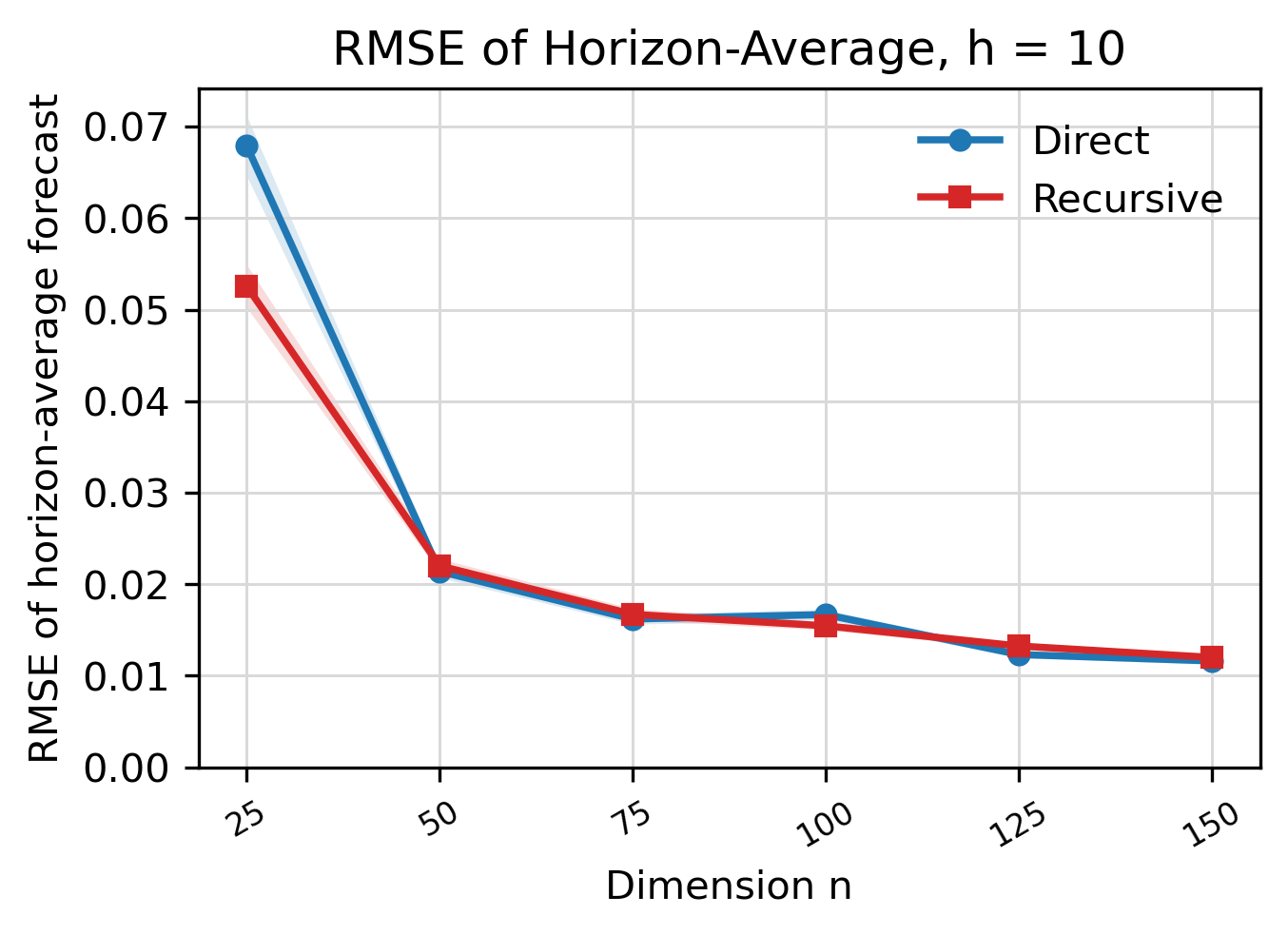}
		\caption{RMSE $h=10$} 
		\label{fig:rmse.10}
	\end{subfigure} 
	\\
	\begin{subfigure}[b]{0.32\textwidth}
		\centering 
		\includegraphics[width=\linewidth]{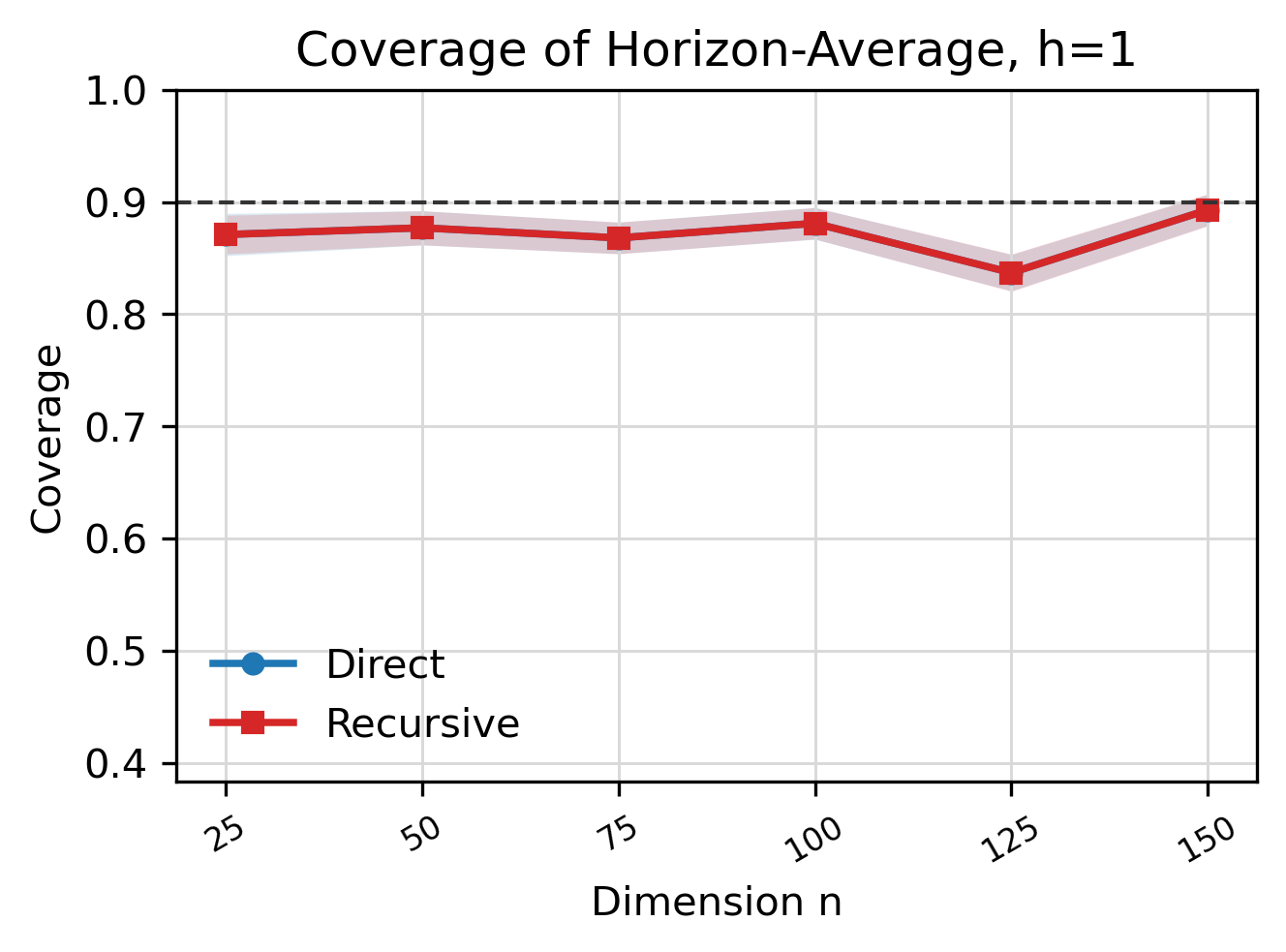}
		\caption{Coverage $h=1$} 
		\label{fig:cov.1}
	\end{subfigure} 
	\begin{subfigure}[b]{0.32\textwidth}
		\centering 
		\includegraphics[width=\linewidth]{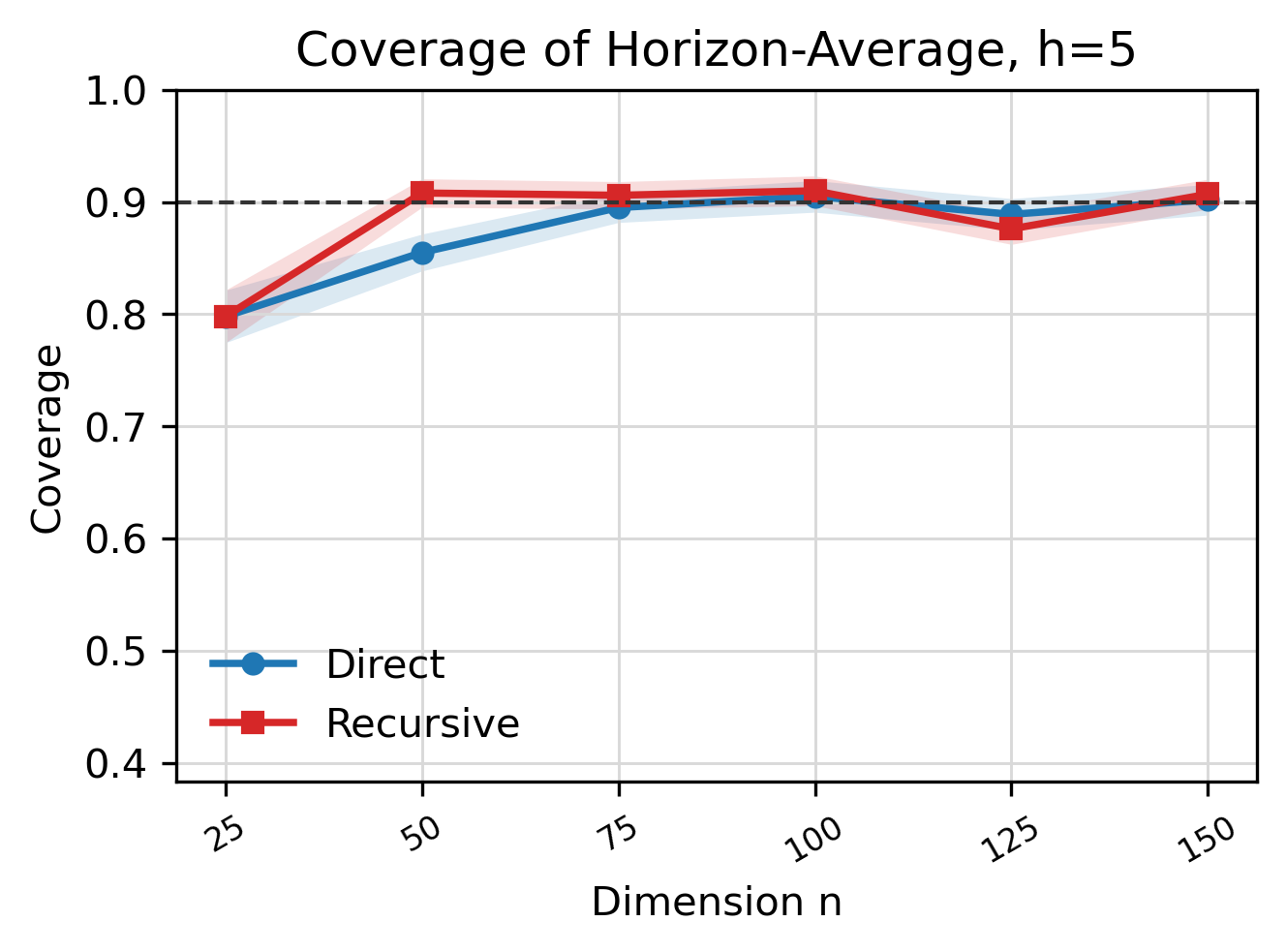}
		\caption{Coverage $h=5$} 
		\label{fig:cov.5}
	\end{subfigure} 
	\begin{subfigure}[b]{0.32\textwidth}
		\centering 
		\includegraphics[width=\linewidth]{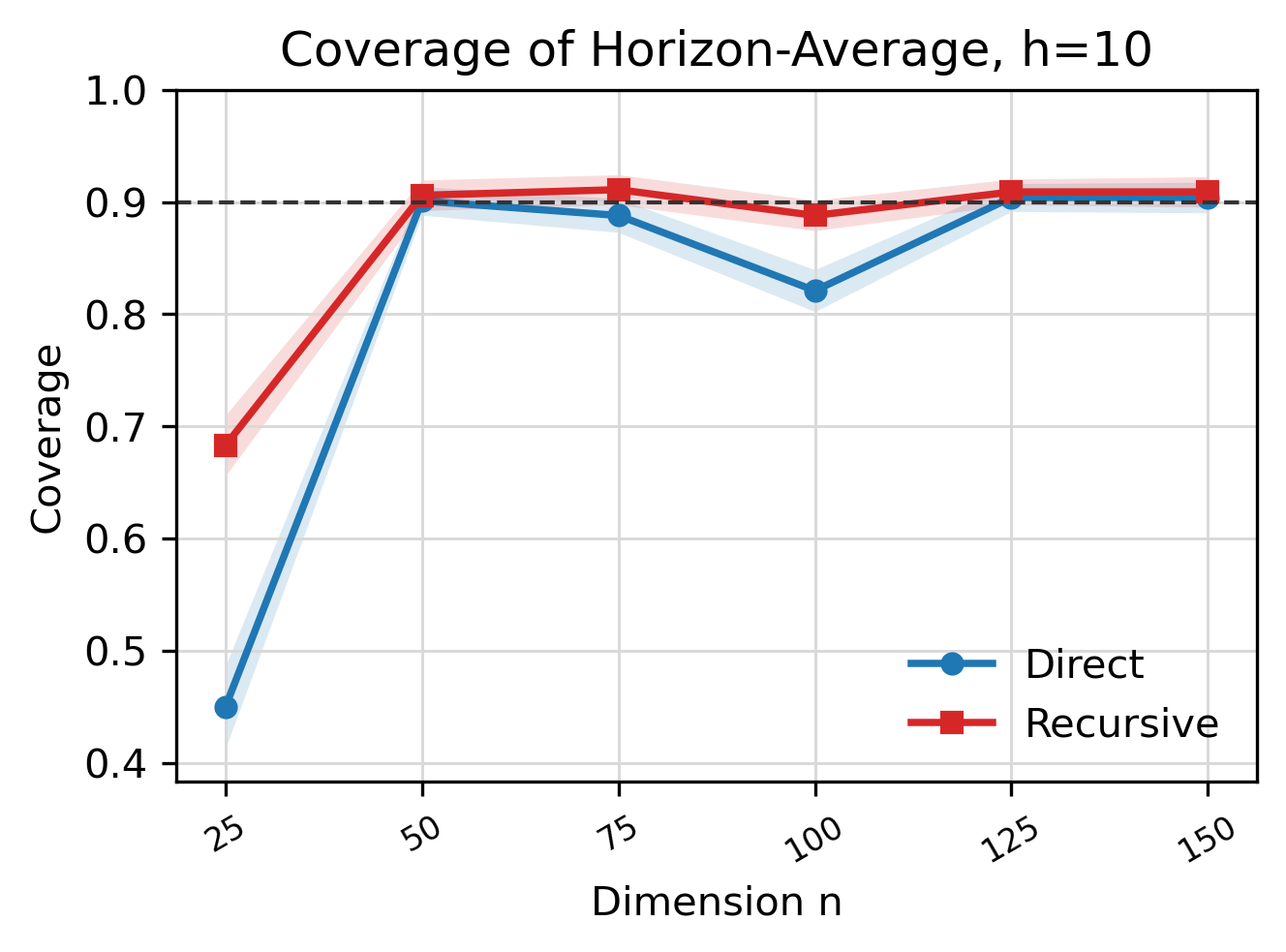}
		\caption{Coverage $h=10$} 
		\label{fig:cov.10}
	\end{subfigure}
	\\
	\begin{subfigure}[b]{0.32\textwidth}
		\centering 
		\includegraphics[width=\linewidth]{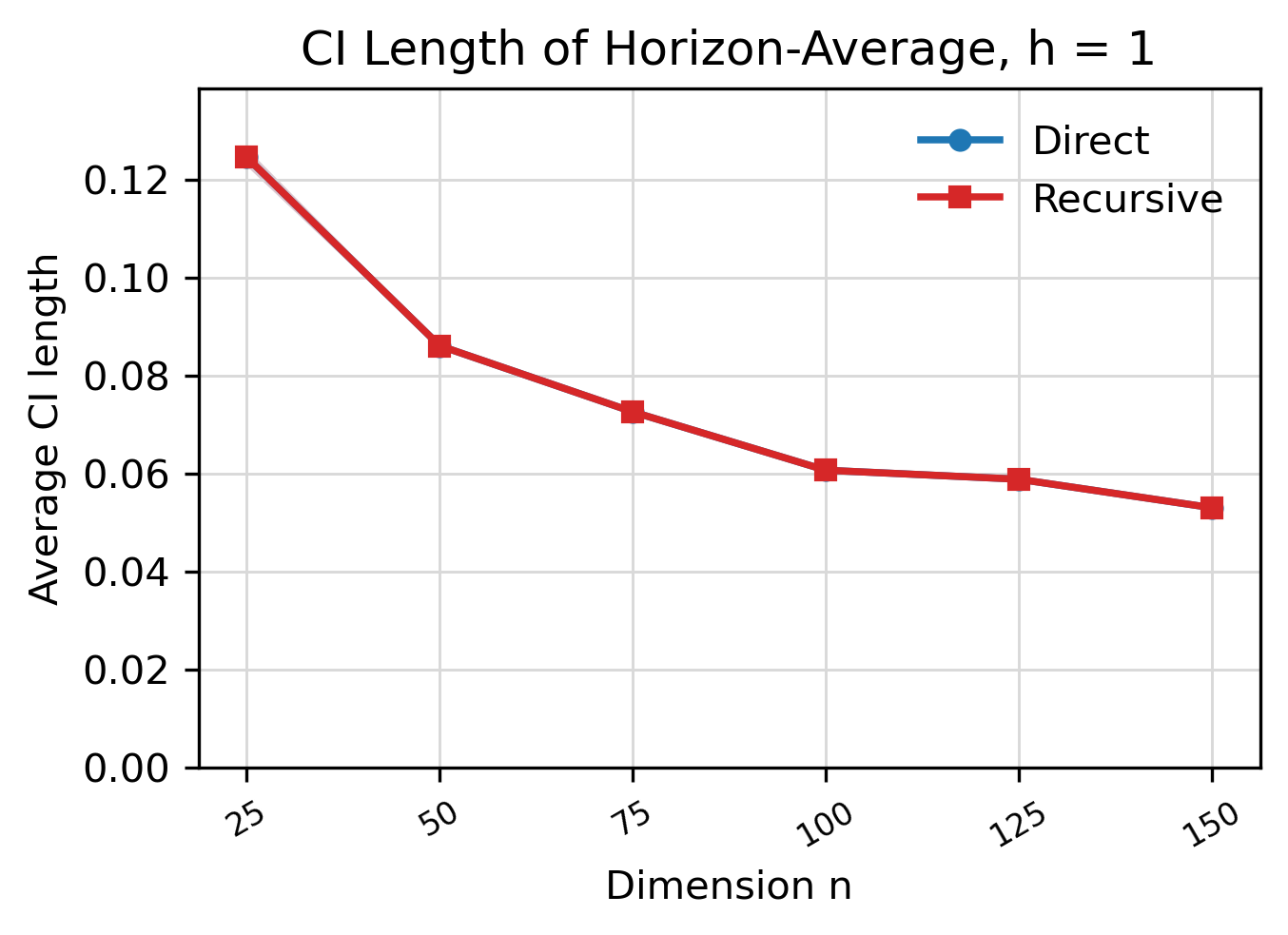}
		\caption{Length $h=1$} 
		\label{fig:len.1}
	\end{subfigure} 
	\begin{subfigure}[b]{0.32\textwidth}
		\centering 
		\includegraphics[width=\linewidth]{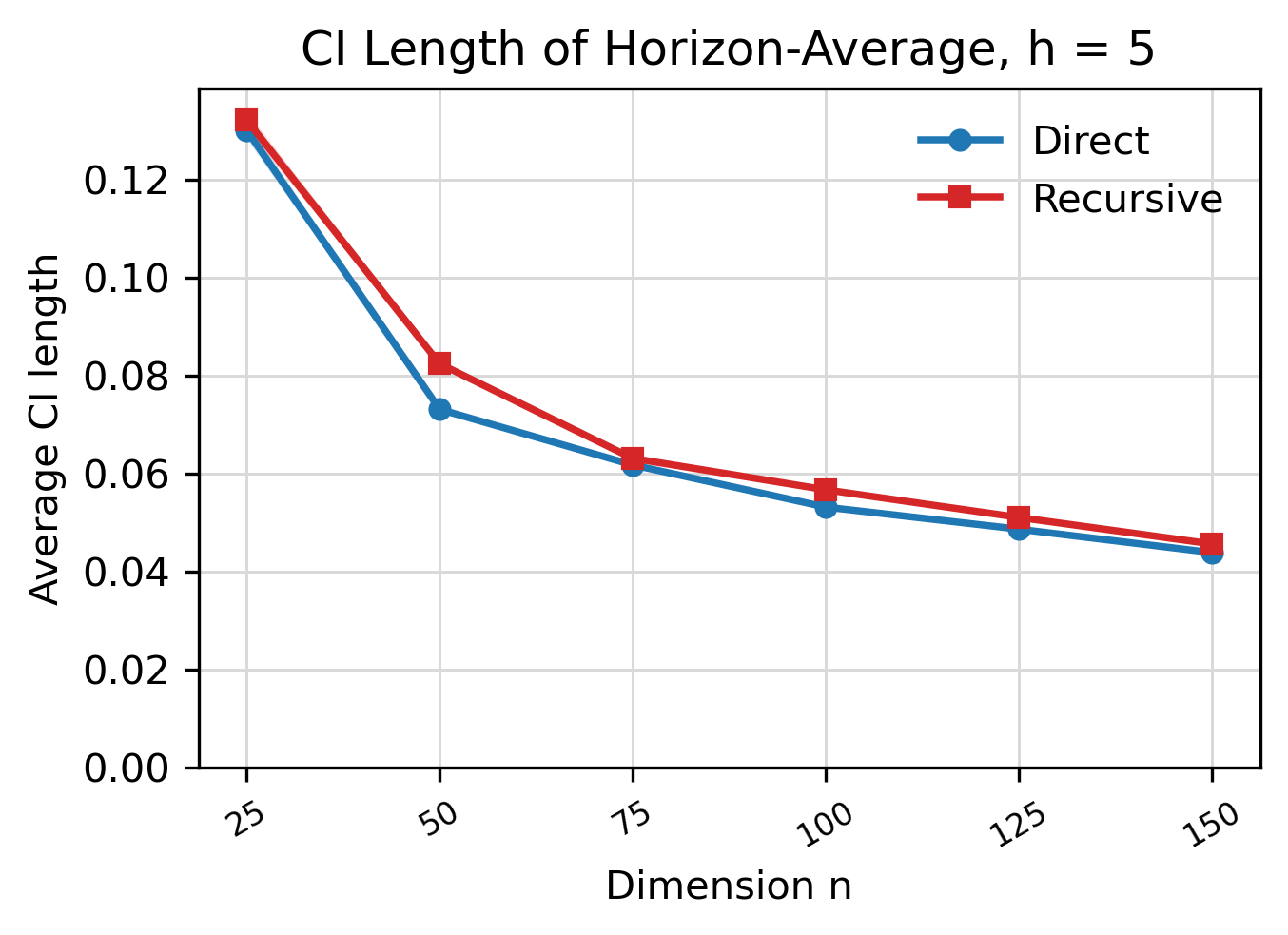}
		\caption{Length $h=5$} 
		\label{fig:len.5}
	\end{subfigure} 
	\begin{subfigure}[b]{0.32\textwidth}
		\centering 
		\includegraphics[width=\linewidth]{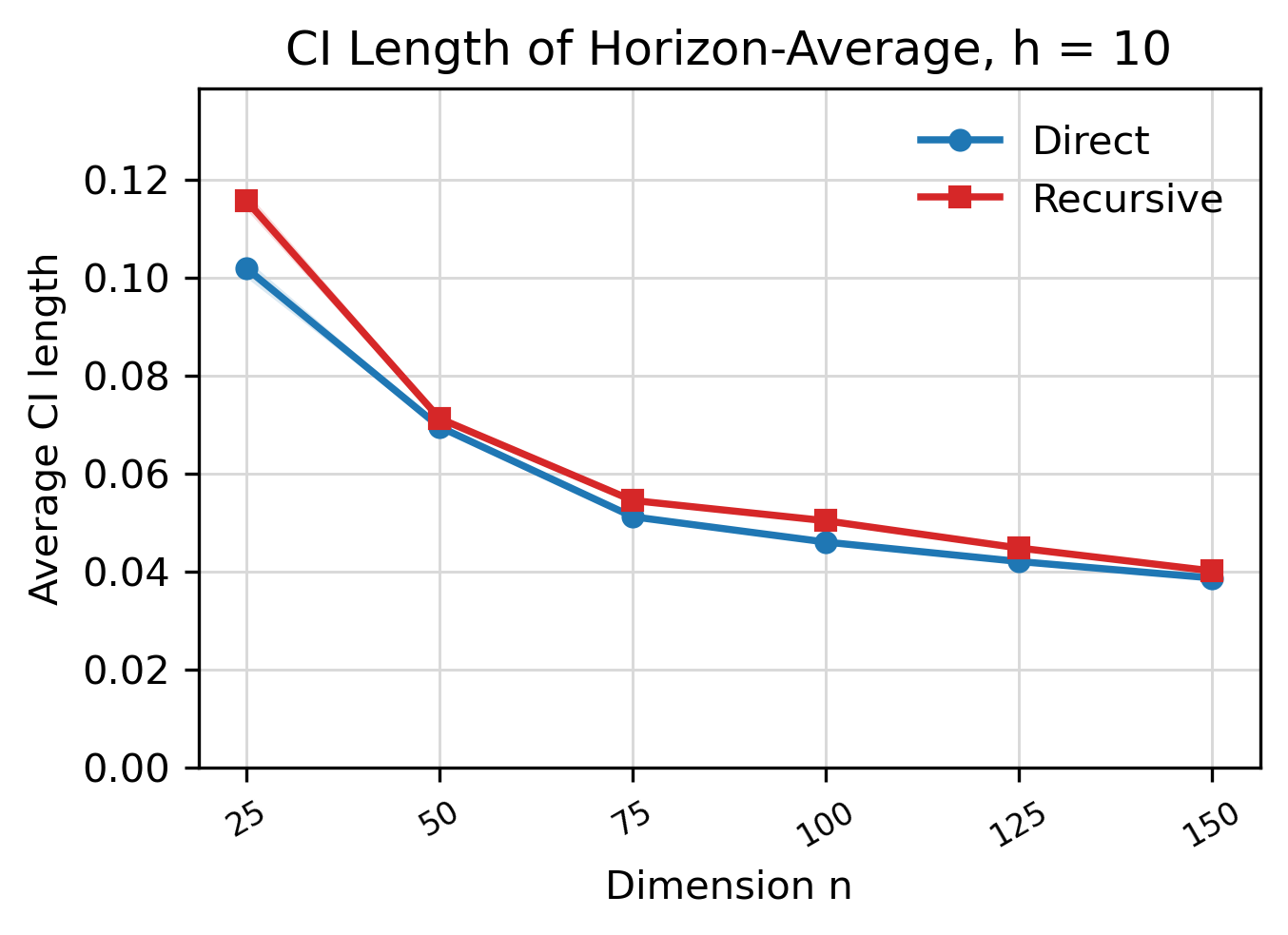}
		\caption{Length $h=10$} 
		\label{fig:len.10}
	\end{subfigure}
	\caption{Finite-sample performance of direct and recursive \TWSF~for horizon-average estimands. Panels~\ref{fig:bias.1}--\ref{fig:bias.10} report bias, \ref{fig:rmse.1}-\ref{fig:rmse.10} report RMSE, \ref{fig:cov.1}-\ref{fig:cov.10} report 90\% pointwise coverage, and \ref{fig:len.1}-\ref{fig:len.10} report average interval length. Blue and red denote direct and recursive \TWSF; shaded bands are clustered Monte Carlo standard errors. Results are based on 1000 replications.}
	\label{fig:sims.avg}  
\end{figure}
%

\section{Case Study: Opening NFL Stadiums During the 2020 Season} \label{sec:case.study}
We revisit the NFL stadium-opening study of \cite{nfl_pnas}, introduced in Section~\ref{sec:intro}. Their analysis addressed a retrospective question: how would cities that admitted spectators have evolved had their stadiums remained closed? We study the prospective counterpart: how would a closed-stadium city have evolved had it begun admitting spectators at a future decision date? 

\subsection{Data and Empirical Design}
%
\subsubsection{Setup}
Following \cite{nfl_pnas}, we use daily county-level COVID-19 case counts from the New York Times \citep{NYT:21} beginning on April 1, 2020. 
The units are the counties containing NFL stadiums, labeled by their associated NFL markets. 
%
%
After consolidating markets with multiple teams, the sample contains $N=30$ cities: 19 admitted spectators at some point during the regular season and 11 remained closed throughout the study period. 
Table~\ref{tab:open} lists the first regular-season date on which each of the 19 opening NFL markets admitted spectators. 
%
%
Notably, we model stadium opening as a binary intervention. 
In reality, outcomes likely depend on attendance levels, mitigation practices, and enforcement intensity. 
Our estimates to follow abstract from this source of heterogeneity, and the results should be interpreted with this limitation in mind.

\begin{table}[t]
\centering
\caption{First stadium-opening dates.}
\label{tab:open}
\footnotesize
    \begin{adjustbox}{max width=\textwidth}
    \begin{tabular}{l c  }
        \toprule
\textbf{First opening date} & \textbf{NFL market} \\
\midrule
    September 10 & Kansas City    	\\
    September 13 & Jacksonville 	\\
    September 17 & Cleveland 		\\
    September 20 & Dallas, Indianapolis, Miami \\
    September 27 & Denver \\
    October 4 & Carolina, Cincinnati, Houston, Tampa Bay\\
    October 11 & Atlanta, Pittsburgh \\
    October 13 & Tennessee \\
    October 18 & Philadelphia \\
    October 25 & Arizona, New Orleans \\
    November 1 & Baltimore \\
    November 8 & Washington \\ 
\bottomrule
\end{tabular}
\end{adjustbox}
\end{table}

\subsubsection{Observation Pattern and Donor Pool} 
Stadium openings were staggered, whereas the theoretical framework in Section~\ref{sec:framework} assumes a common treatment date. 
We therefore embed the empirical setting in an approximate \TWSF~design, following the strategy of \cite{synth_iv}. 
A formal extension to staggered adoption remains an important direction for future work.

The treated donor pool consists of the six earliest cities to admit spectators: 
$\Ic_1 =$ \{Kansas City, Jacksonville, Cleveland, Dallas, Indianapolis, Miami\}, giving a modest $N_1=6$. 
All six donor cities had opened by September 20. 
Denver is excluded because waiting for its September 27 opening would materially shorten the treated period available for estimating temporal dynamics. 
We therefore begin the time-side training window on September 21, the first day after every donor city had admitted spectators.

\subsubsection{Forecast Design} 
For each target city $i$, let $\tau$ denote the opening date. 
We treat the reported count on $\tau$ as pre-effect and forecast the next $h=14$ daily counts. 
The two-week horizon reflects the biological and reporting delays through which any effect of stadium attendance could appear in reported cases \citep{covid1, covid2}. 
In all analyses, the unit-side weights are learned from April 1 through September 10, before any stadium opened. 
The time-side weights are learned from September 21 through $\tau$. 
The forecast period is excluded from both estimation and hyperparameter selection. 
For each forecast origin, Table~\ref{tab:parameters} reports the cross-validated lag length and truncation ranks used by recursive \TWSF. 

\begin{table}[t!]
\centering
\caption{Cross-validated hyperparameters for recursive \TWSF.}
\label{tab:parameters}
\footnotesize
    \begin{adjustbox}{max width=\textwidth}
    \begin{tabular}{l c  c  c  }
        \toprule
\textbf{Opening date} & \textbf{Lag} $L$ 
&\textbf{Rank} $k_y$  & \textbf{Rank} $k_z$  \\
\midrule
    2020-10-04 &            3 &       4 &                   2  \\ 
    2020-10-11 &           8 &       6 &                   1 \\ 
    2020-10-13 &           8 &       2 &                   2 \\ 
    2020-10-18 &            5 &       5 &                   3 \\
    2020-10-25 &            8 &       3 &                   4 \\
    2020-11-01 &            7 &       4&                   4 \\
    2020-11-08 &            8 &      3 &                   7 \\
\bottomrule
\end{tabular}
\end{adjustbox}
\end{table}

Direct \TWSF~is infeasible for the earlier decision dates because the available treated period does not contain enough Page blocks of length $L+14$. We therefore use recursive \TWSF~throughout.
Although \TWSF~is estimated on daily counts, the figures display cumulative trajectories. 
Let $S_{it}$ denote the observed cumulative count through day $t$. 
For $m \in [14]$, define $\boldeta^{(m)}$ as the binary vector whose first $m$ entries are one. 
The cumulative open-stadium forecast at day $\tau + m$ is then given by $\hS_{i, \tau+m} \coloneqq S_{i\tau} + \htheta^\rec_{i, \eta^{(m)}}(\tau)$. 

\subsection{Empirical Results} 

\subsubsection{Validation Study} 
The principal counterfactual outcomes are unobserved, so we first evaluate \TWSF~on cities that eventually admitted spectators. 
For each later-opening city, we treat the date of its first open home game as the end of the observed panel and forecast the following 14 days using only information available by that date. 
This variation offers an informal indication of how temporal sample size affects forecasting performance.

Because the realized post-opening path includes future idiosyncratic shocks, we use prediction intervals rather than confidence intervals for the conditional mean. 
Under the homoskedastic Gaussian version of Assumption~\ref{assump:subg}, the innovation variance accumulated through lead $m \in [14]$ is $\sigma^2 \| \boldeta^{(m)} \|_2^2 = m \sigma^2$. 
The corresponding $(1-\gamma)100\%$ prediction interval is 
$\Big[ \hS_{i, \tau+m} \pm \Phi^{-1}(1-\gamma/2) \sqrt{ \big\{\hVc^\rec_{i m}(\tau)\big\}^2 + m \hsigma^2} \Big]$, 
where $\hVc^\rec_{im}(\tau)$ reflects uncertainty in the expected cumulative path while $m \hsigma^2$ accounts for future daily shocks.

Figure~\ref{fig:validation.same} reports results for all validation cities. 
The central forecasts track most realized trajectories closely relative to their prediction intervals. 
The clearest departures occur in Cincinnati and Pittsburgh, where cumulative counts accelerate near the end of the horizon and exceed the upper prediction limit.
Forecasts are generally more accurate for later-opening cities, which provide longer treated-donor training windows. 
Cincinnati and Pittsburgh are among the earlier and more difficult cases. 
This pattern is consistent with the role of temporal sample size in Theorems~\ref{prop:parameter.recovery} and \ref{thm:recursive.normality}, although the limited number of validation cities precludes a definitive comparison. 
Taken together, the validation results indicate that \TWSF~can often produce treated forecasts close to subsequently observed paths using only information available at the forecast origin.

\begin{figure}[t]
	\centering 
	\begin{subfigure}[b]{0.32\textwidth}
		\centering 
		\includegraphics[width=\linewidth]{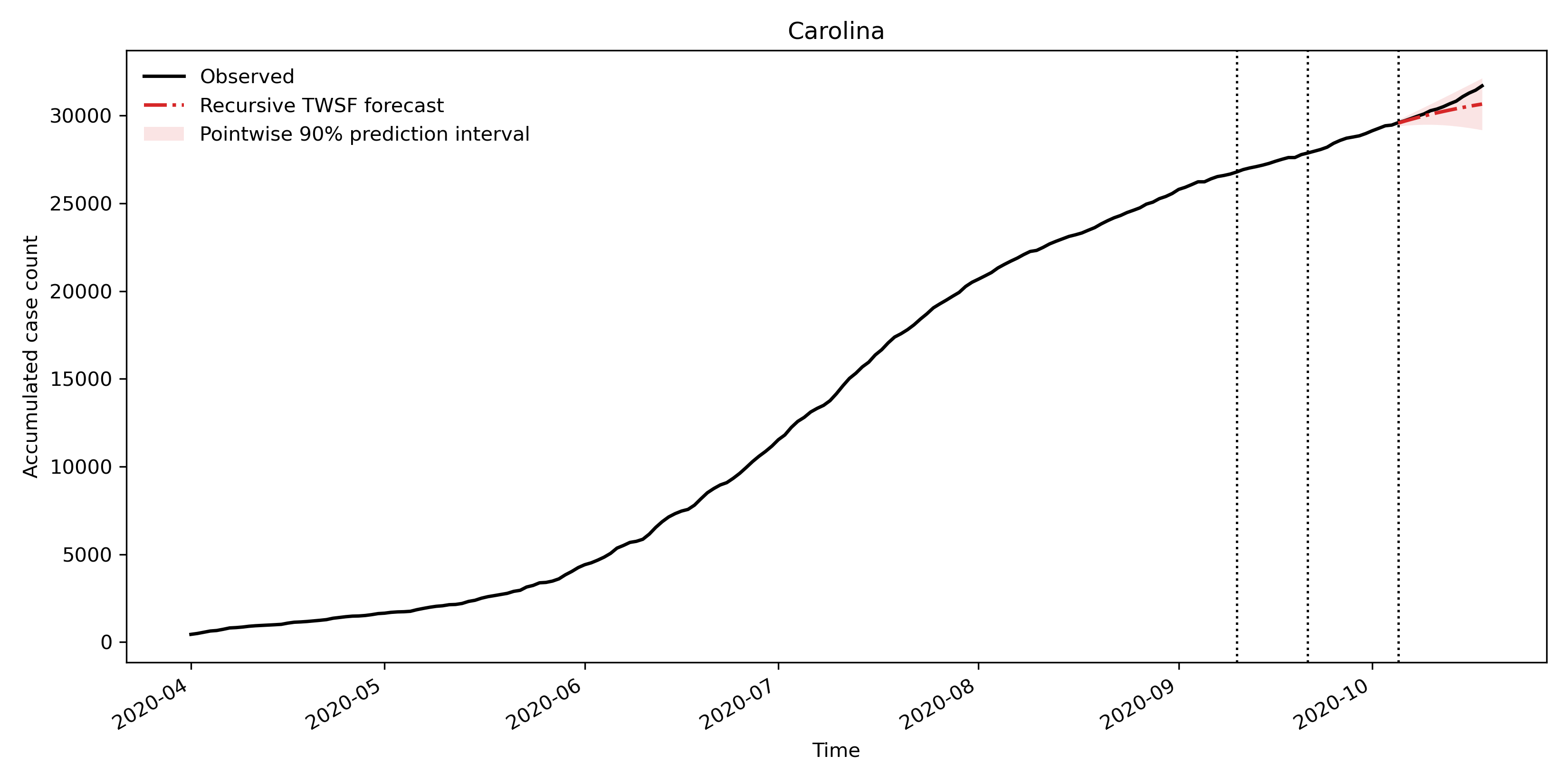}
		\caption{Carolina: 10-04.} 
	\end{subfigure} 
	\begin{subfigure}[b]{0.32\textwidth}
		\centering 
		\includegraphics[width=\linewidth]{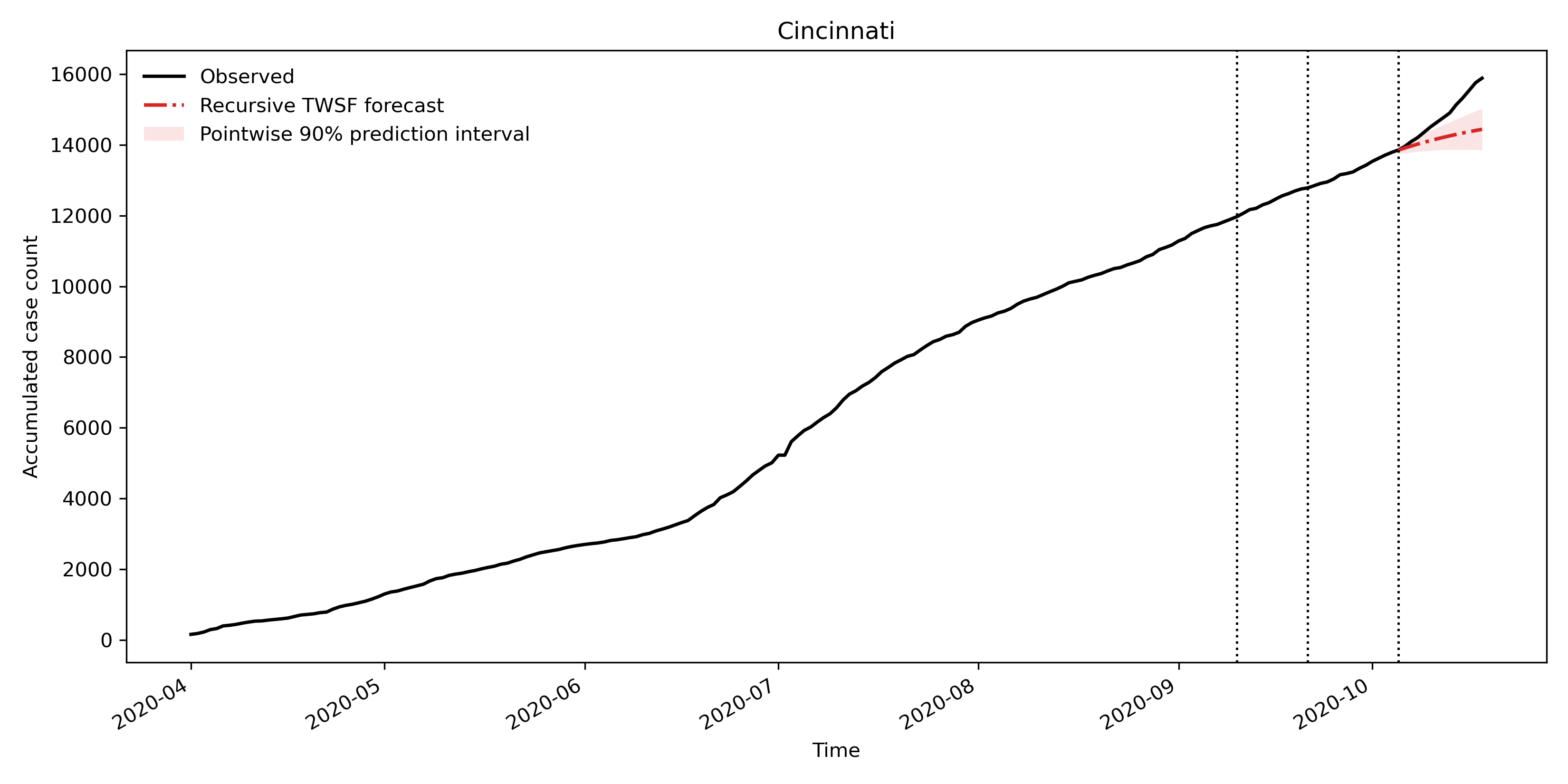}
		\caption{Cincinnati: 10-04.} 
	\end{subfigure} 
	\begin{subfigure}[b]{0.32\textwidth}
		\centering 
		\includegraphics[width=\linewidth]{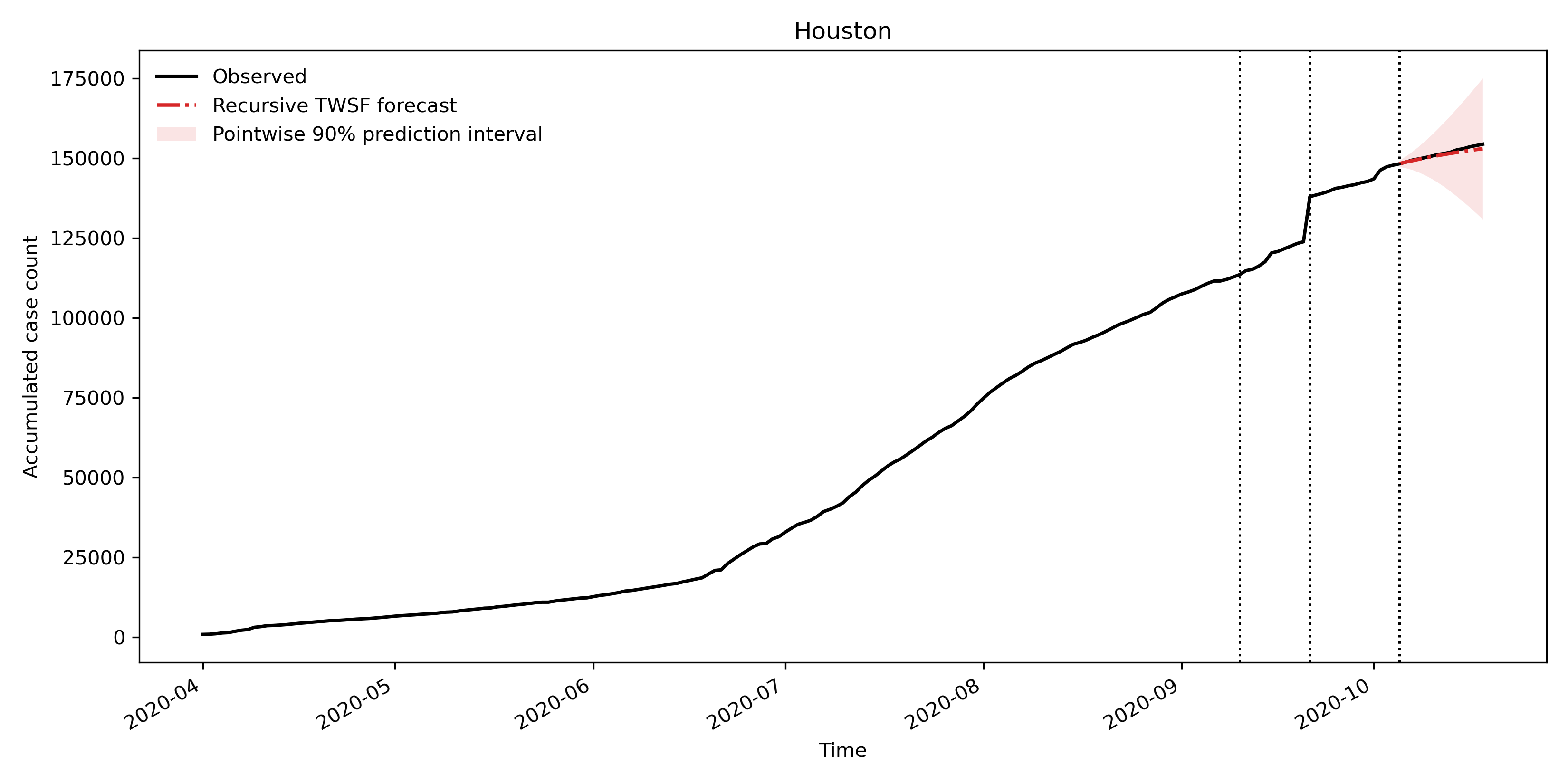}
		\caption{Houston: 10-04.} 
	\end{subfigure} 
	\\
	\begin{subfigure}[b]{0.32\textwidth}
		\centering 
		\includegraphics[width=\linewidth]{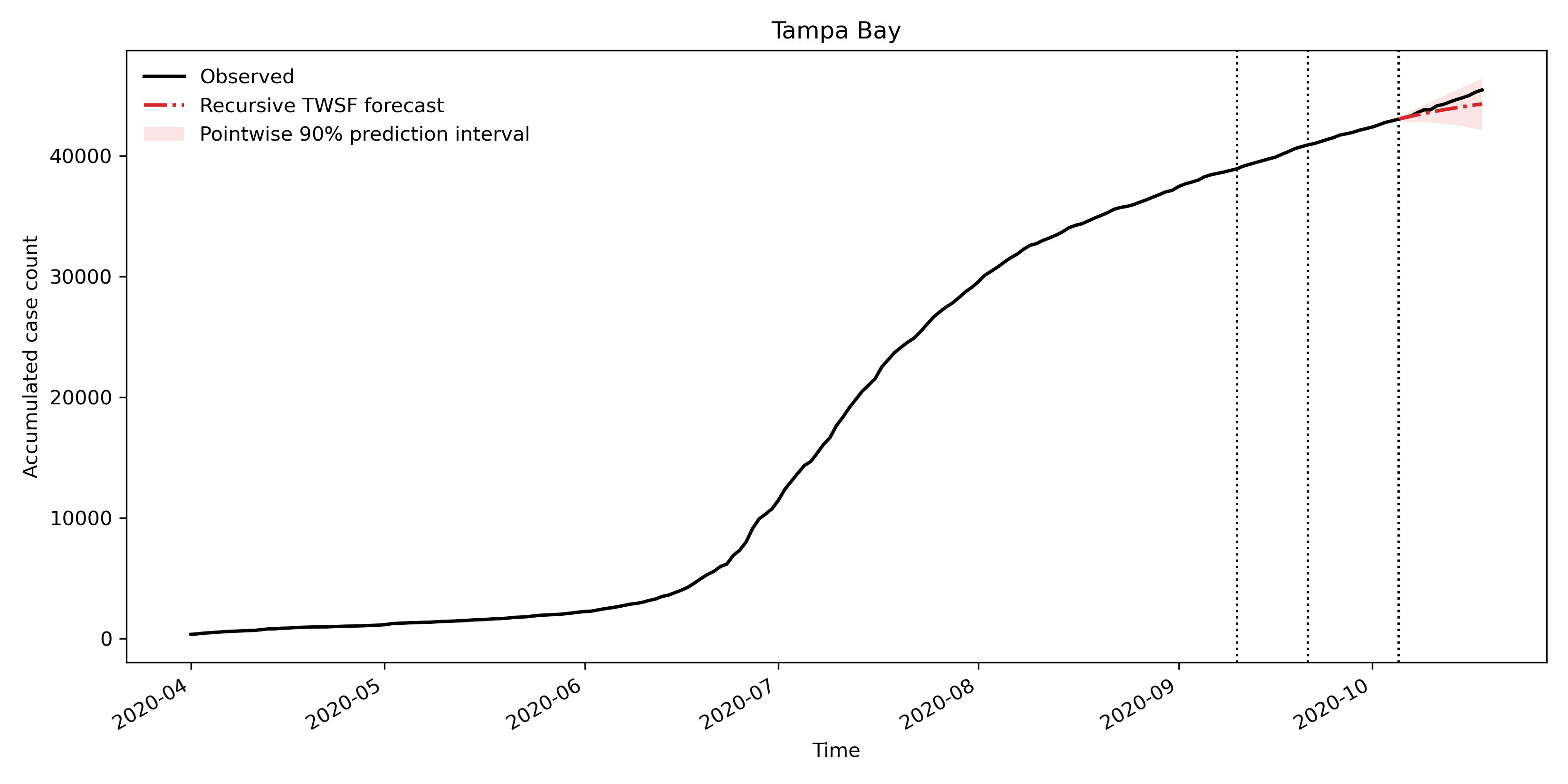}
		\caption{Tampa Bay: 10-04.} 
	\end{subfigure} 
	\begin{subfigure}[b]{0.32\textwidth}
		\centering 
		\includegraphics[width=\linewidth]{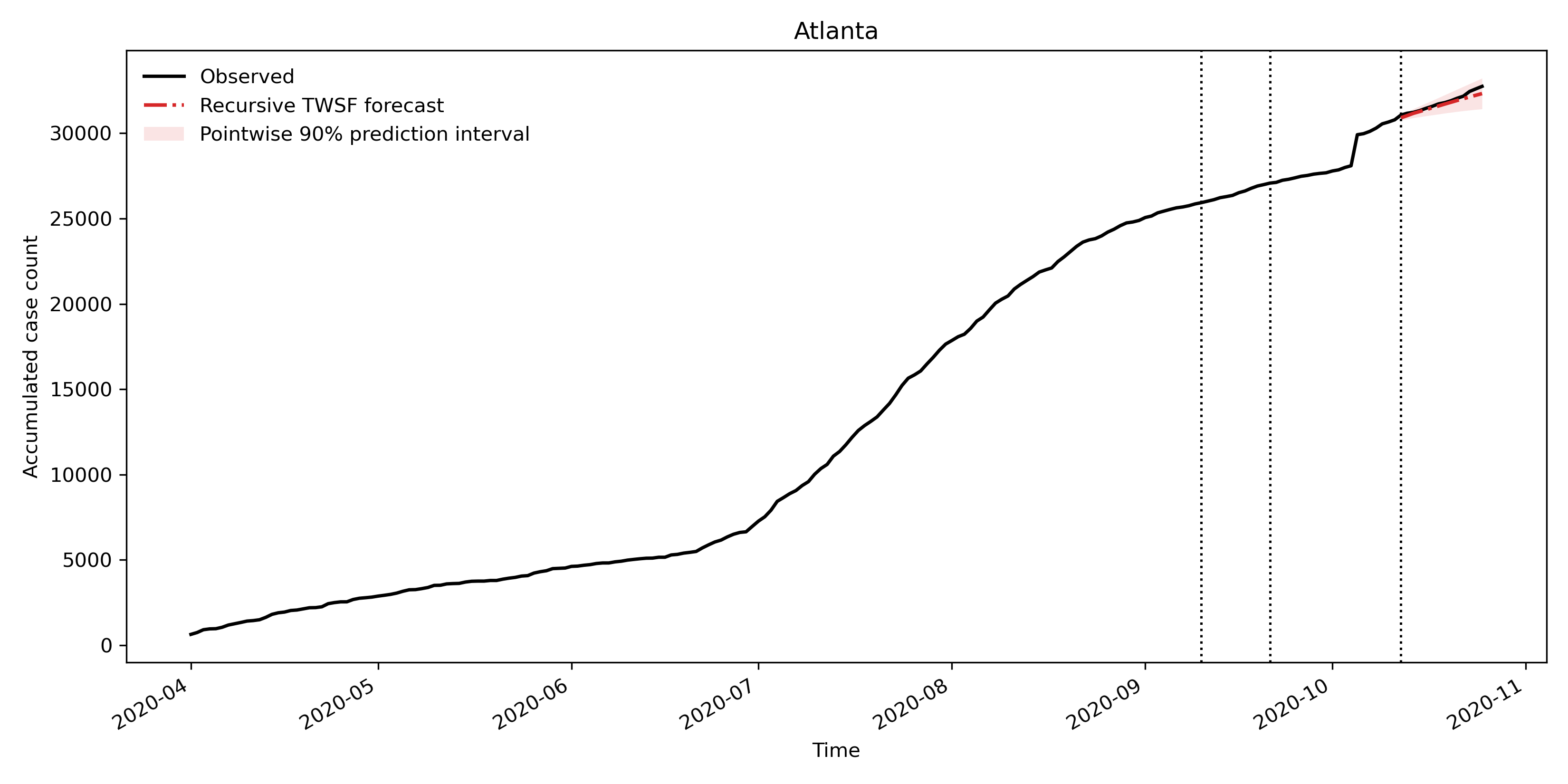}
		\caption{Atlanta: 10-11.} 
	\end{subfigure} 
	\begin{subfigure}[b]{0.32\textwidth}
		\centering 
		\includegraphics[width=\linewidth]{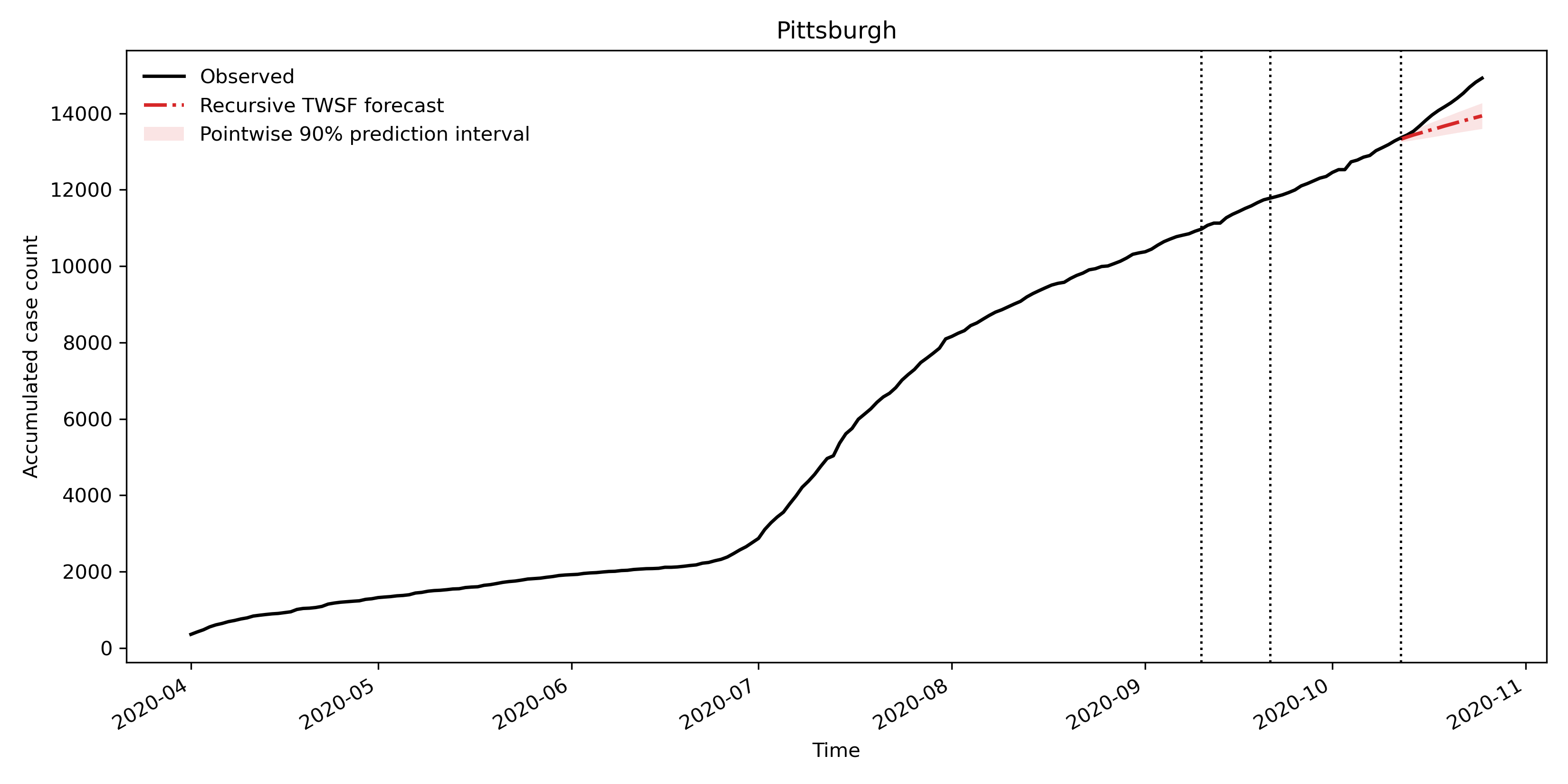}
		\caption{Pittsburgh: 10-11.} 
	\end{subfigure} 
	\\
	\begin{subfigure}[b]{0.32\textwidth}
		\centering 
		\includegraphics[width=\linewidth]{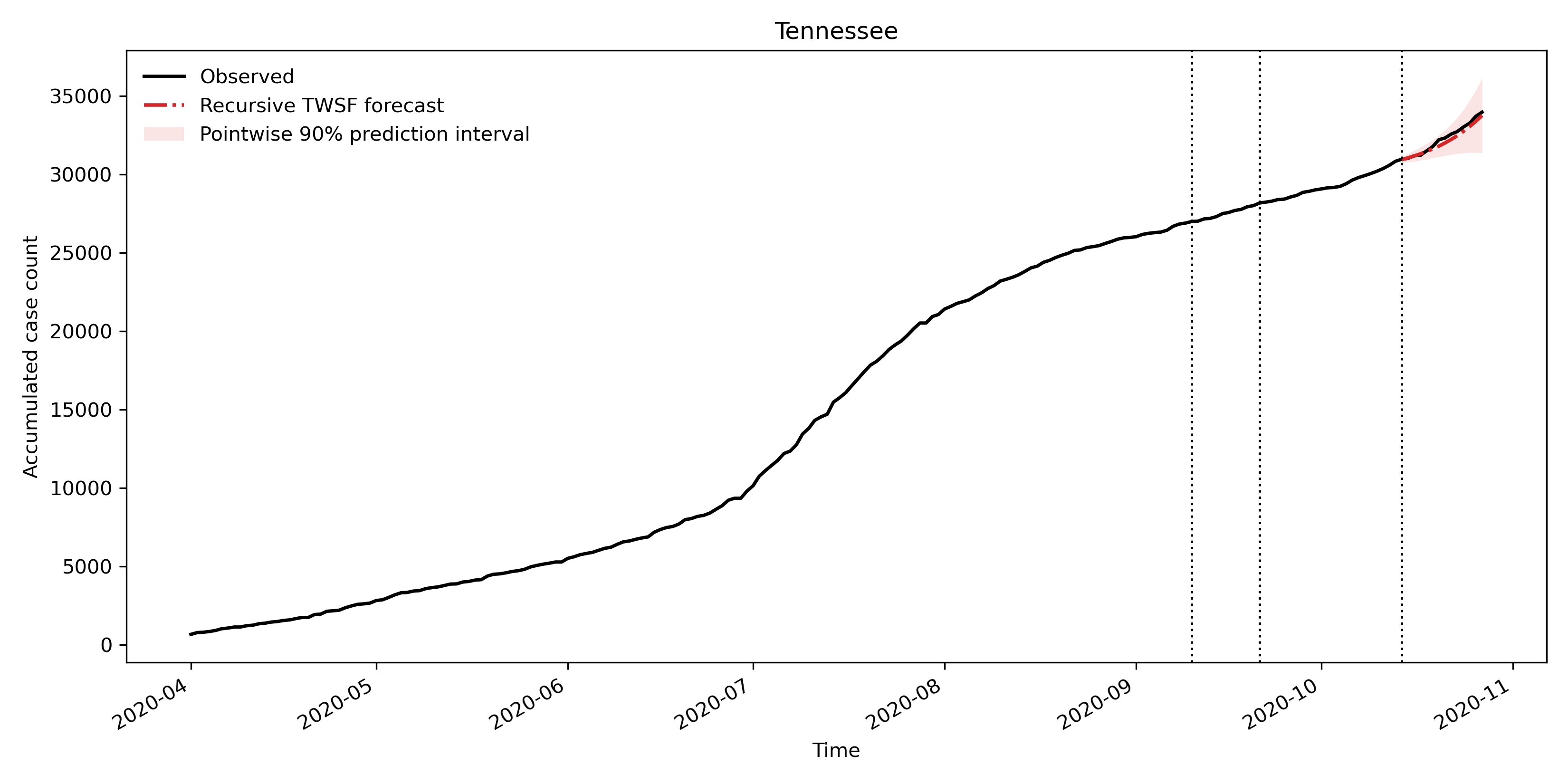}
		\caption{Tennessee: 10-13.} 
	\end{subfigure} 
	\begin{subfigure}[b]{0.32\textwidth}
		\centering 
		\includegraphics[width=\linewidth]{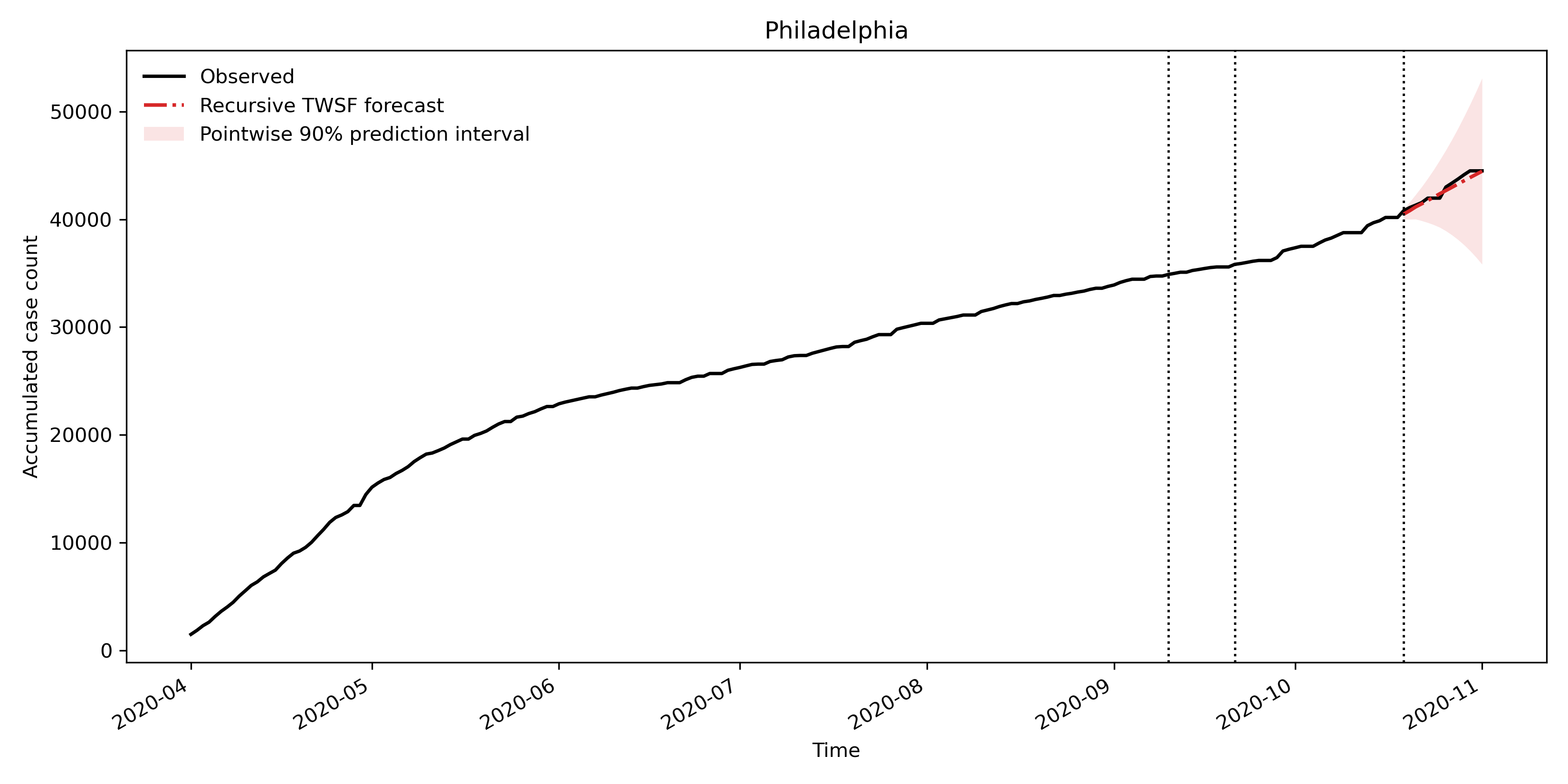}
		\caption{Philadelphia: 10-18.} 
	\end{subfigure} 
	\begin{subfigure}[b]{0.32\textwidth}
		\centering 
		\includegraphics[width=\linewidth]{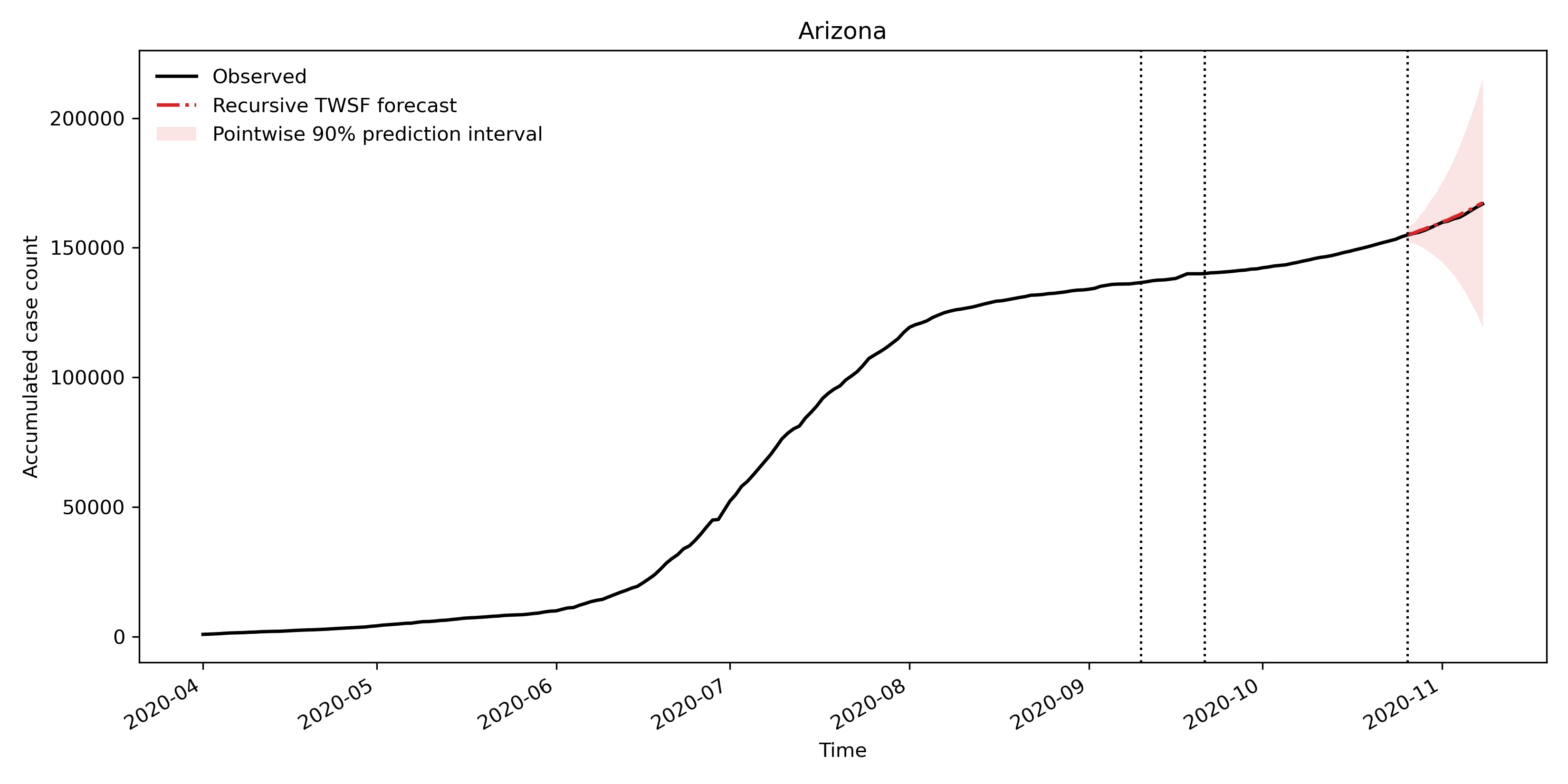}
		\caption{Arizona: 10-25.} 
	\end{subfigure} 
	\\
	\begin{subfigure}[b]{0.32\textwidth}
		\centering 
		\includegraphics[width=\linewidth]{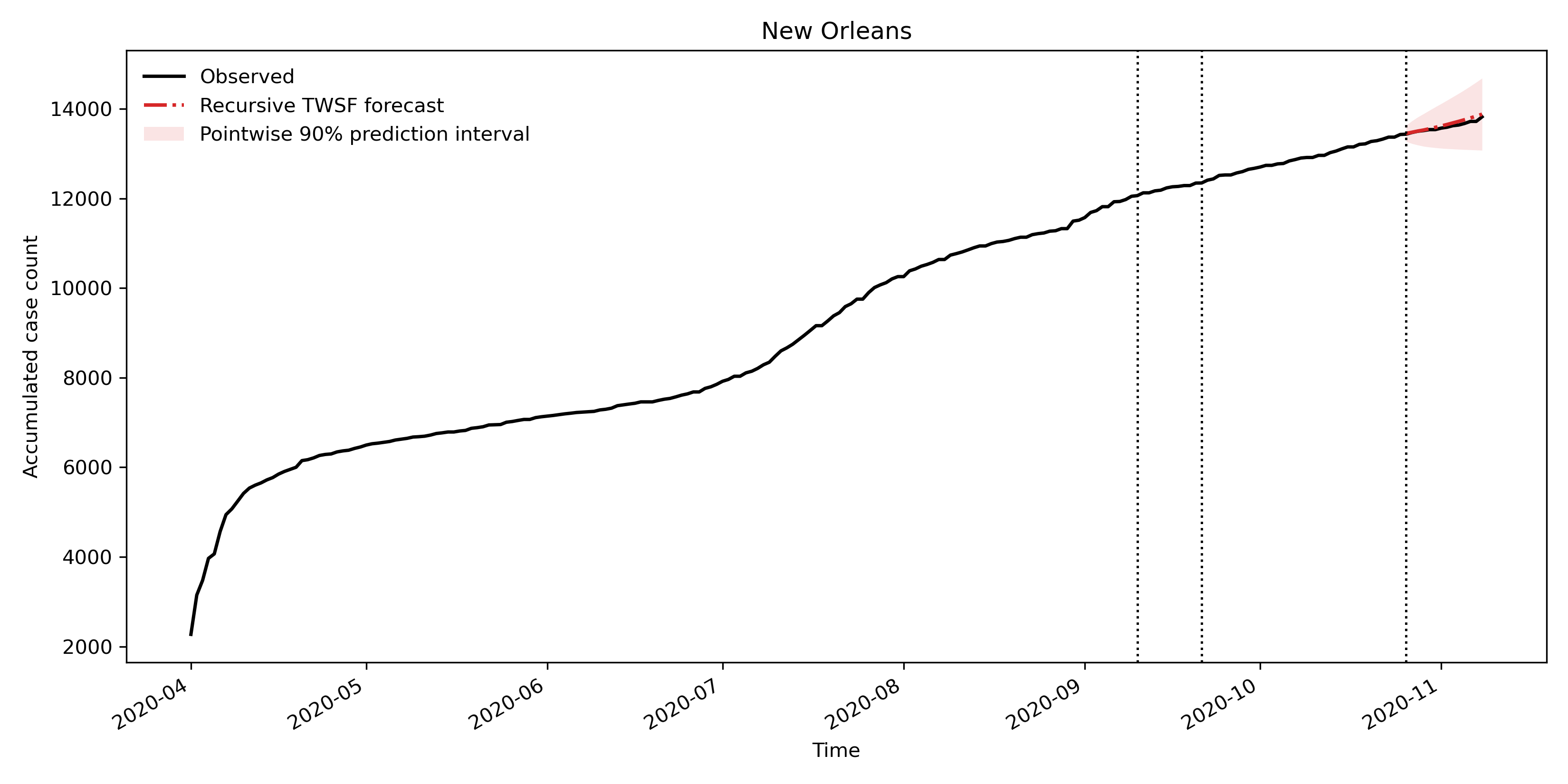}
		\caption{New Orleans: 10-25.} 
	\end{subfigure} 
	\begin{subfigure}[b]{0.32\textwidth}
		\centering 
		\includegraphics[width=\linewidth]{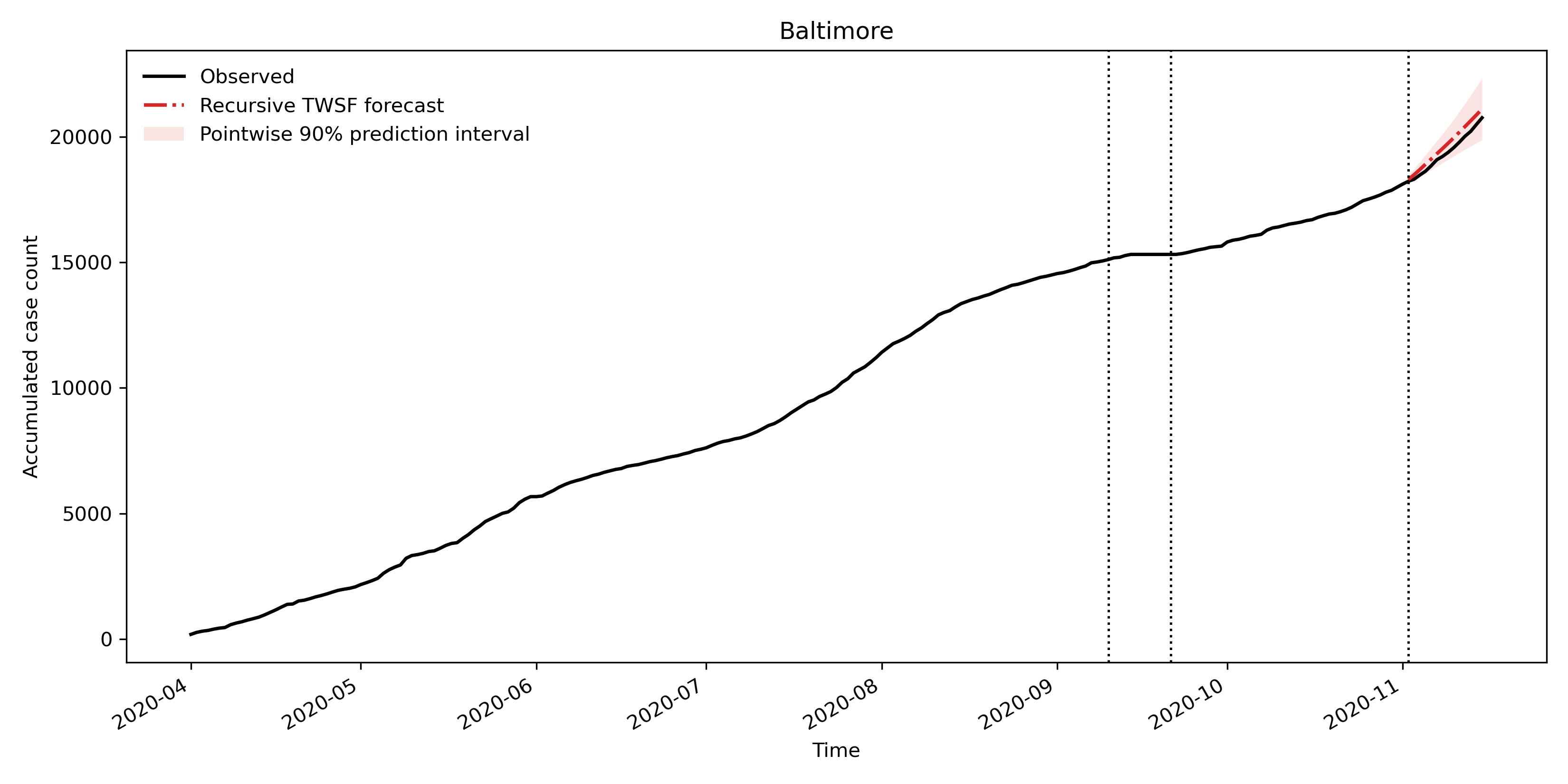}
		\caption{Baltimore: 11-01.} 
	\end{subfigure} 
	\begin{subfigure}[b]{0.32\textwidth}
		\centering 
		\includegraphics[width=\linewidth]{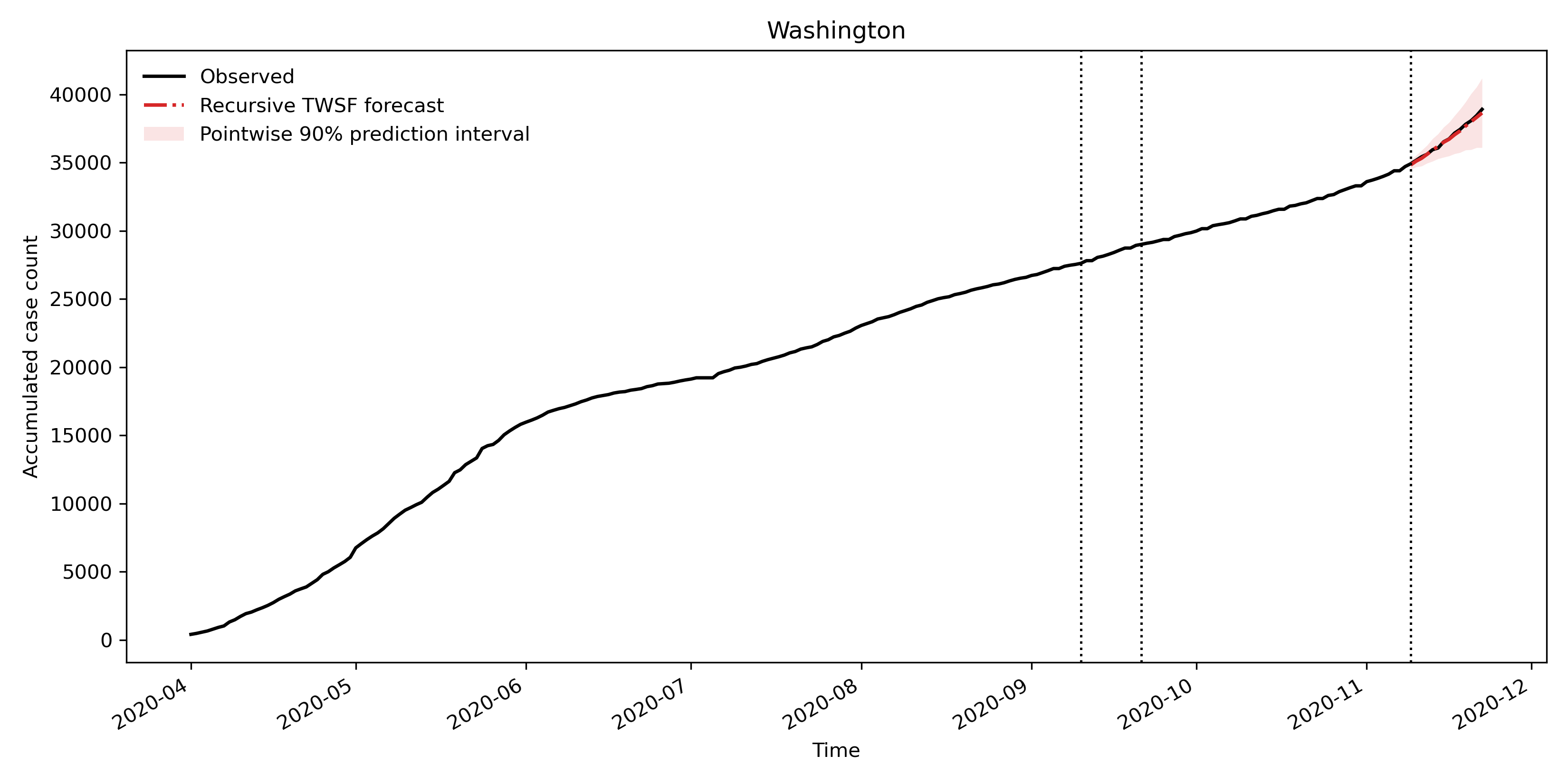}
		\caption{Washington: 11-08.} 
	\end{subfigure} 
	\caption{Validation forecasts for cities that admitted fans later in the season. 
	Black lines show observed cumulative cases, and red dash-dotted lines show \TWSF~forecasts. 
	Shaded regions are pointwise 90\% prediction intervals. 
	The first dotted vertical line marks September 10, the end of the unit-side pre-treatment window; the second marks September 21, the beginning of the treated-donor window used to estimate temporal dynamics; and the third marks the target city’s actual opening date. Forecasts cover the following 14 days.}
	\label{fig:validation.same} 
\end{figure}

\subsubsection{Counterfactual Forecasts} 
We next consider the 11 cities that kept their stadiums closed. 
Each city is assigned a hypothetical opening date equal to its first home game after October 1: October 4: \{Chicago, Detroit, Las Vegas, Los Angeles, San Francisco\}, October 11: \{New York, Seattle\}, October 18: \{Minnesota, New England\}, November 1: \{Buffalo, Green Bay\}. 
For each target city, we estimate the 14-day cumulative trajectory that would have followed the hypothetical opening. 
At lead $m \in [14]$, the estimand is $S_{i \tau} + \theta_{i, \eta^{(m)}}(\tau)$. 
Because this quantity is a conditional mean rather than a future realization, we use the pointwise confidence interval in \eqref{eq:ci.horizon} without the additional future-innovation term. 
The donor pool remains unchanged. 
For each hypothetical opening date, we use the hyperparameters selected for the corresponding validation date in Table~\ref{tab:parameters}.

Figure~\ref{fig:counter.same} reports the results for all cities. 
For most cities, the estimated open-stadium trajectory remains close to the observed closed-stadium trajectory. 
The largest visible differences arise in Seattle, Buffalo, and Green Bay, where observed counts rise above the counterfactual forecasts.
Green Bay is a known problematic case: \cite{nfl_pnas} attribute a large shock in the relevant period to a Wisconsin Department of Health Services reporting error \citep{green_bay}. 
Buffalo is also difficult to interpret because its forecast window overlaps a pronounced Erie County surge and the introduction of New York’s micro-cluster restrictions \citep{buffalo1, buffalo2}. 

Except for New England, the estimated open-stadium trajectories generally lie below the observed closed-stadium trajectories. 
This pattern should not be interpreted as evidence that stadium opening reduced infections. 
A more defensible conclusion is that the estimated open-state mean is not systematically or substantially above the realized closed-state path under the conditions studied. 
This finding is broadly consistent with \cite{nfl_pnas}, but the present design does not identify the mechanism.
The results also vary with the hypothetical opening date because both local epidemic conditions and the available treated-donor history change over time. 
Thus, although the results do not indicate a large systematic increase in case counts, the timing of an opening decision may still have substantive consequences. 

\begin{figure}[t]
	\centering 
	\begin{subfigure}[b]{0.32\textwidth}
		\centering 
		\includegraphics[width=\linewidth]{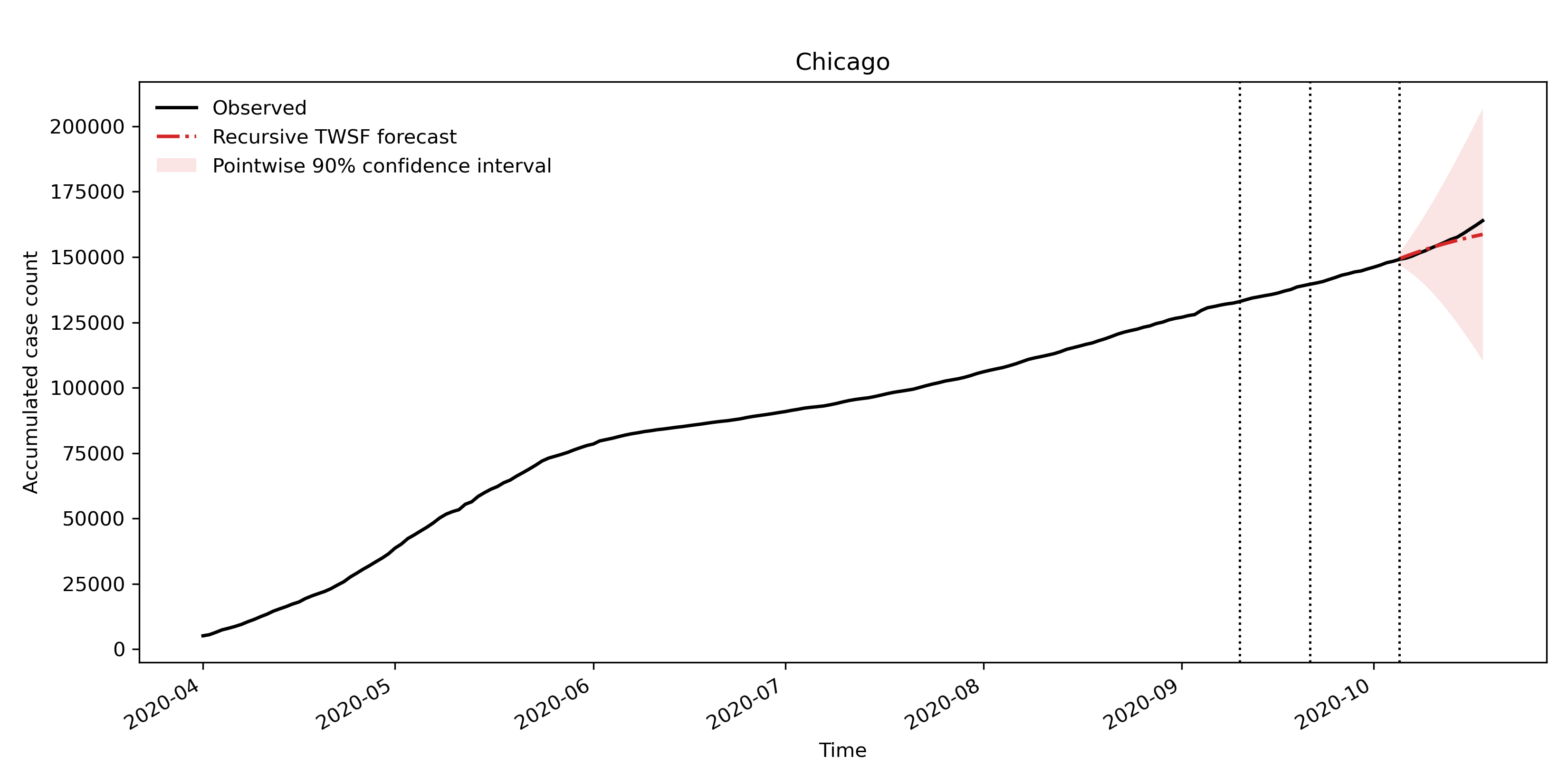}
		\caption{Chicago: 10-04.} 
	\end{subfigure} 
	\begin{subfigure}[b]{0.32\textwidth}
		\centering 
		\includegraphics[width=\linewidth]{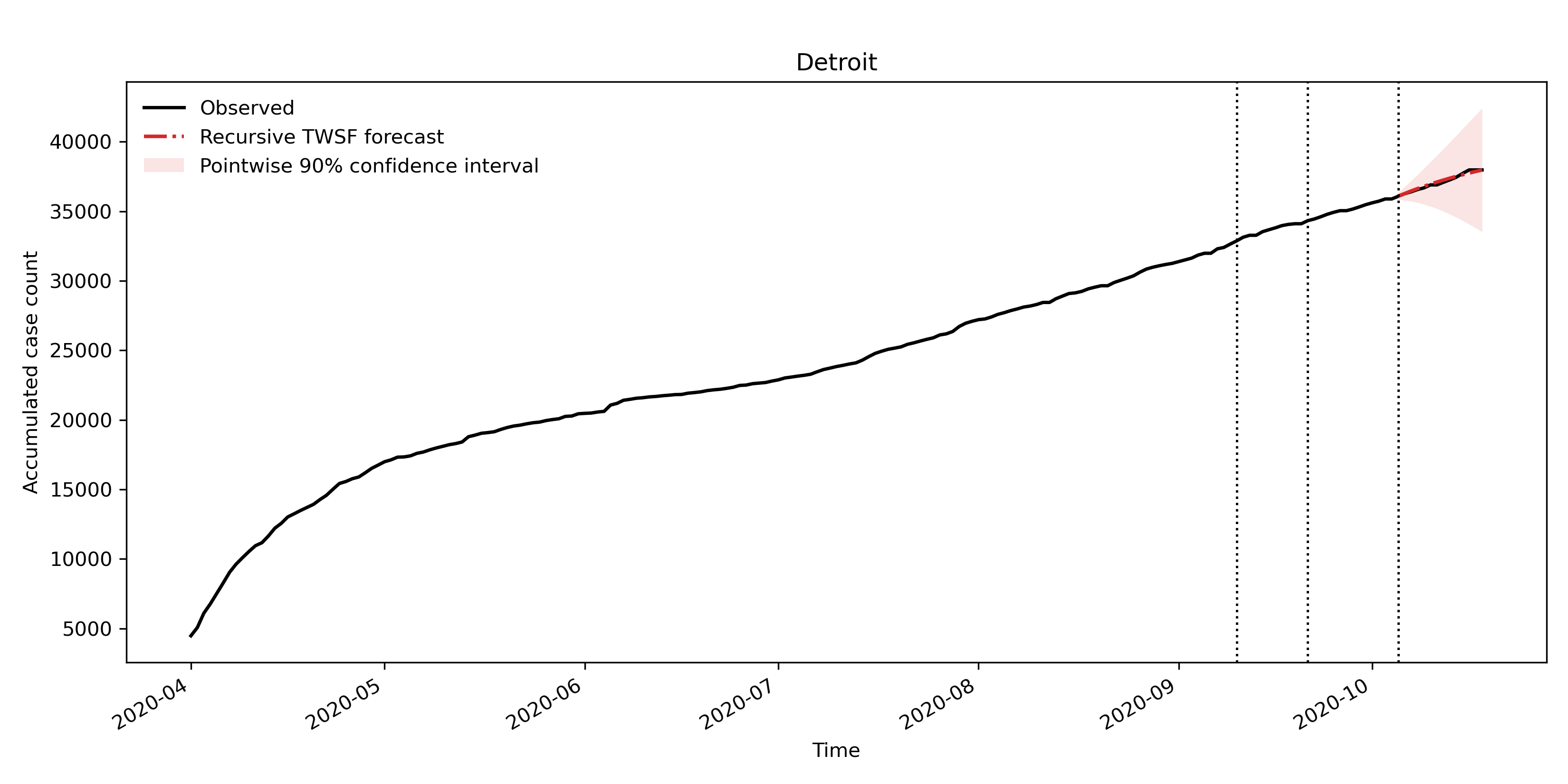}
		\caption{Detroit: 10-04.} 
	\end{subfigure} 
	\begin{subfigure}[b]{0.32\textwidth}
		\centering 
		\includegraphics[width=\linewidth]{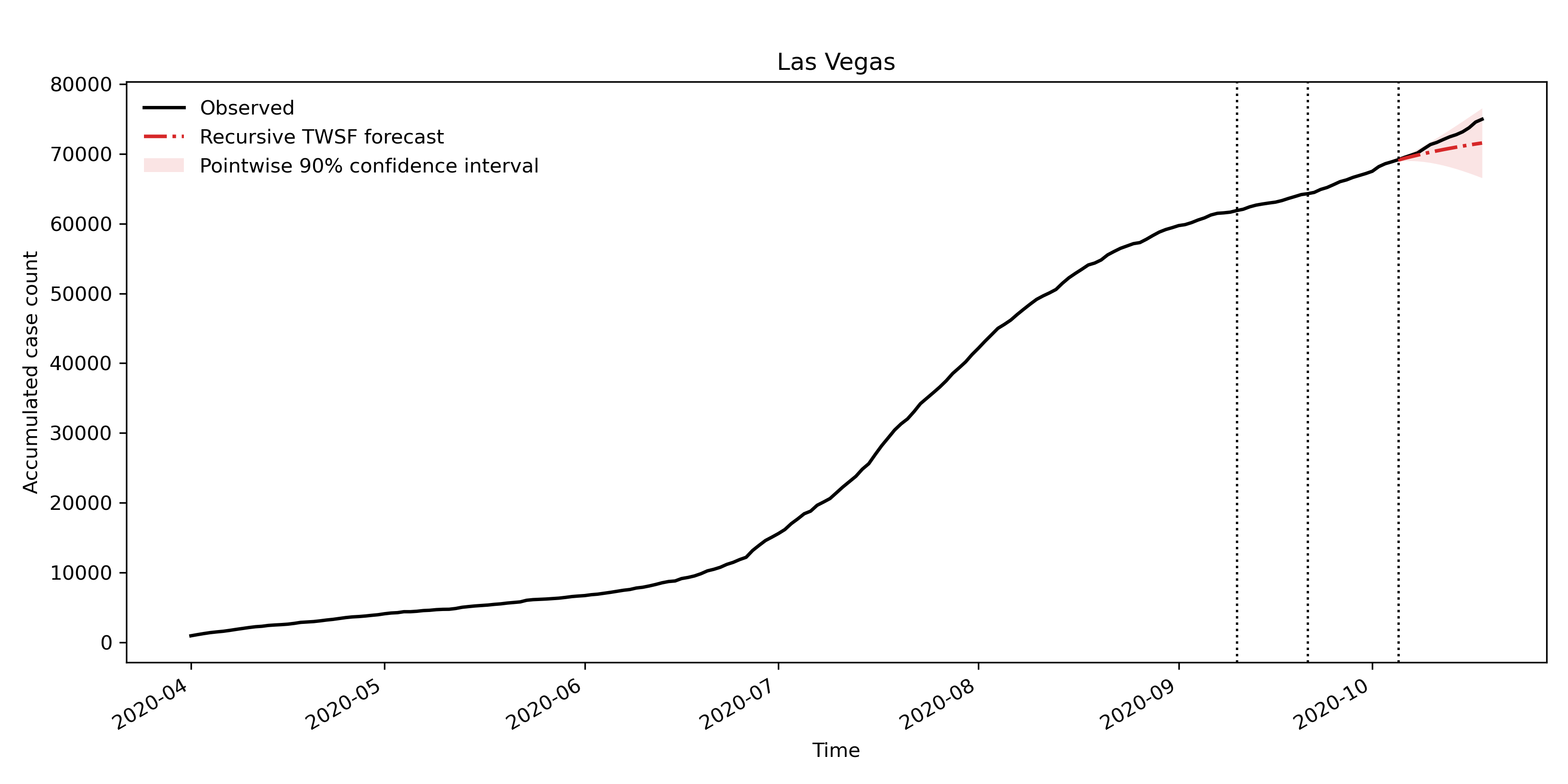}
		\caption{Las Vegas: 10-04.} 
	\end{subfigure} 
	\\
	\begin{subfigure}[b]{0.32\textwidth}
		\centering 
		\includegraphics[width=\linewidth]{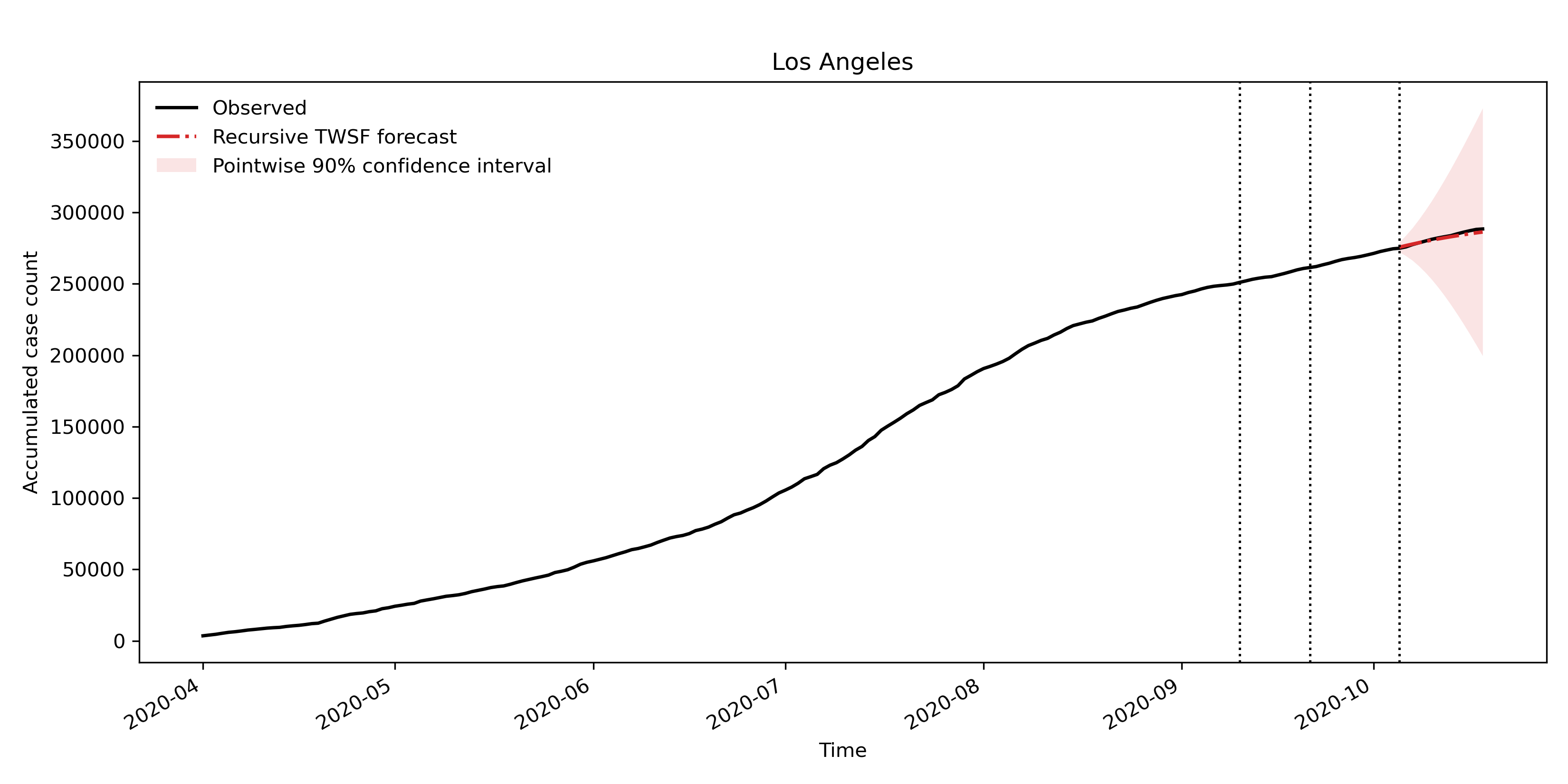}
		\caption{Los Angeles: 10-04.} 
	\end{subfigure} 
	\begin{subfigure}[b]{0.32\textwidth}
		\centering 
		\includegraphics[width=\linewidth]{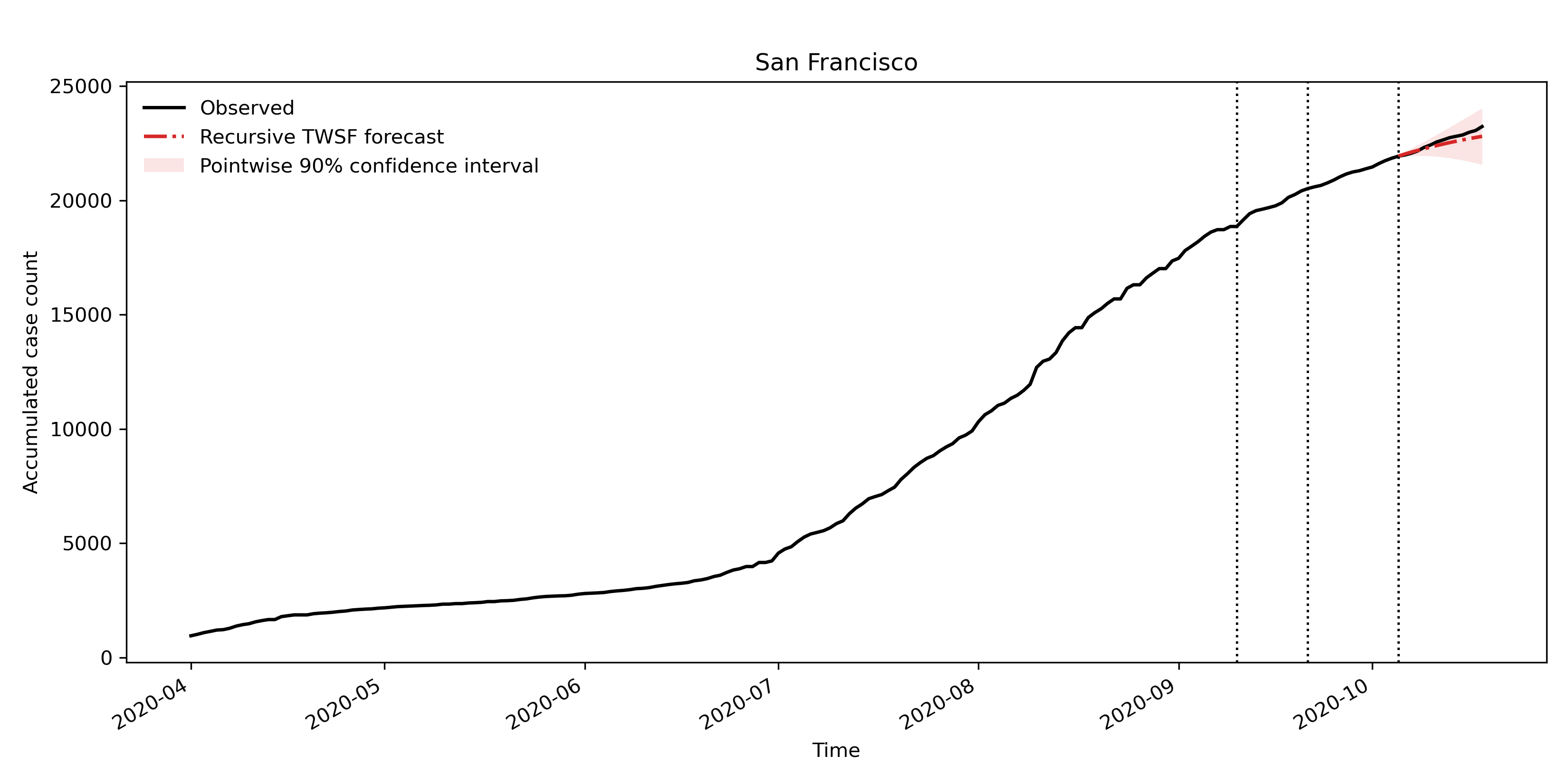}
		\caption{San Francisco: 10-04.} 
	\end{subfigure} 
	\begin{subfigure}[b]{0.32\textwidth}
		\centering 
		\includegraphics[width=\linewidth]{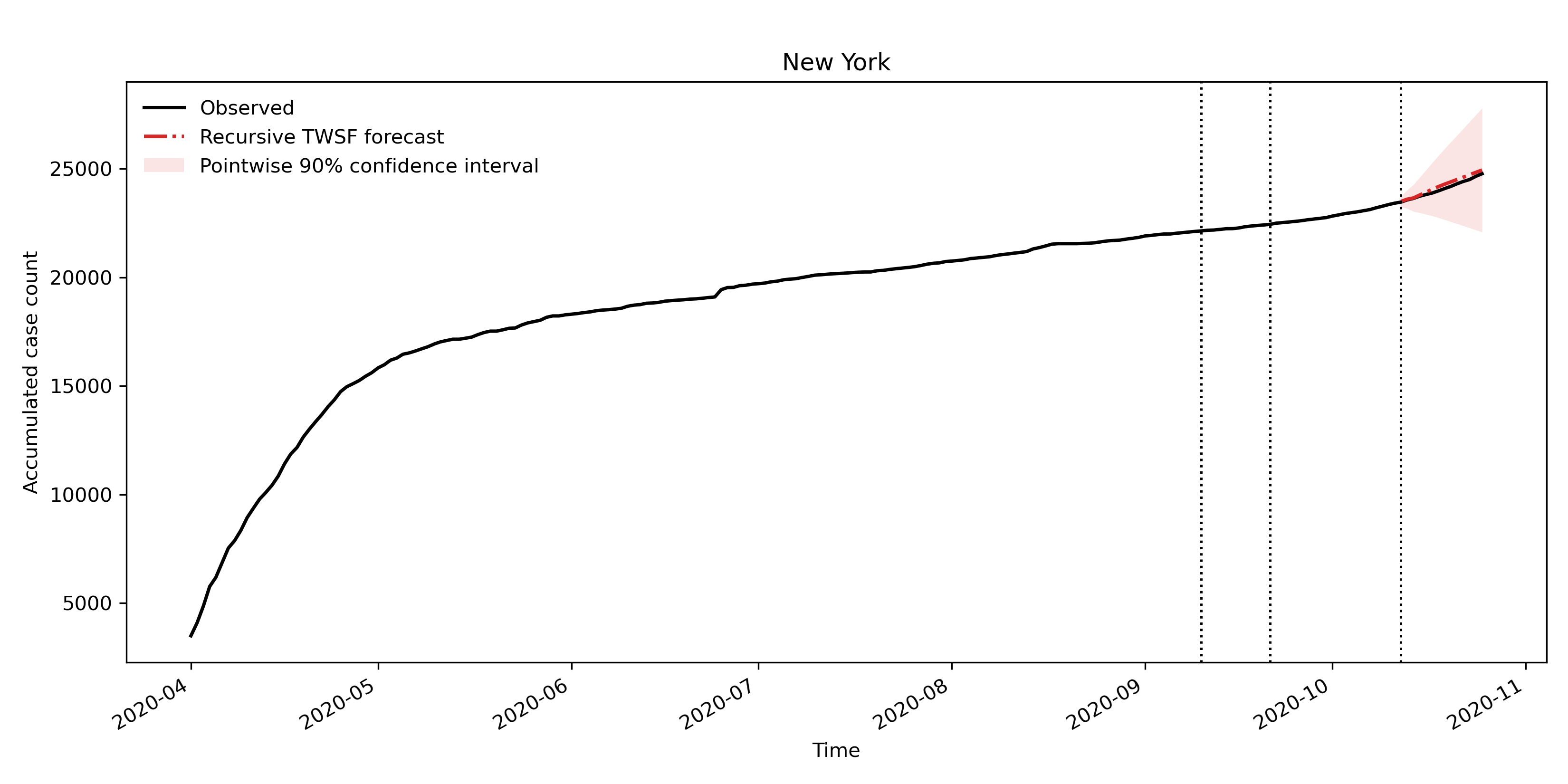}
		\caption{New York: 10-11.} 
	\end{subfigure} 
	\\
	\begin{subfigure}[b]{0.32\textwidth}
		\centering 
		\includegraphics[width=\linewidth]{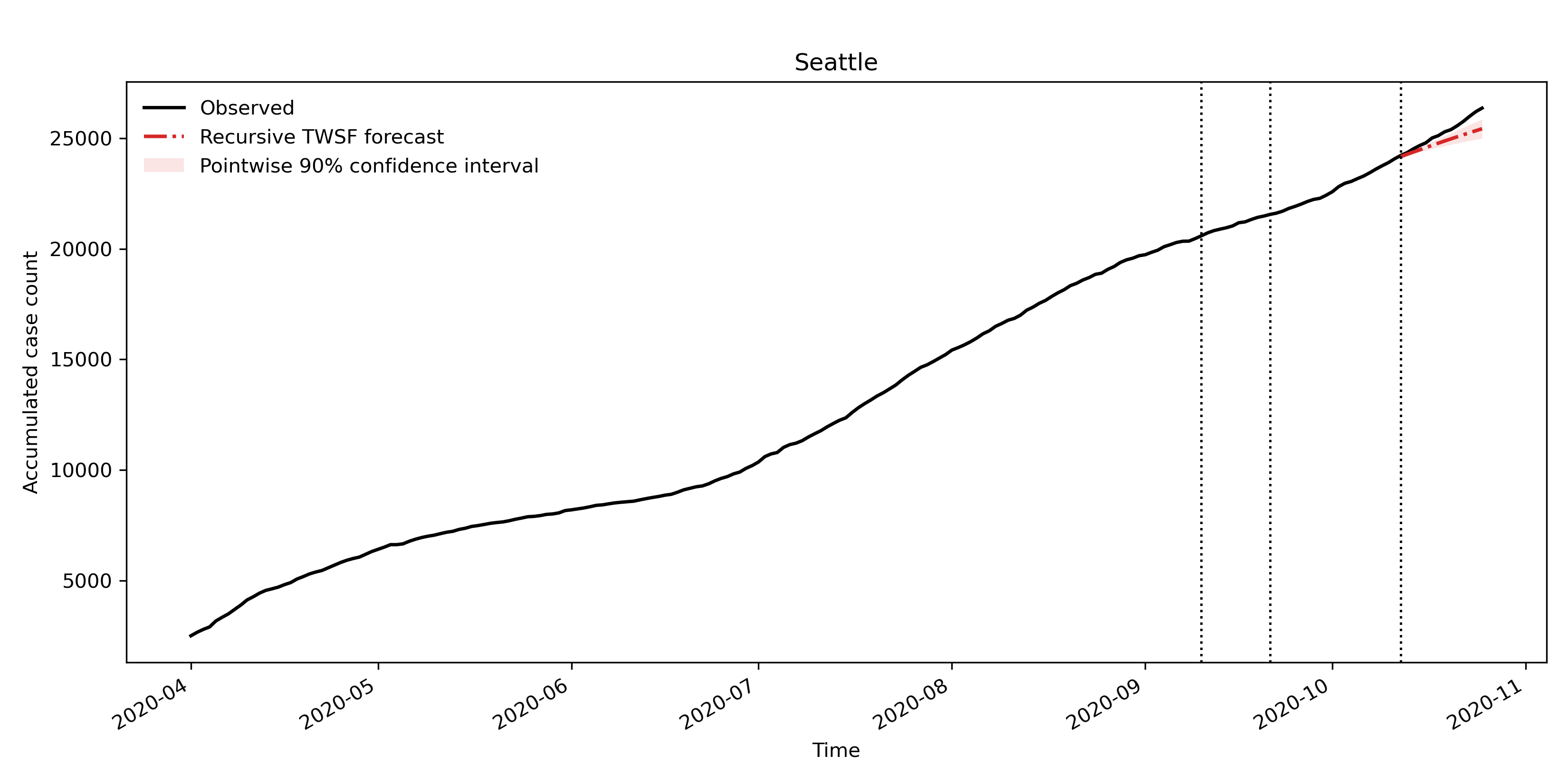}
		\caption{Seattle: 10-11.} 
	\end{subfigure} 
	\begin{subfigure}[b]{0.32\textwidth}
		\centering 
		\includegraphics[width=\linewidth]{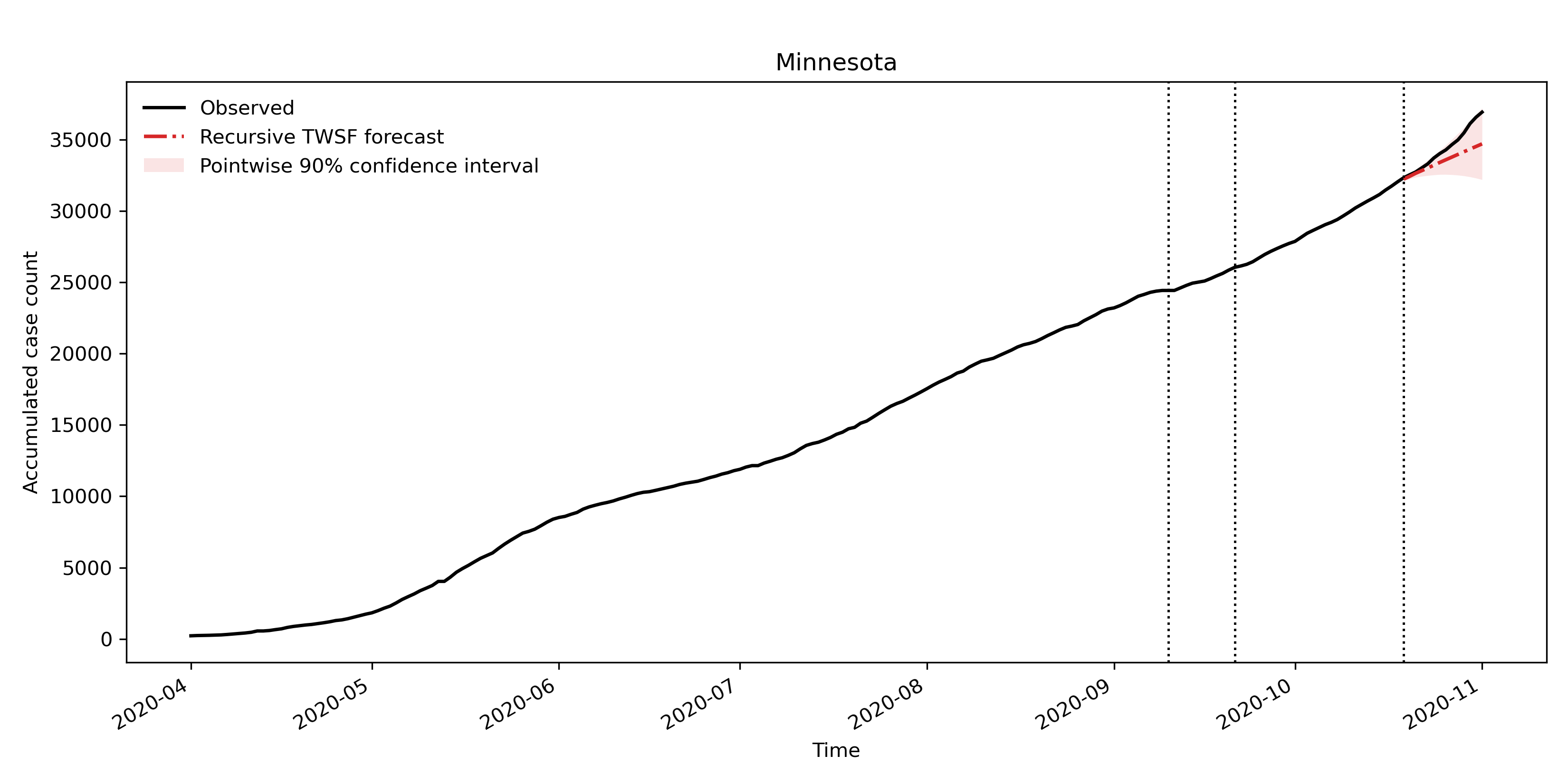}
		\caption{Minnesota: 10-18.} 
	\end{subfigure} 
	\begin{subfigure}[b]{0.32\textwidth}
		\centering 
		\includegraphics[width=\linewidth]{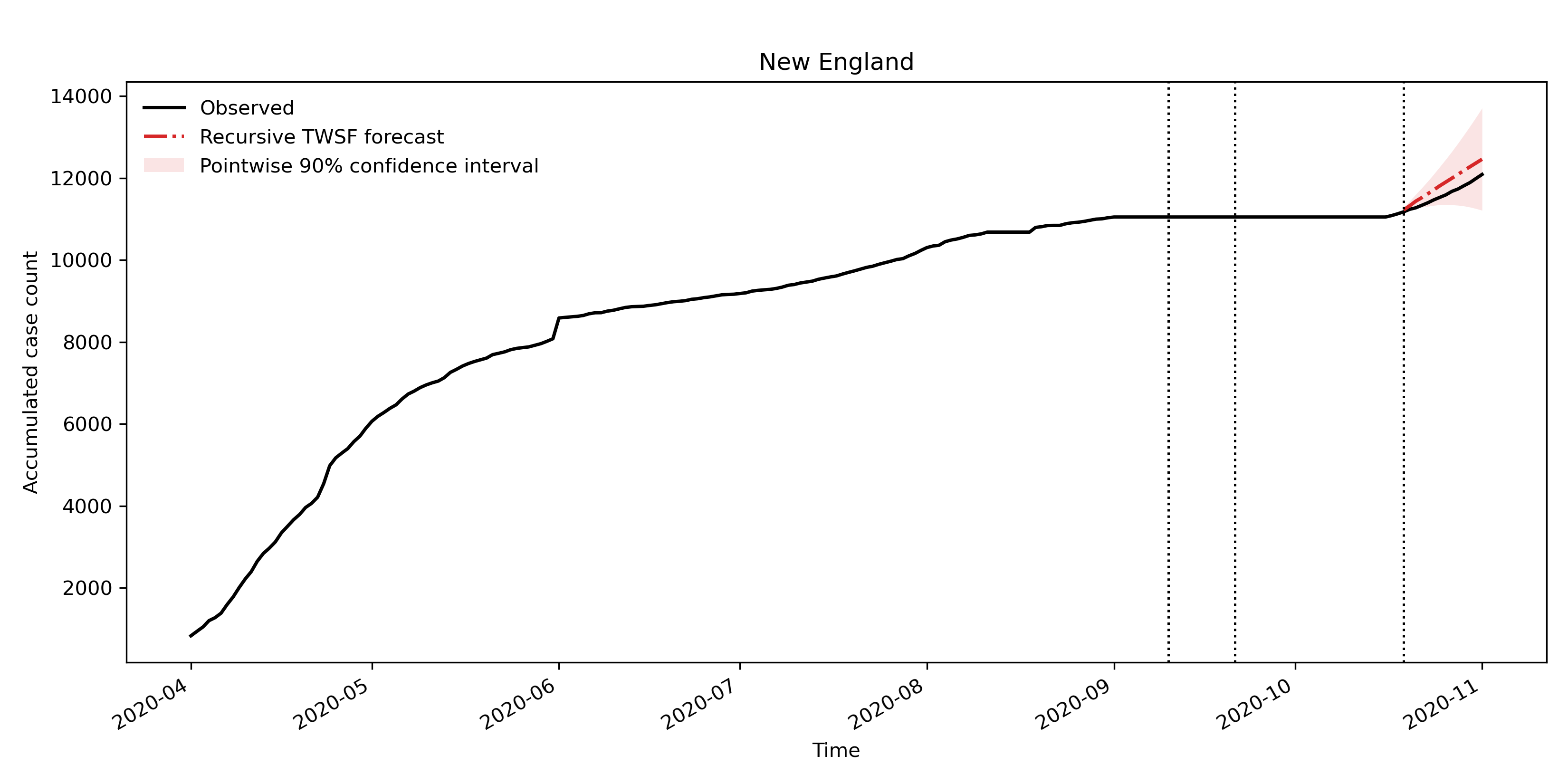}
		\caption{New England: 10-18.} 
	\end{subfigure} 
	\\
	\begin{subfigure}[b]{0.32\textwidth}
		\centering 
		\includegraphics[width=\linewidth]{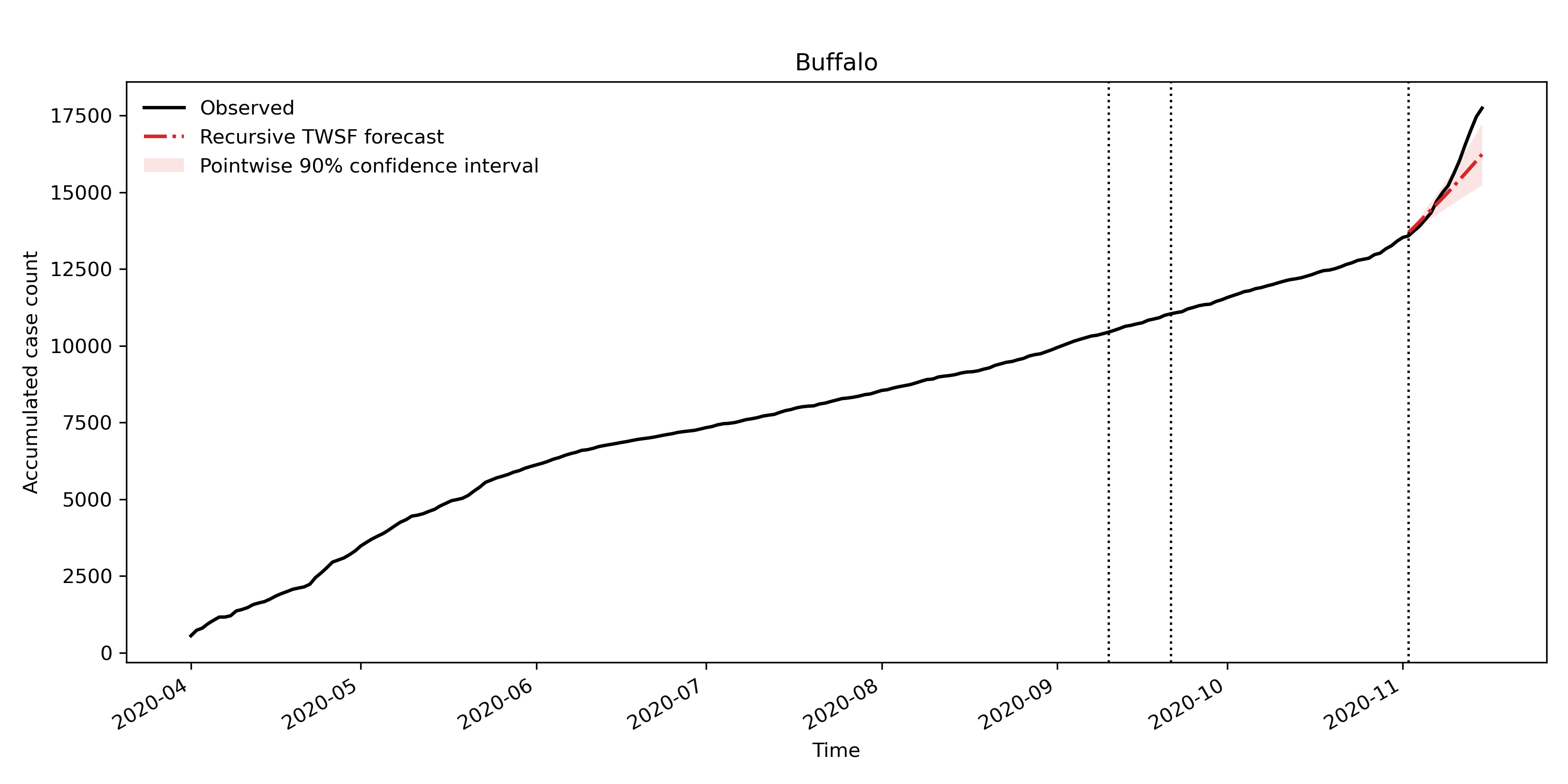}
		\caption{Buffalo: 11-01.} 
	\end{subfigure} 
	\begin{subfigure}[b]{0.32\textwidth}
		\centering 
		\includegraphics[width=\linewidth]{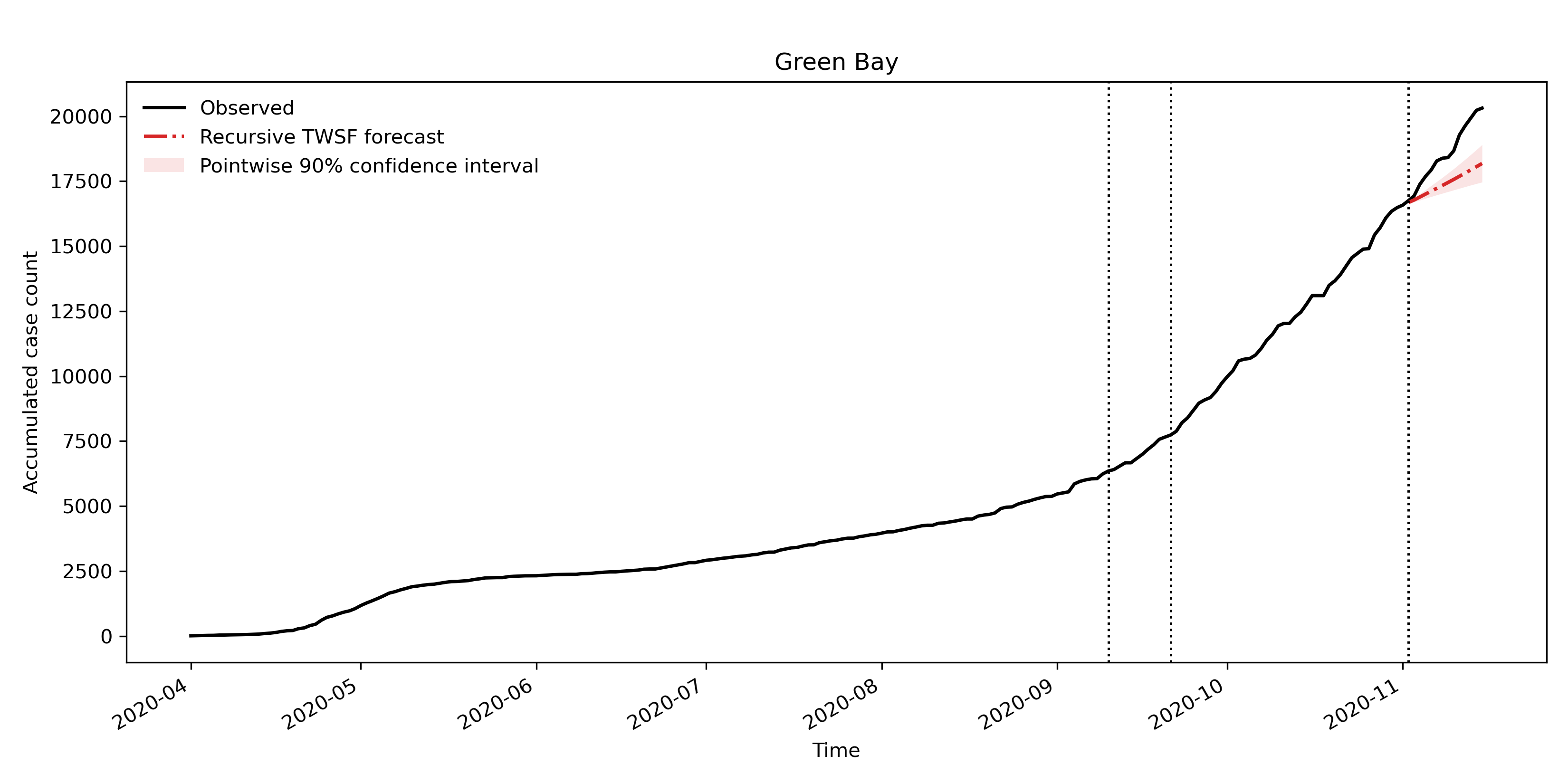}
		\caption{Green Bay: 11-01.} 
	\end{subfigure} 
	\caption{Counterfactual forecasts for cities that kept their stadiums closed, under the hypothetical policy of admitting fans at the first home game after October 1. 
	Black lines show observed cumulative cases, and red dash-dotted lines show \TWSF~forecasts. 
	Shaded regions are pointwise 90\% confidence intervals. 
	The first dotted vertical line marks September 10, the end of the unit-side pre-treatment window; the second marks September 21, the beginning of the treated-donor window used to estimate temporal dynamics; and the third marks the target city’s actual opening date. Forecasts cover the following 14 days.}
	\label{fig:counter.same} 
\end{figure}

\section{Conclusion} \label{sec:conclusion}
This article introduces a framework for causal forecasting in panel data. 
%
The proposed \TWSF~estimator bridges two familiar but distinct ideas: the counterfactual reconstruction logic of \SC~and \SI, and the extrapolative structure of \mssa. 
Under the stated conditions, we establish finite-sample pointwise guarantees and feasible inference for fixed-horizon summaries. 
A sharper \PCR~error bound underlies these results and may also be useful in related low-rank panel methods. 
%
%
%
%
%
Simulations support the theory, while the NFL application illustrates \TWSF's value for prospective policy decisions. 

To the best of our knowledge, this article provides one of the first panel data approaches to prospective causal forecasting for target units that remain untreated throughout the observed sample. \TWSF~should therefore be viewed both as a concrete estimator and a foundation for a broader research agenda. 
%
%
%
Alternative temporal models may yield complementary estimators in settings where \mssa~is less suitable. Other important extensions include allowing network interference and developing theory for growing horizons or simultaneous confidence bands. These advances would strengthen the framework for sequential decision-making and could enable TWSF to serve as a counterfactual forecasting module within reinforcement-learning or dynamic policy-learning pipelines, shifting the emphasis from evaluating proposed interventions to optimizing adaptive policies under uncertainty.

\section*{Acknowledgements}
We are sincerely grateful to Peng Ding for their thoughtful feedback and helpful
suggestions. 

\bibliographystyle{alpha}
\bibliography{bib}

\newpage  
\appendix

\renewcommand{\thepage}{S\arabic{page}}
\renewcommand{\theequation}{S\arabic{equation}}
\renewcommand{\theassumption}{S\arabic{assumption}}
\renewcommand{\thelemma}{S\arabic{lemma}}
\setcounter{equation}{0}
\setcounter{section}{0}
\setcounter{assumption}{0}
\setcounter{lemma}{0}
\setcounter{page}{1}

\makeatletter
\def\@seccntformat#1{\@ifundefined{#1@cntformat}%
   {\csname the#1\endcsname\space}
   {\csname #1@cntformat\endcsname}}
\makeatother
\renewcommand{\thesection}{S\arabic{section}}

\begin{center}
    \LARGE \sc Supplementary material 
\end{center}
\vspace{12pt}

Appendix~\ref{sec:tau} shows how the theoretical results in the main text extend to pointwise prospective treatment effects. 
Appendix~\ref{sec:sims.fine.print} provides additional details on the study in Section~\ref{sec:simulations}.
%

The remaining sections collect the proofs. 
Appendix~\ref{sec:proofs.notation} introduces the notation used throughout the proofs. 
Appendix~\ref{sec:proofs.concentration} collects several standard results from matrix perturbation theory and concentration inequalities that are repeatedly invoked in the arguments. 
Appendix~\ref{sec:proofs.framework} proves the identification results from Section~\ref{sec:framework}, namely Propositions~\ref{prop:beta}--\ref{prop:alpha} and Corollary~\ref{cor:identification}. 
Appendix~\ref{sec:proofs.pcr} develops a general \PCR~error bound, which is then used to prove Theorem~\ref{prop:parameter.recovery}. 
Appendix~\ref{sec:proof.error} proves the high-probability error bound of Theorem~\ref{thm:error}, and Section~\ref{sec:proofs.inference} proves the asymptotic normality result of Theorem~\ref{thm:inference}. 
Appendices~\ref{sec:proof.var.est.2} and \ref{sec:proof.var.est.1} establish the variance estimation results in Propositions~\ref{prop:var.sigma} and \ref{prop:var.asymp}. 
Finally, Appendices~\ref{sec:proof.normality.direct} and \ref{sec:proofs.recursive} prove the fixed-horizon multi-step results for the direct and recursive estimators, corresponding to Theorems~\ref{thm:direct.normality} and \ref{thm:recursive.normality}.

\section{Estimating Pointwise Treatment Effects} \label{sec:tau}

\subsection{Setup and Identification}
The main text forecasts the target unit’s treated potential outcome beyond the observed panel. 
We now combine that forecast with a forecast of the unit’s untreated potential outcome. 
Define $\Fc_\tau \coloneqq \sigma\big(\bu_N, \bv_{T+1}(0), \bv_{T+1}(1) \big)$, and, for $d \in \{0,1\}$, let $\theta_d \coloneqq \Ex[Y_{N, T+1}(d) \mid \Fc_\tau ]$. 
The prospective treatment effect estimand is
\begin{align}
	\tau \coloneqq \Ex\left[Y_{N, T+1}(1) - Y_{N, T+1}(0) \mid \Fc_\tau \right] = \theta_1 - \theta_0. 
\end{align}
Under Assumptions~\ref{assump:lfm}-\ref{assump:mean_ind}, $\theta_d = \langle \bu_N, \bv_{T+1}(d) \rangle$. 
The treated component $\theta_1$ coincides with the estimand $\theta$ defined in \eqref{eq:estimand} from the main text. 
It therefore remains to forecast $\theta_0$. 

For each $a \in [r]$, define the control-state temporal factor $f_a^0(t) \coloneqq V_{ta}(0)$. 
We impose a control-state analogue of Assumption~\ref{assump:hankel}. 

\begin{assumption} \label{assump:hankel.0}
For every $a \in [r]$, positive integers $m, n \in \mathbb{N}$, and shift $s \in \mathbb{Z}$, let $\emph{rank}(\bH(f_a^0; m, n, s)) \le Q_0$.
\end{assumption}

As in Proposition~\ref{prop:alpha}, Assumption~\ref{assump:hankel.0} implies that, for any lag length $L_0 \ge r Q_0$, there exists a coefficients vector $\balpha_0 \in \Rb^{L_0}$, common across units, such that 
\begin{align}
	\Ex\left[Y_{i, t+1}(0) \mid \Ec \right] = \sum_{\ell=1}^{L_0} \alpha_{0\ell} \cdot \Ex\left[Y_{i, t-L_0+\ell}(0) \mid \Ec \right]. 
	\label{eq:tau.1}
\end{align}
Because the target unit remains under control through $T$, its most recent control outcomes are observed.
Thus, \eqref{eq:tau.1} identifies
\begin{align}
	\theta_0 = \sum_{\ell=1}^{L_0} \alpha_{0\ell} \cdot \Ex\left[Y_{N, T-L_0+\ell} \mid \Ec \right]. \label{eq:tau.2}
\end{align}
Unlike the treated forecast, this control-side forecast requires neither cross-unit weights nor an analogue of Assumption~\ref{assump:units}. 
Persistent control units may nevertheless be pooled to estimate the common temporal rule more precisely. 
When the target unit is the only persistent control, it is sufficient to impose finite Hankel rank directly on the target unit’s scalar sequence of control-state conditional means.

\subsection{Prospective Treatment Effect Estimator}
Recall that $\Ic_0$ denotes the set of units observed under control throughout the post-treatment period, and let $N_0 \coloneqq | \Ic_0|$. 
Whereas the main text studies the baseline case of $N_0 = 1$, we now allow any $N_0 \ge 1$. 

%
Fix a lag length $L_0$, and define
$B_0 \coloneqq T_1 / (L_0 + 1) \in \Nb$ and $B_0^\full \coloneqq T / (L_0+1) \in \Nb$. 
For the target unit $N$, form a Page matrix from the first $B_0-1$ complete post-treatment blocks: 
$\big(\bP^0_{N, \post}\big)_{ab} \coloneqq Y_{N, T_0+(b-1)(L_0+1)+a}$ for $a \in [L_0+1]$ and $b \in [B_0-1]$. 
For each non-target persistent control $i \in \Ic_0 \setminus \{N\}$, form a Page matrix from all complete blocks in its full observed history: 
$\big(\bP^0_{i, \full}\big)_{ab} \coloneqq Y_{i, (b-1)(L_0+1)+a}$ for $a \in [L_0+1]$ and $b \in [B_0^{\full}]$. 
Stacking these matrices horizontally gives 
\begin{align}
	\bP^0 \coloneqq
	\left[ \bP^0_{N, \post} ~  \left\{\bP^0_{i, \full}: i \in \Ic_0 \setminus \{N\} \right\} \right]
	\eqqcolon 
	\begin{pmatrix}
		\bZ_{0, \lag}
		\\
		\hdashline
		\bz_{0, \tnext}^\top
	\end{pmatrix}
	\in \Rb^{(L_0+1) \times M_0},
	\label{eq:Zlag.znext.0} 
\end{align}
where $M_0 \coloneqq (B_0-1) + (N_0-1)B_0^{\full}$. 
Here, $\bZ_{0, \lag} \in \Rb^{L_0 \times M_0}$ contains the lag vectors while $\bz_{0, \tnext} \in \Rb^{M_0}$ contains the corresponding one-step-ahead responses. 
The target unit's final $L_0$ observations form the forecast state 
$\bw_0 \coloneqq \big[ Y_{N, T-L_0+1}, \dots, Y_{NT} \big]^\top \in \Rb^{L_0}$. 
The estimator proceeds in three steps. 

\begin{enumerate} [label=(\alph*)]

	\item For any $k_0 \le \min\{L_0, M_0\}$, define $\bZ_{0, \lag}^{(k_0)} \coloneqq \HSVT(\bZ_{0, \lag}, k_0)$.

	\item Estimate the common control temporal rule by 
	$\hbalpha_0 \coloneqq \left(\bZ_{0, \lag}^{(k_0)} \right)^{\top, \dagger} \bz_{0, \tnext}$.

	\item Let $\htheta_1$ be the treated forecast defined in \eqref{eq:htheta}. The untreated forecast is $\htheta_0 \coloneqq \left\langle \hbalpha_0, \bw_0 \right 	\rangle$, and the final prospective treatment effect estimate is $\htau \coloneqq \htheta_1 - \htheta_0$.

\end{enumerate}

%
The target unit’s pre-treatment history is deliberately omitted from the control-side training matrix because those observations already enter the unit-side regression used to estimate $\htheta_1$. 
By contrast, non-target persistent controls do not enter the treated-side \TWSF~fit. 
Their full pre- and post-treatment histories may therefore be pooled when estimating the control-state temporal rule. 
The untreated component consequently uses \TWSF’s time-side \PCR~procedure but does not require its unit-side synthesis step.
When $N_0 = 1$, the temporal rule is estimated solely from the target unit’s earlier post-treatment blocks. 
%


\subsection{Feasible Inference} 
Let $\hVc_1^2$ denote the feasible variance estimator for $\htheta_1$ from Section~\ref{sec:inference}. 
Define the analogous control-side variance estimator for $\htheta_0$ by 
\begin{align}
	\hVc_0^2 \coloneqq \hsigma^2 \left\{ \| \hbalpha_0 \|_2^2 + \| \hbq_0 \|_2^2 \left(1 + \| \hbalpha_0 \|_2^2 \right) \right\},
\end{align}
where $\hbq_0 \coloneqq \big(\bZ_{0, \lag}^{(k_0)} \big)^\dagger \bw_0$.  
Under homoskedasticity, the pooled estimator $\hsigma^2$ in \eqref{eq:var.sigma.pool} may be used for both the treated- and control-side components. 
Alternatively, the control-side residual sum of squares may be incorporated into \eqref{eq:var.sigma.pool}, with each component weighted by its residual degrees of freedom. 
The corresponding control-side residual estimator is
\begin{align}
	\hsigma^2_0 \coloneqq \frac{\| \bz_{0,\tnext} - \bZ_{0,\lag}^\top \hbalpha_0 \|_2^2}{M_0 - k_0}.
\end{align}

Suppose that, in addition to Assumption~\ref{assump:hankel.0}, the control-side analogs of Assumptions~\ref{assump:transport} and \ref{assump:spectra} hold. 
Under the homoskedastic version of Assumption~\ref{assump:subg}, it can be established that 
\begin{align}
	\frac{\htau - \tau}{\left(\hVc_1^2 + \hVc_0^2\right)^{1/2}} \rightsquigarrow \Nc(0,1). \label{eq:tau.inf} 
\end{align}
Consequently, a pointwise $(1-\gamma)100\%$ confidence interval for $\tau$  is 
\begin{align}
	\texttt{CI}_{\tau}(\gamma) \coloneqq 
	\left[ \htau \pm \Phi^{-1}(1-\gamma/2) \cdot \left(\hVc_1^2 + \hVc_0^2 \right)^{1/2} \right]. 
\end{align}
To obtain \eqref{eq:tau.inf}, apply the orthogonalization argument of Theorem~\ref{thm:inference} to the single control-side \PCR~stage. 
Within the control estimator, the Page construction separates the blocks used to estimate the temporal rule from the terminal observations used as the forecast state. 
Across the two components, the control-side fit uses persistent-control observations and the target unit’s post-treatment control outcomes, whereas treated-side \TWSF~uses the treated-donor blocks and the target unit’s pre-treatment history.
Under Assumption~\ref{assump:subg}, the resulting leading errors are conditionally independent, so their variances add.
Consistency of the two plug-in variance estimators, followed by Slutsky’s theorem, yields \eqref{eq:tau.inf}.

If the Page blocks overlap, or if the idiosyncratic shocks are dependent across units or time, the variance must include an estimated covariance term. 
The same construction extends to the fixed-horizon estimands in Section~\ref{sec:forecast.horizon} by applying the direct or recursive temporal procedure separately to the control outcomes.



\section{Simulation Studies: Fine Print} \label{sec:sims.fine.print}

\subsection{Simulation Design} \label{sec:sims.design}

\subsubsection{Setup}  \label{sec:sims.setup} 
Consider horizons $h \in \{1, 5, 10\}$ and index simulations by $n \in \{25, 50, 75, 100, 125, 150\}$. 
For each $n$, set $N_1=T_0=L=n$, and define $T_1(n) \coloneqq  8(n+10)$ with $T(n) \coloneqq T_0+T_1(n)$.  
At horizon $h$, direct \TWSF~uses Page blocks of length $L + h$, giving $B_h = \lfloor T_1(n) / (L+h) \rfloor$ complete blocks. 
Recursive \TWSF~uses blocks of length $L+1$, giving $B_{\mathrm{rec}} = \lfloor T_1(n) / (L+1) \rfloor$. 
Incomplete trailing blocks are discarded. 

To align comparisons across values of $n$, we use a nested design with a common forecast origin. 
Each latent replication is first generated at $n_{\max} = 150$. 
A smaller panel retains the first $n$ donors and a trailing window of length $T(n) + 10$, with its observed portion ending at the same forecast date as the $n_{\max}$ panel. 
Thus, every value of $n$ forecasts the same future dates while $N_1$, $T_0$, $L$, and $T_1$ increase with $n$. 

We generate 100 independent latent factor realizations and, conditional on each one, 10 independent noise realizations. 
This yields $R=100\times 10=1000$ replications per design point. 
Panels of different dimensions share the corresponding entries of the same maximum-dimension latent and noise arrays. 
Monte Carlo standard errors are computed over the 100 latent factor clusters, with the 10 conditional noise draws treated as observations within each cluster.

\subsubsection{Data-Generating Process} \label{sec:sims.dgp}
For each latent replication, generate 150 donor factors $\bu_j \coloneqq [1,\xi_{j2},\xi_{j3},\xi_{j4}]^\top$, where the last three coordinates are independent standard normal draws. 
Each of these coordinates is then standardized across the 150 donors to have empirical mean zero and empirical standard deviation one.
Next, sample a set $\Ac$ of eight donors uniformly from the first 25 donor rows. 
Draw convex weights $(\lambda_j)_{j \in \Ac}$ from a symmetric Dirichlet distribution and set  $\bu_N=\sum_{j \in \Ac} \lambda_j\bu_j$. 
Because every simulated panel contains the first 25 donors, the target factor lies in the donor span for every value of $n$.

Define the maximum time index by $T_\star=T(150)+10$. 
Set $\omega_0=2\pi/(8T_\star)$, $\omega_1=2\pi/12$, $\omega_2=2\pi/37$, and $\omega_3=2\pi/91$. 
For $x \in \Rb$, define the harmonic pair $\psi(x) \coloneqq [\sin(x), \cos(x) ]^\top$. 
The harmonic basis is $\boldsymbol{b}(t) \coloneqq \big[ \psi(\omega_0t - \pi/3), \psi(\omega_1 t), \psi(\omega_2 t), \psi(\omega_3 t) \big]^\top$. 
The control- and treated-state time factors are then $\bv_t(0)\coloneqq\bA_0\boldsymbol{b}(t)$ and $\bv_t(1)\coloneqq\bA_1\boldsymbol{b}(t)$, where $\bA_0,\bA_1\in\Rb^{4\times 8}$ are fixed across replications. 
The lowest-frequency harmonic is absent under control and present under treatment, producing a smooth treatment-specific component. 
The remaining harmonics generate medium- and long-period oscillations under both states. 
Smaller panels inherit the corresponding trailing subwindow of these factors.

After constructing the raw signals at the maximum dimension, we apply a common scale factor so that $\max_{i,t,d}|\langle \bu_i, \bv_t(d) \rangle|\leq 0.8$. 
The same scaling is retained in every smaller panel. 
We then add mutually independent Gaussian noise with mean zero and standard deviation $\sigma=0.10$. 
The noise arrays are generated at $n_{\max}$ and subset in the same way as the signal arrays, so corresponding entries are shared across dimensions within a replication but remain independent across units, times, intervention states, and noise realizations.
The population ranks are fixed throughout the grid: $r_y=4$ and $r_z=8$. 

\begin{figure}[!tp]
	\centering 
	\begin{subfigure}[b]{0.32\textwidth}
		\centering 
		\includegraphics[width=\linewidth]{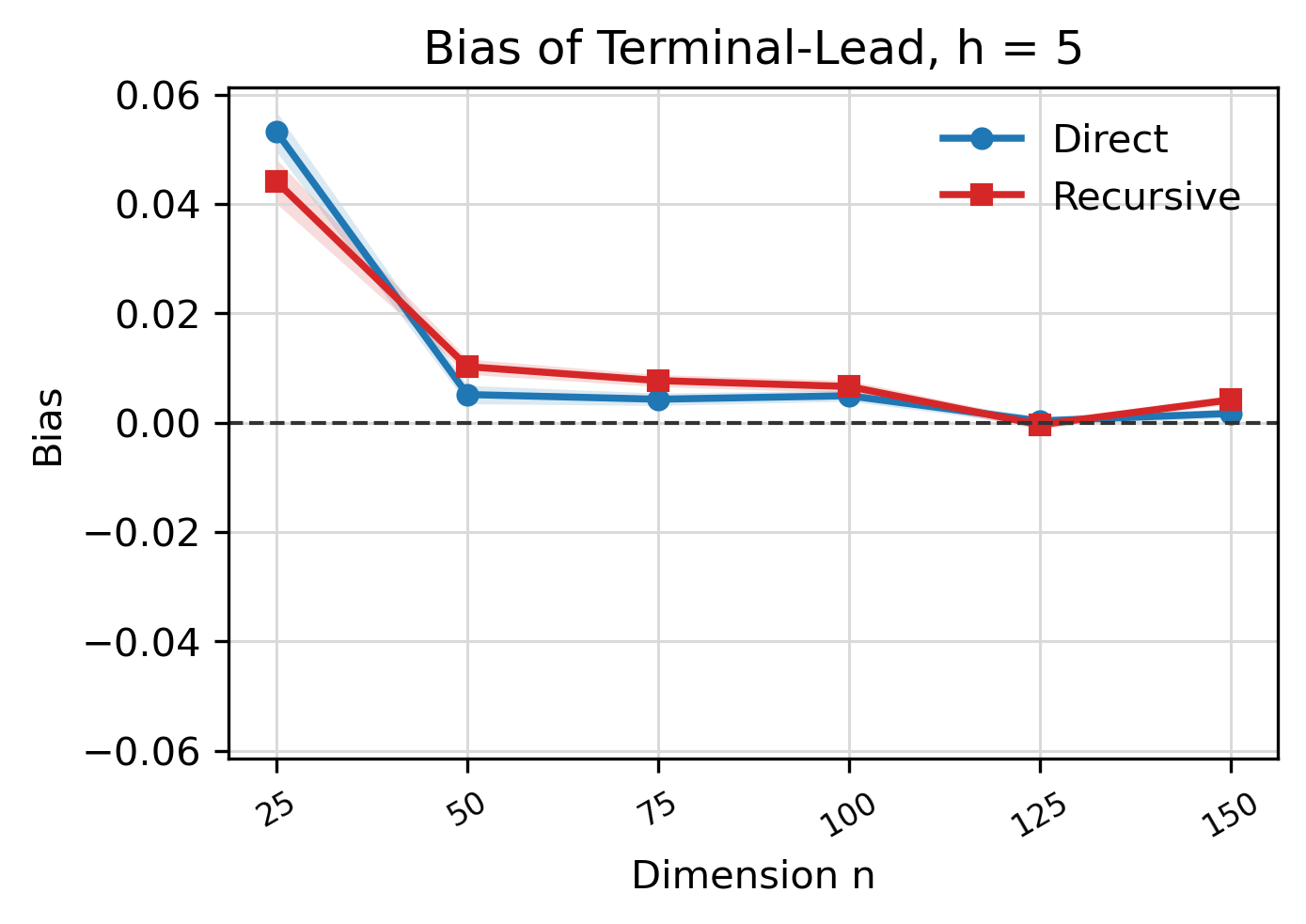}
		\caption{Bias $h=5$} 
		\label{fig:terminal.bias.5}
	\end{subfigure} 
	\begin{subfigure}[b]{0.32\textwidth}
		\centering 
		\includegraphics[width=\linewidth]{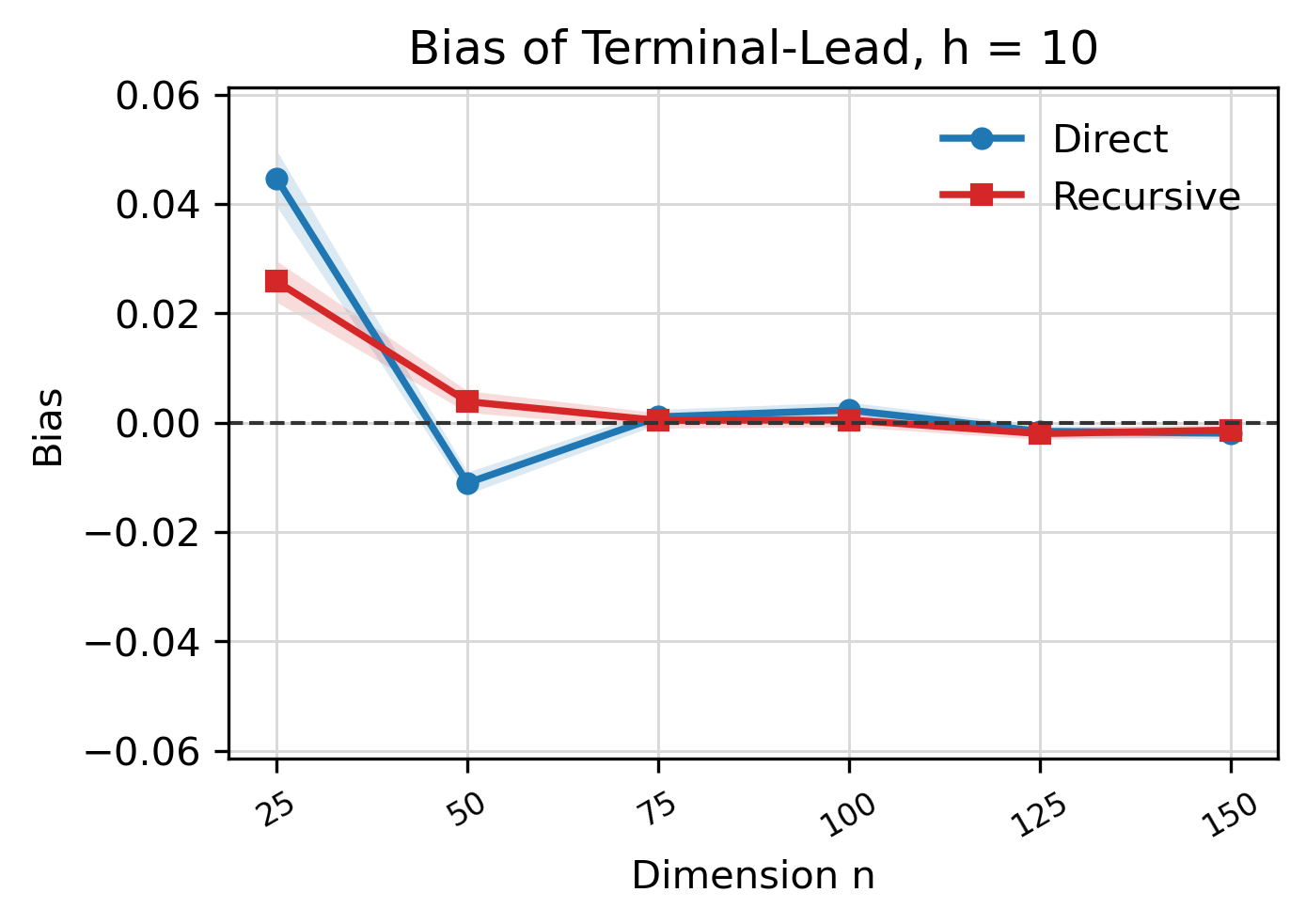}
		\caption{Bias $h=10$} 
		\label{fig:terminal.bias.10}
	\end{subfigure}
	\\
	\begin{subfigure}[b]{0.32\textwidth}
		\centering 
		\includegraphics[width=\linewidth]{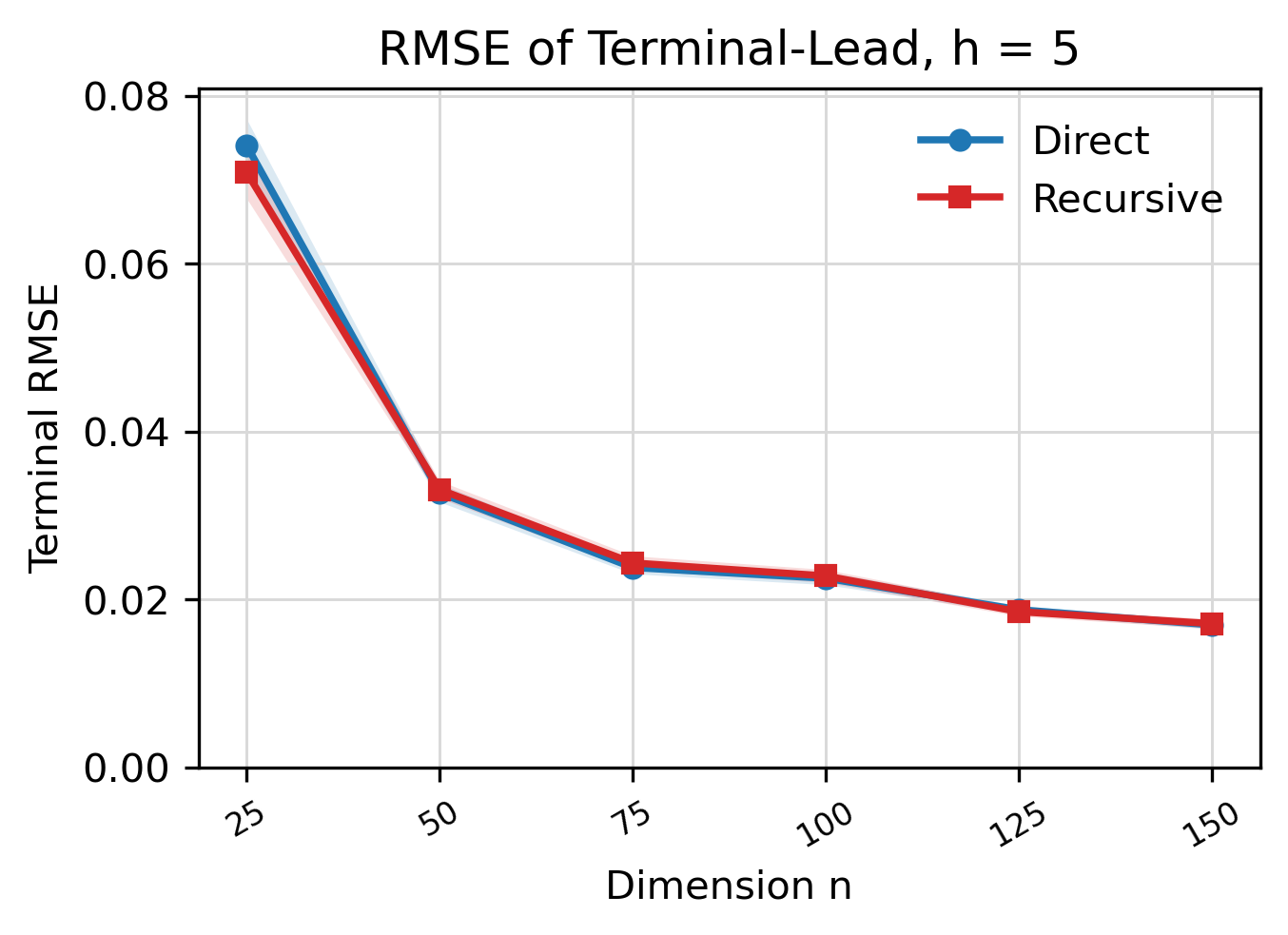}
		\caption{RMSE $h=5$} 
		\label{fig:terminal.rmse.5}
	\end{subfigure} 
	\begin{subfigure}[b]{0.32\textwidth}
		\centering 
		\includegraphics[width=\linewidth]{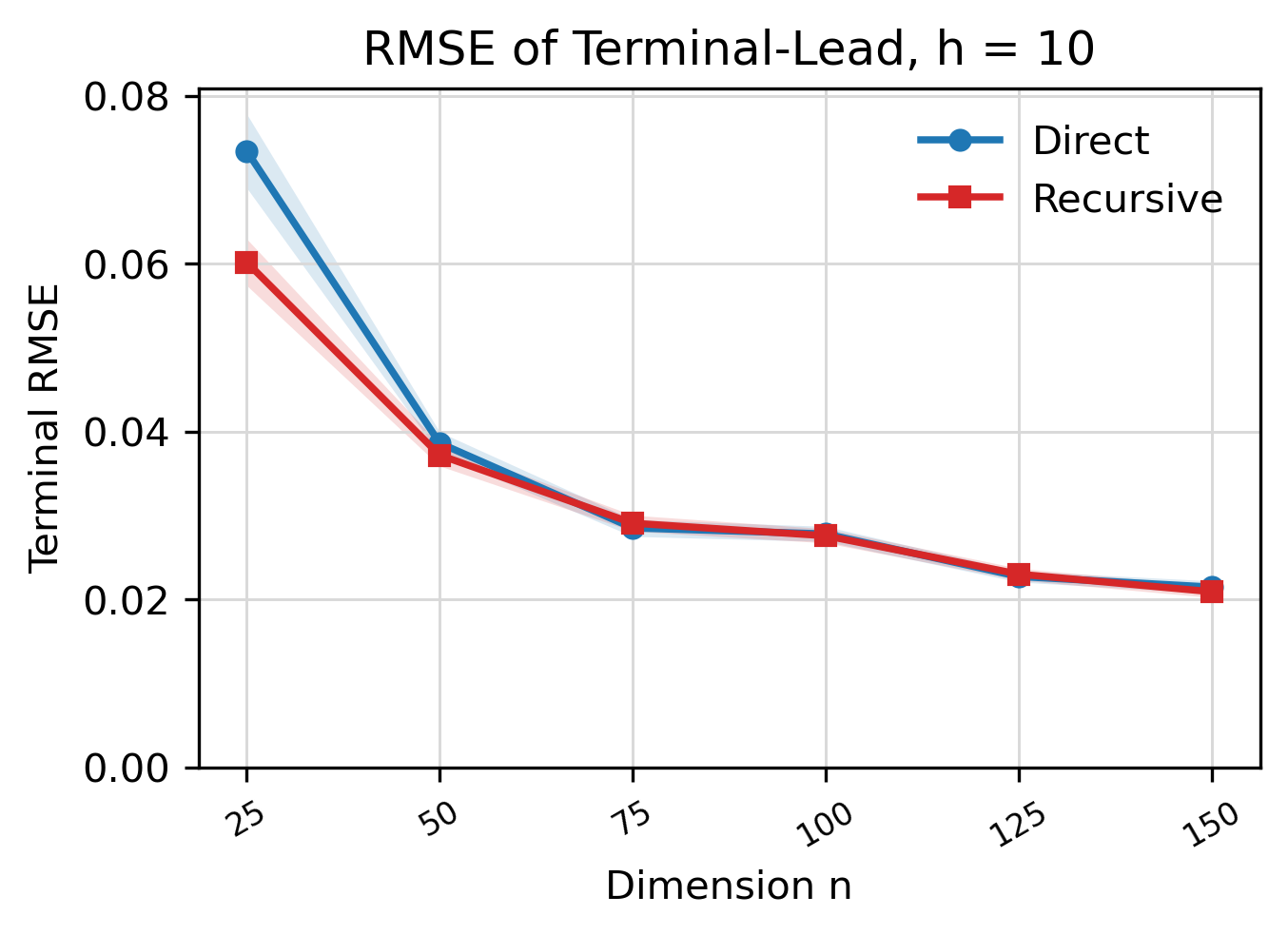}
		\caption{RMSE $h=10$} 
		\label{fig:terminal.rmse.10}
	\end{subfigure} 
	\\
	\begin{subfigure}[b]{0.32\textwidth}
		\centering 
		\includegraphics[width=\linewidth]{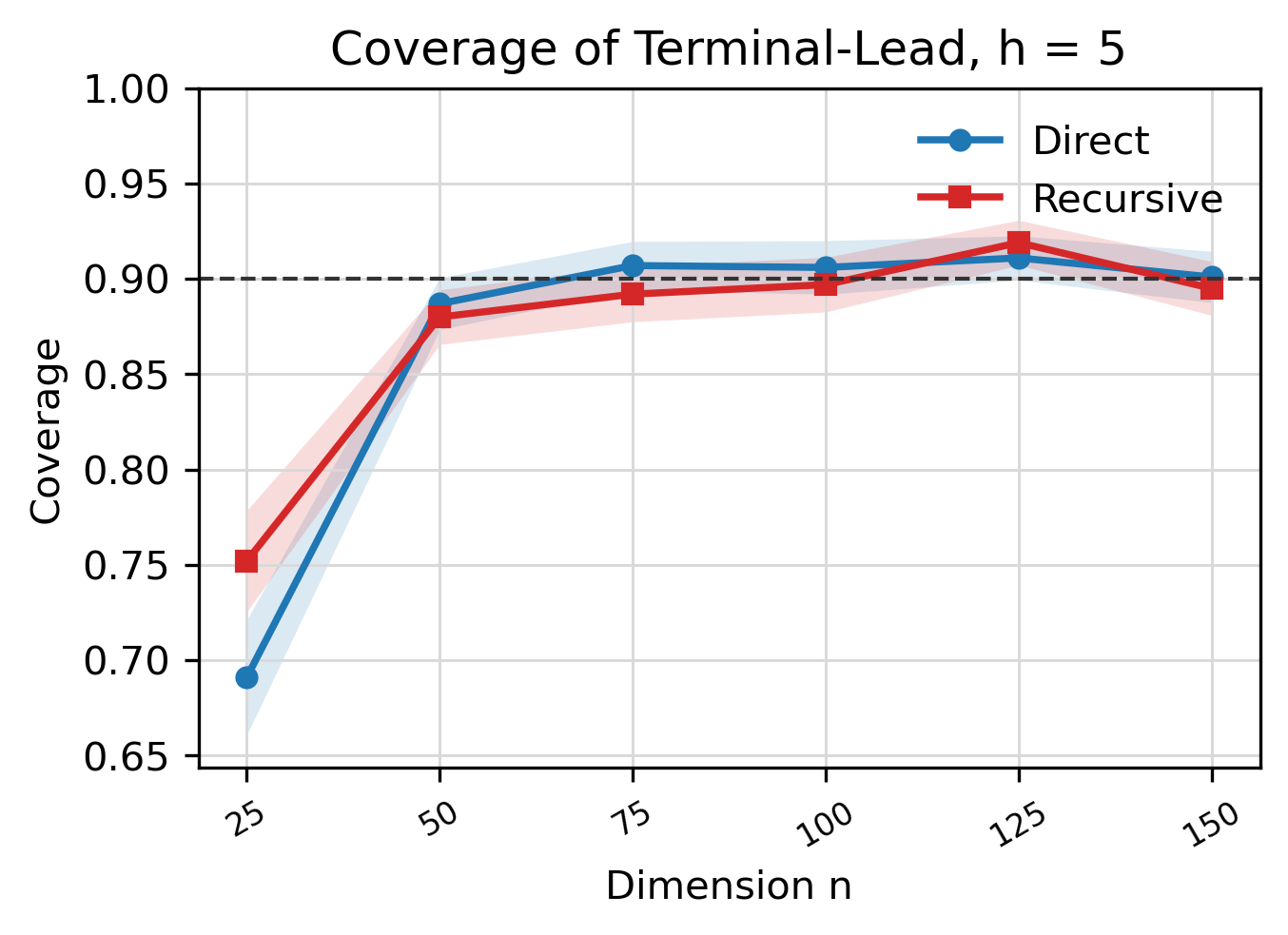}
		\caption{Coverage $h=5$} 
		\label{fig:terminal.cov.5}
	\end{subfigure} 
	\begin{subfigure}[b]{0.32\textwidth}
		\centering 
		\includegraphics[width=\linewidth]{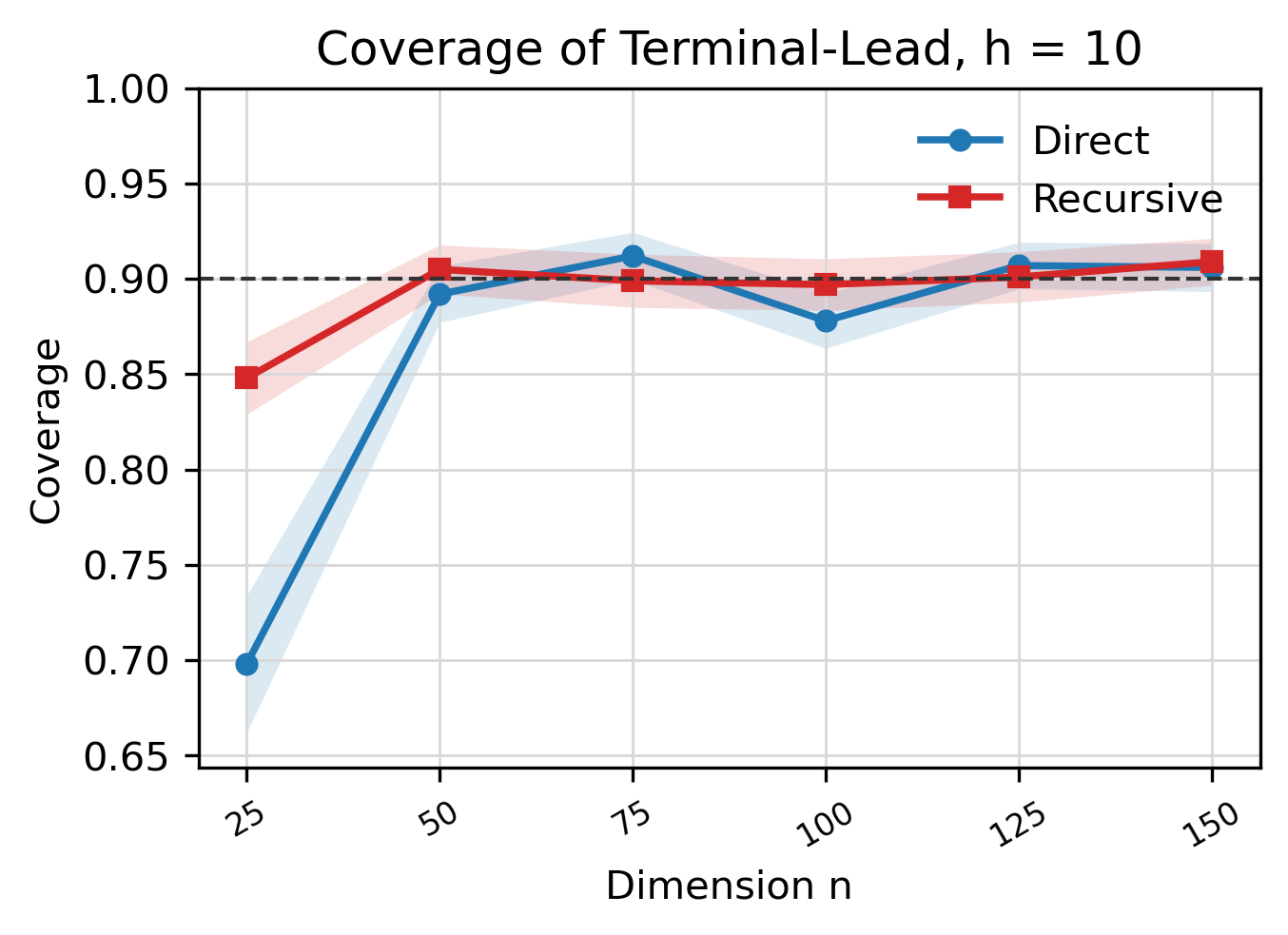}
		\caption{Coverage $h=10$} 
		\label{fig:terminal.cov.10}
	\end{subfigure}
	\\
	\begin{subfigure}[b]{0.32\textwidth}
		\centering 
		\includegraphics[width=\linewidth]{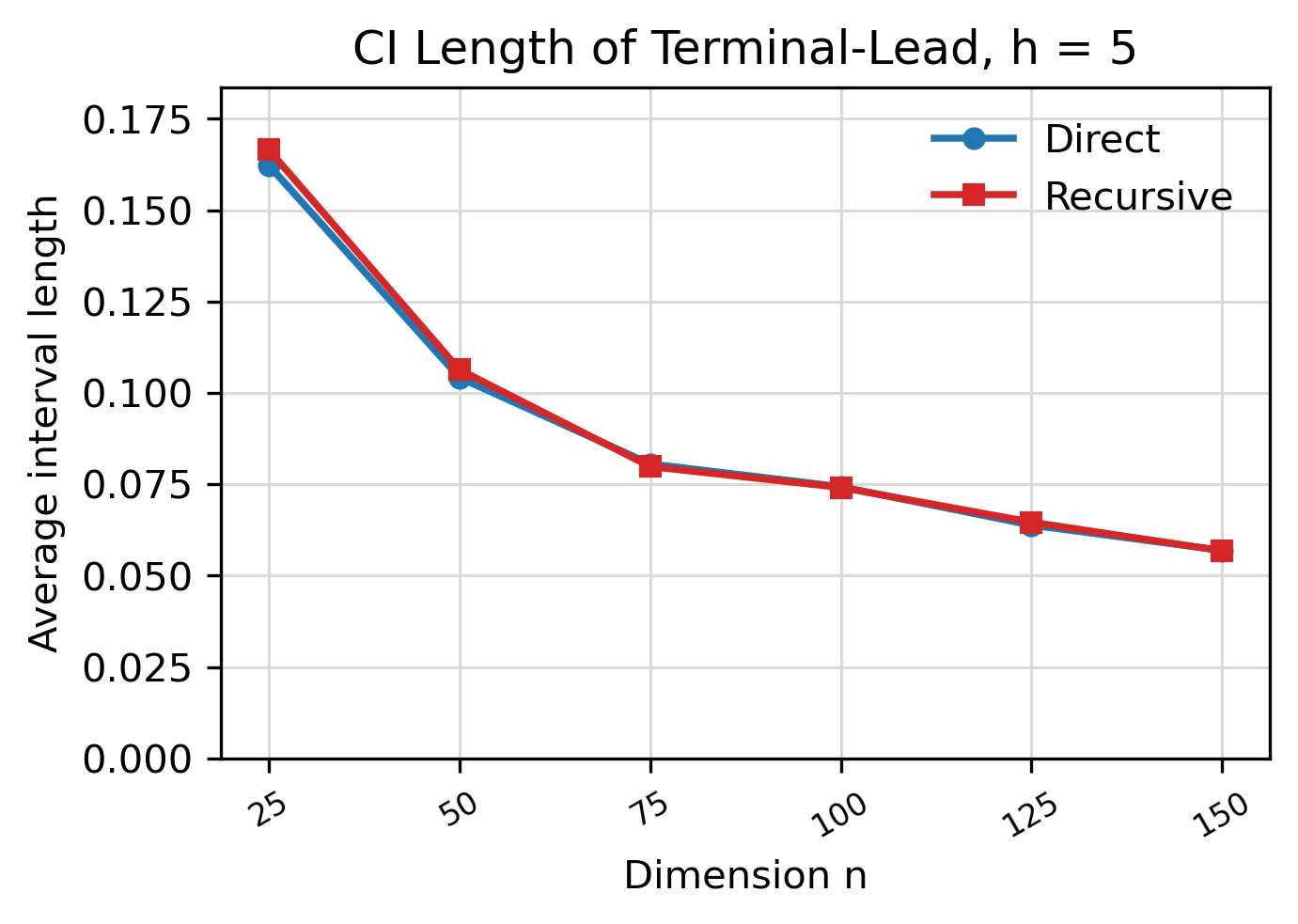}
		\caption{Length $h=5$} 
		\label{fig:terminal.len.5}
	\end{subfigure} 
	\begin{subfigure}[b]{0.32\textwidth}
		\centering 
		\includegraphics[width=\linewidth]{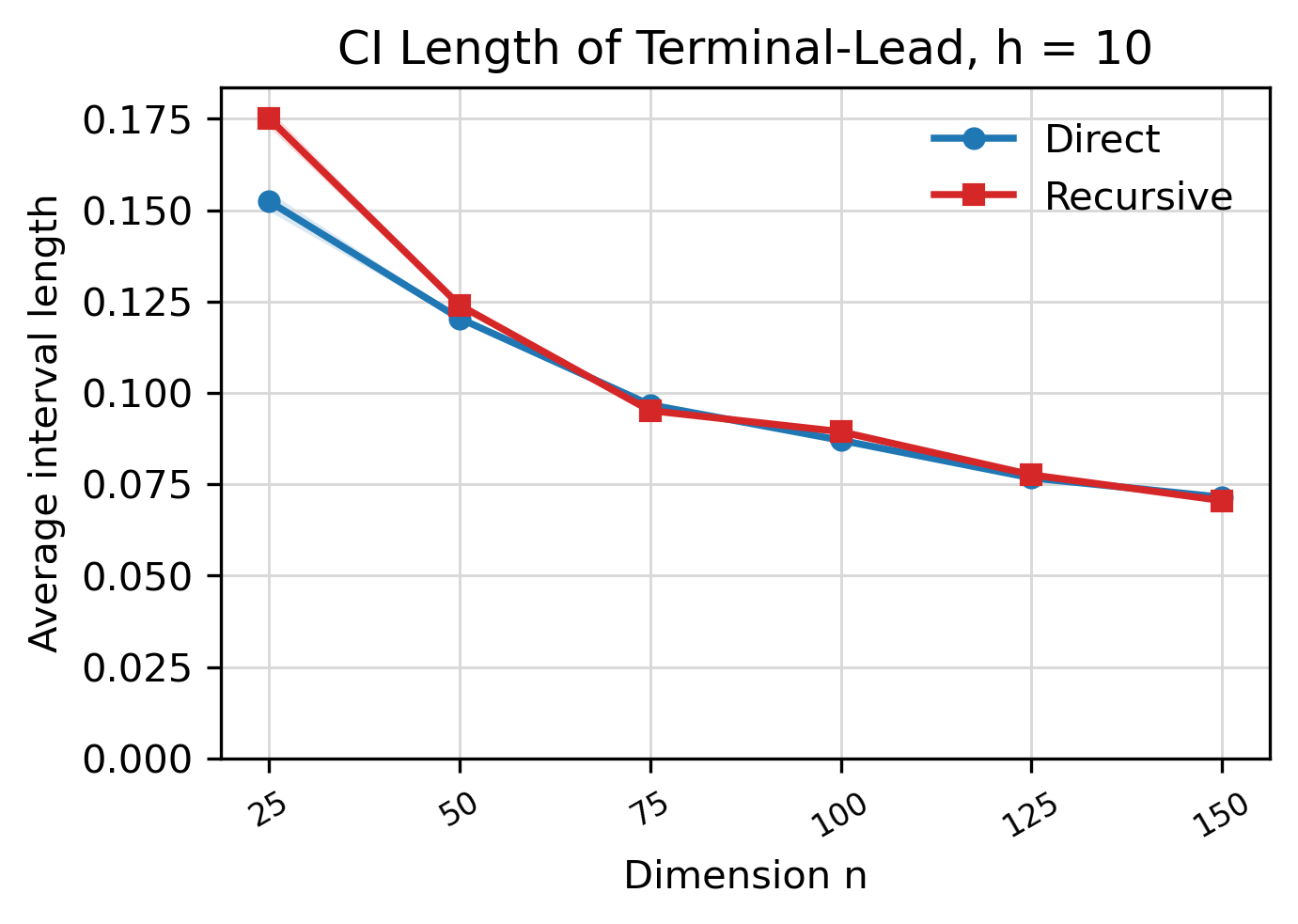}
		\caption{Length $h=10$} 
		\label{fig:terminal.len.10}
	\end{subfigure}
	\caption{Finite-sample performance of direct and recursive \TWSF~for terminal-lead estimands. Panels~\ref{fig:terminal.bias.5}--\ref{fig:terminal.bias.10} report bias, \ref{fig:terminal.rmse.5}-\ref{fig:terminal.rmse.10} report RMSE, \ref{fig:terminal.cov.5}-\ref{fig:terminal.cov.10} report 90\% pointwise coverage, and \ref{fig:terminal.len.5}-\ref{fig:terminal.len.10} report average interval length. Blue and red denote direct and recursive \TWSF; shaded bands are clustered Monte Carlo standard errors. Results are based on 1000 replications.}
	\label{fig:terminal.sims}  
\end{figure}

\subsubsection{Performance Measures} \label{sec:sims.metrics}
For horizon $h$, method $m \in \{\direct, \rec\}$, and replication $s \in [R]$, define the forecast error by $\Delta_{\eta_h}^{(s,m)} \coloneqq \htheta^{(s,m)}_{\eta_h} - \theta_{\eta_h}^{(s)}$. 
We measure centering and estimation accuracy using 
\begin{align}
	\textsf{Bias}_{\eta_h}^{(m)} \coloneqq \frac{1}{R} \sum_{s=1}^R \Delta_{\eta_h}^{(s,m)},
	\quad
	\textsf{RMSE}_{\eta_h}^{(m)} \coloneqq \left\{ \frac{1}{R} \sum_{s=1}^R \left(\Delta_{\eta_h}^{(s,m)}\right)^2 \right\}^{1/2}. 
\end{align}
Let $\hVc_{\eta_h}^{(s,m)}$ denote the direct or recursive plug-in standard error defined in \eqref{eq:direct.var.est} and \eqref{eq:rec.var.est}. 
Using the pointwise 90\% confidence intervals in Section~\ref{sec:h.summary}, define 
\begin{align}
	\textsf{Coverage}_{\eta_h}^{(m)} \coloneqq \frac{1}{R} \sum_{s=1}^R \mathds{1} \left\{ \theta^{(s)}_{\eta_h} \in \texttt{CI}^{(s,m)}_{\theta_{\eta_h}}(0.1)\right\},
	~~
	\textsf{Length}_{\eta_h}^{(m)} \coloneqq \frac{2 \Phi^{-1}(0.95)}{R} \sum_{s=1}^R \hVc_{\eta_h}^{(s,m)}. 
\end{align} 
%


\subsection{Simulation Results} \label{sec:sims.results.terminal}
%
Figure~\ref{fig:terminal.sims} reports the terminal-lead results. 
The qualitative conclusions from the horizon-average analysis carry over to the terminal-lead analysis.

\section{Notation for Proofs} \label{sec:proofs.notation}
%
%
Recall $B \coloneqq T_1 / (L+1) \in \Nb$ and $M \coloneqq (B-1) N_1$. 

\subsection{Observed Blocks} \label{sec:notation.obs}
Denote the observed blocks as 
\begin{align}
	\bY &\coloneqq \bY_{\Ic_1, \pre} \in \Rb^{N_1 \times T_0},
	\quad
	\by \coloneqq \by_{N, \pre} \in \Rb^{T_0}, 
	\\
	\bZ &\coloneqq \bZ_\lag \in \Rb^{L \times M}, 
	\quad
	\bz \coloneqq \bz_{\tnext} \in \Rb^{M}.
\end{align} 
Recall $\bW \coloneqq \big[Y_{j, T-L+ \ell}: j \in \Ic_1, \ell \in [L] \big] \in \Rb^{N_1 \times L}$.

\subsection{Population Blocks} \label{sec:notation.pop}
Define $\bbX \coloneqq \Ex[\bX \mid \Ec]$ for any random object $\bX$, e.g., $\bbY \coloneqq \Ex[\bY \mid \Ec] = \Ex[\bY_{\Ic_1, \pre} \mid \Ec]$ and $\bby \coloneqq \Ex[\by \mid \Ec] = \Ex[\by_{N, \pre} \mid \Ec]$. 
With this notation, let $r_y = \rank(\bbY)$ and $r_z = \rank(\bbZ)$. 
Further, let $\lambda_y$ and $\lambda_z$ denote the $r_y$ and $r_z$-th singular values of $\bbY$ and $\bbZ$. 
%
%

\subsection{Noise Blocks} \label{sec:notation.noise}
Denote the noise blocks as 
\begin{align}
	\bXi_y &\coloneqq \bY - \bbY,
	\quad \bxi_y \coloneqq \by - \bby,
	\\ 
	\bXi_z &\coloneqq \bZ - \bbZ, 
	\quad \bxi_z \coloneqq \bz - \bbz,
	\\
	\bXi_w &\coloneqq \bW - \bbW. 
\end{align}
%
%
Choose $C_\noise > 0$ as an absolute constant, and let 
\begin{align}
	\eta_y &\coloneqq C_\noise \cdot \sigma  \left(\sqrt{N_1} + \sqrt{T_0} + \sqrt{\log(N_1 T_0)} \right),
	\\
	\eta_z &\coloneqq C_\noise \cdot \sigma  \left(\sqrt{L} + \sqrt{M} + \sqrt{\log(LM)} \right),
	\\
	\eta_w &\coloneqq C_\noise \cdot \sigma  \left(\sqrt{N_1} + \sqrt{L} + \sqrt{\log(N_1L)} \right). 
	\label{eq:noise.eta}
\end{align}
Define
\begin{align}
	\Gc_{\noise, y} \coloneqq \left\{ \| \bXi_y \|_{\txtop} \le \eta_y \right\},
	~
	\Gc_{\noise, z} \coloneqq \left\{ \| \bXi_z \|_\txtop \le \eta_z \right\},
	~
	\Gc_{\noise, w} \coloneqq \left\{ \| \bXi_w \|_\txtop \le \eta_w \right\}. \label{eq:event.noise}
\end{align}
Define the joint event as $\Gc_{\noise} \coloneqq \Gc_{\noise, y} \cap \Gc_{\noise, z} \cap \Gc_{\noise, w}$. 

\subsection{\PCR~Estimates} \label{sec:notation.est}
Denote the de-noised population blocks as 
\begin{align}
	\hbY \coloneqq \bY^{(r_y)},
	\quad
	\hbZ \coloneqq \bZ^{(r_z)}. 
\end{align}
Recall from \eqref{eq:pcr.beta.hat} and \eqref{eq:pcr.alpha.hat} the \PCR~parameter estimates as
\begin{align}
	\hbbeta &= \hbY^{\top, \dagger} \by,
	\quad
	\hbalpha = \hbZ^{\top, \dagger} \bz. 
\end{align}
%
%
%
%
%
Define the parameter estimation errors as 
\begin{align}
	\bDelta_{\alpha} &\coloneqq \hbalpha - \balpha^*,
	\quad
	\bDelta_{\beta} \coloneqq \hbbeta - \bbeta^*. 
\end{align}
%
%
Recall from Section~\ref{sec:param.est}, 
\begin{align}
	\Lambda_{\alpha} &\coloneqq \frac{ r_z}{\min\{\sqrt{L}, \sqrt{M}\}} + \frac{ \sqrt{ r_z (1 + \log(LM))}}{\sqrt{M}}, 
	\\
	\Lambda_{\beta} &\coloneqq \frac{ r_y}{\min\{\sqrt{N_1}, \sqrt{T_0} \}} + \frac{ \sqrt{r_y (1 + \log(N_1 T_0))}}{\sqrt{T_0}}.  
\end{align} 
Define the events 
\begin{align}
	\Gc_{\PCR, \alpha} &\coloneqq \left\{ \| \bDelta_{\alpha} \|_2 \le \frac{C_\PCR \cdot \sigma \Lambda_{\alpha}}{\sqrt{L}} \right\},
	\quad
	\Gc_{\PCR, \beta} \coloneqq \left\{ \| \bDelta_{\beta} \|_2 \le \frac{C_\PCR \cdot \sigma \Lambda_{\beta}}{\sqrt{N_1}} \right\},
	\label{eq:event.pcr}
\end{align}
where $C_\PCR > 0$ is an absolute constant.
Define the joint event as $\Gc_\PCR \coloneqq \Gc_{\PCR, \alpha} \cap \Gc_{\PCR, \beta}$. 
%


\section{Useful Known Results} \label{sec:proofs.concentration}
We collect several standard tools that are used repeatedly throughout the proofs. 
%
The first result gives an exact perturbation identity for Moore-Penrose pseudoinverses. 
\begin{lemma} {\cite[Theorem 3.2]{stewart1977perturbation}} \label{lemma:pseudoinverse.stewart}
Let $\bA, \bB \in \Rb^{m \times n}$. Then, we have 
\begin{align}
	\bB^\dagger - \bA^\dagger =
	- \bB^\dagger \bP_B (\bB - \bA) \bPi_A \bA^\dagger
	+ (\bB^\top \bB)^\dagger \bPi_B (\bB - \bA)^\top \bP_A^\perp
	- \bPi_B^\perp (\bB - \bA)^\top \bP_A (\bA \bA^\top)^\dagger. 
\end{align}
Here, $\bP_M \coloneqq \bM \bM^\dagger$ and $\bPi_M \coloneqq \bM^\dagger \bM$ for any matrix $\bM \in \Rb^{m \times n}$. 
\end{lemma}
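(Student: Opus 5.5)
The plan is to verify the identity by direct algebra. Write $\bE \coloneqq \bB - \bA$ and expand each of the three right-hand terms using only the Penrose equations: $\bM\bM^\dagger\bM = \bM$, $\bM^\dagger\bM\bM^\dagger = \bM^\dagger$, and symmetry of $\bP_M$ and $\bPi_M$. I will also use $(\bB^\top\bB)^\dagger = \bB^\dagger\bB^{\dagger,\top}$ and $(\bA\bA^\top)^\dagger = \bA^{\dagger,\top}\bA^\dagger$. Carrying this out shows the sign of the third term does not survive as printed; that finding is recorded at the end.

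\emph{First term.} Since $\bB^\dagger\bP_B = \bB^\dagger$ and $\bPi_A\bA^\dagger = \bA^\dagger$,
\[
-\bB^\dagger\bP_B\bE\bPi_A\bA^\dagger = -\bB^\dagger\bE\bA^\dagger = -\bPi_B\bA^\dagger + \bB^\dagger\bP_A .
\]

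\emph{Second term.} Because $\bA^\top\bP_A^\perp = \bzero$, only $\bB^\top$ survives. Using $\bB^{\dagger,\top}\bB^\top = \bP_B$,
\[
(\bB^\top\bB)^\dagger\bPi_B\bE^\top\bP_A^\perp = \bB^\dagger\bB^{\dagger,\top}\bB^\top\bP_A^\perp = \bB^\dagger\bP_A^\perp = \bB^\dagger - \bB^\dagger\bP_A .
\]
Adding the first two terms, the $\bB^\dagger\bP_A$ pieces cancel and give $\bB^\dagger - \bPi_B\bA^\dagger$. Matching the left-hand side $\bB^\dagger - \bA^\dagger$ therefore forces the third term to equal $-(\bI - \bPi_B)\bA^\dagger = -\bPi_B^\perp\bA^\dagger$.

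\emph{Third term, which is the main obstacle.} We have $\bPi_B^\perp\bB^\top = \bzero$, because $\bB^\dagger\bB\bB^\top = \bB^\top$. We also have $\bA^\top\bP_A(\bA\bA^\top)^\dagger = \bA^\dagger$. Hence
\[
\bPi_B^\perp\bE^\top\bP_A(\bA\bA^\top)^\dagger = -\bPi_B^\perp\bA^\dagger ,
\]
so the term as printed, $-\bPi_B^\perp\bE^\top\bP_A(\bA\bA^\top)^\dagger$, equals $+\bPi_B^\perp\bA^\dagger$. The computation thus shows the identity holds with a $+$ sign on the third term, which is Stewart's original form. With the printed $-$ sign, the right-hand side exceeds the left by $2\bPi_B^\perp\bA^\dagger$.

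This discrepancy is genuine, as the $1\times1$ example $\bA = 1$, $\bB = 0$ shows. The left side is $\bB^\dagger - \bA^\dagger = -1$. The first two terms vanish, since $\bB^\dagger = 0$ and $\bP_A^\perp = 0$. The printed third term equals $-(1)(-1)(1)(1) = +1$.

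My plan is therefore to prove the statement exactly as worded under the additional hypothesis $\bPi_B^\perp\bA^\dagger = \bzero$, that is, $\row(\bA) \subseteq \row(\bB)$. Under this hypothesis the third term is identically zero, so it makes no difference which sign it carries, and the identity reduces to the first two terms. This is the regime in which the lemma is applied later: there $\bB$ and $\bA$ are the rank-$r$ truncation $\hbY$ (or $\hbZ$) and its population counterpart, both of rank $r_y$ (or $r_z$), and the noise events $\Gc_{\noise}$ ensure they are close. Verifying that this condition actually holds in those applications is the remaining step to check.

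Absent that hypothesis, the printed sign should be corrected to $+$. This is a sign correction only, and the norm bounds drawn from the lemma later are unaffected.
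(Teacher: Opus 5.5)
Your algebra is correct, and it goes further than the paper, which gives no argument for this lemma and only cites Stewart. Your Penrose-equation computation shows two things. The first two terms sum to $\bB^\dagger-\bPi_B\bA^\dagger$. The printed third term equals $+\bPi_B^\perp\bA^\dagger$. So the displayed identity is false as stated, and it holds exactly when the third term carries a plus sign. Your $1\times 1$ example $\bA=1$, $\bB=0$ is a valid counterexample.

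This agrees with how the lemma is actually used. In the proof of Lemma~\ref{lemma:riesz}, the expansion of $\hbY^\dagger-\bbY^\dagger$ is written with a plus sign on the third term, and also, incorrectly, a plus sign on the first. After that the paper uses only the triangle inequality and the fact that projections have operator norm at most one. So your corrected identity is all that is needed downstream, and the resulting norm bounds are unchanged.

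The part of your plan that would fail is the restricted version under $\bPi_B^\perp\bA^\dagger=\bzero$. In the application, $\bB=\hbY$ and $\bA=\bbY$ both have rank $r_y$. So $\row(\bbY)\subseteq\row(\hbY)$ would force the two row spaces to coincide: the top-$r_y$ right singular subspace of $\bbY+\bXi_y$ would have to equal $\row(\bbY)$ exactly. Under the Gaussian noise model this has probability zero whenever $r_y<T_0$.

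The paper's analysis in fact relies on this misalignment being nonzero. Lemma~\ref{lemma:pcr.peturb.2}, as applied in the proof of Lemma~\ref{lemma:riesz.0}, bounds $\|\hbP_y^\perp\bP_y\|_{\txtop}$ by $2\|\bXi_y\|_{\txtop}/\lambda_y$. That argument then controls $\bDelta_{\tq_\beta}=-\hbP_y^\perp\bP_y\bq^*_\beta$, which would vanish identically under your hypothesis. So your ``remaining step to check'' cannot be carried out. The correct resolution is the sign correction you already derived, with no extra hypothesis.
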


The next result is known as Weyl’s inequality. It controls how much singular values can move under an operator-norm perturbation. 
\begin{lemma} [Weyl's inequality] \label{lemma:weyl}
Given $\bA, \bB \in \Rb^{m \times n}$, let $\sigma_i$ and $\widehat{\sigma}_i$ be the $i$-th singular values of $\bA$ and $\bB$, respectively, in decreasing order and repeated by multiplicities. 
Then for all $i \le \min\{m, n\}$, we have $\abs{ \sigma_i - \widehat{\sigma}_i} \le \|\bA - \bB \|_\eop.$
\end{lemma}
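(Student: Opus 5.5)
The plan is to derive the bound from the variational (Courant--Fischer) characterization of singular values, which makes the operator-norm perturbation enter through a simple pointwise triangle inequality. For $\bA \in \Rb^{m \times n}$ and $i \le \min\{m,n\}$, we use
\begin{align}
	\sigma_i = \max_{\Sc \subseteq \Rb^n,\ \dim \Sc = i} \ \min_{\bx \in \Sc,\ \|\bx\|_2 = 1} \|\bA \bx\|_2 .
\end{align}
This follows from the Courant--Fischer theorem applied to the symmetric positive semidefinite matrix $\bA^\top \bA$, whose eigenvalues are $\sigma_1^2 \ge \sigma_2^2 \ge \cdots$. Since $t \mapsto \sqrt{t}$ is monotone on $[0,\infty)$, the max--min of the quadratic form $\bx^\top \bA^\top \bA \bx$ translates directly into the max--min of $\|\bA\bx\|_2$. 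The same representation holds for $\bB$ with $\widehat{\sigma}_i$.

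The key step is a one-sided bound. Fix $i$ and let $\Sc_\star$ be an $i$-dimensional subspace attaining the maximum for $\bA$; one can take the span of the top $i$ right singular vectors of $\bA$. For every unit vector $\bx \in \Sc_\star$, the triangle inequality gives
\begin{align}
	\|\bB\bx\|_2 \ge \|\bA\bx\|_2 - \|(\bA-\bB)\bx\|_2 \ge \|\bA\bx\|_2 - \|\bA-\bB\|_\txtop .
\end{align}
Taking the minimum over unit $\bx \in \Sc_\star$ yields $\min_{\bx} \|\bB\bx\|_2 \ge \sigma_i - \|\bA - \bB\|_\txtop$. Because $\widehat{\sigma}_i$ is the maximum over all $i$-dimensional subspaces, it is at least this value on $\Sc_\star$, so
\begin{align}
	\widehat{\sigma}_i \ge \sigma_i - \|\bA-\bB\|_\txtop .
\end{align}

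Interchanging the roles of $\bA$ and $\bB$ gives $\sigma_i \ge \widehat{\sigma}_i - \|\bA-\bB\|_\txtop$, and combining the two inequalities yields $|\sigma_i - \widehat{\sigma}_i| \le \|\bA-\bB\|_\txtop$.

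The only nonroutine ingredient is the max--min formula itself. If one prefers not to invoke it for singular values directly, an alternative is to pass to the symmetric dilation
\begin{align}
	\begin{pmatrix} \bzero_{m \times m} & \bA \\ \bA^\top & \bzero_{n \times n} \end{pmatrix},
\end{align}
whose eigenvalues are $\pm\sigma_i$ together with zeros, and whose operator norm equals that of $\bA$. The classical eigenvalue form of Weyl's inequality for symmetric matrices then gives the same conclusion. I expect no real obstacle beyond correctly handling the indexing with multiplicities, which the variational formula handles automatically.
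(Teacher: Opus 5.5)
Your proof is correct. The max--min characterization of $\sigma_i$ (obtained from Courant--Fischer applied to $\bA^\top\bA$ and monotonicity of the square root) is valid for every $i \le \min\{m,n\}$. Testing $\bB$ on the span of the top $i$ right singular vectors of $\bA$ and applying the pointwise triangle inequality gives $\widehat{\sigma}_i \ge \sigma_i - \|\bA-\bB\|_{\txtop}$, and symmetry in $\bA,\bB$ then gives the claim. The symmetric-dilation alternative is also valid, since for $i \le \min\{m,n\}$ the $i$-th largest eigenvalue of the dilation is $\sigma_i$.

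The paper does not prove this lemma. It lists Weyl's inequality among the standard tools in Appendix~\ref{sec:proofs.concentration} and invokes it directly, so your argument is a self-contained justification of a result the paper takes as known, rather than an alternative to an existing proof.
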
 

The next result is a general Hoeffding inequality for weighted sums of independent sub-Gaussian variables. 
\begin{lemma} {\cite[Theorem 2.6.3]{vershynin2018high}} \label{lemma:hoeffding}
Let $X_1, \cdots, X_N$ be independent, mean-zero, sub-Gaussian random variables, and $\ba = [a_i] \in \Rb^N$. Then, for every $t \ge 0$, we have
\begin{align}
	\Pb \left( \left|\sum_{i=1}^N a_i X_i \right| \ge t \right) \le 2 \cdot \exp(-\frac{ct^2}{K^2 \| \ba \|_2^2}),
\end{align}
where $K = \max_i \| X_i \|_{\psi_2}$. 
\end{lemma}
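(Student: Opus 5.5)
The plan is the standard Chernoff argument. I would bound the moment generating function of $S \coloneqq \sum_{i=1}^N a_i X_i$, optimize the exponential Markov bound, and symmetrize. Only three facts are needed: independence of the $X_i$, the mean-zero condition, and the definition of $\|\cdot\|_{\psi_2}$ as the Orlicz norm for $\psi_2(x) = e^{x^2}-1$.

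\textbf{Step 1: MGF bound for a single variable.} I would show that there is an absolute constant $C$ such that every mean-zero $X$ with $\|X\|_{\psi_2} \le K$ satisfies
\begin{align}
\Ex \exp(\lambda X) \le \exp(C \lambda^2 K^2) \quad \text{for all } \lambda \in \Rb.
\end{align}
I expect this to be the main obstacle, since it is where the equivalence of the sub-Gaussian characterizations enters. I would argue in three moves.
\begin{enumerate}[label=(\roman*)]
\item From $\Ex \exp(X^2/K^2) \le 2$, derive the moment bounds $\Ex|X|^p \le (C_0 K\sqrt{p})^p$ for all $p \ge 1$. This uses the tail bound $\Pb(|X|\ge s) \le 2e^{-s^2/K^2}$ together with the integral formula $\Ex|X|^p = \int_0^\infty p s^{p-1}\Pb(|X|\ge s)\,ds$.
\item Expand $\Ex e^{\lambda X}$ in a Taylor series. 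The linear term vanishes because $\Ex X = 0$. The remaining terms satisfy $\sum_{p\ge2} \lambda^p \Ex|X|^p/p!$, and Stirling's formula ($p! \ge (p/e)^p$) turns this sum into $\le \exp(C\lambda^2K^2)$ when $|\lambda| \lesssim 1/K$.
\item For large $|\lambda|$, use $\lambda x \le \lambda^2K^2/2 + x^2/(2K^2)$. This gives $\Ex e^{\lambda X} \le e^{\lambda^2K^2/2}\,\Ex e^{X^2/(2K^2)} \le 2e^{\lambda^2K^2/2}$, and the factor $2$ is absorbed into the exponent because $\lambda^2K^2 \gtrsim 1$ in this range.
\end{enumerate}

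\textbf{Step 2: tensorization.} By independence,
\begin{align}
\Ex e^{\lambda S} = \prod_i \Ex e^{\lambda a_i X_i} \le \prod_i \exp(C\lambda^2 a_i^2 \|X_i\|_{\psi_2}^2) \le \exp(C\lambda^2 K^2\|\ba\|_2^2).
\end{align}
Here $K = \max_i \|X_i\|_{\psi_2}$, as in the statement.

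\textbf{Step 3: Chernoff bound and symmetrization.} For $\lambda>0$, Markov's inequality gives
\begin{align}
\Pb(S\ge t) \le e^{-\lambda t}\,\Ex e^{\lambda S} \le \exp(-\lambda t + C\lambda^2K^2\|\ba\|_2^2).
\end{align}
Choosing $\lambda = t/(2CK^2\|\ba\|_2^2)$ yields $\Pb(S \ge t) \le \exp\bigl(-t^2/(4CK^2\|\ba\|_2^2)\bigr)$. Applying the same argument to $-S$, which is a weighted sum of the mean-zero sub-Gaussian variables $-X_i$ with the same norms, bounds the lower tail identically. A union bound over the two tails produces the factor $2$ and the constant $c = 1/(4C)$.
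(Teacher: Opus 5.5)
Your argument is correct and is essentially the standard proof. The paper gives no proof of its own and imports the result from \cite[Theorem 2.6.3]{vershynin2018high}; your route is the one used there: a mean-zero sub-Gaussian MGF bound (moments for small $|\lambda|$, the inequality $\lambda x \le \lambda^2K^2/2 + x^2/(2K^2)$ for large $|\lambda|$), tensorization over independent summands, then a Chernoff bound with a union bound over the two tails. The only cosmetic difference is that the source packages the tensorized MGF bound as $\|\sum_i a_i X_i\|_{\psi_2}^2 \lesssim \sum_i a_i^2 \|X_i\|_{\psi_2}^2$ and then reads off the tail, whereas you optimize the Chernoff exponent directly.
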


The following matrix concentration inequality controls the operator norm of a random matrix with independent sub-Gaussian entries.
\begin{lemma} {\cite[Theorem 4.4.5]{vershynin2018high}} \label{lemma:subg_matrix}
Let $\bA = [A_{ij}]$ be an $m \times n$ random matrix where the entries $A_{ij}$ are independent, mean zero, sub-Gaussian variables satisfying $K = \max_{i,j} \| A_{ij} \|_{\psi_2}$. 
Then for any $t > 0$, we have 
\begin{align}
	\| \bA \|_{\eop} \le CK \left(\sqrt{m} + \sqrt{n} + \sqrt{t} \right)
\end{align}
with probability at least $1-2\exp(-t)$ for some absolute constant $C > 0$. 
\end{lemma}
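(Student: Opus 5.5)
The plan is the standard $\varepsilon$-net argument: reduce the operator norm to a supremum of a bilinear form, control each term of that form with the Hoeffding inequality of Lemma~\ref{lemma:hoeffding}, and finish with a union bound over finite nets.

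First I would write $\| \bA \|_\eop = \sup_{\bx \in S^{n-1}, \, \by \in S^{m-1}} \langle \bA \bx, \by \rangle$. I then fix $1/4$-nets $\mathcal{N} \subset S^{n-1}$ and $\mathcal{M} \subset S^{m-1}$. A volumetric argument gives $|\mathcal{N}| \le 9^n$ and $|\mathcal{M}| \le 9^m$.

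The deterministic approximation step is to show
$$\| \bA \|_\eop \le 2 \max_{\bx \in \mathcal{N}, \, \by \in \mathcal{M}} \langle \bA \bx, \by \rangle .$$
Take maximizers $\bx, \by$ of the bilinear form, and pick net points $\bx_0, \by_0$ with $\| \bx - \bx_0 \|_2 \le 1/4$ and $\| \by - \by_0 \|_2 \le 1/4$. Then
$$\langle \bA \bx, \by \rangle - \langle \bA \bx_0, \by_0 \rangle = \langle \bA (\bx - \bx_0), \by \rangle + \langle \bA \bx_0, \by - \by_0 \rangle .$$
Each term on the right is at most $\tfrac14 \| \bA \|_\eop$ in absolute value. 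Hence $\| \bA \|_\eop \le \tfrac12 \| \bA \|_\eop + \langle \bA \bx_0, \by_0 \rangle$, which rearranges to the claim. I expect this step to be the main point needing care, since the constants must make the error absorbable. Everything after it is routine.

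Next, fix $\bx \in \mathcal{N}$ and $\by \in \mathcal{M}$. The quantity $\langle \bA \bx, \by \rangle = \sum_{i,j} A_{ij} y_i x_j$ is a weighted sum of independent, mean-zero sub-Gaussian entries. Its weight vector $[y_i x_j]$ has squared $\ell_2$-norm $\sum_{i,j} y_i^2 x_j^2 = 1$. Lemma~\ref{lemma:hoeffding} therefore gives
$$\Pb\big( \langle \bA \bx, \by \rangle \ge u \big) \le 2 \exp(-c u^2 / K^2).$$

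A union bound over the net pairs then yields
$$\Pb\big( \| \bA \|_\eop \ge 2u \big) \le 9^{n+m} \cdot 2 \exp(-c u^2 / K^2).$$
I would choose $u = C' K (\sqrt{m} + \sqrt{n} + \sqrt{t})$ with $C'$ large enough that
$$c u^2 / K^2 \ge c C'^2 (m + n + t) \ge (n+m)\log 9 + t .$$
The failure probability is then at most $2 \exp(-t)$. This gives $\| \bA \|_\eop \le 2C' K (\sqrt{m} + \sqrt{n} + \sqrt{t})$ with probability at least $1 - 2\exp(-t)$, so the lemma holds with $C = 2C'$.
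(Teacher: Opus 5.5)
Your proposal is correct and follows essentially the same route as the source: the paper does not reprove this lemma but cites Vershynin's Theorem 4.4.5, whose proof is exactly your argument. That proof uses $1/4$-nets with $|\mathcal{N}|\le 9^n$ and $|\mathcal{M}|\le 9^m$, the approximation $\|\bA\|_{\mathrm{op}} \le 2\max_{\mathcal{N}\times\mathcal{M}}\langle \bA\bx,\by\rangle$, a Hoeffding bound for each fixed pair, and a union bound; the only difference is that Vershynin writes the deviation as $t$ with failure probability $2\exp(-t^2)$, which matches your $\sqrt{t}$ and $2\exp(-t)$ form after relabeling $t$.
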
 

The final result is Bernstein’s inequality for sums of independent sub-exponential variables.
\begin{lemma} {\cite[Theorem 2.8.1]{vershynin2018high}} \label{lemma:bernstein}
Let $X_1, \cdots, X_N$ be independent, mean zero, sub-exponential random variables. Then, for every $t \ge 0$, we have
\begin{align}
	\Pb \left(\left| \sum_{i=1}^N X_i \right| \ge t \right) \le 2 \exp \left( -c \min \left\{ \frac{t^2}{\sum_{i=1}^N \| X_i \|^2_{\psi_1}}, \frac{t}{\max_i \| X_i \|_{\psi_1}} \right\} \right),
\end{align}
where $c > 0$ is an absolute constant. 
\end{lemma}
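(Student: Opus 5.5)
The plan is the standard Chernoff (exponential moment) argument, reduced to a one-variable moment generating function bound for centered sub-exponential variables. Write $S \coloneqq \sum_{i=1}^N X_i$ and $K_i \coloneqq \| X_i \|_{\psi_1}$. By symmetry (apply the argument to $-X_i$, which has the same $\psi_1$-norm) it suffices to bound $\Pb(S \ge t)$ and then double the bound.

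\emph{Step 1 (MGF bound).} I will show that there are absolute constants $C, c_0 > 0$ such that, for every mean-zero $X$ with $\| X \|_{\psi_1} = K$,
\begin{align}
\Ex \exp(\lambda X) \le \exp\left(C \lambda^2 K^2\right) \quad \text{for all } |\lambda| \le c_0 / K.
\end{align}
The first ingredient is the moment bound $\Ex |X|^p \le (C_1 K p)^p$ for all $p \ge 1$. This follows from the definition of the $\psi_1$-norm, which gives the tail bound $\Pb(|X| \ge u) \le 2 e^{-u/K}$, together with the integral identity $\Ex|X|^p = \int_0^\infty p u^{p-1} \Pb(|X| \ge u)\, du \le 2 K^p p\, \Gamma(p)$.

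Next, expand the exponential in a Taylor series and use $\Ex X = 0$ to remove the linear term:
\begin{align}
\Ex e^{\lambda X} = 1 + \sum_{p \ge 2} \frac{\lambda^p \Ex X^p}{p!} \le 1 + \sum_{p \ge 2} \frac{(C_1 |\lambda| K p)^p}{p!}.
\end{align}
Applying $p! \ge (p/e)^p$, each summand is at most $(e C_1 |\lambda| K)^p$. When $|\lambda| K \le 1/(2 e C_1)$, the resulting geometric series is bounded by $2 (e C_1 \lambda K)^2$. Finally, $1 + x \le e^x$ gives the claimed MGF bound.

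\emph{Step 2 (Chernoff).} For $0 < \lambda \le c_0 / \max_i K_i$, Markov's inequality gives $\Pb(S \ge t) \le e^{-\lambda t}\, \Ex e^{\lambda S}$. Independence factorizes $\Ex e^{\lambda S} = \prod_i \Ex e^{\lambda X_i}$, and Step 1 applies to each factor. Hence
\begin{align}
\Pb(S \ge t) \le \exp\left(-\lambda t + C \lambda^2 \sum_i K_i^2\right).
\end{align}

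\emph{Step 3 (optimize $\lambda$).} Take
\begin{align}
\lambda = \min\left\{ \frac{t}{2C \sum_i K_i^2}, \frac{c_0}{\max_i K_i} \right\}.
\end{align}
If the first argument attains the minimum, the exponent is at most $-t^2 / (4C \sum_i K_i^2)$. Otherwise, $\lambda \le t / (2 C \sum_i K_i^2)$, so $C \lambda^2 \sum_i K_i^2 \le \lambda t / 2$, and the exponent is at most $-\lambda t / 2 = -c_0 t / (2 \max_i K_i)$. In both cases the exponent is at most $-c \min\{ t^2 / \sum_i K_i^2,\ t / \max_i K_i \}$ for a suitable absolute $c > 0$. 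Combining this with the lower tail yields the factor $2$ in the statement.

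The main obstacle is Step 1. One must handle the Taylor series carefully, tracking that only the range $|\lambda| \lesssim 1/K$ is admissible (the MGF of a sub-exponential variable can be infinite for large $\lambda$), and one must use the mean-zero condition to kill the linear term. Steps 2 and 3 are routine once the local MGF bound is in hand.
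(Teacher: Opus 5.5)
Your proposal is correct and follows the standard proof of the cited result (Vershynin's Theorem 2.8.1, which likewise combines a local MGF bound $\Ex e^{\lambda X}\le e^{C\lambda^2K^2}$ for $|\lambda|\lesssim 1/K$ with Chernoff and the same choice of $\lambda$); the paper itself gives no proof and only cites the lemma. One small detail is worth stating: taking expectations term by term in the Taylor series in Step 1 is justified by dominated convergence, since $\Ex e^{|\lambda||X|}\le \Ex e^{|X|/K}\le 2$ for $|\lambda|\le 1/K$.
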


\section{Proofs on Identification} \label{sec:proofs.framework}
%

\subsection{Proof of Proposition~\ref{prop:beta}} 

\begin{proof}
Condition on $\Ec$. 
By Assumption~\ref{assump:units}, there exists a weights vector $\bbeta \in \Rb^{N_1}$ such that 
\begin{align}
	\bu_N &= \sum_{j \in \Ic_1} \beta_j \cdot \bu_j. 
	\label{eq:units.temp}
\end{align}
Hence, for all $i \in [N]$, $t \in [T+1]$, and $d \in \{0,1\}$, 
\begin{align}
	\Ex\left[Y_{it}(d) \mid \bu_i, \bv_t(d) \right] &= \Ex\left[ \langle \bu_i, \bv_{t}(d) \rangle + \varepsilon_{it}(d) \mid \bu_i, \bv_{t}(d)\right]  &&\because \text{Assumption~\ref{assump:lfm}}
	\\
	&= \langle \bu_i, \bv_t(d) \rangle \mid \{ \bu_i, \bv_t(d) \} && \because \text{Assumption~\ref{assump:mean_ind}}
	\\
	&= \langle \bu_i, \bv_t(d) \rangle \mid \Ec
	\\
	&= \sum_{j \in \Ic_1} \beta_j \cdot \langle \bu_j, \bv_t(d) \rangle \mid \Ec &&\because \text{\eqref{eq:units.temp}}
	\\
	&= \sum_{j \in \Ic_1} \beta_j \cdot \Ex\left[ \langle \bu_j, \bv_t(d) \rangle + \varepsilon_{jt}(d) \mid \Ec \right] && \because \text{Assumption~\ref{assump:mean_ind}}
	\\
	&= \sum_{j \in \Ic_1} \beta_j \cdot \Ex\left[Y_{jt}(d) \mid \Ec \right]. && \because \text{Assumption~\ref{assump:lfm}}. 
\end{align}
We note that the third equality holds since $\langle \bu_i, \bv_t(d) \rangle$ is deterministic conditional on $\{\bu_i, \bv_t(d)\}$. 
Taking $i = N$, $t = T+1$, and $d = 1$ gives Proposition~\ref{prop:beta}(a).
Moreover, taking $i = N$, $t \in [T_0]$, and $d=0$, and noting $Y_{jt} \coloneqq Y_{jt}(0)$, due to \eqref{eq:sutva}, gives Proposition~\ref{prop:beta}(b). 
%
%
\end{proof} 

%
%

\subsection{Proof of Proposition~\ref{prop:alpha}} \label{sec:proofs.alpha}

\begin{proof}
For each latent component $a \in [r]$, define $f_a(t) \coloneqq V_{ta}(1)$ for $t \in \Zb$. 
Let $S$ denote the shift operator on bi-infinite sequences, i.e., $(Sg)(t) \coloneqq g(t+1)$. 
By the standard finite-rank Hankel recurrence characterization, Assumption~\ref{assump:hankel} implies that, for each $a \in [r]$, there is a monic polynomial $\pi_a$ of degree $q_a \le Q$ such that $\pi_a(S) f_a = 0$. 
Define the common annihilating polynomial 
\begin{align}
	p(z) \coloneqq \prod_{a=1}^r \pi_a(z) = z^q - \sum_{h=1}^q \gamma_h z^{q - h}, 
\end{align}
where $q \coloneqq \sum_{a=1}^r q_a \le rQ$. 
Because the operators $\pi_a(S)$ commute and $\pi_a$ divides $p$, we have $p(S) f_a = 0$ for every $a \in [r]$. 
Evaluating this identity at time $t-q$ gives
\begin{align}
	0 = \left(p(S) f_a \right) (t-q) = f_a(t) - \sum_{h=1}^q \gamma_h f_a(t-h). 
\end{align}
Consequently, for $a \in [r]$ and $t \in \Zb$ 
\begin{align}
	f_a(t) = \sum_{h=1}^q \gamma_h f_a(t-h). 
\end{align}
For every $j \in \Ic_1$, define its treated latent mean sequence by
\begin{align}
	\mu_j(t) \coloneqq \left\langle \bu_j, \bv_t(1) \right \rangle = \sum_{a=1}^r U_{ja} f_a(t). 
\end{align}
Since $\mu_j$ is a linear combination of $f_a$, it satisfies the same common recurrence:
\begin{align}
	\mu_j(t) = \sum_{h=1}^q \gamma_h \mu_j(t-h). 
\end{align}
Because $q \le rQ \le L$, pad these recurrence coefficients to an $L$-dimensional vector $\balpha \in \Rb^L$ by setting $\alpha_{L+1-h} \coloneqq \gamma_h$ for $h \in [q]$, and setting every remaining coordinate to zero. 

For every $j \in \Ic_1$ and $t \in \Zb$, 
\begin{align}
	\sum_{\ell=1}^L \alpha_\ell \mu_j(t-L+\ell) 
	&= \sum_{h=1}^q \gamma_h \mu_j\left(t-L+(L+1-h)\right)
	\\
	&= \sum_{h=1}^q \gamma_h \mu_j(t+1-h)
	\\
	&= \mu_j(t+1). 
\end{align}
Thus, the common $L$-lag relation is 
\begin{align}
	\mu_j(t+1) = \sum_{\ell=1}^L \alpha_\ell \mu_j(t-L+\ell), \quad j \in \Ic_1, t \in \Zb. \label{eq:time.1}
\end{align}
Notably, $\balpha$ is common to all donor units since it is constructed solely from the treated temporal coefficients $f_a$ and not the unit factors. 

Now, condition on $\Ec$. 
As shown in the proof of Proposition~\ref{prop:beta}, Assumptions~\ref{assump:lfm} and \ref{assump:mean_ind} imply that for every $s \in [T+1]$, 
\begin{align}
	\Ex\left[Y_{js}(1) \mid \Ec \right] = \left \langle \bu_j, \bv_s(1) \right \rangle = \mu_j(s). \label{eq:time.2}
\end{align}
Therefore, by \eqref{eq:time.1} and \eqref{eq:time.2}, taking $t = T$ gives 
\begin{align}
	\Ex\left[Y_{j, T+1}(1) \mid \Ec \right] = \sum_{\ell=1}^L \alpha_\ell \Ex\left[Y_{j, T-L+\ell}(1) \mid \Ec \right]. 
\end{align}
Since every lag time $T-L+\ell$ for $\ell \in [L]$ lies in the observed post-treatment period, applying \eqref{eq:sutva} yields $Y_{j, T-L+\ell} \coloneqq Y_{j, T-L+\ell}(1)$, which then proves part (a). 

Finally, let $t \in \{T_0+L+1, \dots, T\}$ and apply \eqref{eq:time.1} and \eqref{eq:time.2} with $t-1$ in place of $t$ to obtain  
\begin{align}
	\Ex\left[Y_{jt}(1) \mid \Ec \right] = \sum_{\ell=1}^L \alpha_\ell \Ex\left[Y_{j, t-L-1+\ell}(1) \mid \Ec \right]. 
\end{align}
Since the response at time $t$ and all of its $L$ regressors lie in the observed post-treatment period, we invoke \eqref{eq:sutva} to conclude part (b). 
This completes the proof. 
\end{proof}

\subsection{Proof of Corollary~\ref{cor:identification}} \label{sec:proof.identification}

\begin{proof}
Condition on $\Ec$. 
Define $\bP_{\alpha} \coloneqq \bbZ^{\top, \dagger}  \bbZ^\top$ and $\bP_{\beta} \coloneqq \bbY^{\top,\dagger}  \bbY^\top$.
Recall from Theorem~\ref{thm:estimation} that 
$$\theta = \left \langle \balpha, \bbW^\top \bbeta \right \rangle.$$ 
By Assumption~\ref{assump:transport}, we have $\col(\bbW) \subseteq \col(\bbY)$. 
Since $\bP_{\beta}$ is the orthogonal projector onto $\col(\bbY)$, it follows that $\bP_{\beta} \bbW = \bbW$. 
Accordingly, we obtain 
\begin{align}
	\theta = \left\langle \balpha, \bbW^\top \bP_{\beta} \bbeta \right \rangle = \left\langle \balpha, \bbW^\top \bbeta^* \right \rangle. 
\end{align}
Following a similar argument, observe that Assumption~\ref{assump:transport} asserts $\row(\bbW) \subseteq \row(\bbZ^\top)$. 
Since $\bP_{\alpha}$ is the orthogonal projector onto $\row(\bbZ^\top)$, we have $\bbW \bP_{\alpha} = \bbW$, which yields 
\begin{align}
	\theta = \left\langle \balpha, \bP_{\alpha} \bbW^\top \bbeta^* \right \rangle= \left \langle \bP_{\alpha} \balpha, \bbW^\top \bbeta^* \right \rangle = \left \langle \balpha^*, \bbW^\top \bbeta^* \right \rangle. 
\end{align}
This completes the proof. 
\end{proof}

\section{\PCR~Parameter Estimation Error Results} \label{sec:proofs.pcr}
To prove Theorem~\ref{prop:parameter.recovery}, we first establish a general \PCR~estimation error bound in Lemma~\ref{lemma:pcr}. 
Appendix~\ref{sec:pcr.setup} introduces the general statistical framework, Appendix~\ref{sec:key.lemmas.pcr} states the auxiliary lemmas, and Appendix~\ref{sec:proof.lemma.pcr} proves Lemma~\ref{lemma:pcr}. 
The proofs of the auxiliary lemmas are collected in Appendix~\ref{sec:pcr.key.lemmas.proofs}. 

\subsection{Statistical Framework} \label{sec:pcr.setup} 
Let $\Fc$ denote a sigma-field. 
Consider the linear model 
\begin{align}
	\by = \bmu + \bzeta \in \Rb^m, \label{eq:linear.model} 
\end{align}
where $\by$ is the observed response vector, $\bmu$ is the unknown signal, and $\bzeta$ the idiosyncratic noise. 
Denote the observed covariates as 
\begin{align}
	\bX = \bA + \bXi \in \Rb^{m \times n}, 
\end{align}
where $\bA$ represents the true covariates and $\bXi$ represents idiosyncratic noise. 
Suppose $\bA$ and $\bmu$ are $\Fc$-measurable, and $\bmu \in \col(\bA)$. 
Define the $n$-dimensional target parameter by 
\begin{align}
	\bgamma^* \coloneqq \bA^\dagger \bmu. \label{eq:gamma.unique}
\end{align}
%

\subsubsection{Assumptions}
Within this framework, we impose the following assumptions, which mirror Assumptions~\ref{assump:bounded}, \ref{assump:subg}, and \ref{assump:spectra} of the main body, respectively.  

\begin{assumption} \label{assump:pcr.bounded}
Conditioned on $\Fc$, suppose the entries of $\bA$ and $\bmu$ are bounded between $[-1,1]$. 
\end{assumption}

\begin{assumption} \label{assump:pcr.noise}
Conditioned on $\Fc$, the entries of $\bXi= [\xi_{ij}]$ and $\bzeta = [\zeta_i]$ are mutually independent, mean-zero, sub-Gaussian variables satisfying $\| \xi_{ij} \|_{\psi_2} \le C_\xi \sigma$ and $\| \zeta_i \|_{\psi_2} \le C_\zeta \sigma$ for some constants $C_\xi, C_\zeta > 0$. 
\end{assumption}

\begin{assumption} \label{assump:pcr.spectra}
Conditioned on $\Fc$, the condition number $\kappa$ of $\bA$ satisfies $\kappa^{-1} \ge c$ and $\| \bA \|_F^2 \ge c' m n$ for some constants $c, c' > 0$. 
\end{assumption} 

\subsubsection{\PCR~Estimator}
To estimate the parameter $\bgamma^*$, the \PCR~estimator proceeds as follows:  
\begin{enumerate} [label=(\alph*)] 
	\item{\em Low-rank approximation.} Define $\bX^{(k)}$ as the rank-$k$ approximation of $\bX$: 
	\begin{align}
		\bX^{(k)} \coloneqq \HSVT(\bX, k). 
	\end{align}
	
	\item{\em Linear representation.} Define the \PCR~estimate as 
	\begin{align}
		\hbgamma \coloneqq \left( \bX^{(k)} \right)^\dagger \by \in \argmin_{\bomega \in \Rb^n} \| \by - \bX^{(k)} \bomega \|_2.  
	\end{align}
\end{enumerate} 

\subsubsection{\PCR~Parameter Estimation Error}
The following lemma is the primary generic result that bounds the \PCR~estimation error in recovering $\bgamma^*$. 

\begin{lemma} \label{lemma:pcr}
Fix $t > 0$. 
Let Assumptions~\ref{assump:pcr.bounded} to \ref{assump:pcr.spectra} hold. 
Suppose $k = r = \rank(\bA)$ and
\begin{align}
	m n \ge c_\gamma \sigma^2 r \cdot \left(\sqrt{m} + \sqrt{n} + \sqrt{t} \right)^2 \label{eq:sep}
\end{align}
for a sufficiently large constant $c_\gamma > 0$.
%
Then, conditional on $\Fc$, with probability at least $1 - \Oc(\exp(-t))$, 
\begin{align}
	\left \| \hbgamma - \bgamma^* \right \|_2 \lesssim \frac{\sigma r \left(\sqrt{m} + \sqrt{n} + \sqrt{t} \right)}{n \sqrt{m}} + \frac{\sigma \sqrt{r} \left( \sqrt{r} + \sqrt{t} \right)}{\sqrt{mn}}.
\end{align}
\end{lemma}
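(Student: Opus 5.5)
Write $\hbX \coloneqq \bX^{(r)}$, and let $\hbU\widehat{\bS}\hbV^\top$ and $\bU\bS\bV^\top$ be the rank-$r$ SVDs of $\hbX$ and $\bA$. Let $\lambda_r$ be the smallest nonzero singular value of $\bA$. Assumption~\ref{assump:pcr.spectra} gives $\lambda_r^2 \ge \kappa^{-2}\|\bA\|_F^2/r \gtrsim mn/r$. On the event $\{\|\bXi\|_\txtop \le C\sigma(\sqrt m+\sqrt n+\sqrt t)\}$, which has probability at least $1-2e^{-t}$ by Lemma~\ref{lemma:subg_matrix}, condition \eqref{eq:sep} and Weyl's inequality (Lemma~\ref{lemma:weyl}) give $\hat s_r \ge \lambda_r/2$. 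Wedin's theorem then gives $\|\hbV\hbV^\top - \bV\bV^\top\|_\txtop,\ \|\hbU\hbU^\top-\bU\bU^\top\|_\txtop \lesssim \|\bXi\|_\txtop/\lambda_r$.

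Since $\bmu = \bA\bgamma^*$, we write $\by = \bA\bgamma^* + \bzeta$ and decompose
\begin{align}
\hbgamma - \bgamma^* = \hbX^\dagger\bzeta + \hbX^\dagger(\bA - \hbX)\bgamma^* + (\hbX^\dagger\hbX - \bI)\bgamma^*.
\end{align}
Because $\bgamma^* \in \row(\bA)$, the third term equals $-(\bI-\hbV\hbV^\top)\bV\bV^\top\bgamma^*$. Its norm is at most $\|\bXi\|_\txtop\|\bgamma^*\|_2/\lambda_r$.

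For the second term, use $\hbX^\dagger(\bA-\hbX) = \hbX^\dagger(\hbX-\bX+\bA-\bX)$. We have $\hbX^\dagger(\bX-\hbX)=0$ because the residual of $\bX$ lies in the orthogonal complement of $\col(\hbU)$. So the term reduces to $-\hbX^\dagger\bXi\bgamma^*$, with norm at most $\|\bXi\|_\txtop\|\bgamma^*\|_2/\lambda_r$.

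For the norm of $\bgamma^*$: $\|\bgamma^*\|_2 \le \|\bmu\|_2/\lambda_r \le \sqrt m/\lambda_r \lesssim \sqrt{r/n}$. Combining, the second and third terms are $\lesssim \sigma(\sqrt m+\sqrt n+\sqrt t)\sqrt m\, r/(mn)\cdot \sqrt{1}$. More precisely this is $\sigma(\sqrt m+\sqrt n+\sqrt t)\cdot r/(n\sqrt m)\cdot\sqrt m/\sqrt m$, which is the first term of the claimed bound: $\|\bXi\|_\txtop\sqrt m/\lambda_r^2 \lesssim \sigma r(\sqrt m+\sqrt n+\sqrt t)/(n\sqrt m)$.

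The noise term $\hbX^\dagger\bzeta$ is the main obstacle. The crude bound $\|\bzeta\|_2/\hat s_r \sim \sigma\sqrt{mr/(mn)}\sqrt{m}$ is too large. We need that only the projection $\hbU^\top\bzeta$ matters: $\|\hbX^\dagger\bzeta\|_2 \le 2\|\hbU^\top\bzeta\|_2/\lambda_r$.

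$\hbU$ depends on $\bXi$ but not on $\bzeta$, since $\bzeta$ is independent of $\bXi$ given $\Fc$ by Assumption~\ref{assump:pcr.noise}. So, conditional on $\bXi$, $\hbU^\top\bzeta$ is an $r$-dimensional subgaussian vector with $\psi_2$-norm $\lesssim\sigma$. Applying Lemma~\ref{lemma:hoeffding} coordinatewise with a union bound (or the Hanson--Wright inequality) gives $\|\hbU^\top\bzeta\|_2 \lesssim \sigma(\sqrt r+\sqrt t)$ with probability $1-\Oc(e^{-t})$. Hence this term is $\lesssim \sigma\sqrt r(\sqrt r+\sqrt t)/\sqrt{mn}$, which matches the second term.

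This independence-based decoupling, rather than the crude $\|\bzeta\|_2$ bound, is what produces the sharper rate relative to \cite{pcr_jmlr}. If coordinatewise independence fails, the fallback is to replace $\hbU$ by $\bU$ and pay the Wedin term $\|\hbU-\bU\bU^\top\hbU\|_\txtop\|\bzeta\|_2$. That term would be lower order only under \eqref{eq:sep}, and I would check this case separately.

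A union bound over the two events finishes the proof.
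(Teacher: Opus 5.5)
Your proposal is correct and takes essentially the paper's route. It uses the same three-term decomposition of $\hbgamma-\bgamma^*$, Weyl's inequality together with \eqref{eq:sep} to get $\hat s_r\ge\lambda/2$, a subspace-perturbation bound for the projection term, and conditioning on $\bXi$ so that $\hbU$ is fixed when controlling $\hbU^\top\bzeta$. Two minor differences from the paper: it proves the elementary Lemma~\ref{lemma:pcr.peturb.2} rather than invoking Wedin, and it bounds $\|\hbA-\bA\|_{\txtop}$ via Lemma~\ref{lemma:pcr.peturb.1}, where you use the slightly cleaner identity $\widehat{\bX}^\dagger(\bX-\widehat{\bX})=\bzero$.

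One caveat concerns the step bounding $\|\hbU^\top\bzeta\|_2$. Only your Hanson--Wright option, or the $1/2$-net argument the paper uses in Lemma~\ref{lemma:pcr.subg.rotate}, yields $\|\hbU^\top\bzeta\|_2\lesssim\sigma(\sqrt r+\sqrt t)$. Coordinatewise Hoeffding plus a union bound only gives $\sigma\sqrt{r}\sqrt{t+\log r}$, which is too weak to reproduce the second term of the stated rate.
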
 

\subsection{Key Lemmas} \label{sec:key.lemmas.pcr}
We begin by stating key lemmas to prove Lemma~\ref{lemma:pcr}. 
Define $\bhA \coloneqq \bX^{(r)} = \hbU \hbS \hbV^\top$, where $r = \rank(\bA)$. 
Let $\bP_V \coloneqq \bA^\dagger \bA = \bV \bV^\top$ and $\bP_{\hV} \coloneqq \bhA^\dagger \bhA = \hbV \hbV^\top$ denote the orthogonal projectors onto the rowspaces of $\bA$ and $\hbA$. 
Let $\lambda \coloneqq s_r(\bA)$ denote the $r$-th singular value of $\bA$. 
Finally, recall $\bXi \coloneqq \bX - \bA$. 

\begin{lemma} \label{lemma:pcr.spectra}
Let Assumption~\ref{assump:pcr.bounded} hold. 
Then, we have 
\begin{align}
	\| \bgamma^* \|_2 \le \frac{\sqrt{m}}{\lambda}. 
\end{align}
\end{lemma}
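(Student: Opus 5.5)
We need to prove $\|\bgamma^*\|_2 \le \sqrt{m}/\lambda$, where $\bgamma^* = \bA^\dagger \bmu$ and $\bmu \in \col(\bA)$ has entries in $[-1,1]$.

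The plan is a direct operator-norm bound on the pseudoinverse, with no probabilistic input. First, write the compact singular value decomposition $\bA = \bU \bS \bV^\top$ with $\bS = \mathrm{diag}(s_1,\dots,s_r)$ and $s_r = \lambda > 0$. Then $\bA^\dagger = \bV \bS^{-1} \bU^\top$, so $\|\bA^\dagger\|_{\txtop} = 1/s_r = 1/\lambda$. Submultiplicativity gives
\begin{align}
	\|\bgamma^*\|_2 = \|\bA^\dagger \bmu\|_2 \le \|\bA^\dagger\|_{\txtop} \, \|\bmu\|_2 = \frac{\|\bmu\|_2}{\lambda}.
\end{align}

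Second, Assumption~\ref{assump:pcr.bounded} bounds each entry of $\bmu \in \Rb^m$ in absolute value by one, so $\|\bmu\|_2 \le \sqrt{m}$. Combining the two displays yields the claim. The hypothesis $\bmu \in \col(\bA)$ is not needed for the inequality itself; it only guarantees that $\bA\bgamma^* = \bmu$, so that $\bgamma^*$ is the minimum-norm exact representer. We may also note the sharper form $\|\bgamma^*\|_2 = \|\bS^{-1}\bU^\top \bmu\|_2 \le \|\bU^\top\bmu\|_2/\lambda$, but the stated bound suffices.

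There is no genuine obstacle here. The only point requiring care is that $\lambda$ denotes the $r$-th, i.e.\ smallest nonzero, singular value, so that $\|\bA^\dagger\|_{\txtop} = 1/\lambda$ holds exactly even though $\bA$ may be rank deficient. All statements are conditional on $\Fc$, under which $\bA$ and $\bmu$ are fixed.
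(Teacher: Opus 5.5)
Your proof is correct and follows the paper's own argument: bound $\|\bA^\dagger \bmu\|_2 \le \|\bA^\dagger\|_{\txtop}\,\|\bmu\|_2$, use $\|\bA^\dagger\|_{\txtop} = 1/\lambda$, and use $\|\bmu\|_2 \le \sqrt{m}$ from Assumption~\ref{assump:pcr.bounded}. Your remark that $\bmu \in \col(\bA)$ is not needed for this inequality is also correct.
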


\begin{lemma} \label{lemma:pcr.pseudo}
Suppose $k = r = \rank(\bA)$. 
Then, we have 
\begin{align}
	\| \hbA^\dagger \|_{\emph{op}} \le \frac{1}{\lambda - \| \bXi\|_{\emph{op}}},
\end{align}
provided $\lambda > \| \bXi\|_{\emph{op}}$. 
\end{lemma}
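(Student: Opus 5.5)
The statement to prove is Lemma~\ref{lemma:pcr.pseudo}: with $k = r = \rank(\bA)$, $\| \hbA^\dagger \|_\txtop \le 1/(\lambda - \| \bXi \|_\txtop)$ whenever $\lambda > \| \bXi \|_\txtop$.

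The plan is to reduce the operator norm of the pseudoinverse to the reciprocal of the smallest nonzero singular value of $\hbA$, and then lower bound that singular value with Weyl's inequality (Lemma~\ref{lemma:weyl}).

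First, recall $\hbA = \bX^{(r)} = \hbU \hbS \hbV^\top$, the rank-$r$ truncation of the SVD of $\bX$. Here $\hbS = \mathrm{diag}(\hs_1, \dots, \hs_r)$ collects the top $r$ singular values of $\bX$. If $\hs_r > 0$, then $\hbA$ has rank exactly $r$ and $\hbA^\dagger = \hbV \hbS^{-1} \hbU^\top$. Since $\hbU$ and $\hbV$ have orthonormal columns, this gives $\| \hbA^\dagger \|_\txtop = 1/\hs_r$.

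Second, apply Lemma~\ref{lemma:weyl} to the pair $\bA$ and $\bX = \bA + \bXi$ at index $i = r$. This yields $|\hs_r - \lambda| \le \| \bX - \bA \|_\txtop = \| \bXi \|_\txtop$, so $\hs_r \ge \lambda - \| \bXi \|_\txtop$. Under the proviso $\lambda > \| \bXi \|_\txtop$ the right-hand side is strictly positive. This simultaneously justifies $\hs_r > 0$, hence the rank-$r$ form of the pseudoinverse used above, and gives $1/\hs_r \le 1/(\lambda - \| \bXi \|_\txtop)$. Combining the two steps proves the lemma.

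The only real subtlety is the degenerate case in which $\hs_r$ could vanish or tie, so that $\bX^{(r)}$ is ill-defined or has rank below $r$. The positivity obtained from Weyl's inequality rules this out. Ties among singular values do not affect $\hbA^\dagger$'s norm, since it depends only on the retained singular values. The argument is otherwise routine, and none of Assumptions~\ref{assump:pcr.bounded}--\ref{assump:pcr.spectra} is needed.
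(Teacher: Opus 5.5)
Your proposal is correct and follows the paper's proof: both apply Weyl's inequality (Lemma~\ref{lemma:weyl}) at index $r$ to obtain $\hs_r(\bX) \ge \lambda - \|\bXi\|_{\txtop}$, and then use $\|\hbA^\dagger\|_{\txtop} = 1/\hs_r(\hbA) = 1/\hs_r(\bX)$. Your observation that the proviso forces $\hs_r > 0$, so that $\hbA$ has rank exactly $r$, is a small detail the paper leaves implicit.
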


\begin{lemma} \label{lemma:pcr.peturb.1}
Suppose $k = r = \rank(\bA)$. 
Then, we have 
\begin{align}
	\| \hbA - \bA \|_{\emph{op}} \le 2 \cdot \| \bXi\|_{\emph{op}}. 
\end{align}
\end{lemma}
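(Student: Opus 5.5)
This is a deterministic perturbation statement, so the plan is to compare the rank-$r$ truncations $\bhA=\bX^{(r)}$ and $\bA$ through the triangle inequality,
\begin{align}
	\| \hbA - \bA \|_{\txtop} \le \| \hbA - \bX \|_{\txtop} + \| \bX - \bA \|_{\txtop},
\end{align}
and bound each of the two terms by $\| \bXi \|_{\txtop}$.

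The second term equals $\| \bXi \|_{\txtop}$ by the definition $\bXi \coloneqq \bX - \bA$. For the first term, the SVD of $\bX$ gives $\bX - \bX^{(r)} = \sum_{\ell > r} s_\ell(\bX) \bu_\ell \bv_\ell^\top$, so $\| \bX - \hbA \|_{\txtop} = s_{r+1}(\bX)$. Since $\rank(\bA) = r$, we have $s_{r+1}(\bA) = 0$. Weyl's inequality (Lemma~\ref{lemma:weyl}) then yields
\begin{align}
	s_{r+1}(\bX) = \left| s_{r+1}(\bX) - s_{r+1}(\bA) \right| \le \| \bX - \bA \|_{\txtop} = \| \bXi \|_{\txtop}.
\end{align}
The edge case $r = \min\{m,n\}$ is trivial: then $\bX^{(r)} = \bX$ and the first term is zero. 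Adding the two bounds gives $\| \hbA - \bA \|_{\txtop} \le 2\| \bXi \|_{\txtop}$.

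There is no substantive obstacle. The only point needing care is the identity $\| \bX - \bX^{(r)} \|_{\txtop} = s_{r+1}(\bX)$, which follows directly from the orthogonality of the discarded singular directions. Alternatively, one can bypass Weyl by invoking Eckart–Young, since $\bA$ is a competitor of rank at most $r$.
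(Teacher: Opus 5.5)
Your proof is correct and follows essentially the same route as the paper: Weyl's inequality with $s_{r+1}(\bA)=0$ gives $\|\hbA-\bX\|_{\txtop}=s_{r+1}(\bX)\le\|\bXi\|_{\txtop}$, and the triangle inequality through $\bX$ finishes the argument. Your remark on the edge case $r=\min\{m,n\}$ is a small point of care that the paper leaves implicit.
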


\begin{lemma} \label{lemma:pcr.peturb.2}
Suppose $k = r = \rank(\bA)$. 
Then, we have 
\begin{align}
	\| \bP_{\hV}^\perp \cdot \bP_V \|_{\emph{op}} \le \frac{2 \cdot \| \bXi\|_{\emph{op}}}{\lambda}. 
\end{align}
\end{lemma}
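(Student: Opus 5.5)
This is a Davis–Kahan/Wedin-type bound. The idea is to relate $\bP_{\hV}^\perp\bP_V$ to the action of $\bhA$ and $\bA$ on right singular subspaces and to use that $\bhA$ is the best rank-$r$ approximation of $\bX$. We may assume $\|\bXi\|_{\txtop} < \lambda/2$, since otherwise the right-hand side is at least $1$ and the bound is trivial, as $\|\bP_{\hV}^\perp\bP_V\|_{\txtop}\le 1$.

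Write $\bA = \bU\bS\bV^\top$ with $\bS$ invertible and $s_r(\bS)=\lambda$. Since $\bA\bP_V=\bA$ and $\bP_V = \bA^\dagger\bA$, we have
\begin{align}
\bP_{\hV}^\perp \bP_V = \bP_{\hV}^\perp \bV\bV^\top = \bP_{\hV}^\perp \bA^\top \bU \bS^{-1}\bV^\top .
\end{align}
This gives $\|\bP_{\hV}^\perp \bP_V\|_{\txtop} \le \|\bA\bP_{\hV}^\perp\|_{\txtop}/\lambda$, so it suffices to show $\|\bA \bP_{\hV}^\perp\|_{\txtop} \le 2\|\bXi\|_{\txtop}$.

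For this I would write $\bA\bP_{\hV}^\perp = \bX\bP_{\hV}^\perp - \bXi\bP_{\hV}^\perp$. Because $\hbV$ spans the top-$r$ right singular vectors of $\bX$, the product $\bX\bP_{\hV}^\perp = \bX - \bhA$ equals the tail of the SVD, so its operator norm is $s_{r+1}(\bX)$. Weyl's inequality (Lemma~\ref{lemma:weyl}) together with $s_{r+1}(\bA)=0$ gives $s_{r+1}(\bX)\le\|\bXi\|_{\txtop}$. Also $\|\bXi\bP_{\hV}^\perp\|_{\txtop}\le\|\bXi\|_{\txtop}$. Adding the two pieces yields $\|\bA\bP_{\hV}^\perp\|_{\txtop}\le 2\|\bXi\|_{\txtop}$. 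Equivalently, $\bA\bP_{\hV}^\perp = (\bA-\bhA)\bP_{\hV}^\perp$, since $\bhA\bP_{\hV}^\perp=0$, and Lemma~\ref{lemma:pcr.peturb.1} gives the same bound directly.

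The only delicate point is the first display: we must check that $\bP_{\hV}^\perp\bA^\top\bU\bS^{-1}\bV^\top$ has operator norm at most $\|\bA\bP_{\hV}^\perp\|_{\txtop}\,\|\bS^{-1}\|_{\txtop}$. This uses only the transpose identity $\|\bP_{\hV}^\perp\bA^\top\|_{\txtop}=\|\bA\bP_{\hV}^\perp\|_{\txtop}$ and the fact that $\bU$ and $\bV$ have orthonormal columns. The argument needs no eigengap condition on $\bX$ beyond $k=r$. Combining the pieces gives the stated $2\|\bXi\|_{\txtop}/\lambda$.
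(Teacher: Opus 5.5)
Your proposal is correct and is essentially the paper's argument. Both proofs reduce to $\|\bA\bP_{\hV}^\perp\|_{\txtop} = \|(\bA-\bhA)\bP_{\hV}^\perp\|_{\txtop} \le 2\|\bXi\|_{\txtop}$ via $\bhA\bP_{\hV}^\perp = \bzero$ and Lemma~\ref{lemma:pcr.peturb.1}, then divide by $\lambda$; the paper does this with the vector bound $\lambda\|\bP_V\bu\|_2 \le \|\bA\bP_V\bu\|_2$ for $\bu$ in the range of $\bP_{\hV}^\perp$ plus a transposition, whereas you use the factorization $\bV = \bA^\top\bU\bS^{-1}$ (and your preliminary reduction to $\|\bXi\|_{\txtop} < \lambda/2$ is harmless but not needed).
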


\begin{lemma} \label{lemma:pcr.subg.rotate}
Let $\Hc \supseteq \Fc$ be a sigma-field. 
Suppose $\bQ \in\Rb^{m \times r}$ is $\Hc$-measurable and satisfies $\bQ^\top \bQ = \bI$. 
Further suppose that, conditional on $\Hc$, $\bzeta = [\zeta_i] \in \Rb^m$ is a random vector with independent, mean-zero, sub-Gaussian entries satisfying $\| \zeta_i \|_{\psi_2} \le K_\zeta \sigma$ for some constant $K_\zeta > 0$. 
Then there exist constants $C, c > 0$, depending only on $K_\zeta$, such that for every $t > 0$, 
\begin{align}
	\Pb \left( \| \bQ^\top \bzeta \|_2 > C \sigma \left\{\sqrt{r} + \sqrt{t} \right\} \mid \Hc \right) \le 2 \cdot \exp(-ct). 
\end{align}
\end{lemma}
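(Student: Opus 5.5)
Because $\bQ$ is $\Hc$-measurable and, conditional on $\Hc$, the entries of $\bzeta$ are independent, mean-zero and sub-Gaussian, I would condition on $\Hc$ throughout and treat $\bQ$ as a fixed matrix with orthonormal columns. The random vector $\bx \coloneqq \bQ^\top \bzeta \in \Rb^r$ is then a fixed linear image of $\bzeta$. The plan is to reduce the norm bound to a one-dimensional tail bound via an $\epsilon$-net on the sphere $\mathbb{S}^{r-1}$, and to control each direction with Lemma~\ref{lemma:hoeffding}.

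\emph{Step 1: one fixed direction.} Fix a unit vector $\bu \in \mathbb{S}^{r-1}$. Then $\langle \bu, \bx \rangle = \sum_{i=1}^m (\bQ \bu)_i \zeta_i$. Since $\bQ^\top \bQ = \bI$, the coefficient vector satisfies $\| \bQ \bu \|_2 = \| \bu \|_2 = 1$. Applying Lemma~\ref{lemma:hoeffding} conditional on $\Hc$, with $K \le K_\zeta \sigma$, gives
\[
\Pb\left( |\langle \bu, \bx \rangle| \ge s \mid \Hc \right) \le 2\exp\left(-c s^2 / (K_\zeta^2 \sigma^2)\right).
\]

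\emph{Step 2: net argument.} Let $\Nc_{1/2}$ be a $1/2$-net of $\mathbb{S}^{r-1}$ with $|\Nc_{1/2}| \le 5^r$. The standard approximation gives $\| \bx \|_2 \le 2 \max_{\bu \in \Nc_{1/2}} \langle \bu, \bx \rangle$. A union bound over the net then yields
\[
\Pb\left( \| \bx \|_2 \ge 2s \mid \Hc \right) \le 2 \cdot 5^r \exp\left(-c s^2/(K_\zeta^2 \sigma^2)\right).
\]

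\emph{Step 3: choice of $s$.} Take $s = C' K_\zeta \sigma (\sqrt{r} + \sqrt{t})$. Then $s^2 \ge C'^2 K_\zeta^2 \sigma^2 (r + t)$, so the exponent is at least $c C'^2 (r+t)$. Choosing $C'$ large enough that $c C'^2 r \ge r \log 5 + c C'^2 r/2$ absorbs the factor $5^r$, and the right-hand side is at most $2\exp(-c'' t)$. Setting $C = 2C'K_\zeta$ and renaming constants gives the claim, with constants depending only on $K_\zeta$ (and absolute constants).

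There is no serious obstacle here. The only point needing care is the measurability bookkeeping: the Hoeffding bound must be applied conditionally on $\Hc$, which is legitimate because $\bQ$ is fixed given $\Hc$ and the independence and sub-Gaussian assumptions on $\bzeta$ hold conditionally on $\Hc$. The remaining work is tracking that the constants do not depend on $m$, $r$ or $t$.
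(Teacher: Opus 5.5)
Your proposal is correct and follows the paper's proof essentially step for step. Both arguments condition on $\Hc$ so that $\bQ$ is fixed, apply Hoeffding in each direction using $\|\bQ\bu\|_2 = 1$, take a union bound over a $1/2$-net of size at most $5^r$ with $\|\bQ^\top \bzeta\|_2 \le 2\max_{\bu}\langle \bu, \bQ^\top\bzeta\rangle$, and choose $s \asymp K_\zeta \sigma(\sqrt{r}+\sqrt{t})$ with a constant large enough to absorb the $5^r$ factor.
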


\subsection{Proof of Lemma~\ref{lemma:pcr}} \label{sec:proof.lemma.pcr}

\begin{proof}
By \eqref{eq:gamma.unique}, note that $\bA \bgamma^* = \bA \bA^\dagger \bmu = \bmu$ since $\bmu \in \col(\bA)$. 
Hence, it follows that 
\begin{align}
	\hbgamma - \bgamma^* &= \hbA^\dagger \by - \bgamma^*
	\\
	&= \hbA^\dagger (\bA \bgamma^* + \bzeta ) - \bgamma^* \pm \hbA \bgamma^* && \because \text{\eqref{eq:linear.model}}
	\\
	&= \hbA^\dagger (\bA - \hbA) \bgamma^* + (\bP_{\hV} - \bI_n ) \bgamma^* + \hbA^\dagger \bzeta. && \because \hbA^\dagger \hbA = \bP_{\hV}
\end{align}
Taking norms, we obtain 
\begin{align}
	\| \hbgamma - \bgamma^* \|_2 
	&\le \| \hbA^\dagger (\bA - \hbA) \bgamma^* \|_2 + \| (\bP_{\hV} - \bI_n ) \bgamma^* \|_2 + \| \hbA^\dagger \bzeta \|_2. 
\label{eq:pcr.terms}
\end{align}
We proceed to bound each term in \eqref{eq:pcr.terms} separately. 

\bigskip \noindent {\em Term 1: $\| \hbA^\dagger (\bA - \hbA) \bgamma^* \|_2$.} 
For a fixed $t > 0$, let 
\begin{align}
	\eta_t \coloneqq C_\xi \sigma \left(\sqrt{m} + \sqrt{n} + \sqrt{t} \right), \label{eq:eta.t} 
\end{align}
and define the event $\Gc_\Xi(t) \coloneqq \left\{ \| \bXi\|_{\text{op}} \le \eta_t \right\}$. 
On $\Gc_\Xi(t)$, we have 
\begin{align}
	\| \hbA^\dagger (\bA - \hbA) \bgamma^* \|_2 &\le \| \hbA^\dagger \|_{\text{op}} \cdot \| \hbA - \bA \|_{\text{op}} \cdot \| \bgamma^* \|_2
	\\
	&\le \left( \frac{1}{\lambda - \eta_t} \right) \cdot \| \hbA - \bA \|_{\text{op}} \cdot \| \bgamma^* \|_2 && \because \text{Lemma~\ref{lemma:pcr.pseudo}}
	\\
	& \le \left( \frac{2 \eta_t}{\lambda - \eta_t} \right) \cdot  \| \bgamma^* \|_2 && \because \text{Lemma~\ref{lemma:pcr.peturb.1}}
	\\
	& \le  \frac{2 \eta_t \sqrt{m}}{\lambda (\lambda - \eta_t)}. && \because \text{Lemma~\ref{lemma:pcr.spectra}} \label{eq:pcr.t1}
\end{align}
By Lemma~\ref{lemma:subg_matrix}, we have for some constant $c_\xi > 0$  
\begin{align}
	\Pb(\Gc_\Xi(t) \mid \Fc) \ge 1 - c_\xi \cdot \exp(-t). \label{eq:pcr.prob.1}
\end{align}

\bigskip \noindent {\em Term 2: $\| (\bP_{\hV} - \bI ) \bgamma^*  \|_2$.} 
On $\Gc_\Xi(t)$, we further have 
\begin{align}
	\| (\bP_{\hV} - \bI_n ) \bgamma^* \|_2 &=
	\| (\bP_{\hV} - \bI_n) \cdot \bP_V \gamma^* \|_2 && \because \text{\eqref{eq:gamma.unique}}
	\\
	&= \| - \bP_{\hV}^\perp \cdot \bP_V \cdot \bgamma^* \|_2
	\\
	&\le \| \bP_{\hV}^\perp \cdot \bP_V \|_{\text{op}} \cdot \| \bgamma^* \|_2
	\\
	&\le \left( \frac{2 \eta_t}{\lambda} \right) \cdot \| \bgamma^* \|_2 && \because \text{Lemma~\ref{lemma:pcr.peturb.2}} 
	\\
	&\le \frac{2 \eta_t \sqrt{m}}{\lambda^2}. &&\because \text{Lemma~\ref{lemma:pcr.spectra}}
	\label{eq:pcr.t2}
\end{align}

\noindent {\em Term 3: $\| \hbA^\dagger \bzeta \|_2$.} 
First, note that 
\begin{align}
	\| \hbA^\dagger \bzeta \|_2 &= \| \hbV \hbS^{-1} \hbU^\top \bzeta \|_2
	\\
	&\le \| \hbA^\dagger \|_{\text{op}} \cdot \| \hbU^\top \bzeta \|_2. &&\because \| \hbS^{-1} \|_{\text{op}} = \| \hbA^\dagger \|_{\text{op}} \label{eq:pcr.t3.0}
\end{align} 
To control the inequality above, for a fixed $t > 0$, define the event 
\begin{align}
	\Gc_\zeta(t) \coloneqq \left\{ \| \hbU^\top \bzeta \|_2 \le C_\zeta \sigma \left(\sqrt{r} + \sqrt{t} \right) \right\}. 
\end{align}
Hence, on the joint event $\Gc(t) \coloneqq \Gc_\Xi(t) \cap \Gc_\zeta(t)$, we simplify \eqref{eq:pcr.t3.0} as 
\begin{align}
	\| \hbA^\dagger \bzeta \|_2 
	&\le \| \hbA^\dagger \|_{\text{op}} \cdot \| \hbU^\top \bzeta \|_2
	\\ 
	&\le \frac{\| \hbU^\top \bzeta \|_2}{\lambda - \eta_t} &&\because \text{Lemma~\ref{lemma:pcr.pseudo}}
	\\
	&\le \frac{C_\zeta \sigma \left(\sqrt{r} + \sqrt{t} \right)}{\lambda - \eta_t}. \label{eq:pcr.t3}
\end{align}
Define the sigma-field $\Hc \coloneqq \Fc \vee \sigma(\bXi)$. 
Observe that $\hbU$ is $\Hc$-measurable. 
Further, by Assumption~\ref{assump:pcr.noise}, $\bzeta$ is independent of $\bXi$ conditional on $\Fc$. 
Therefore, conditional on $\Hc$, the coordinates of $\bzeta$ remain independent, mean-zero, and subgaussian with $\psi_2$-norm bounded by $C_\zeta \sigma$. 
This enables us to apply Lemma~\ref{lemma:pcr.subg.rotate} with $\bQ = \hbU$, yielding $\Pb(\Gc_\zeta(t) \mid \Hc ) \ge 1 - c_\zeta \cdot \exp(-t)$
for some constant $c_\zeta > 0$. 
Since $\Gc_{\Xi}(t) \in \Hc$, the tower property then gives 
\begin{align}
	\Pb \left(\Gc_\Xi(t) \cap \Gc_\zeta^c(t) \mid \Fc \right)
	&= \Ex \left[ \mathds{1}\{\Gc_\Xi(t)\} \cdot \Pb \left(\Gc_\zeta^c(t) \mid \Hc \right) \mid \Fc \right]
	\le c_\zeta \cdot \exp(-t). 
	\label{eq:pcr.prob.2}
\end{align}

\noindent {\em Putting everything together.} 
On $\Gc(t)$, combining \eqref{eq:pcr.t1}, \eqref{eq:pcr.t2}, and \eqref{eq:pcr.t3}, we conclude 
\begin{align}
	\| \hbgamma - \bgamma^* \|_2 &\lesssim 
	\frac{\eta_t \sqrt{m}}{\lambda (\lambda - \eta_t)} + \frac{\eta_t \sqrt{m}}{\lambda^2} + \frac{\sigma \left(\sqrt{r} + \sqrt{t}\right)}{\lambda - \eta_t} 
	\\
	&\lesssim \frac{\eta_t \sqrt{m}}{\lambda(\lambda - \eta_t)} + \frac{\sigma \left(\sqrt{r} + \sqrt{t}\right)}{\lambda - \eta_t}.
\end{align}
Observe that $\| \bA \|_F^2 \le r \cdot s^2_1(\bA)$. 
By Assumption~\ref{assump:pcr.spectra}, we further have $\| \bA \|_F^2 \ge c' mn$, and so 
\begin{align}
	s^2_1(\bA) \ge \frac{c' \cdot m n }{r}. 
\end{align}
At the same time, Assumption~\ref{assump:pcr.spectra} implies $\lambda = s_r(\bA) \ge c s_1(\bA)$. 
Combining the above, 
\begin{align}
	\lambda^2 \ge c^2 s^2_1(\bA) \ge \frac{c^2 c' \cdot m n}{r}. \label{eq:rank.sep.1}
\end{align}
Moreover, \eqref{eq:sep} and \eqref{eq:eta.t} suggest
\begin{align}
	\eta_t^2 \le \frac{C^2_\xi \cdot mn}{c_\gamma \cdot r}. \label{eq:rank.sep.2}
\end{align}
Together, \eqref{eq:rank.sep.1} and \eqref{eq:rank.sep.2} yield 
\begin{align}
	\eta_t \le \frac{\lambda}{2},  \label{eq:eta.lambda}
\end{align}
provided $c_\gamma$ is chosen to be large enough. 
In turn, $(\lambda - \eta_t)^{-1} \lesssim \lambda^{-1}$. 
This results in  
\begin{align}
	\| \hbgamma - \bgamma^* \|_2 
	&\lesssim \frac{\eta_t \sqrt{m}}{\lambda^2} + \frac{\sigma \left(\sqrt{r} + \sqrt{t}\right)}{\lambda}.
\end{align}
Plugging in \eqref{eq:eta.t} and \eqref{eq:rank.sep.1}, and simplifying, gives our desired inequality: 
\begin{align}
	\| \hbgamma - \bgamma^* \|_2 
	&\lesssim \frac{\sigma r \left(\sqrt{m} + \sqrt{n} + \sqrt{t} \right)}{n \sqrt{m}} + \frac{\sigma \sqrt{r} \left( \sqrt{r} + \sqrt{t} \right)}{\sqrt{mn}}.
	\label{eq:pcr.joint}
\end{align}
It remains to establish the probability of the event $\Gc(t)$. 
Taking a union bound over \eqref{eq:pcr.prob.1} and \eqref{eq:pcr.prob.2}, we conclude $\Pb \left( \Gc(t) \mid \Fc \right) \ge 1 - \left(c_\xi + c_\zeta \right) \cdot \exp(-t)$. 
This completes the proof. 
\end{proof}

\subsection{Proof of Key Lemmas} \label{sec:pcr.key.lemmas.proofs}
For ease of notation, we suppress the conditioning on $\Fc$ throughout this section.

\subsubsection{Proof of Lemma~\ref{lemma:pcr.spectra}}
\begin{proof}
By Cauchy-Schwarz, 
\begin{align}
	\| \bgamma^* \|_2 \le \| \bA^\dagger \|_{\text{op}} \cdot \| \bmu \|_2
	\le \frac{\sqrt{m}}{\lambda};
\end{align}
for the second inequality, we leverage Assumption~\ref{assump:pcr.bounded}, which implies $\| \bmu \|_2 \le \sqrt{m}$. 
\end{proof}

\subsubsection{Proof of Lemma~\ref{lemma:pcr.pseudo}}
\begin{proof}
By Lemma~\ref{lemma:weyl}, we have
\begin{align}
	| \hs_r(\bX) - \lambda | \le \| \bX - \bA \|_{\text{op}} = \| \bXi\|_{\text{op}}. 
\end{align}
Accordingly, we obtain 
\begin{align}
	\hs_r(\bX) \ge \lambda - \| \bXi\|_{\text{op}}. 
\end{align}
Since $\hs_r(\hbA) = \hs_r(\bX)$, it follows that 
\begin{align}
	\hs_r(\bhA) \ge \lambda - \| \bXi\|_{\text{op}}. 
\end{align}
Therefore, we conclude that
\begin{align}
	\| \bhA^\dagger \|_{\text{op}} = \frac{1}{\hs_r(\bhA)} \le \frac{1}{\lambda - \| \bXi\|_{\text{op}}},
\end{align}
provided $\lambda > \| \bXi\|_{\text{op}}$. 
\end{proof}

\subsubsection{Proof of Lemma~\ref{lemma:pcr.peturb.1}}
\begin{proof}
Since $\rank(\bA) = r$, we have $s_{r+1}(\bA) = 0$. 
Thus, by Lemma~\ref{lemma:weyl}, we have
\begin{align}
	\hs_{r+1}(\bX) - s_{r+1}(\bA) \le \| \bX - \bA \|_{\text{op}} = \| \bXi\|_{\text{op}}. 
\end{align}
As a result, 
\begin{align}
	\| \bhA - \bX \|_{\text{op}} = \hs_{r+1}(\bX) \le \| \bXi\|_{\text{op}}. 
\end{align}
Applying the triangle inequality yields 
\begin{align}
	\| \bhA - \bA \|_{\text{op}} &\le \| \bhA - \bX \|_{\text{op}} + \| \bX - \bA \|_{\text{op}}  \le 2 \cdot \| \bXi\|_{\text{op}}.
\end{align} 
This completes the proof. 
\end{proof}

\subsubsection{Proof of Lemma~\ref{lemma:pcr.peturb.2}}
\begin{proof}
Take any vector $\bu \in \text{range}(\bP_{\hV}^\perp)$. 
Then, $\hbA \cdot \bu = \bzero$ since $\bP_{\hV}$ is the orthogonal row projector onto $\hbA$. 
Therefore, we have 
\begin{align}
	\bA \bu = (\bA - \hbA) \cdot \bu. \label{eq:temp.1}
\end{align}
Further, because $\bA = \bA \bP_V$, we have
\begin{align}
	\bA \bu = \bA \bP_V \bu, \label{eq:temp.2}
\end{align}
yielding 
\begin{align}
	\| \bA  \bP_V \bu \|_2 \ge \lambda \cdot \| \bP_V \bu \|_2. \label{eq:temp.3}
\end{align}
Combining \eqref{eq:temp.1}, \eqref{eq:temp.2}, and \eqref{eq:temp.3}, 
\begin{align}
	\lambda \cdot \| \bP_V \bu \|_2 \le \| (\hbA - \bA ) \cdot \bu \|_2 \le \| \hbA - \bA \|_{\text{op}} \cdot \| \bu \|_2. 
\end{align}
Taking a supremum over all $\bu \in \text{range}(\bP_{\hV}^\perp)$ gives
\begin{align}
	\| \bP_V \cdot \bP_{\hV}^\perp \|_{\text{op}} &\le \frac{ \| \hbA - \bA \|_{\text{op}}} {\lambda} \le \frac{2 \cdot \| \bXi\|_{\text{op}}}{\lambda},
\end{align}
where the final inequality follows from Lemma~\ref{lemma:pcr.peturb.1}. 
Since $\bP_V$ and $\bP_{\hV}^\perp$ are symmetric operators,
\begin{align}
	\|\bP_{\hV}^\perp \cdot  \bP_V \|_{\text{op}} = \| \bP_V \cdot \bP_{\hV}^\perp \|_{\text{op}}. 
\end{align}
This proves the stated bound. 
\end{proof}

\subsubsection{Proof of Lemma~\ref{lemma:pcr.subg.rotate}}

\begin{proof}
Conditional on $\Hc$, $\bQ$ is fixed. 
For any $\bx \in \Sc^{r-1}$, 
\begin{align}
	\bx^\top \bQ^\top \bzeta = \left(\bQ \bx \right)^\top \bzeta. 
\end{align}
Since $\| \bQ \bx \|_2 = \| \bx \|_2 = 1$, this is a weighted sum of independent, mean-zero, sub-Gaussian variables satisfying
\begin{align}
	\left\| \left(\bQ \bx \right)^\top \bzeta \right\|_{\psi_2} 
	& \le A K_\zeta \sigma \cdot \left\| \bQ \bx \right\|_2
	=  A K_\zeta \sigma
\end{align}
for some constant $A > 0$. 
Invoking Lemma~\ref{lemma:hoeffding}, we have for every $u > 0$, 
\begin{align}
	\Pb \left( \left| \bx^\top \bQ^\top \bzeta \right| > u \mid \Hc \right) \le 2 \cdot \exp(-\frac{c u^2}{K^2_\zeta \sigma^2}). 
\end{align}
Let $\Nc$ be a $1/2$-net of $\Sc^{r-1}$ with $|\Nc| \le 5^r$. 
Then, 
\begin{align}
	\left\| \bQ^\top \bzeta \right\|_2 \le 2 \max_{\bx \in \Nc} \left| \bx^\top \bQ^\top \bzeta \right|. 
\end{align}
Therefore, 
\begin{align}
	\Pb \left( \left\| \bQ^\top \bzeta \right\|_2 > 2 u \mid \Hc \right) 
	&\le \Pb\left( \max_{\bx \in \Nc} \left| \bx^\top \bQ^\top \bzeta \right| > u \mid \Hc \right)
	\\
	&\le \sum_{\bx \in \Nc} \Pb\left( \left| \bx^\top \bQ^\top \bzeta \right| > u \mid \Hc \right)
	\\
	&\le 2 \cdot 5^r \cdot \exp(-c\frac{u^2}{K^2_\zeta \sigma^2}). 
\end{align}
Choose $u = A K_\zeta \sigma (\sqrt{r} + \sqrt{t})$ with $A > 0$ large enough. 
Then,
\begin{align}
	\frac{c u^2}{K_\zeta^2 \sigma^2} = c A^2 \left(\sqrt{r} + \sqrt{t} \right)^2 \ge c A^2 (r + t). 
\end{align}
Therefore,
\begin{align}
	2 \cdot 5^r \exp(-c\frac{u^2}{K^2_\zeta \sigma^2}) \le 2 \cdot \exp(r \log(5) - c A^2 r - cA^2 t). 
\end{align}
Choose $A$ such that $c A^2 \ge 2 \log(5)$ and $c A^2 \ge 2$. 
Then, $r \log(5) - cA^2 r \le 0$ and $-cA^2t \le -2t$. 
Thus,
\begin{align}
	2 \cdot 5^r \exp(-c\frac{u^2}{K^2_\zeta \sigma^2}) \le 2 \exp(-2t) \le 2 \exp(-t). 
\end{align}
As a result, we conclude 
\begin{align}
	\Pb \left( \| \bQ^\top \bzeta \|_2 > 2AC K_\zeta \sigma \left\{\sqrt{r} + \sqrt{t} \right\} \mid \Hc \right) \le 2 \cdot \exp(-t). 
\end{align}
The proof is complete. 
\end{proof}

\section{Proof of Theorem~\ref{prop:parameter.recovery}} \label{sec:proof.param.recovery}

\begin{proof}
We prove the unit-side result by applying Lemma~\ref{lemma:pcr} with $\Fc = \Ec$, $\bA = \Ex[\bY_{\Ic_1, \pre}^\top \mid \Ec]$, $\bX = \bY_{\Ic_1, \pre}^\top$, $\by = \by_{N, \pre}$, and $\bgamma^* = \bbeta^*$, which yields $r = r_y$, $m = T_0$, $n = N_1$. 
Choosing $t = C \log(N_1T_0)$ for a large enough $C > 0$ gives our desired result. 

Analogously, we prove the time-side result by applying Lemma~\ref{lemma:pcr} with $\Fc = \Ec$, $\bA = \Ex[\bZ_{\lag}^\top \mid \Ec]$, $\bX = \bZ_{\lag}^\top$, $\by = \bz_{\tnext}$, and $\bgamma^* = \balpha^*$, which yields $r = r_z$, $m = M$, $n = L$. 
Recall that $L \ge r Q$ ensures that $\balpha$ exists. 
Choosing $t = C \log(LM)$ for a large enough $C > 0$ completes the proof. 
\end{proof}

\section{Proof of Theorem~\ref{thm:error}} \label{sec:proof.error}
%

%
We begin by stating a key lemma that aids in our proof of Theorem~\ref{thm:error}. 
We relegate its proof to Appendix~\ref{sec:proof.main.noise}. 

\begin{lemma} \label{lemma:noise}
Let $\Ec$ and $\Fc$ be sigma-fields. Let $\bXi = [\xi_{ij}] \in \Rb^{n \times k}$ be a random matrix such that, conditional on $\Ec$, its entries are independent, mean-zero, subgaussian random variables satisfying $\| \xi_{ij} \|_{\psi_2} \le K_\Xi \sigma$ for $K_\Xi > 0$. Assume that $\bXi \independent \Fc \mid \Ec$. 
Let $\bu \in \Rb^k$ and $\bv \in \Rb^n$ be $\Fc$-measurable random vectors. Let $\Gc \in \Fc$ be an event such that, on $\Gc$, $\| \bu \|_2 \le A$ and $\| \bv \|_2 \le B$, where $A, B \ge 0$ are fixed. Then there is a universal constant $C >0$ depending only on $K_\Xi$, such that for every $t > 0$, 
\begin{align}
	\Pb \left( \Gc \cap \left\{ | \bu^\top \bXi^\top \bv | > C \sigma A B t \right\} ~\Big|~ \Ec \right) \le 2 \exp(-t^2). 
\end{align}
\end{lemma}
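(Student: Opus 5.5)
The plan is to condition first on the larger sigma-field $\Hc \coloneqq \sigma(\Ec \cup \Fc)$, under which $\bu$, $\bv$, and the indicator of $\Gc$ are all fixed. Because $\bXi \independent \Fc \mid \Ec$, the conditional law of $\bXi$ given $\Hc$ coincides with its law given $\Ec$. Hence, conditional on $\Hc$, the entries $\xi_{ij}$ remain independent, mean-zero, and subgaussian with $\|\xi_{ij}\|_{\psi_2} \le K_\Xi \sigma$. This is the same device used for Term 3 in the proof of Lemma~\ref{lemma:pcr}, where $\Hc = \Fc \vee \sigma(\bXi)$ made $\hbU$ measurable while preserving the noise law.

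Conditional on $\Hc$, I then write the bilinear form as a weighted sum:
\begin{align}
	\bu^\top \bXi^\top \bv = \sum_{i \in [n]} \sum_{j \in [k]} v_i u_j \xi_{ij}.
\end{align}
The weight array $(v_i u_j)$ has Euclidean norm $\|\bu\|_2 \|\bv\|_2$, and on $\Gc$ this is at most $AB$. Applying Lemma~\ref{lemma:hoeffding} conditionally gives
\begin{align}
	\Pb\left( |\bu^\top \bXi^\top \bv| > s \mid \Hc \right) \le 2\exp\left( -\frac{c s^2}{K_\Xi^2 \sigma^2 \|\bu\|_2^2 \|\bv\|_2^2} \right).
\end{align}
On $\Gc$, I take $s = C\sigma A B t$ with $C \coloneqq K_\Xi / \sqrt{c}$, so the exponent is at most $-t^2$. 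On the complement of $\Gc$ the bound is trivial, since we only need the indicator of the intersection.

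To finish, I use the tower property, exactly as in \eqref{eq:pcr.prob.2}:
\begin{align}
	\Pb\left( \Gc \cap \{|\bu^\top\bXi^\top\bv| > C\sigma ABt\} \mid \Ec \right) = \Ex\left[ \mathds{1}\{\Gc\} \, \Pb\left(|\bu^\top\bXi^\top\bv| > C\sigma ABt \mid \Hc\right) \mid \Ec \right] \le 2\exp(-t^2).
\end{align}

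There is one edge case. If $A=0$ or $B=0$, then on $\Gc$ the form is identically zero, and the event is empty for $t>0$.

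The only delicate step is justifying that conditioning on $\Hc$ preserves the independence and subgaussian structure of $\bXi$. This is precisely what the conditional independence hypothesis supplies, since the conditional law of $\bXi$ given $\Hc$ equals its law given $\Ec$. Everything else is a direct application of Hoeffding's inequality.
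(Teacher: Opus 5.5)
Your proof is correct and follows the paper's argument essentially step for step: you condition on $\Ec \vee \Fc$, use $\bXi \independent \Fc \mid \Ec$ to keep the entries independent and subgaussian with the same $\psi_2$ bound, apply Hoeffding's inequality to the weighted sum with weights $v_i u_j$ (whose squared norm is $\|\bu\|_2^2\|\bv\|_2^2 \le A^2B^2$ on $\Gc$), and finish with the tower property. Your explicit choice $C = K_\Xi/\sqrt{c}$ and your remark on the case $A=0$ or $B=0$ (which equally covers outcomes in $\Gc$ where $\|\bu\|_2\|\bv\|_2 = 0$) only spell out details the paper leaves implicit.
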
 

\subsection{Completing Proof of Theorem~\ref{thm:error}} 
\begin{proof}
Condition on $\Ec$. 
Observe that $\htheta = \left \langle \hbalpha, \bW^\top \hbbeta \right \rangle$ and $\theta = \left \langle \balpha^*, \bbW^\top \bbeta^* \right \rangle$ by Corollary~\ref{cor:identification}. 
Therefore, 
\begin{align}
	\htheta - \theta &= 
		\left \langle \bDelta_{\alpha}, \bbW^\top \bDelta_{\beta} \right \rangle + \left \langle \bDelta_{\alpha}, \bbW^\top \bbeta^* \right \rangle + \left \langle \balpha^*, \bbW^\top \bDelta_{\beta} \right \rangle
	\\
	& \quad~ + \left\langle \bDelta_{\alpha}, \bXi_w^\top \bDelta_{\beta} \right \rangle + \left \langle \bDelta_{\alpha}, \bXi_w^\top \bbeta^* \right \rangle + \left \langle \balpha^*, \bXi_w^\top \bDelta_{\beta} \right \rangle + \left \langle \balpha^*, \bXi_w^\top \bbeta^* \right \rangle . 
	\label{eq:general}
\end{align}
We proceed to bound each term in \eqref{eq:general} separately. 
Throughout, set $\rho \coloneqq (N_1T_0)^{-10} + (LM)^{-10}$ and $\varphi_\rho \coloneqq \sqrt{\log(8/\rho)}$. 

\bigskip \noindent {\em Deterministic $\bbW$-terms}: 
Define $\Gc_\PCR$ as in \eqref{eq:event.pcr}. 
Applying the Cauchy-Schwarz inequality, we obtain
\begin{align}
	\left| \left \langle \bDelta_{\alpha}, \bbW^\top \bDelta_{\beta} \right \rangle \right| &\le 
	\| \bDelta_{\alpha} \|_2 \cdot \| \bDelta_{\beta} \|_2 \cdot \| \bbW \|_{\text{op}}. 
\end{align} 
By Assumption~\ref{assump:bounded}, it follows that 
\begin{align}
	\| \bbW \|_{\text{op}} \le \| \bbW \|_F \le  \sqrt{N_1L}. \label{eq:barw.bound.op}
\end{align} 
On the event $\Gc_\PCR$, we have 
\begin{align}
	\left| \left \langle \bDelta_{\alpha}, \bbW^\top \bDelta_{\beta} \right \rangle \right| &\lesssim \sigma^2 \Lambda_{\alpha}  \Lambda_{\beta}. 
	\label{eq:term1} 
\end{align}
Moreover, we have 
\begin{align}
	\left| \left \langle \bDelta_{\alpha}, \bbW^\top \bbeta^* \right \rangle \right| &\le 
	\| \bDelta_{\alpha} \|_2 \cdot \| \bbW^\top \bbeta^* \|_2.
\end{align} 
Following the arguments in the proof of Corollary~\ref{cor:identification}, Assumption~\ref{assump:transport} yields $\bbW^\top \bbeta^* = \bbW^\top \bbeta$. 
Thus, for any $\ell \in [L]$, we write 
\begin{align}
	\left(\bbW^\top \bbeta \right)_\ell &= \sum_{j \in \Ic_1} \beta_j \cdot \Ex[Y_{j, T-L+\ell}(1) \mid \Ec] 
	\\ &= \sum_{j \in \Ic_1} \beta_j \cdot \left \langle \bu_j, \bv_{T - L + \ell}(1) \right \rangle && \because \text{Assumptions~\ref{assump:lfm} and \ref{assump:mean_ind}}
	\\ &= \left \langle \bu_N, \bv_{T - L + 1 + \ell}(1) \right \rangle &&\because \text{Assumption~\ref{assump:units}} 
	\\ &= \Ex[Y_{N, T-L+\ell}(1) \mid \Ec], &&\because \text{Assumptions~\ref{assump:lfm} and \ref{assump:mean_ind}} 
	\label{eq:wbeta.norm}
\end{align}
and so Assumption~\ref{assump:bounded} implies 
\begin{align}
	\| \bbW^\top \bbeta^* \|_2 \le \sqrt{L}. \label{eq:barwbeta.bound}
\end{align} 
Therefore, on $\Gc_\PCR$, 
\begin{align}
	\left| \left \langle \bDelta_{\alpha}, \bbW^\top \bbeta^* \right \rangle \right| &\lesssim \sigma \Lambda_{\alpha}. 
	\label{eq:term2}
\end{align}
Similarly, 
\begin{align}
	\left| \left \langle \balpha^*, \bbW^\top \bDelta_{\beta} \right \rangle \right| &\le 
	\| \bDelta_{\beta} \|_2 \cdot \| \bbW \balpha^* \|_2.
\end{align} 
Following the arguments in the proof of Corollary~\ref{cor:identification}, Assumption~\ref{assump:transport} yields $\bbW \balpha^* = \bbW \balpha$. 
Hence, for each $j \in \Ic_1$, Proposition~\ref{prop:alpha} gives
\begin{align}
	\left(\bbW \balpha \right)_j = \sum_{\ell=1}^L \alpha_\ell \cdot \Ex[Y_{j, T-L+\ell} (1) \mid \Ec] = \Ex[Y_{j, T+1}(1) \mid \Ec], \label{eq:walpha} 
\end{align}
and so Assumption~\ref{assump:bounded} implies 
\begin{align}
	\| \bbW \balpha^* \|_2 \le \sqrt{N_1}. \label{eq:barwalpha.bound}
\end{align} 
Thus, on $\Gc_\PCR$, 
\begin{align}
	\left| \left \langle \balpha^*, \bbW^\top \bDelta_{\beta} \right \rangle \right| &\lesssim \sigma \Lambda_{\beta}.  
	\label{eq:term3}
\end{align}
By Theorem~\ref{prop:parameter.recovery}, note that 
\begin{align}
	\Pb(\Gc_\PCR^c \mid \Ec) \lesssim \rho. \label{eq:prob.pcr}
\end{align} 

\noindent {\em Stochastic $\bXi_w$-terms}: 
Next, we record two useful deterministic bounds. 
By Lemma~\ref{lemma:pcr.spectra}, 
\begin{align}
	\| \balpha^* \|_2 &\le \frac{\sqrt{M}}{\lambda_z},
	\quad
	\| \bbeta^* \|_2 \le \frac{\sqrt{T_0}}{\lambda_y}. \label{eq:alpha.beta.bound}
\end{align} 
Define four $\bXi_w$-events as  
\begin{align}
	\Gc_{\Delta \Delta} &\coloneqq \left\{ \left| \left \langle \bDelta_{\alpha}, \bXi_w^\top \bDelta_{\beta} \right \rangle \right| \le  
	\frac{C_w \cdot \sigma^3 \varphi_\rho \cdot \Lambda_{\alpha} \Lambda_{\beta}}{\sqrt{N_1L}} \right\},
	\\
	\Gc_{\Delta \beta} &\coloneqq \left\{ \left| \left \langle \bDelta_{\alpha}, \bXi_w^\top \bbeta^* \right \rangle \right| \le  
	\frac{C_w \cdot \sigma^2 \varphi_\rho \cdot \sqrt{T_0} \Lambda_{\alpha}}{\sqrt{L} \lambda_y} \right\},
	\\
	\Gc_{\alpha \Delta} &\coloneqq \left\{\left| \left \langle \balpha^*, \bXi_w^\top \bDelta_{\beta} \right \rangle \right| \le  
	\frac{C_w \cdot \sigma^2 \varphi_\rho \cdot \sqrt{M} \Lambda_{\beta}}{\sqrt{N_1} \lambda_z} \right\},
	\\
	\Gc_{\alpha \beta} &\coloneqq \left\{ \left| \left \langle \balpha^*, \bXi_w^\top \bbeta^* \right \rangle \right| \le  
	\frac{C_w \cdot \sigma \varphi_\rho \cdot \sqrt{M T_0}}{\lambda_y \lambda_z} \right\},
	\label{eq:term.w}
\end{align}
where $C_w > 0$ is a sufficiently large constant. 

Define the joint $\bXi_w$-events as $\Gc_{w} \coloneqq \Gc_{\Delta \Delta} \cap \Gc_{\Delta \beta} \cap \Gc_{\alpha \Delta} \cap \Gc_{\balpha \bbeta}$. 
Define the sigma-field $\Fc \coloneqq \Ec \vee \sigma(\by, \bY, \bZ, \bz)$.
Note that $\bDelta_{\alpha}, \bDelta_{\beta}, \balpha^*, \bbeta^*$ are $\Fc$-measurable and that $\bXi_w$ is independent of $\Fc$, conditional on $\Ec$. 
This holds because $\bW$ uses the final forecasting Page lags, whereas $\bZ$ and $\bz$ are built from the earlier training Page columns, and the pre-treatment data is time-disjoint; for a refresher, see Section~\ref{sec:alg.setup}. 
Hence, under Assumption~\ref{assump:subg}, these noise entries are conditionally independent. 
We can accordingly apply Lemma~\ref{lemma:noise} four times to control the failure probabilities for the events within $\Gc_w$. 
In particular, applying Lemma~\ref{lemma:noise} with $\bu = \bDelta_{\alpha}$ and $\bv = \bDelta_{\beta}$, 
\begin{align}
	\Pb\left( \Gc_\PCR \cap \Gc^c_{\Delta \Delta} \mid \Ec \right) \le 2 \cdot \exp(-\varphi_\rho^2). 
\end{align}
Similarly, applying Lemma~\ref{lemma:noise} with $\bu = \bDelta_{\alpha}$ and $\bv = \bbeta^*$, coupled with \eqref{eq:alpha.beta.bound}, 
\begin{align}
	\Pb\left( \Gc_\PCR \cap \Gc^c_{\Delta \beta} \mid \Ec \right) \le 2 \cdot \exp(-\varphi_\rho^2). 
\end{align}
Next, applying Lemma~\ref{lemma:noise} with $\bu = \balpha^*$ and $\bv = \bDelta_{\beta}$, coupled with \eqref{eq:alpha.beta.bound}, 
\begin{align}
	\Pb\left( \Gc_\PCR \cap \Gc^c_{\alpha \Delta} \mid \Ec \right) \le 2 \cdot \exp(-\varphi_\rho^2). 
\end{align}
Applying Lemma~\ref{lemma:noise} with $\bu = \balpha^*$ and $\bv = \bbeta^*$, coupled with \eqref{eq:alpha.beta.bound}, 
\begin{align}
	\Pb\left( \Gc_\PCR \cap \Gc^c_{\alpha \beta} \mid \Ec \right) \le 2 \cdot \exp(-\varphi_\rho^2). 
\end{align}
Observing $\Gc^c_w = \Gc^c_{\Delta \Delta} \cup \Gc^c_{\Delta \beta} \cup \Gc^c_{\alpha \Delta} \cup \Gc^c_{\balpha \bbeta}$ and taking the union bound yields 
\begin{align}
	\Pb \left(\Gc_\PCR \cap \Gc^c_w \mid \Ec \right) \le 8 \cdot \exp(-\varphi_\rho^2) = \rho. \label{eq:rho.temp}
\end{align}
On $\Gc_\PCR \cap \Gc_w$, the $\bXi_w$-block stochastic terms satisfy the bounds in \eqref{eq:term.w}. 

\bigskip \noindent 
{\em Putting everything together.} 
Under Assumption~\ref{assump:spectra}, we follow the arguments that led to \eqref{eq:rank.sep.1} to obtain 
\begin{align}
	&\lambda_y^2 \gtrsim \frac{N_1 T_0}{r_y}, 
	\quad 
	\lambda_z^2 \gtrsim \frac{LM}{r_z}.
\end{align}
Leveraging these inequalities in conjunction with \eqref{eq:term1}, \eqref{eq:term2}, \eqref{eq:term3}, and \eqref{eq:term.w}, we have on $\Gc_\PCR \cap \Gc_w$
\begin{align}
	\left| \htheta - \theta \right| &\lesssim \sigma \Lambda_{\alpha} + \sigma \Lambda_{\beta} + \sigma^2 \Lambda_{\alpha} \Lambda_{\beta} + \frac{\sigma \varphi_\rho \cdot \left(\sigma\Lambda_{\alpha} + \sqrt{r_z} \right) \left(\sigma \Lambda_{\beta} + \sqrt{r_y}\right)}{\sqrt{N_1L}}.
\end{align}
Since $\rho \ge \min\{N_1T_0, LM\}^{-10}$, we have $(1/\rho) \le \min\{N_1T_0, LM\}^{10}$. 
Thus, 
\begin{align}
	\varphi^2_\rho \le \log(8) + 10 \log( \min\{N_1T_0, LM \}) \lesssim 1 + \log(\min\{N_1T_0, LM \}), 
\end{align}
which gives $\varphi_\rho \lesssim \tau$, where $\tau^2 \coloneqq 1 + \log(\min\{N_1 T_0, LM \})$. 
This gives our desired inequality: 
\begin{align}
	\left| \htheta - \theta \right| &\lesssim \sigma\Lambda_{\alpha} + \sigma\Lambda_{\beta} + \sigma^2 \Lambda_{\alpha} \Lambda_{\beta} + \frac{\sigma \sqrt{\tau} \cdot \left(\sigma \Lambda_{\alpha} + \sqrt{r_z} \right) \left(\sigma \Lambda_{\beta} + \sqrt{r_y}\right)}{\sqrt{N_1L}}.
\end{align}
It remains to bound the probability of the joint event $\Gc_\PCR \cap \Gc_w$. 
Taking a union bound gives 
\begin{align}
	\Pb \left( (\Gc_\PCR \cap \Gc_w)^c \mid \Ec \right) 
	&\le \Pb \left( \Gc^c_\PCR \mid \Ec \right) + \Pb \left( \Gc_\PCR \cap \Gc^c_w \mid \Ec \right) 
	\lesssim \rho,	
\end{align}
where the final inequality leverages \eqref{eq:prob.pcr} and \eqref{eq:rho.temp}.
This completes the proof. 
\end{proof}


\subsection{Proof of Lemma~\ref{lemma:noise}} \label{sec:proof.main.noise}

\begin{proof}
Define the sigma-field $\Hc \coloneqq \Ec \vee \Fc$. 
Conditional on $\Hc$, the vectors $\bu$ and $\bv$ are fixed. 
Because $\bXi$ is independent of $\Fc$ conditional on $\Ec$, conditioning further on $\Fc$ does not alter the conditional distribution of $\bXi$. 
Thus, conditional on $\Hc$, the entries $\xi_{ij}$ of $\bXi$ remain independent, mean-zero, sub-Gaussian variables with 
$\psi_2$-norm bounded by $K_\Xi \sigma$. 

Note that the bilinear form can be expanded as 
\begin{align}
	\bu^\top \bXi^\top \bv = \sum_{i=1}^n \sum_{j=1}^k v_i \cdot u_j \cdot \xi_{ij}. 
\end{align}
Conditional on $\Hc$, $\bu^\top \bXi^\top \bv$ is a deterministic weighted sum of independent, mean-zero, sub-Gaussian random variables with weights $a_{ij} \coloneqq v_i u_j$, and hence
\begin{align}
	\sum_{i=1}^n \sum_{j=1}^k a_{ij}^2 = \left( \sum_{i=1}^n v_i^2 \right) \cdot \left(\sum_{j=1}^k u_j^2 \right) = \| \bv \|_2 ^2 \cdot \| \bu \|_2^2. 
\end{align}
This gives 
\begin{align}
	\| \bu^\top \bXi^\top \bv \|_{\psi_2} \le K_\Xi \sigma \cdot \| \bu \|_2 \cdot \| \bv \|_2. 
\end{align}
Hence, Lemma~\ref{lemma:hoeffding} states that for every $s > 0$, 
\begin{align}
	\Pb \left( \left| \bu^\top \bXi^\top \bv \right| > s \mid \Hc \right) \le 2 \cdot \exp(- \frac{c s^2}{K^2_\Xi \sigma^2 \cdot \| \bu \|_2^2 \cdot \| \bv \|_2^2} ). 
\end{align}
On the event $\Gc$, $\| \bu \|_2 \le A$ and $\| \bv \|_2 \le B$. 
Thus, on $\Gc$, taking $s =  C \sigma A B t$, 
\begin{align}
	\frac{c s^2}{K^2_\Xi \sigma^2 \cdot \| \bu \|_2^2 \cdot \| \bv \|_2^2} &= \frac{c C^2 A^2 B^2 t^2}{K^2_\Xi \sigma^2 \cdot \| \bu \|_2^2 \cdot \| \bv \|_2^2}
	\ge \frac{c C^2 t^2}{K^2_\Xi}. 
\end{align}
Choose $C$ large enough so that $cC^2 / K^2_\Xi \ge 1$. 
Then, 
\begin{align}
	\mathds{1}\{\Gc \} \cdot \Pb \left( \left| \bu^\top \bXi^\top \bv \right| > C \sigma A B t ~\Big|~ \Hc \right) \le \mathds{1}\{\Gc \} \cdot 2 \exp(-t^2).
\end{align}
Because $\Gc \in \Hc$, taking conditional expectation given $\Ec$ gives 
\begin{align}
	\Pb \left( \Gc \cap \left\{ \left| \bu^\top \bXi^\top \bv \right| > C \sigma A B t \right\} ~\Big|~ \Ec \right)
	&= \Ex \left[ \mathds{1}\{\Gc\} \cdot \Pb \left( \left| \bu^\top \bXi^\top \bv \right| > C \sigma A B t ~\Big|~ \Hc \right) ~\Big|~ \Ec \right] 
	\\ &\le 2 \exp(-t^2) \cdot \Ex\left[ \mathds{1} \{\Gc \} \mid \Ec \right]
	\\ &\le 2 \exp(-t^2). 
\end{align}
This completes the proof. 
\end{proof}

\section{Proof of Theorem~\ref{thm:inference}} \label{sec:proofs.inference} 
Theorem~\ref{thm:error} establishes that the \TWSF~estimator can achieve vanishing causal forecast error under suitable conditions. 
For inference, however, we must identify more than the magnitude of the error, but rather determine which part of that error survives at first order and whether the remaining terms are negligible on the scale of its standard deviation. 

A direct expansion of the bilinear estimator exposes the difficult. In particular, it contains the terms 
\begin{align}
	\left \langle \hbalpha - \balpha^*, \bbW^\top  \bbeta^* \right \rangle,
	\quad
	\left \langle \balpha^*, \bbW^\top  \left(\hbbeta - \bbeta^*\right) \right\rangle. 
\end{align}
These contributions are linear in the two \PCR~estimation errors and are not automatically of smaller order than the stochastic fluctuation of $\htheta$. Moreover, their laws are cumbersome because both \PCR~estimators depend on noisy design matrices and their estimated singular subspaces. A direct central limit argument would therefore require a delicate joint analysis of the two nonlinear first-stage procedures. 

Motivated by the semiparametric and debiased machine-learning literature on locally robust scores \citep{Newey1994, ChernozhukovEtAl2018DML, ChernozhukovEtAl2022LocallyRobust, ChernozhukovNeweySingh2022Automatic, ChernozhukovNeweySingh2022Riesz, HirshbergWager2021}, the key move is to reveal the \TWSF~estimator through an {\em orthogonal} score. Consider the augmented population functional 
\begin{align}
	\Sc(\ba, \boldsymbol{b}; \bq_{\alpha}, \bq_{\beta}) \coloneqq \left\langle \ba, \bbW^\top \boldsymbol{b} \right\rangle
	+ \left\langle \bq_{\beta}, \bby - \bbY^\top  \boldsymbol{b} \right\rangle
	+ \left\langle \bq_{\alpha}, \bbz - \bbZ^\top  \ba \right\rangle.
	\label{eq:theta.decomp}
\end{align}
At $(\ba, \boldsymbol{b}) = (\balpha^*, \bbeta^*)$, the two residual moments vanish by Corollary~\ref{cor:identification} with Propositions~\ref{prop:beta} and \ref{prop:alpha}. Consequently, $\Sc(\balpha^*, \bbeta^*; \bq_{\alpha}, \bq_{\beta}) = \theta$ for every choice of $(\bq_{\alpha}, \bq_{\beta})$. 
We choose these representers so that the score is first-order insensitive to perturbations in $(\balpha^*, \bbeta^*)$.
Differentiating with respect to the nuisance parameters yield the orthogonality requirements
\begin{align}
	\bbY \bq^*_{\beta} = \bbW \balpha^*,
	\quad
	\bbZ \bq^*_{\alpha} = \bbW^\top \bbeta^*.  
\end{align}
Assumption~\ref{assump:transport} ensures that these equations are feasible. Indeed, $\bbW \balpha^* \in \col(\bbW) \subseteq \col(\bbY_{\Ic_1, \pre})$, and thus, 
\begin{align}
	\bbY \bq^*_{\beta} = \bbY \bbY^\dagger \bbW \balpha^* = \bbW \balpha^*. \label{eq:riesz.exist.b}
\end{align}
Analogously, $\bbW^\top \bbeta^* \in \row(\bbW) \subseteq \row(\bbZ^\top)$, and hence,
\begin{align}
	\bbZ \bq^*_{\alpha} = \bbZ  \bbZ^\dagger \bbW^\top \bbeta^* = \bbW^\top \bbeta^*.  \label{eq:riesz.exist.a}
\end{align}
We take the canonical minimum $\ell_2$-norm solutions,
\begin{align}
	\bq^*_{\alpha} \coloneqq \bbZ^\dagger \bbW^\top \bbeta^*,
	\quad
	\bq^*_{\beta} &\coloneqq \bbY^\dagger \bbW \balpha^*.   \label{eq:riesz.proofs}
\end{align} 
These vectors are the Riesz representers of the two nuisance derivatives. They do not alter the population target. Instead, they express the forecast’s sensitivity to $\bbeta^*$ and $\balpha^*$ in the respective residual geometries, allowing the nuisance-linear terms to be canceled by residual corrections. Their norms also make the relevant geometry visible: by Lemma~\ref{lemma:pcr.spectra}, we recall \eqref{eq:wbeta.norm} and \eqref{eq:walpha} with Assumption~\ref{assump:bounded} to obtain 
\begin{align}
	\| \bq_{\alpha}^* \|_2 &\le \frac{\sqrt{L}}{\lambda_z},
	\quad
	\| \bq_{\beta}^* \|_2 \le \frac{\sqrt{N_1}}{\lambda_y}. \label{eq:riesz.bound}
\end{align}
Thus, the representers remain controlled when the forecast-relevant directions are well supported by the nonzero spectra of the two designs.

To transfer this population orthogonality to the empirical expansion, we introduce the proof-only projected representers 
\begin{align} 
	\tbq_{\alpha} \coloneqq \hbP_z \bq_{\alpha}^*,
	\quad
	\tbq_{\beta} \coloneqq \hbP_y \bq_{\beta}^*, \label{eq:riesz.proj.def}
\end{align}
where $\hbP_z \coloneqq \hbZ^\dagger \hbZ$ and $\hbP_y \coloneqq \hbY^\dagger \hbY$. 
We denote the corresponding estimation errors as 
\begin{align}
	\bDelta_{\tq_{\alpha}} \coloneqq \tbq_{\alpha} - \bq^*_{\alpha},
	\quad
	\bDelta_{\tq_{\beta}} \coloneqq \tbq_{\beta} - \bq^*_{\beta}, \label{eq:riesz.proj.est}
\end{align}
and primitive residual noises from the unit- and time-side population regressions as 
\begin{align}
	\bdelta_{\alpha} \coloneqq \bxi_z - \bXi_z^\top \balpha^*.
	\quad
	\bdelta_{\beta} \coloneqq \bxi_y - \bXi_y^\top \bbeta^*. 
	\label{eq:delta.noise}
\end{align} 
With this context in mind, our strategy is as follows: we will show that the estimation error separates into
\begin{align}
	\htheta - \theta = G + R. 
\end{align}
The leading term is a sum of three linear forms in mutually independent noise blocks; the remainder collects the products of \PCR~errors, representer-projection errors, and design perturbations. The remainder of the proof formalizes this separation. 
Below, we first state the key lemmas that aid in our proof of Theorem~\ref{thm:inference} in Appendix~\ref{sec:proofs.inference.lemmas}; we relegate their proofs to Appendix~\ref{sec:proofs.inference.lemmas.collect}. We then complete the proof of Theorem~\ref{thm:inference} in Appendix~\ref{sec:proofs.inference.final}. 


\subsection{Key Lemmas} \label{sec:proofs.inference.lemmas}

Lemma~\ref{lemma:riesz.0} controls the empirical projection of the Riesz representers. 
\begin{lemma} \label{lemma:riesz.0}
Let the setup of Corollary~\ref{cor:identification} hold. 
Further, let Assumption~\ref{assump:bounded} hold, as well as $k_y = r_y$ and $k_z = r_z$.
Then, conditional on $\Ec$, 
\begin{align}
	\| \bDelta_{\tq_{\alpha}} \|_2 
	\lesssim 
	\frac{\sqrt{L} \cdot \| \bXi_z \|_\eop}{\lambda_z^2},
	\quad 
	\| \bDelta_{\tq_{\beta}} \|_2 
	\lesssim 
	\frac{\sqrt{N_1} \cdot \| \bXi_y \|_\eop}{\lambda_y^2},
\end{align}
where $\bDelta_{\tq_{\alpha}}$ and $\bDelta_{\tq_{\beta}}$ are defined as in \eqref{eq:riesz.proj.est}. 
\end{lemma}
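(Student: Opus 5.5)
Since $\bq^*_{\alpha} = \bbZ^\dagger \bbW^\top \bbeta^*$ lies in $\row(\bbZ)$ (the range of $\bbZ^\dagger$), we have $\bq^*_{\alpha} = \bP_{V,z}\bq^*_{\alpha}$, where $\bP_{V,z} \coloneqq \bbZ^\dagger \bbZ$ is the orthogonal projector onto the row space of $\bbZ$. Hence
\begin{align}
	\bDelta_{\tq_{\alpha}} = \hbP_z \bq^*_{\alpha} - \bq^*_{\alpha} = -\hbP_z^\perp \bP_{V,z} \bq^*_{\alpha},
\end{align}
so $\| \bDelta_{\tq_{\alpha}} \|_2 \le \| \hbP_z^\perp \bP_{V,z} \|_\eop \cdot \| \bq^*_{\alpha} \|_2$. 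The unit side is identical, with $(\bbY, \hbY, \bXi_y, \lambda_y)$ in place of $(\bbZ, \hbZ, \bXi_z, \lambda_z)$. The argument therefore reduces to two ingredients: a subspace perturbation bound and a bound on the representer norm. It mirrors Term~2 in the proof of Lemma~\ref{lemma:pcr}.

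For the subspace factor, I would invoke Lemma~\ref{lemma:pcr.peturb.2} with $\bA = \bbZ$ and $\bX = \bZ$. This applies because $k_z = r_z = \rank(\bbZ)$, so $\hbZ = \HSVT(\bZ, r_z)$ is exactly the rank-$r$ truncation used there. The row-space projectors of $\hbZ$ and $\bbZ$ are precisely $\hbP_z$ and $\bP_{V,z}$, which gives $\| \hbP_z^\perp \bP_{V,z} \|_\eop \le 2\|\bXi_z\|_\eop/\lambda_z$. No probabilistic event or separation condition is needed, since that lemma is deterministic. On the unit side, the orientation needs care. Here $\hbP_y = \hbY^\dagger \hbY$ projects onto $\row(\hbY) \subseteq \Rb^{T_0}$, and $\bq^*_{\beta} = \bbY^\dagger(\cdot) \in \row(\bbY)$. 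So Lemma~\ref{lemma:pcr.peturb.2} is applied directly with $\bA = \bbY$ (not its transpose), and its proof goes through verbatim for either orientation.

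For the representer norm, I would reuse \eqref{eq:riesz.bound}, which gives $\|\bq^*_{\alpha}\|_2 \le \|\bbZ^\dagger\|_\eop \|\bbW^\top \bbeta^*\|_2 \le \sqrt{L}/\lambda_z$. This uses \eqref{eq:barwbeta.bound}, which in turn relies on Assumption~\ref{assump:transport} to replace $\bbeta^*$ by $\bbeta$, identifying $\bbW^\top\bbeta$ with the target's treated means. Boundedness (Assumption~\ref{assump:bounded}) then gives the $\sqrt{L}$ factor. Similarly, $\|\bq^*_{\beta}\|_2 \le \sqrt{N_1}/\lambda_y$ via \eqref{eq:barwalpha.bound}, using Proposition~\ref{prop:alpha}(a). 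Multiplying the two bounds yields $\|\bDelta_{\tq_{\alpha}}\|_2 \le 2\sqrt{L}\|\bXi_z\|_\eop/\lambda_z^2$, and the analogous bound for $\bDelta_{\tq_{\beta}}$.

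The only step requiring care is the first: checking that each representer lies in the correct population row space and that the projector orientation matches Lemma~\ref{lemma:pcr.peturb.2}. The rest is bookkeeping.
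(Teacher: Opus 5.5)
Your proposal is correct and follows the paper's proof essentially step for step. The paper likewise uses $\bq^*_{\beta} \in \row(\bbY)$ to write $\bDelta_{\tq_{\beta}} = -(\bI_{T_0} - \hbP_y)\bP_y \bq^*_{\beta}$, bounds the projector product via Lemma~\ref{lemma:pcr.peturb.2}, and bounds the representer norm via \eqref{eq:riesz.bound}, arriving at the same constant $2$, with the time side handled symmetrically.
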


\noindent Next, Lemma~\ref{lemma:decomp} gives the exact orthogonal decomposition of $\htheta - \theta$ that will set the stage for the following analyses. 
\begin{lemma} \label{lemma:decomp}
Let the setup of Lemma~\ref{lemma:riesz.0} hold. 
Define 
\begin{align}
	G &\coloneqq \left \langle \balpha^*, \bXi_w^\top \bbeta^* \right \rangle + \left \langle \bq^*_{\beta}, \bdelta_{\beta} \right\rangle + \left \langle \bq^*_{\alpha}, \bdelta_{\alpha} \right\rangle, \label{eq:lead}
	\\
	R &\coloneqq 
	 \left\langle \bDelta_{\alpha}, \bbW^\top \bDelta_{\beta} \right\rangle
	+\left \langle \bDelta_{\alpha}, \bXi_w^\top \bbeta^* \right\rangle
	+\left \langle \balpha^*, \bXi_w^\top \bDelta_{\beta} \right\rangle
	+ \left\langle \bDelta_{\alpha}, \bXi_w^\top \bDelta_{\beta} \right\rangle 
	\\ &\qquad
	-  \left\langle \bq^*_{\beta}, \bXi_y^\top \bDelta_{\beta} \right\rangle
	- \left\langle \bq^*_{\alpha}, \bXi_z^\top \bDelta_{\alpha} \right\rangle
	\\ &\qquad
	+ \left\langle \bDelta_{\tq_{\beta}}, \bdelta_{\beta} \right\rangle - \left\langle \bDelta_{\tq_{\beta}}, \bbY^\top \bDelta_{\beta} \right\rangle - \left\langle \bDelta_{\tq_{\beta}}, \bXi_y^\top \bDelta_{\beta} \right\rangle
	\\ &\qquad
	+\left \langle \bDelta_{\tq_{\alpha}}, \bdelta_{\alpha} \right\rangle - \left\langle \bDelta_{\tq_{\alpha}}, \bbZ^\top \bDelta_{\alpha} \right\rangle - \left\langle \bDelta_{\tq_{\alpha}}, \bXi_z^\top \bDelta_{\alpha} \right\rangle.  \label{eq:remain}
\end{align}
Then, conditional on $\Ec$, $\htheta - \theta = G + R$. 
\end{lemma}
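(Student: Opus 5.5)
The decomposition is an exact algebraic identity, so I would prove it by direct manipulation, conditional on $\Ec$ throughout. Start from the seven-term expansion \eqref{eq:general} in the proof of Theorem~\ref{thm:error}, which already uses Corollary~\ref{cor:identification} to write $\theta = \langle \balpha^*, \bbW^\top \bbeta^*\rangle$. Four of those terms appear verbatim in $R$ or $G$:
\begin{align}
\langle \bDelta_\alpha, \bbW^\top \bDelta_\beta\rangle, \quad \langle \bDelta_\alpha, \bXi_w^\top \bbeta^*\rangle, \quad \langle \balpha^*, \bXi_w^\top \bDelta_\beta\rangle, \quad \langle \bDelta_\alpha, \bXi_w^\top \bDelta_\beta\rangle,
\end{align}
together with $\langle \balpha^*, \bXi_w^\top \bbeta^*\rangle$, which goes into $G$. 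What remains is to rewrite the two nuisance-linear terms $\langle \bDelta_\alpha, \bbW^\top \bbeta^*\rangle$ and $\langle \balpha^*, \bbW^\top \bDelta_\beta\rangle$. I treat the time side; the unit side is symmetric, with $(\bZ,\bz,\balpha,\bq_\alpha)$ replaced by $(\bY,\by,\bbeta,\bq_\beta)$.

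\textbf{Step 1 (representer substitution).} By \eqref{eq:riesz.exist.a}, $\bbW^\top \bbeta^* = \bbZ \bq^*_\alpha$. Hence
\begin{align}
\langle \bDelta_\alpha, \bbW^\top \bbeta^*\rangle = \langle \bq^*_\alpha, \bbZ^\top \bDelta_\alpha\rangle .
\end{align}

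\textbf{Step 2 (empirical orthogonality).} The key fact is that $\tbq_\alpha = \hbP_z \bq^*_\alpha$ is orthogonal to the full-data residual, that is,
\begin{align}
\langle \tbq_\alpha, \bz - \bZ^\top \hbalpha\rangle = 0 .
\end{align}
This follows from two observations.
\begin{itemize}
\item The least-squares normal equations for $\hbalpha = \hbZ^{\top,\dagger}\bz$ say that $\bz - \hbZ^\top \hbalpha$ is orthogonal to $\col(\hbZ^\top) = \row(\hbZ)$, and $\tbq_\alpha$ lies in that space.
\item The \HSVT~tail $\bZ - \hbZ$ is built from the trailing singular triplets. Its right singular vectors are orthogonal to those of $\hbZ$, so $(\bZ - \hbZ)\tbq_\alpha = \bzero$. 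Therefore $\langle \tbq_\alpha, (\bZ-\hbZ)^\top \hbalpha\rangle = 0$, which lets me replace $\hbZ$ by $\bZ$ in the residual.
\end{itemize}

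\textbf{Step 3 (residual expansion).} Next I expand $\bz - \bZ^\top\hbalpha$. By Proposition~\ref{prop:alpha}(b), $\bbz = \bbZ^\top \balpha$. By Assumption~\ref{assump:transport}, as in the proof of Corollary~\ref{cor:identification}, $\bbZ^\top \balpha = \bbZ^\top \balpha^*$ because $\balpha - \balpha^*$ is orthogonal to $\row(\bbZ^\top)$. Writing $\bz = \bbz + \bxi_z$, $\bZ = \bbZ + \bXi_z$ and $\hbalpha = \balpha^* + \bDelta_\alpha$ gives
\begin{align}
\bz - \bZ^\top \hbalpha = \bdelta_\alpha - \bbZ^\top \bDelta_\alpha - \bXi_z^\top \bDelta_\alpha .
\end{align}
Pair this with $\tbq_\alpha = \bq^*_\alpha + \bDelta_{\tq_\alpha}$, use the orthogonality from Step 2, and solve for $\langle \bq^*_\alpha, \bbZ^\top\bDelta_\alpha\rangle$. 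The result is
\begin{align}
\langle \bq^*_\alpha, \bbZ^\top\bDelta_\alpha\rangle = \langle \bq^*_\alpha, \bdelta_\alpha\rangle - \langle \bq^*_\alpha, \bXi_z^\top\bDelta_\alpha\rangle + \langle \bDelta_{\tq_\alpha}, \bdelta_\alpha\rangle - \langle \bDelta_{\tq_\alpha}, \bbZ^\top\bDelta_\alpha\rangle - \langle \bDelta_{\tq_\alpha}, \bXi_z^\top\bDelta_\alpha\rangle .
\end{align}
The first term is the time-side piece of $G$. The other four are exactly the $\alpha$-side terms of $R$.

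\textbf{Step 4 (unit side and assembly).} The unit side runs the same way, using \eqref{eq:riesz.exist.b}, Proposition~\ref{prop:beta}(b), $\col(\bbW)\subseteq\col(\bbY)$, and the normal equations for $\hbbeta$. Summing the two sides with the four untouched terms and $\langle \balpha^*, \bXi_w^\top \bbeta^*\rangle$ yields $\htheta - \theta = G + R$. Lemma~\ref{lemma:riesz.0} is not needed for the identity; it matters only later, when bounding $R$.

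The main obstacle is Step 2. It requires care that the projector $\hbP_z$ projects onto the row space of $\hbZ$ in $\Rb^M$, which is the same space as the column space of the design $\hbZ^\top$ used in the regression. It also relies on the \HSVT~residual annihilating that space, so that the noisy design $\bZ$ can be substituted for $\hbZ$ in the residual at no cost.
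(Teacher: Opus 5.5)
Your proposal is correct and follows essentially the same route as the paper. Both arguments insert the vanishing residual corrections $\langle \tbq_{\beta}, \by - \bY^\top \hbbeta\rangle$ and $\langle \tbq_{\alpha}, \bz - \bZ^\top \hbalpha\rangle$, expand them around $(\balpha^*, \bbeta^*)$, and use $\bbY \bq^*_{\beta} = \bbW\balpha^*$ and $\bbZ\bq^*_{\alpha} = \bbW^\top\bbeta^*$ to cancel the two nuisance-linear terms of the seven-term expansion; your ``solve the orthogonality equation and substitute'' is just a rearrangement of the paper's ``add zero and cancel,'' and one small point is that $\bbZ^\top\balpha = \bbZ^\top\balpha^*$ (and likewise $\bbY^\top\bbeta = \bbY^\top\bbeta^*$) holds by the definition of $\balpha^*$ as a projection onto $\row(\bbZ^\top)$, so Assumption~\ref{assump:transport} is needed only for $\theta = \langle\balpha^*, \bbW^\top\bbeta^*\rangle$ and the feasibility of those two equations.
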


\noindent Lemma~\ref{lemma:lead} identifies the Gaussian law of the leading term $G$. 
\begin{lemma} \label{lemma:lead}
Let the setup of Theorem~\ref{thm:inference} hold. 
Then, conditional on $\Ec$, $G \sim \mathcal{N}(0, \Vc^2)$.
\end{lemma}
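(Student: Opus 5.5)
Under the Gaussian specialization of Assumption~\ref{assump:subg}, all noise entries are conditionally independent $\mathcal{N}(0,\sigma^2)$ given $\Ec$. The plan is to write $G$ in \eqref{eq:lead} as a sum of three linear forms, each in a disjoint block of noise, and show that each coefficient vector is $\Ec$-measurable. The vectors $\balpha^*, \bbeta^*, \bq^*_{\alpha}, \bq^*_{\beta}$ are built only from $\bbY$, $\bbZ$, $\bbW$, $\balpha$, $\bbeta$. All of these are deterministic given $\Ec$, since $\balpha,\bbeta$ can be chosen as functions of the latent factors, as in the proofs of Propositions~\ref{prop:beta}--\ref{prop:alpha}. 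Hence, conditional on $\Ec$, each term of $G$ is a fixed linear combination of jointly Gaussian mean-zero variables, and $G$ is itself Gaussian with mean zero. The work reduces to computing its variance.

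The three pieces and their variances are as follows.
\begin{enumerate}[label=(\roman*)]
\item The first term, $\langle \balpha^*, \bXi_w^\top \bbeta^*\rangle = \sum_{j,\ell} \beta^*_j \alpha^*_\ell (\bXi_w)_{j\ell}$, has weights $\beta^*_j\alpha^*_\ell$ and therefore variance $\sigma^2\|\balpha^*\|_2^2\|\bbeta^*\|_2^2$.
\item The second term is $\langle \bq^*_\beta, \bdelta_\beta\rangle = \langle \bq^*_\beta, \bxi_y\rangle - \langle \bbeta^*, \bXi_y \bq^*_\beta\rangle$. Here $\bxi_y$ (the target's pre-period noise) and $\bXi_y$ (the donors' pre-period noise) are independent. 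The first part has variance $\sigma^2\|\bq^*_\beta\|_2^2$. The second is a bilinear form with rank-one weights $\bbeta^*(\bq^*_\beta)^\top$, so its variance is $\sigma^2\|\bbeta^*\|_2^2\|\bq^*_\beta\|_2^2$. Together they give $\sigma^2\|\bq^*_\beta\|_2^2(1+\|\bbeta^*\|_2^2)$.
\item The third term, $\langle \bq^*_\alpha,\bdelta_\alpha\rangle = \langle \bq^*_\alpha, \bxi_z\rangle - \langle \balpha^*, \bXi_z\bq^*_\alpha\rangle$, is handled the same way. The one thing to check is that $\bz_{\tnext}$ and $\bZ_\lag$ occupy disjoint entries of the non-overlapping Page matrix \eqref{eq:Zlag.znext}, so $\bxi_z$ and $\bXi_z$ are independent with i.i.d.\ entries. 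This gives $\sigma^2\|\bq^*_\alpha\|_2^2(1+\|\balpha^*\|_2^2)$.
\end{enumerate}

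Next I would argue that the three blocks $\bXi_w$, $(\bxi_y,\bXi_y)$ and $(\bxi_z,\bXi_z)$ involve pairwise disjoint sets of noise variables. The pre-treatment block covers times $t\le T_0$. The training Page columns use the first $B-1$ post-treatment blocks. $\bW$ uses the last $L$ periods, which lie inside the final reserved block of length $L+1$. The same fact was already used in the proof of Theorem~\ref{thm:error}. Independence then gives $\Var(G\mid\Ec)$ as the sum of the three variances, which is exactly $\Vc^2$. Jointly Gaussian with mean zero and this variance, $G\sim\mathcal{N}(0,\Vc^2)$.

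The main obstacle is bookkeeping rather than analysis. The $\Ec$-measurability of $\balpha^*,\bbeta^*$ needs care if $\balpha,\bbeta$ are non-unique. This is resolved by noting that the projections $\balpha^* = \bbZ^{\top,\dagger}\bbZ^\top\balpha$ and $\bbeta^*$ equal $\bbZ^{\top,\dagger}\bbz$ and $\bbY^{\top,\dagger}\bby$, which are canonical functions of conditional means. The other point to check is the time-disjointness of the Page blocks from $\bW$, which relies on $B=T_1/(L+1)$ being an integer.
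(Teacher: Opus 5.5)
Your proposal is correct and follows essentially the same route as the paper: condition on $\Ec$, compute the Gaussian variance of each of the three linear forms, and add them using the mutual independence of $\bXi_w$, $(\bxi_y,\bXi_y)$ and $(\bxi_z,\bXi_z)$ from the Page construction. The only differences are cosmetic. The paper obtains the second and third variances by noting $\bdelta_{\beta} \sim \mathcal{N}(\bzero, \sigma^2(1+\|\bbeta^*\|_2^2)\bI_{T_0})$, which is equivalent to your split into two independent scalar pieces. Your remark that $\balpha^* = \bbZ^{\top,\dagger}\bbz$ and $\bbeta^* = \bbY^{\top,\dagger}\bby$ are canonical $\Ec$-measurable quantities is a useful detail the paper leaves implicit.
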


\noindent Finally, we establish a high probability bound on the remainder term $R$. 
\begin{lemma} \label{lemma:remainder}
Let the setup of Theorem~\ref{thm:inference} hold. 
Then, conditional on $\Ec$, with probability at least $1 - \Oc(\rho)$, $| R | \lesssim \Psi$.
\end{lemma}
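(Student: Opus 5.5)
The plan is to bound each of the thirteen terms of $R$ in Lemma~\ref{lemma:decomp} separately, working on the intersection of $\Gc_\noise$, $\Gc_\PCR$, and a finite collection of bilinear-form events. We then take a union bound. The events come from Lemma~\ref{lemma:subg_matrix} (for $\Gc_\noise$, with $t\asymp\log$ of the relevant dimension) and Theorem~\ref{prop:parameter.recovery} (for $\Gc_\PCR$), each failing with probability $\Oc(\rho)$. Throughout we use three deterministic bounds:
\begin{itemize}
\item the spectral bounds $\lambda_y^2\gtrsim N_1T_0/r_y$ and $\lambda_z^2\gtrsim LM/r_z$ from Assumption~\ref{assump:spectra};
\item the representer bounds \eqref{eq:riesz.bound} and \eqref{eq:alpha.beta.bound};
\item the projection bounds of Lemma~\ref{lemma:riesz.0}, which on $\Gc_\noise$ give $\|\bDelta_{\tq_\alpha}\|_2\lesssim\sqrt{L}\eta_z r_z/(LM)$, and analogously for $\bDelta_{\tq_\beta}$.
\end{itemize}

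\textbf{The $\bbW$/$\bXi_w$ terms.} The first four terms of $R$ are handled exactly as in the proof of Theorem~\ref{thm:error}. The product $\langle\bDelta_\alpha,\bbW^\top\bDelta_\beta\rangle\lesssim\sigma^2\Lambda_\alpha\Lambda_\beta$ is a pure-product term. The three $\bXi_w$ terms use Lemma~\ref{lemma:noise}, which applies because $\bXi_w$ is independent of $\Fc=\Ec\vee\sigma(\by,\bY,\bZ,\bz)$. This yields the $\varphi_{\max}(\sqrt{r_y}\Lambda_\alpha+\sqrt{r_z}\Lambda_\beta+\sigma\Lambda_\alpha\Lambda_\beta)/\sqrt{N_1L}$ contribution, i.e.\ $\Psi_{\alpha\beta}$. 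Note that the leading term $\langle\balpha^*,\bXi_w^\top\bbeta^*\rangle$ is now in $G$, not in $R$.

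\textbf{The cross terms $\langle\bq^*_\beta,\bXi_y^\top\bDelta_\beta\rangle$ and $\langle\bq^*_\alpha,\bXi_z^\top\bDelta_\alpha\rangle$.} These are the main obstacle. Since $\bDelta_\beta$ depends on $\bXi_y$, Lemma~\ref{lemma:noise} cannot be applied directly. I would expand $\bDelta_\beta$ using the proof of Lemma~\ref{lemma:pcr}:
$$\bDelta_\beta=\hbY^{\top,\dagger}(\bbY-\hbY)^\top\bbeta^*-\hbP_y^\perp\bbeta^*+\hbY^{\top,\dagger}\bxi_y.$$
\begin{itemize}
\item The part involving $\bxi_y$ is handled by conditioning on $\bXi_y$. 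Since $\bxi_y$ is independent of $\bXi_y$, Lemma~\ref{lemma:noise} applies with $\bu=\hbY^\dagger\bXi_y\bq^*_\beta$ (or a rotation through $\hbU$, as in Lemma~\ref{lemma:pcr.subg.rotate}). This gives a bound of order $\sigma\|\bXi_y\|_\txtop\|\bq^*_\beta\|_2\varphi_{\max}/\lambda_y$.
\item The remaining design-perturbation parts are bounded crudely: $\|\bq^*_\beta\|_2\|\bXi_y\|_\txtop\cdot$ (the first two PCR error pieces, $\lesssim\eta_y\sqrt{T_0}/\lambda_y^2$).
\end{itemize}
After substituting $\eta_y$ and $\lambda_y$, these products should reduce to $\Omega_\beta\{\sqrt{r_yT_0}/\min\{\sqrt{N_1},\sqrt{T_0}\}+\varphi_{\max}\}$. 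The $\sqrt{r_yr_z}$ and $\Lambda_\alpha$ pieces of $\Omega_\beta$ arise because $\|\bq^*_\beta\|_2$ involves $\bbW\balpha^*$. Matching these exact envelope forms is the step I expect to require the most bookkeeping. If the crude operator-norm bound is too loose, I would instead use Lemma~\ref{lemma:pseudoinverse.stewart} to split $\hbY^\dagger-\bbY^\dagger$ into projector pieces and bound each piece separately. The $\alpha$-side is symmetric, with $(\bZ,L,M)$ in place of $(\bY,N_1,T_0)$.

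\textbf{The projection-error terms.} For $\langle\bDelta_{\tq_\beta},\bdelta_\beta\rangle$: $\bDelta_{\tq_\beta}=-\hbP_y^\perp\bq^*_\beta$ is $\sigma(\bXi_y)$-measurable and $\bxi_y$ is independent of it. So this term equals $\langle\bDelta_{\tq_\beta},\bxi_y\rangle$ minus $\langle\bXi_y\bDelta_{\tq_\beta},\bbeta^*\rangle$.
\begin{itemize}
\item The first piece is bounded by Lemma~\ref{lemma:noise} as $\lesssim\sigma\varphi_{\max}\|\bDelta_{\tq_\beta}\|_2$.
\item The second piece is bounded by $\|\bXi_y\|_\txtop\|\bDelta_{\tq_\beta}\|_2\|\bbeta^*\|_2$.
\end{itemize}
The terms $\langle\bDelta_{\tq_\beta},\bbY^\top\bDelta_\beta\rangle$ and $\langle\bDelta_{\tq_\beta},\bXi_y^\top\bDelta_\beta\rangle$ are bounded by Cauchy--Schwarz, using $\|\bbY\|_\txtop\le\sqrt{N_1T_0}$ on $\Gc_\PCR$. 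Each product is then checked to be dominated by $\Psi_\beta$, and similarly by $\Psi_\alpha$ on the $\alpha$-side.

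Finally, summing all the bounds and applying the union bound over the $\Oc(1)$ events gives $|R|\lesssim\Psi$ with probability $1-\Oc(\rho)$.
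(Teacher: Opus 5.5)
The overall architecture is the paper's: bound each term of $R$ from Lemma~\ref{lemma:decomp} on $\Gc_\noise\cap\Gc_\PCR$, intersected with finitely many conditional Hoeffding events, then take a union bound. One of your domination checks, however, fails as written.

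\textbf{The failing step.} For $\langle\bDelta_{\tq_\beta},\bbY^\top\bDelta_\beta\rangle$ you use $\|\bbY\|_\txtop\le\sqrt{N_1T_0}$. Combined with $\|\bDelta_{\tq_\beta}\|_2\lesssim\eta_y\sqrt{N_1}/\lambda_y^2$ and $\|\bDelta_\beta\|_2\lesssim\sigma\Lambda_\beta/\sqrt{N_1}$, this gives
\[
\frac{\sigma\eta_y\sqrt{N_1T_0}\,\Lambda_\beta}{\lambda_y^2}\asymp\frac{\sigma^2 r_y\Lambda_\beta}{\min\{\sqrt{N_1},\sqrt{T_0}\}}.
\]
This is a factor $\sqrt{r_y}$ above what $\Psi$ can absorb in general. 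Take $N_1=T_0=L=M=n$, $r_z$ bounded and $\log n\ll r_y$. Then $\Psi\asymp\sigma^2 r_y(\sqrt{r_y}+\varphi_{\max})/n$, whereas your bound is of order $\sigma^2 r_y^2/n$. So ``each product is dominated by $\Psi_\beta$'' is false for this product. The same happens for its $\alpha$-side twin with $\|\bbZ\|_\txtop\le\sqrt{LM}$ when $r_y$ is bounded and $r_z$ grows.

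\textbf{The fix.} Use the condition-number half of Assumption~\ref{assump:spectra}, $\|\bbY\|_\txtop\le\lambda_y/c_y$ (the paper's \eqref{eq:pop.op}). This turns the term into $\sigma\eta_y\Lambda_\beta/\lambda_y$. Equivalently, write the term as $-\langle\bq^*_\beta,\hbP_y^\perp\bbY^\top\bDelta_\beta\rangle$, use $\hbP_y^\perp\bbY^\top=\hbP_y^\perp(\bbY-\hbY)^\top$, and apply Lemma~\ref{lemma:pcr.peturb.1}.

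\textbf{The cross terms need no refinement.} The terms you single out as the main obstacle are handled in one line. The paper writes, on $\Gc_\noise\cap\Gc_\PCR$,
\[
|\langle\bq^*_\beta,\bXi_y^\top\bDelta_\beta\rangle|\le\|\bq^*_\beta\|_2\|\bXi_y\|_\txtop\|\bDelta_\beta\|_2\lesssim\frac{\sigma\eta_y\Lambda_\beta}{\lambda_y}\asymp\frac{\sigma^2\sqrt{r_y}\Lambda_\beta}{\min\{\sqrt{N_1},\sqrt{T_0}\}},
\]
and this already sits inside $\sigma^2\Psi_\beta$. Its $r_y^{3/2}/\min\{N_1,T_0\}$ part is dominated by $\frac{r_y}{\min\{N_1,T_0\}}\cdot\frac{\sqrt{r_yT_0}}{\min\{\sqrt{N_1},\sqrt{T_0}\}}$. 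Its logarithmic part is dominated by $\frac{r_y}{\min\{N_1,T_0\}}\varphi_{\max}$. Your expansion of $\bDelta_\beta$, with a conditional Hoeffding step for the $\bxi_y$ piece, is valid but gives the same order, so it buys nothing.

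\textbf{There is no envelope to ``match''.} $\Psi$ is only an upper bound, so you need domination, not exact reproduction of its terms.
\begin{itemize}
\item The $\Lambda_\alpha$ piece of $\Omega_\beta$ (or the $\Lambda_\beta$ piece of $\Omega_\alpha$) is what absorbs the pure product $\sigma^2\Lambda_\alpha\Lambda_\beta$. You leave that product unassigned, and it is not part of $\Psi_{\alpha\beta}$.
\item The $\sqrt{r_yr_z}$ piece is slack for this lemma.
\item Neither piece arises from $\bbW\balpha^*$ entering $\bq^*_\beta$, since $\|\bq^*_\beta\|_2\le\sqrt{N_1}/\lambda_y$ involves neither $r_z$ nor $\Lambda_\alpha$.
\end{itemize}
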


\subsection{Completing Proof of Theorem~\ref{thm:inference}} \label{sec:proofs.inference.final}

\begin{proof}
Condition on $\Ec$. 
By Lemma~\ref{lemma:decomp}, 
\begin{align}
	\htheta - \theta = G + R. 
\end{align}
Lemma~\ref{lemma:lead} gives $G \sim \mathcal{N}(0, \Vc^2)$, and thus $G / \Vc \sim \mathcal{N}(0,1)$. 
At the same time, Lemma~\ref{lemma:remainder} states that
\begin{align}
	\Pb \left( | R | \le C \Psi \mid \Ec \right) \ge 1 - \Oc(\rho)
\end{align}
for a sufficiently large constant $C >0$. 
Therefore, for any $\epsilon > 0$,
\begin{align}
	\Pb \left( \frac{ | R | }{\Vc} > \epsilon \mid \Ec \right)
	&\le C \rho + \mathds{1}\left\{ \frac{C \Psi}{\Vc} > \epsilon \right\}. 
\end{align}
Because $\rho \rightarrow 0$ as $N_1, T_0, L, M \rightarrow \infty$ and $\Psi / \Vc \rightarrow 0$ by assumption, it follows that $R / \Vc \xrightarrow{p} 0$. 
Therefore, combining the above gives  
\begin{align}
	\frac{\htheta - \theta}{\Vc}
	&= \frac{G}{\Vc} + \frac{R}{\Vc}
	= \frac{G}{\Vc} + o_p(1), 
\end{align}
and conditional Slutsky gives
\begin{align}
	\frac{\htheta - \theta}{\Vc} \rightsquigarrow \mathcal{N}(0,1). 
\end{align}
The proof is complete. 
\end{proof}

\subsection{Proof of Key Lemmas} \label{sec:proofs.inference.lemmas.collect} 

\subsubsection{Proof of Lemma~\ref{lemma:riesz.0}}

\begin{proof}
Condition on $\Ec$. 
Since $\bq^*_{\beta} \in \row(\bbY)$, we obtain 
\begin{align}
	\bDelta_{\tq_{\beta}} &= - \left(\bI_{T_0} - \hbP_y \right) \cdot \bP_y \bq^*_{\beta},
\end{align}
where $\bP_y \coloneqq \bbY^\dagger \bbY$. 
Taking norms, 
\begin{align}
	\| \bDelta_{\tq_{\beta}} \|_2 &\le \| \hbP_y^\perp \bP_y \|_\txtop \cdot \| \bq^*_{\beta} \|_2
	\\ &\le \frac{2  \| \bXi_y \|_\txtop}{\lambda_y} \cdot \| \bq^*_{\beta} \|_2  && \because \text{Lemma~\ref{lemma:pcr.peturb.2}}
	\\ &\le \frac{2 \sqrt{N_1} \| \bXi_y \|_\txtop}{\lambda_y^2}. && \because \text{\eqref{eq:riesz.bound}} 
\end{align}
By a symmetrical argument, we also conclude 
\begin{align}
	\| \bDelta_{\tq_{\alpha}} \|_2 &\le  \frac{2 \sqrt{L} \| \bXi_z \|_\txtop}{\lambda_z^2}. 
\end{align}
This concludes the proof. 
\end{proof} 

\subsubsection{Proof of Lemma~\ref{lemma:decomp}}

\begin{proof}
Condition on $\Ec$. 
Motivated by the population score in \eqref{eq:theta.decomp}, we consider a sample argument that implements the same idea using the proof-only projections of the population representers in \eqref{eq:riesz.proj.def} onto the empirical \PCR~row spaces. This yields, 
\begin{align}
	\htheta + \left \langle \tbq_{\beta}, \by - \bY^\top \hbbeta \right \rangle + \left \langle \tbq_{\alpha}, \bz - \bZ^\top \hbalpha \right \rangle. 
\end{align}
By the construction of \eqref{eq:riesz.proj.def} and the facts that $\bY^\top \hbbeta = \hbY^\top \hbbeta$ and $\bZ^\top \hbalpha = \hbZ^\top \hbalpha$, it follows that 
\begin{align}
	\left \langle \tbq_{\alpha}, \bz - \bZ^\top \hbalpha \right \rangle = 0,
	\quad
	\left \langle \tbq_{\beta}, \by - \bY^\top \hbbeta \right \rangle = 0. 
\end{align}
That is, the residual corrections are exactly zero by the \PCR~normal equations, and hence may be added to $\htheta$ without any alteration. Therefore, 
\begin{align}
	\htheta = \htheta + \left \langle \tbq_{\beta}, \by - \bY^\top \hbbeta \right \rangle + \left \langle \tbq_{\alpha}, \bz - \bZ^\top \hbalpha \right \rangle. 
\end{align}
To enable further progress, note that 
\begin{align}
	\by - \bY^\top \hbbeta &= 
	\left(\bby + \bxi_y \right) - \left(\bbY + \bXi_y\right)^\top \left(\bbeta^* + \bDelta_{\beta}\right)
	\\
	&= \bxi_y - \bXi_y^\top \bbeta^* - \bbY^\top \bDelta_{\beta} - \bXi_y^\top \bDelta_{\beta} && \because \bby = \bbY^\top \bbeta^*
	\\
	&= \bdelta_{\beta} - \bbY^\top \bDelta_{\beta} - \bXi_y^\top \bDelta_{\beta}. 
\end{align}
Accordingly, we have
\begin{align}
	\left\langle \tbq_{\beta}, \by - \bY^\top \hbbeta \right\rangle
	&= \left(\bq^*_{\beta} + \bDelta_{\tq_{\beta}} \right)^\top \left(\bdelta_{\beta} - \bbY^\top \bDelta_{\beta} - \bXi_y^\top \bDelta_{\beta} \right)
	\\
	&= \left \langle \bq^*_{\beta}, \bdelta_{\beta} \right\rangle - \left\langle \bq^*_{\beta}, \bbY^\top \bDelta_{\beta} \right\rangle - \left\langle \bq^*_{\beta}, \bXi_y^\top \bDelta_{\beta} \right\rangle 
	\\ &\qquad + \left\langle \bDelta_{\tq_{\beta}}, \bdelta_{\beta} \right\rangle - \left\langle \bDelta_{\tq_{\beta}}, \bbY^\top \bDelta_{\beta} \right\rangle - \left\langle \bDelta_{\tq_{\beta}}, \bXi_y^\top \bDelta_{\beta} \right\rangle. \label{eq:unit.correct}
\end{align}
Following similar arguments, we arrive at the decomposition
\begin{align}
	\left\langle \tbq_{\alpha}, \bz - \bZ^\top \hbalpha \right \rangle
	&= \left(\bq^*_{\alpha} + \bDelta_{\tq_{\alpha}} \right)^\top \left(\bdelta_{\alpha} - \bbZ^\top \bDelta_{\alpha} - \bXi_z^\top \bDelta_{\alpha} \right)
	\\
	&= \left\langle \bq^*_{\alpha}, \bdelta_{\alpha} \right \rangle - \left\langle \bq^*_{\alpha}, \bbZ^\top \bDelta_{\alpha}  \right\rangle - \left\langle \bq^*_{\alpha}, \bXi_z^\top \bDelta_{\alpha} \right \rangle 
	\\ &\qquad + \left\langle \bDelta_{\tq_{\alpha}}, \bdelta_{\alpha} \right\rangle - \left\langle \bDelta_{\tq_{\alpha}}, \bbZ^\top \bDelta_{\alpha} \right\rangle - \left\langle \bDelta_{\tq_{\alpha}}, \bXi_z^\top \bDelta_{\alpha} \right\rangle. \label{eq:time.correct}
\end{align}
Further, by \eqref{eq:riesz.exist.b}, 
\begin{align}
	\left \langle \bq^*_{\beta}, \bbY^\top \bDelta_{\beta} \right \rangle
	&= \left\langle \bDelta_{\beta}, \bbY \bq^*_{\beta} \right\rangle
	= \left\langle \bDelta_{\beta}, \bbW \balpha^* \right\rangle
	= \left\langle \balpha^*, \bbW^\top \bDelta_{\beta} \right\rangle. 
	\label{eq:cancel.1}
\end{align}
Similarly, \eqref{eq:riesz.exist.a} yields
\begin{align}
	\left\langle \bq^*_{\alpha}, \bbZ^\top \bDelta_{\alpha} \right \rangle
	= \left \langle \bDelta_{\alpha}, \bbW^\top \bbeta^* \right \rangle. 
	\label{eq:cancel.2}
\end{align}
Therefore, adding \eqref{eq:general}, \eqref{eq:unit.correct}, and \eqref{eq:time.correct}, and applying the cancellations \eqref{eq:cancel.1} and \eqref{eq:cancel.2}, gives our desired result, whereby the two troublesome nuisance-linear terms cancel algebraically.
\end{proof}

\subsubsection{Proof of Lemma~\ref{lemma:lead}}

\begin{proof}
Condition on $\Ec$. 
Recall from \eqref{eq:lead} that 
\begin{align}
	G &\coloneqq \left \langle \balpha^*, \bXi_w^\top \bbeta^* \right \rangle + \left \langle \bq^*_{\beta}, \bdelta_{\beta} \right \rangle + \left \langle \bq^*_{\alpha}, \bdelta_{\alpha} \right \rangle.  
\end{align}
We will analyze each term independently. 

\bigskip \noindent {\em Term 1: $\left \langle \balpha^*, \bXi_w^\top \bbeta^* \right \rangle$}. 
Conditional on $\Ec$, Assumption~\ref{assump:subg} with homoskedastic Gaussian noise implies $\left \langle \balpha^*, \bXi_w^\top \bbeta^* \right \rangle$ is a mean-zero normal random variable with variance
\begin{align}
	\Var\left( \left \langle \balpha^*, \bXi_w^\top \bbeta^* \right \rangle \right)
	= \sigma^2 \cdot \| \balpha^* \|_2^2 \cdot \| \bbeta^* \|_2^2. 
	\label{eq:ln.1}
\end{align}

\noindent {\em Term 2: $\left \langle \bq^*_{\beta}, \bdelta_{\beta} \right \rangle$}. 
Inserting \eqref{eq:delta.noise} and expanding terms, we have
\begin{align}
	\left \langle \bq^*_{\beta}, \bdelta_{\beta} \right \rangle &= 
	\left \langle \bq^*_{\beta}, \bxi_y - \bXi_y^\top \bbeta^* \right \rangle.
\end{align}
Assumption~\ref{assump:subg} under Gaussianity yields
\begin{align}
	\bxi_y \sim \mathcal{N}\left(\bzero, \sigma^2 \bI_{T_0} \right),
	\quad
	\bXi_y^\top \bbeta^* \sim \mathcal{N}\left(\bzero, \sigma^2 \cdot \| \bbeta^* \|_2^2 \cdot \bI_{T_0} \right). 
\end{align}
Because $\bxi_y$ and $\bXi_y$ are independent, 
\begin{align}
	\bdelta_{\beta} \sim \mathcal{N}\left( \bzero, \sigma^2\cdot(1 + \| \bbeta^* \|_2^2) \cdot \bI_{T_0} \right),
\end{align}
and hence, $\left \langle \bq^*_{\beta}, \bdelta_{\beta} \right \rangle$ is a mean-zero normal random variable with 
\begin{align}
	\Var\left(\left \langle \bq^*_{\beta}, \bdelta_{\beta} \right \rangle \right) = \sigma^2 \cdot \| \bq^*_{\beta} \|_2^2 \cdot ( 1 + \| \bbeta^* \|_2^2 ). 
	\label{eq:ln.2}
\end{align}

\noindent {\em Term 3: $\left \langle \bq^*_{\alpha}, \bdelta_{\alpha} \right \rangle$}. 
Following the arguments that led to \eqref{eq:ln.2}, we conclude that $\left \langle \bq^*_{\alpha}, \bdelta_{\alpha} \right \rangle$ is a mean-zero normal random variable with 
\begin{align}
	\Var\left(\left \langle \bq^*_{\alpha}, \bdelta_{\alpha} \right \rangle \right) = \sigma^2 \cdot \| \bq^*_{\alpha} \|_2^2 \cdot (1 + \| \balpha^* \|_2^2). 
\end{align}

\noindent {\em Putting everything together.}
By Assumption~\ref{assump:subg} under Gaussianity and the Page construction, $\bXi_w$, $(\bxi_y, \bXi_y)$, and $(\bxi_z, \bXi_z)$ are mutually independent noise blocks, and hence summing their variances gives $\Vc^2$ and our desired result. 
\end{proof}

\subsubsection{Proof of Lemma~\ref{lemma:remainder}}

\begin{proof}
Condition on $\Ec$. 
Throughout, let 
\begin{align}
	\varphi_\rho &\coloneqq \sqrt{C_\varphi \log(C_\varphi / \rho)}. \label{eq:varphi.update}
\end{align}
We state several useful bounds on the terms above, which will help us simplify the terms later on. 
Following the arguments that led to \eqref{eq:eta.lambda}, we leverage \eqref{eq:sep.y} and \eqref{eq:sep.z} to conclude
\begin{align}
	\eta_y \lesssim \lambda_y,
	\quad
	\eta_z \lesssim \lambda_z. \label{eq:eta.lambda.simple}
\end{align}
Moreover, Assumptions~\ref{assump:bounded} and ~\ref{assump:spectra} yield
\begin{align}
	&\lambda_y^2 \asymp \frac{N_1 T_0}{r_y}, 
	\quad 
	\lambda_z^2 \asymp \frac{LM}{r_z}, \label{eq:lmda.lb} 
	\\
	&\| \bbY \|_\txtop  \lesssim \lambda_y, 
	\quad
	\| \bbZ \|_\txtop \lesssim \lambda_z. \label{eq:pop.op}
\end{align}
%
%
We proceed to bound the individual error terms in $R$ based on \eqref{eq:remain}. 

\bigskip \noindent {\em Deterministic $\bbW$-term}: 
Let $\Gc_{\PCR, \alpha}$ and $\Gc_{\PCR, \beta}$ be defined as in \eqref{eq:event.pcr}. 
On $\Gc_\PCR \coloneqq \Gc_{\PCR, \alpha} \cap \Gc_{\PCR, \beta}$, we obtain 
\begin{align}
	\left| \left \langle \bDelta_{\alpha}, \bbW^\top \bDelta_{\beta} \right \rangle \right| 
	&\le \| \bDelta_{\alpha} \|_2 \cdot \| \bbW \|_\txtop \cdot \| \bDelta_{\beta} \|_2
	\lesssim \sigma^2 \  \Lambda_{\alpha} \Lambda_{\beta}. 
	\label{eq:remain.t1}
\end{align} 
Note that the above uses \eqref{eq:barw.bound.op}. 
By Theorem~\ref{prop:parameter.recovery}, note that 
\begin{align}
	\Pb\left(\Gc_\PCR^c \mid \Ec \right) \lesssim (N_1T_0)^{-10} + (LM)^{-10} \le \rho. \label{eq:inf.prob.pcr} 
\end{align}

\noindent {\em Stochastic $\bXi_w$-terms}: 
Define three $\bXi_w$-events as  
\begin{align}
	\Gc_{\Delta \Delta} &\coloneqq \left\{ \left| \left \langle \bDelta_{\alpha}, \bXi_w^\top \bDelta_{\beta} \right \rangle \right| \le  
	\frac{C_w \cdot \sigma^3 \varphi_\rho \cdot \Lambda_{\alpha} \Lambda_{\beta}}{\sqrt{N_1L}} \right\},
	\\
	\Gc_{\Delta \beta} &\coloneqq \left\{ \left| \left \langle \bDelta_{\alpha}, \bXi_w^\top \bbeta^* \right \rangle \right| \le  
	\frac{C_w \cdot \sigma^2 \varphi_\rho \cdot \sqrt{T_0} \Lambda_{\alpha}}{\sqrt{L} \lambda_y} \right\},
	\\
	\Gc_{\alpha \Delta} &\coloneqq \left\{\left| \left \langle \balpha^*, \bXi_w^\top \bDelta_{\beta} \right \rangle \right| \le  
	\frac{C_w \cdot \sigma^2 \varphi_\rho \cdot \sqrt{M} \Lambda_{\beta}}{\sqrt{N_1} \lambda_z} \right\},
	\label{eq:inf.term.w}
\end{align}
where $C_w > 0$ is a sufficiently large constant, and $\Gc_w \coloneqq \Gc_{\Delta \Delta} \cap \Gc_{\Delta \beta} \cap \Gc_{\alpha \Delta}$. 
On $\Gc_\PCR \cap \Gc_w$, the $\bXi_w$-block stochastic terms satisfy the bounds in \eqref{eq:inf.term.w}. 
Moreover, following the arguments that led to \eqref{eq:rho.temp}, we obtain 
\begin{align}
	\Pb\left(\Gc_\PCR \cap \Gc_w^c \mid \Ec \right) \lesssim \rho. \label{eq:inf.prob.w}
\end{align}

\noindent {\em Riesz errors: unit-side}. 
Define $\Gc_\noise$ as in \eqref{eq:event.noise}, and define 
\begin{align}
	\Gc_{\xi \beta} \coloneqq \left\{ | \left \langle \bDelta_{\tq_{\beta}}, \bxi_y \right \rangle | \le C_\xi \cdot \sigma \varphi_\rho \cdot \tQ_{\beta} \right\},
\end{align}
where $C_\xi > 0$ is a sufficiently large constant and 
\begin{align}
	\tQ_{\beta} \coloneqq \frac{\eta_y\sqrt{N_1}}{\lambda_y^2}. \label{eq:tQ.beta} 
\end{align}
On $\Gc_{\xi \beta}$, we have 
\begin{align}
	\left| \left \langle \bDelta_{\tq_{\beta}}, \bdelta_{\beta} \right \rangle \right| 
	&\le \left| \left \langle \bDelta_{\tq_{\beta}}, \bxi_y \right \rangle \right| +\left | \left \langle \bDelta_{\tq_{\beta}}, \bXi_y^\top \bbeta^* \right \rangle \right|
	\lesssim \sigma \varphi_\rho \cdot \tQ_{\beta} + \left| \left \langle \bDelta_{\tq_{\beta}}, \bXi_y^\top \bbeta^* \right \rangle \right|. 
\end{align}
To control the second term, observe that on $\Gc_{\noise, y}$, Lemma~\ref{lemma:riesz.0} gives $\| \bDelta_{\tq_{\beta}} \|_2 \le \tQ_{\beta}$. 
Moreover, on $\Gc_\noise$, \eqref{eq:alpha.beta.bound} states 
\begin{align}
	\| \bXi_y \|_\txtop \cdot \| \bbeta^* \|_2 \le \frac{\eta_y \sqrt{T_0}}{\lambda_y}. 
\end{align}
Combining the above, we conclude that on $\Gc_{\noise, y} \cap \Gc_{\xi \beta}$, 
\begin{align}
	\left| \left \langle \bDelta_{\tq_{\beta}}, \bdelta_{\beta} \right \rangle \right| &\lesssim \tQ_{\beta} \cdot \left\{ \sigma \varphi_\rho + \frac{\eta_y \sqrt{T_0}}{\lambda_y} \right\}. 
	\label{eq:t7.1}
\end{align} 
Define the sigma-field $\Hc_y \coloneqq \Ec \vee \sigma(\bY)$. 
Then, $\bDelta_{\tq_{\beta}}$ is $\Hc_y$-measurable and $\bxi_y$ is independent of $\Hc_y$, conditional on $\Ec$. 
Moreover, conditional on $\Hc_y$, $\left \langle \bDelta_{\tq_{\beta}}, \bxi_y \right \rangle \sim \mathcal{N}(0, \sigma^2 \| \bDelta_{\tq_{\beta}} \|_2^2)$. 
Thus, by Lemma~\ref{lemma:hoeffding} and the tower property, we have 
\begin{align}
	\Pb\left( \Gc_{\noise, y}  \cap \Gc_{\xi \beta}^c \mid \Ec \right) 
	&= \Ex\left[ \mathds{1}\left\{\Gc_{\noise, y} \right\} \cdot \Pb \left( \Gc^c_{\xi \beta} \mid \Hc_y \right) \mid \Ec \right]
	\le 2 \cdot \exp(-\varphi_\rho^2) \lesssim \rho. 
	\label{eq:inf.prob.riesz.unit} 
\end{align} 
Continuing, note that on $\Gc_\noise \cap \Gc_{\PCR}$, \eqref{eq:pop.op} yields  
\begin{align}
	\left| \left \langle \bDelta_{\tq_{\beta}}, \bbY^\top \bDelta_{\beta} \right \rangle \right|
	&\le \| \bDelta_{\tq_{\beta}} \|_2 \cdot \| \bbY \|_\txtop \cdot \| \bDelta_{\beta}\|_2
	\lesssim \tQ_{\beta} \cdot \lambda_y \cdot \frac{\sigma \Lambda_{\beta}}{\sqrt{N_1}}, 
	\label{eq:t7.2}
\end{align}
as well as 
\begin{align}
	\left| \left \langle \bDelta_{\tq_{\beta}}, \bXi^\top_y \bDelta_{\beta} \right \rangle \right|
	&\le \| \bDelta_{\tq_{\beta}} \|_2 \cdot \| \bXi_y \|_\txtop \cdot \| \bDelta_{\beta}\|_2
	\lesssim \tQ_{\beta} \cdot \eta_y \cdot \frac{\sigma \Lambda_{\beta}}{\sqrt{N_1}}. 
	\label{eq:t7.3}
\end{align}
As a result, \eqref{eq:t7.1}, \eqref{eq:t7.2}, and \eqref{eq:t7.3} assert that on $\Gc_\noise \cap \Gc_{\PCR} \cap \Gc_{\xi \beta}$, 
\begin{align}
	\left| \left \langle \bDelta_{\tq_{\beta}}, \bdelta_{\beta} \right \rangle \right| + \left| \left \langle \bDelta_{\tq_{\beta}}, \bbY^\top \bDelta_{\beta} \right \rangle \right| + 
	\left| \left \langle \bDelta_{\tq_{\beta}}, \bXi^\top_y \bDelta_{\beta} \right \rangle \right|
	&\lesssim 
	\tQ_{\beta} \cdot \left\{ \sigma \varphi_\rho + \frac{\eta_y \sqrt{T_0}}{\lambda_y} + \frac{\sigma \lambda_y \Lambda_{\beta}}{\sqrt{N_1}} \right\}. 
	\label{eq:remain.t7}
\end{align}
Further, on $\Gc_\noise \cap \Gc_\PCR$, applying \eqref{eq:riesz.bound} gives 
\begin{align}
	\left| \left \langle \bq^*_{\beta}, \bXi_y^\top \bDelta_{\beta} \right \rangle \right| 
	&\le \| \bq^*_{\beta} \|_2 \cdot \| \bXi_y \|_\txtop \cdot \| \bDelta_{\beta} \|_2
	\lesssim \frac{\sigma \eta_y \Lambda_{\beta}}{\lambda_y}. 
	\label{eq:remain.t5}
\end{align} 
Using Lemma~\ref{lemma:subg_matrix}, \eqref{eq:inf.prob.pcr}, and taking a union bound gives 
\begin{align}
	\Pb \left( (\Gc_\PCR \cap \Gc_\noise)^c \mid \Ec \right) 
	&\le \Pb \left( \Gc^c_\PCR \mid \Ec \right) + \Pb \left( \Gc^c_\noise \mid \Ec \right) 
	\lesssim \rho. 	
	\label{eq:inf.prob.noise} 
\end{align}

\noindent {\em Riesz errors: time-side}. 
Following the arguments above, we first define the event 
\begin{align}
	\Gc_{\xi \alpha} \coloneqq \left\{ | \left \langle \bDelta_{\tq_{\alpha}}, \bxi_z \right \rangle | \le C_\xi \cdot \sigma \varphi_\rho \cdot \tQ_{\alpha} \right\},
\end{align}
where
\begin{align}
	\tQ_{\alpha} \coloneqq \frac{\eta_z \sqrt{L}}{\lambda_z^2}. \label{eq:tQ.alpha} 
\end{align}
On $\Gc_\noise \cap \Gc_{\PCR} \cap \Gc_{\xi \alpha}$, we conclude  
\begin{align}
	\left| \left \langle \bDelta_{\tq_{\alpha}}, \bdelta_{\alpha} \right \rangle \right| + \left| \left \langle \bDelta_{\tq_{\alpha}}, \bbZ^\top \bDelta_{\alpha} \right \rangle \right| + 
	\left| \left \langle \bDelta_{\tq_{\alpha}}, \bXi^\top_z \bDelta_{\alpha} \right \rangle \right|
	&\lesssim 
	\tQ_{\alpha} \cdot \left\{ \sigma \varphi_\rho + \frac{\eta_z \sqrt{M}}{\lambda_z} + \frac{\sigma \lambda_z \Lambda_{\alpha}}{\sqrt{L}} \right\}.
	\label{eq:remain.t8}
\end{align}
%
Define the sigma-field $\Hc_z \coloneqq \Ec \vee \sigma(\bZ)$.
Then, $\bDelta_{\tq_{\alpha}}$ is $\Hc_z$-measurable, while $\bxi_z$ is independent of $\Hc_z$, conditional on $\Ec$. 
Mirroring the logic behind \eqref{eq:inf.prob.riesz.unit}, we assert 
\begin{align}
	\Pb\left( \Gc_{\noise, z} \cap \Gc_{\xi \alpha}^c \mid \Ec \right) \lesssim \rho. 
	\label{eq:inf.prob.riesz.time} 
\end{align}
Finally, on $\Gc_\noise \cap \Gc_\PCR$, 
\begin{align}
	\left| \left \langle \bq^*_{\alpha}, \bXi_z^\top \bDelta_{\alpha} \right \rangle \right| 
	&\lesssim \frac{\sigma \eta_z \Lambda_{\alpha}}{\lambda_z}. 
	\label{eq:remain.t6}
\end{align}  

\noindent {\em Putting everything together.}
On the master event $\Gc_\star \coloneqq \Gc_\noise \cap \Gc_\PCR \cap \Gc_w \cap \Gc_{\xi \beta} \cap \Gc_{\xi \alpha}$, inserting \eqref{eq:remain.t1}, \eqref{eq:inf.term.w}, \eqref{eq:remain.t5}, \eqref{eq:remain.t6}, \eqref{eq:remain.t7}, and \eqref{eq:remain.t8}, into \eqref{eq:remain}, we arrive at the inequality 
\begin{align}
	\left| R \right| 
	&\lesssim 
	\sigma^2  \Lambda_{\alpha} \Lambda_{\beta}
	+ \frac{\sigma^3 \varphi_\rho \Lambda_{\alpha} \Lambda_{\beta}}{\sqrt{N_1L}}
	+ \frac{\sigma^2 \varphi_\rho \sqrt{T_0} \Lambda_{\alpha}}{\sqrt{L} \lambda_y}
	+ \frac{\sigma^2 \varphi_\rho \sqrt{M} \Lambda_{\beta}}{\sqrt{N_1} \lambda_z} 
	+ \frac{\sigma \eta_y \Lambda_{\beta}}{\lambda_y} 
	+ \frac{\sigma \eta_z \Lambda_{\alpha}}{\lambda_z}
	\\
	&\quad 
	+ \tQ_{\beta} \cdot \left\{ \sigma \varphi_\rho + \frac{\eta_y \sqrt{T_0}}{\lambda_y} + \frac{\sigma \lambda_y \Lambda_{\beta}}{\sqrt{N_1}} \right\}
	+ \tQ_{\alpha} \cdot \left\{ \sigma \varphi_\rho + \frac{\eta_z \sqrt{M}}{\lambda_z} + \frac{\sigma \lambda_z \Lambda_{\alpha}}{\sqrt{L}} \right\}. 
\end{align}
Plugging \eqref{eq:lmda.lb} and \eqref{eq:noise.eta} into the above, and simplifying, gives our desired result. 
It remains to bound the probability of the master event $\Gc_\star$. 
Taking a union bound over  \eqref{eq:inf.prob.pcr}, \eqref{eq:inf.prob.w}, \eqref{eq:inf.prob.riesz.unit}, \eqref{eq:inf.prob.noise}, and \eqref{eq:inf.prob.riesz.time} we conclude 
\begin{align}
	\Pb\left(\Gc^c_\star \mid \Ec \right)
	&\le 
	\Pb\left(\Gc_\noise^c \mid \Ec \right)
	+ \Pb\left(\Gc_\PCR^c \mid \Ec \right)
	+ \Pb\left(\Gc_\PCR \cap \Gc_w^c \mid \Ec \right)
	\\ &\quad 
	+ \Pb\left( \Gc_{\noise, y} \cap \Gc_{\xi \beta}^c \mid \Ec \right)
	+ \Pb\left( \Gc_{\noise, z}  \cap \Gc_{\xi \alpha}^c \mid \Ec \right)
	\lesssim \rho. 
\end{align}
The proof is complete. 
\end{proof}

\section{Proof of Proposition~\ref{prop:var.sigma}} \label{sec:proof.var.est.2}
Define two variance estimators based on the unit- and time-sided \PCR~residuals:
\begin{align}
	\hsigma^2_{\beta} &\coloneqq \frac{ \| \by - \bY^\top \hbbeta \|_2^2}{T_0 - k_y}, 
	\quad
	\hsigma^2_{\alpha} \coloneqq \frac{ \| \bz - \bZ^\top \hbalpha \|_2^2}{M - k_z}. \label{eq:var.sigma.ab}
\end{align}
In turn, these yield our pooled estimator defined in \eqref{eq:var.sigma.pool}: 
\begin{align}
	\hsigma^2 &\coloneqq \frac{(T_0 - k_y) \cdot \hsigma^2_{\beta} + (M - k_z) \cdot \hsigma^2_{\alpha}}{(T_0 - k_y) + (M - k_z)}. 
\end{align}
Our strategy will be to control the estimation errors of $\hsigma^2_{\beta}$ and $\hsigma^2_{\alpha}$, and then to combine these results into a statement on $\hsigma^2$. 
Before doing so, we first state a useful lemma; its proof is relegated to Appendix~\ref{sec:proof.subexp.2}. 

\begin{lemma} \label{lemma:subexp.2}
Let $\bxi \sim \mathcal{N}(0, \sigma^2 \bI)$. 
Let $\Hc$ be a sigma-field independent of $\bxi$. 
Let $\bP \in \Rb^{m \times m}$ be an $\Hc$-measurable orthogonal projection matrix. 
Then for any $t > 0$,
\begin{align}
	\Pb \left( \left| \bxi^\top \bP \bxi - \sigma^2 \tr(\bP) \right| > C \sigma^2 \cdot \left\{ \sqrt{\tr(\bP) \cdot t} + t \right\} \mid \Hc \right) \le 2 \exp(-ct),
\end{align}
for constants $C, c > 0$. 
\end{lemma}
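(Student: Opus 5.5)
The plan is to reduce the statement to a Hanson--Wright-type bound for Gaussian quadratic forms, via rotation invariance and a chi-square tail inequality. Condition on $\Hc$ throughout. Because $\bxi \independent \Hc$, the conditional law of $\bxi$ given $\Hc$ is still $\mathcal{N}(0,\sigma^2\bI_m)$, so $\bP$ may be treated as a fixed orthogonal projection. Let $k \coloneqq \tr(\bP) = \rank(\bP)$. Write $\bP = \bQ\bQ^\top$, where $\bQ \in \Rb^{m\times k}$ is $\Hc$-measurable and has orthonormal columns. Then
\[
\bxi^\top\bP\bxi = \|\bQ^\top\bxi\|_2^2 .
\]
By rotation invariance of the isotropic Gaussian, conditional on $\Hc$ we have $\bQ^\top\bxi \sim \mathcal{N}(0,\sigma^2\bI_k)$. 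Hence
\[
\bxi^\top\bP\bxi - \sigma^2 k = \sigma^2\sum_{i=1}^k (g_i^2-1)
\]
for i.i.d.\ standard normals $g_i$. The case $k=0$ is trivial.

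Each summand $g_i^2-1$ is mean-zero and sub-exponential with $\|g_i^2-1\|_{\psi_1}\le K$ for an absolute constant $K$. Applying Lemma~\ref{lemma:bernstein} gives, for every $s\ge 0$,
\[
\Pb\Big(\big|\textstyle\sum_i (g_i^2-1)\big|>s \;\Big|\; \Hc\Big) \le 2\exp\!\big(-c\min\{s^2/(kK^2),\, s/K\}\big).
\]
Take $s = C(\sqrt{kt}+t)$. Then $s^2/(kK^2) \ge C^2 t/K^2$ and $s/K \ge Ct/K$. Choosing $C$ large relative to $K$ makes the minimum at least $t$, so the failure probability is at most $2\exp(-ct)$. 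Multiplying by $\sigma^2$ and recalling $k=\tr(\bP)$ yields the claimed conditional bound. If an unconditional statement is wanted, it follows by the tower property.

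No step is a real obstacle. The only point needing care is to justify treating $\bP$ as deterministic: independence of $\bxi$ and $\Hc$ gives the conditional Gaussian law, and since $\bQ$ is $\Hc$-measurable the rotation argument applies pointwise in $\Hc$. A minor measurability issue is choosing $\bQ$ measurably. This can be avoided by instead diagonalizing $\bP$ with an $\Hc$-measurable orthogonal matrix, or by noting that the conditional law of $\bxi^\top\bP\bxi$ depends on $\bP$ only through $\tr(\bP)$.
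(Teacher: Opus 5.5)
Your proposal is correct and follows the paper's proof essentially step for step: condition on $\Hc$, factor $\bP = \bQ\bQ^\top$, use rotation invariance to write the centered quadratic form as $\sigma^2\sum_{j}(g_j^2-1)$, and apply Bernstein's inequality (Lemma~\ref{lemma:bernstein}). You also check explicitly that $s = C(\sqrt{kt}+t)$ makes the Bernstein exponent at least $t$ once $C$ exceeds the $\psi_1$-constant, a detail the paper leaves implicit.
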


\subsection{Completing Proof of Proposition~\ref{prop:var.sigma}}

\begin{proof}
We begin with the proof for $\hsigma^2_{\beta}$ since the proof for $\hsigma^2_{\alpha}$ is analogous. 
Condition on $\Ec$. 
Let $d_y = T_0 - r_y$, 
\begin{align}
	\hbP_y \coloneqq \hbY^\top \hbY^{\top, \dagger},
	\quad
	\hbP_y^\perp \coloneqq \bI_{T_0} - \hbP_y. 
\end{align}
With this notation, we can write
\begin{align}
	\hsigma^2_{\beta} &= \frac{\| \hbP^\perp_y  \by \|_2^2}{d_y}
	= \frac{\| \hbP_y^\perp  \bby \|_2^2}{d_y} + \frac{ \| \hbP_y^\perp  \bxi_y \|_2^2}{d_y} + \frac{2 \bxi_y^\top \hbP^\perp_y  \bby}{d_y}.  
\end{align}
Dividing by $\sigma^2$ and subtracting one gives
\begin{align}
	\frac{\hsigma_{\beta}^2}{\sigma^2} - 1 = \frac{\| \hbP_y^\perp  \bby \|_2^2}{\sigma^2 d_y} + \left( \frac{\| \hbP^\perp_y \bxi_y \|_2^2}{\sigma^2 d_y} - 1 \right)+  \frac{2 \bxi_y^\top \hbP^\perp_y  \bby}{\sigma^2 d_y}.
\end{align} 
We proceed to bound each term separately. 

\bigskip \noindent {\em Term 1}: 
Define the population projection onto $\col(\bbY)$ as $\bP_y \coloneqq \bbY^\top \bbY^{\top, \dagger}$. 
Since $\bby \in \col(\bbY^\top)$, we can write 
\begin{align}
	\| \hbP_y^\perp  \bby \|_2 
	&= \| \hbP_y^\perp \bP_y \bby \|_2
	\\
	&\le \| \hbP_y^\perp \bP_y \|_{\txtop} \cdot \| \bby \|_2
	\\
	&\le \left( \frac{2 \cdot \| \bXi_y \|_\txtop }{\lambda_y} \right) \cdot \| \bby \|_2 && \because \text{Lemma~\ref{lemma:pcr.peturb.2}}
	\\
	&\le \frac{2 \sqrt{T_0} \cdot \| \bXi_y\|_\txtop}{\lambda_y}. && \because \text{Assumption~\ref{assump:bounded}}
\end{align}
Define $\Gc_{\noise, y}$ as in \eqref{eq:event.noise}. 
By Lemma~\ref{lemma:subg_matrix}, 
\begin{align}
	\Pb\left( \Gc_{\noise, y}^c \mid \Ec \right) \lesssim (N_1 T_0)^{-10}. 
\end{align}
On this event $\Gc_{\noise, y}$, 
\begin{align}
	\frac{\| \hbP_y^\perp  \bby \|_2^2}{\sigma^2 d_y} &\le \frac{4 \eta_y^2 T_0}{\sigma^2 \lambda^2_y d_y}. 
	\label{eq:var.sigma.t1}
\end{align} 

\noindent {\em Term 2}:
Letting $\tau_y \coloneqq 1 + \log(N_1 T_0)$, define the event
\begin{align}
	\Gc_{\proj, y} \coloneqq \left\{ \left| \bxi_y^\top \hbP_y^\perp \bxi_y - \sigma^2 \tr(\hbP_y^\perp) \right| 
	\le C_\proj \sigma^2 \left( \sqrt{\tr(\hbP^\perp_y) \tau_y} + \tau_y \right) \right\},
\end{align}
where $C_\proj > 0$ is an absolute constant. 
Conditional on $\Ec$ and $\bY$, the projection $\hbP_y$ is fixed, and $\bxi_y \sim \mathcal{N}(0, \sigma^2 \bI_{T_0})$. 
Applying Lemma~\ref{lemma:subexp.2} with the sigma-field $\Hc \coloneqq \Ec \vee \sigma(\bY)$ gives
\begin{align}
	\Pb \left( \Gc_{\proj, y}^c \mid \Hc \right) \le 2 \exp(-c \tau_y). 
\end{align} 
Taking conditional expectation with respect to $\Ec$, 
\begin{align}
	\Pb\left(\Gc_{\proj, y}^c \mid \Ec \right) = \Ex\left[ \Pb\left(\Gc_{\proj, y}^c \mid \Hc \right) \mid \Ec \right] \le 2 \exp(- c \tau_y) \lesssim (N_1T_0)^{-10}. 
\end{align}
Since $\tr(\hbP^\perp_y) = d_y$, on the event $\Gc_{\proj, y}$, we have
\begin{align}
	\frac{\| \hbP^\perp_y \bxi_y \|_2^2}{\sigma^2 d_y} - 1
	&\lesssim \frac{\sqrt{\tau_y}}{\sqrt{d_y}} + \frac{\tau_y}{d_y}. 
	\label{eq:var.sigma.t2}
\end{align}

\noindent {\em Term 3}: 
Define the event
\begin{align}
	\Gc_{\tcross, y} \coloneqq \left\{ \left| \bxi_y^\top \hbP_y^\perp \bby \right| \le C_\tcross \sigma \| \hbP^\perp_y \bby \|_2 \sqrt{\tau_y} \right\},
\end{align}
where $C_\tcross > 0$ is an absolute constant. 
As aforementioned, conditional on $\Ec$ and $\bY$, the projection $\hbP_y$ is fixed. 
Moreover, by Assumption~\ref{assump:subg} with homoskedastic Gaussian noise,
\begin{align}
	\bxi_y^\top \hbP_y^\perp \bby \sim \mathcal{N}\left(0, \sigma^2 \cdot \| \hbP^\perp_y \bby \|_2^2 \right).
\end{align}
Hence, by Lemma~\ref{lemma:hoeffding}, 
\begin{align}
	\Pb \left( \Gc^c_{\tcross, y} \mid \Hc \right) \le 2\exp(-c C_\tcross^2 \tau_y) \lesssim (N_1 T_0)^{-10}. 
\end{align}
Taking conditional expectation gives 
\begin{align}
	\Pb \left( \Gc^c_{\tcross, y} \mid \Ec \right) \lesssim (N_1 T_0)^{-10}. 
\end{align}
On the event $\Gc_{\tcross, y}$, we have 
\begin{align}
	\frac{2 \left| \bxi_y^\top \hbP^\perp_y  \bby \right|}{\sigma^2 d_y}
	&\le \frac{2 C_\tcross \| \hbP^\perp_y \bby \|_2 \sqrt{\tau_y}}{\sigma d_y}
	\lesssim
	\frac{\| \hbP^\perp_y \bby \|_2^2}{\sigma^2 d_y} + \frac{\tau_y}{d_y},
\end{align}
which follows from the elementary fact $2ab \le a^2 + b^2$ for any $a, b \ge 0$. 
By \eqref{eq:var.sigma.t1}, on $\Gc_{\noise, y} \cap \Gc_{\tcross, y}$, 
\begin{align}
	\frac{2 \left| \bxi_y^\top \hbP^\perp_y  \bby \right|}{\sigma^2 d_y}
	\lesssim
	\frac{\eta_y^2 T_0}{\sigma^2 \lambda^2_y d_y} + \frac{\tau_y}{d_y}.
	\label{eq:var.sigma.t3}
\end{align} 
Notably, 
\begin{align}
	\Pb \left(\Gc_{\noise, y} \cap \Gc_{\tcross, y}^c \mid \Ec \right) 
	&= \Ex\left[ \mathds{1}\{\Gc_{\noise, y}\} \cdot \Pb \left(\Gc_{\tcross, y}^c \mid \Hc \right) \mid \Ec \right]
	\lesssim (N_1 T_0)^{-10}. 
\end{align}

\noindent {\em Putting everything together}: 
Now, define the master event as $\Gc_{\star, y} \coloneqq \Gc_{\noise, y} \cap \Gc_{\proj, y} \cap \Gc_{\tcross, y}$. 
We can bound the probability of $\Gc_{\star, y}$ by taking a union bound  
\begin{align}
	\Pb \left(\Gc_{\star, y}^c \mid \Ec \right)
	&\le \Pb\left(\Gc_{\noise,y}^c \mid \Ec \right) + \Pb \left(\Gc_{\proj, y}^c \mid \Ec \right) + \Pb \left(\Gc_{\noise, y} \cap \Gc_{\tcross, y}^c \mid \Ec \right) \lesssim (N_1 T_0)^{-10}. 
\end{align}
On $\Gc_{\star, y}$, \eqref{eq:var.sigma.t1}, \eqref{eq:var.sigma.t2}, and \eqref{eq:var.sigma.t3} yield 
\begin{align}
	\left| \frac{\hsigma_{\beta}^2}{\sigma^2} - 1 \right|
	\lesssim 
	\frac{ \eta_y^2 T_0}{\sigma^2 \lambda^2_y d_y}
	+ \frac{\sqrt{\tau_y}}{\sqrt{d_y}} + \frac{\tau_y}{d_y} \eqqcolon \epsilon_{\beta}
\end{align}
By analogous arguments, we conclude that with probability at least $1 - \Oc((LM)^{-10})$, 
\begin{align}
	\left| \frac{\hsigma^2_{\alpha}}{\sigma^2} - 1 \right| &\lesssim 
	\frac{\eta_z^2 M}{\sigma^2 \lambda^2_z d_z}
	+ \frac{\sqrt{\tau_z}}{\sqrt{d_z}} + \frac{\tau_z}{d_z} \eqqcolon \epsilon_{\alpha},
\end{align}
where $d_z \coloneqq M - r_z$ and $\tau_z \coloneqq 1 + \log(LM)$. 
Therefore, to analyze the pooled estimator $\hsigma^2$, we first note  
\begin{align}
	\frac{\hsigma^2}{\sigma^2} - 1
	&=
	\frac{d_y}{d_y + d_z} \cdot \left( \frac{\hsigma^2_{\beta}}{\sigma^2} - 1 \right)
	+ \frac{d_z}{d_y + d_z} \cdot \left( \frac{\hsigma^2_{\alpha}}{\sigma^2} - 1 \right).
\end{align}
Taking absolute values, 
\begin{align}
	\left| \frac{\hsigma^2}{\sigma^2} - 1 \right|
	&\le
	\left| \frac{d_y}{d_y + d_z} \cdot \left( \frac{\hsigma^2_{\beta}}{\sigma^2} - 1 \right) \right|
	+ \left| \frac{d_z}{d_y + d_z} \cdot \left( \frac{\hsigma^2_{\alpha}}{\sigma^2} - 1 \right) \right|.
\end{align}
Therefore, with probability at least $1 - \Oc((N_1 T_0)^{-10} + (LM)^{-10})$, 
\begin{align}
	\left| \frac{\hsigma^2}{\sigma^2} - 1 \right| \lesssim \frac{d_y \cdot \epsilon_{\beta} + d_z \cdot \epsilon_{\alpha}}{d_y + d_z}. 
\end{align}
Plugging \eqref{eq:lmda.lb} and \eqref{eq:noise.eta} into the above, and simplifying, completes the proof.  
\end{proof}

\subsection{Proof of Lemma~\ref{lemma:subexp.2}} \label{sec:proof.subexp.2}

\begin{proof}
Conditional on $\Hc$, $\bP$ is fixed. 
Define $d \coloneqq \tr(\bP) = \rank(\bP)$. 
Let $\bQ \in \Rb^{m \times d}$ be a matrix of orthonormal columns spanning $\col(\bP)$ such that $\bP = \bQ \bQ^\top$ and $\bQ^\top \bQ = \bI_d$. 
Hence, we can write 
\begin{align}
	\bxi^\top \bP \bxi = \| \bQ^\top \bxi \|_2^2. 
\end{align}
Since $\bxi \sim \mathcal{N}(0, \sigma^2 \bI_m)$ conditional on $\Hc$, we have that
\begin{align}
	\bg \coloneqq \frac{1}{\sigma} \bQ^\top \bxi \sim \mathcal{N}(0, \bI_d). 
\end{align}
Hence,
\begin{align}
	\bxi^\top \bP \bxi - \sigma^2 d = \sigma^2 \sum_{j=1}^d \left(g_j^2 - 1 \right),
\end{align}
where $g_j \sim \mathcal{N}(0,1)$. 
Thus, $X_j \coloneqq g_j^2 -1$ is a mean-zero, sub-exponential random variable with $\| X_i \|_{\psi_1} \le C$ for some constant $C > 0$. 
By Lemma~\ref{lemma:bernstein}, we have
\begin{align}
	\Pb\left(  \left| \sum_{j=1}^d X_j  \right| > C \sigma^2 \left\{\sqrt{\tr(\bP)t} + t \right\} \mid \Hc \right) \le 2 \exp(-ct). 
\end{align}
This completes the proof. 
\end{proof}

\section{Proof of Proposition~\ref{prop:var.asymp}} \label{sec:proof.var.est.1} 
Under our notation, we write the Riesz representer estimates as 
\begin{align}
	\hbq_{\alpha} \coloneqq \hbZ^\dagger  \bW^\top  \hbbeta, 
	\quad
	\hbq_{\beta} &\coloneqq \hbY^\dagger  \bW  \hbalpha,
\end{align}
and their corresponding estimation errors as
\begin{align} 
	\bDelta_{q_{\alpha}} &\coloneqq \hbq_{\alpha} - \bq^*_{\alpha},
	\quad 
	\bDelta_{q_{\beta}} \coloneqq \hbq_{\beta} - \bq^*_{\beta},
	\label{eq:delta.riesz}
\end{align} 
where $\bq^*_{\alpha}$ and $\bq^*_{\beta}$ are defined as in \eqref{eq:riesz.proofs}. 
The following bounds these recovery errors. 
\begin{lemma} \label{lemma:riesz}
Let the setup of Corollary~\ref{cor:identification} hold. 
Further, let Assumption~\ref{assump:bounded} hold, as well as $k_y = r_y$ and $k_z = r_z$. 
Then, conditional on $\Ec$, 
\begin{align}
	\| \bDelta_{q_{\beta}} \|_2 
	&\lesssim 
	\frac{ \sqrt{N_1} \cdot \| \bXi_y \|_{\emph{op}}}{\left( \lambda_y - \| \bXi_y \|_{\emph{op}} \right)^2}
	+
	\frac{ \sqrt{M} \cdot \| \bXi_w \|_{\emph{op}}}{\lambda_z \left(\lambda_y - \| \bXi_y \|_{\emph{op}}\right)}
	+ 
	\left( \frac{ \| \bbW \|_{\emph{op}} +  \| \bXi_w \|_{\emph{op}}}{\lambda_y - \| \bXi_y \|_{\emph{op}}} \right) \cdot \| \bDelta_{\alpha} \|_2,
	\\
	\| \bDelta_{q_{\alpha}} \|_2 
	&\lesssim 
	\frac{ \sqrt{L} \cdot \| \bXi_z \|_{\emph{op}}}{\left( \lambda_z - \| \bXi_z \|_{\emph{op}} \right)^2}
	+
	\frac{ \sqrt{T_0} \cdot \| \bXi_w \|_{\emph{op}}}{\lambda_y \left(\lambda_z - \| \bXi_z \|_{\emph{op}}\right)}
	+ 
	\left( \frac{ \| \bbW \|_{\emph{op}} +  \| \bXi_w \|_{\emph{op}}}{\lambda_z - \| \bXi_z \|_{\emph{op}}} \right) \cdot \| \bDelta_{\beta} \|_2,
\end{align}
provided $\lambda_y > \| \bXi_y \|_{\emph{op}}$ and $\lambda_z > \| \bXi_z \|_{\emph{op}}$. 
\end{lemma}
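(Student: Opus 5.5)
We prove the bound for $\bDelta_{q_\beta}$; the bound for $\bDelta_{q_\alpha}$ follows by the symmetric argument with the roles of $(\bY,\bbeta,\lambda_y)$ and $(\bZ,\balpha,\lambda_z)$ interchanged and $\bW$ replaced by $\bW^\top$.

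The plan is a telescoping decomposition that isolates the three sources of error: the pseudoinverse, the noise in $\bW$, and the error in $\hbalpha$. Write
\begin{align}
	\hbq_\beta - \bq^*_\beta
	= \left(\hbY^\dagger - \bbY^\dagger\right) \bbW \balpha^*
	+ \hbY^\dagger \bXi_w \balpha^*
	+ \hbY^\dagger \bW \bDelta_\alpha ,
\end{align}
which holds because $\bW\hbalpha = \bbW\balpha^* + \bXi_w\balpha^* + \bW\bDelta_\alpha$. Each of the three pieces will give one term of the claimed bound.

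\emph{Third piece.} Bound it by $\|\hbY^\dagger\|_\txtop(\|\bbW\|_\txtop+\|\bXi_w\|_\txtop)\|\bDelta_\alpha\|_2$. Lemma~\ref{lemma:pcr.pseudo} gives $\|\hbY^\dagger\|_\txtop\le(\lambda_y-\|\bXi_y\|_\txtop)^{-1}$, which produces the third term directly.

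\emph{Second piece.} Bound it by $\|\hbY^\dagger\|_\txtop\|\bXi_w\|_\txtop\|\balpha^*\|_2$. Use \eqref{eq:alpha.beta.bound} (from Lemma~\ref{lemma:pcr.spectra}), namely $\|\balpha^*\|_2\le\sqrt M/\lambda_z$. This gives the second term.

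\emph{First piece (main obstacle).} A crude bound $\|\hbY^\dagger-\bbY^\dagger\|_\txtop\|\bbW\balpha^*\|_2$ would lose the squared denominator, so the perturbation has to be handled with care. Apply Stewart's identity (Lemma~\ref{lemma:pseudoinverse.stewart}) with $\bA=\bbY$, $\bB=\hbY$, and act on $\bx\coloneqq\bbW\balpha^*$.

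By Assumption~\ref{assump:transport}, $\bx\in\col(\bbY)$, so $\bP_A^\perp\bx=0$ and the middle term of the identity vanishes. The remaining two terms are:
\begin{itemize}
	\item $-\hbY^\dagger\bP_B(\hbY-\bbY)\bPi_A\bbY^\dagger\bx$, bounded by $\|\hbY^\dagger\|_\txtop\,\|\hbY-\bbY\|_\txtop\,\|\bbY^\dagger\bx\|_2$;
	\item $-\bPi_B^\perp(\hbY-\bbY)^\top\bP_A(\bbY\bbY^\top)^\dagger\bx$, bounded by $\|\hbY-\bbY\|_\txtop\,\|(\bbY\bbY^\top)^\dagger\bx\|_2\le\|\hbY-\bbY\|_\txtop\,\|\bbY^\dagger\bx\|_2/\lambda_y$.
\end{itemize}

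These are then combined as follows. By Lemma~\ref{lemma:pcr.peturb.1}, $\|\hbY-\bbY\|_\txtop\le2\|\bXi_y\|_\txtop$. The quantity $\|\bbY^\dagger\bx\|_2=\|\bq^*_\beta\|_2$ is at most $\sqrt{N_1}/\lambda_y$ by \eqref{eq:riesz.bound}, which uses $\|\bbW\balpha^*\|_2\le\sqrt{N_1}$ from \eqref{eq:barwalpha.bound}. Both terms are therefore $\lesssim\sqrt{N_1}\|\bXi_y\|_\txtop/(\lambda_y-\|\bXi_y\|_\txtop)^2$, using $\lambda_y\ge\lambda_y-\|\bXi_y\|_\txtop$.

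Summing the three pieces yields the first bound, and the mirrored argument yields the bound for $\bDelta_{q_\alpha}$.
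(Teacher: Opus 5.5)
Your proposal is correct and follows the paper's proof: the same three-term decomposition of $\bDelta_{q_\beta}$, the same bounds on the $\bXi_w$ and $\bDelta_\alpha$ pieces, and Stewart's identity for the pseudoinverse piece. The only difference is that you apply the identity to the vector $\bbW\balpha^*$ and use $\col(\bbW)\subseteq\col(\bbY)$ to remove its middle term, whereas the paper simply bounds $\|\hbY^\dagger-\bbY^\dagger\|_{\txtop}\,\|\bbW\balpha^*\|_2$. Contrary to your remark, that cruder route does not lose the squared denominator: every Stewart term, including $(\hbY^\top\hbY)^\dagger(\hbY-\bbY)^\top(\cdot)$, already has operator norm of order $\|\bXi_y\|_{\txtop}/(\lambda_y-\|\bXi_y\|_{\txtop})^2$.
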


\subsection{Completing Proof of Proposition~\ref{prop:var.asymp}} 

\begin{proof}
Condition on $\Ec$. 
Recall $\Upsilon \coloneqq \Vc^2/ \sigma^2$ and $\hUpsilon \coloneqq \hVc^2 / \hsigma^2$. 
Observe that
\begin{align}
	\hUpsilon - \Upsilon 
	&= \left\{ \| \hbalpha \|_2^2  \cdot \| \hbbeta \|_2^2 - \| \balpha^* \|_2^2 \cdot \| \bbeta^* \|_2^2 \right\}
	 + \left\{ \| \hbq_{\beta} \|_2^2 \cdot  \left( 1 + \| \hbbeta \|_2^2 \right)  - \| \bq_{\beta}^* \|_2^2  \cdot\left(1 + \| \bbeta^* \|_2^2 \right) \right\}
	\\ &\quad+ \left\{ \| \hbq_{\alpha} \|_2^2 \cdot \left( 1 + \| \hbalpha \|_2^2 \right)  - \| \bq_{\alpha}^* \|_2^2 \cdot \left(1 + \| \balpha^* \|_2^2 \right) \right\}. 
\end{align}
To analyze the above, we first record a useful elementary inequality: for any vectors $\hbx$ and $\bx$, 
\begin{align}
	\left| \| \hbx \|_2^2 - \| \bx \|_2^2 \right| &\le \| \hbx - \bx \|_2  \cdot \left( 2  \| \bx \| + \| \hbx - \bx \|_2 \right).
	\label{eq:alg.ineq} 
\end{align}
We proceed to bound each term separately. 

\bigskip \noindent {\em Term 1}: 
Define $\Gc_\noise$ and $\Gc_\PCR$ as in \eqref{eq:event.noise} and \eqref{eq:event.pcr}, respectively. 
Leveraging \eqref{eq:alg.ineq}, we obtain 
\begin{align}
	&\left| \| \hbalpha \|_2^2 \cdot \| \hbbeta \|_2^2 - \| \balpha^* \|_2^2 \cdot \| \bbeta^* \|_2^2  \right|
	\le 
	\left| \| \hbalpha \|_2^2 - \| \balpha^* \|_2^2 \right| \cdot \| \hbbeta \|_2^2 + \left| \| \hbbeta \|_2^2 - \| \bbeta^* \|_2^2 \right| \cdot \| \balpha^* \|_2^2
	\\
	&\qquad \le \left| \| \hbalpha \|_2^2 - \| \balpha^* \|_2^2 \right| \cdot \left(\| \bbeta^* \|_2 + \| \bDelta_{\beta} \|_2 \right)^2 + \left| \| \hbbeta \|_2^2 - \| \bbeta^* \|_2^2 \right| \cdot \| \balpha^* \|_2^2
	\\
	&\qquad \le  \| \bDelta_{\alpha} \|_2 \cdot  \left(2  \| \balpha^* \|_2 + \| \bDelta_{\alpha} \|_2 \right) \cdot \left(\| \bbeta^* \|_2 + \| \bDelta_{\beta} \|_2 \right)^2
	+ \| \bDelta_{\beta} \|_2 \cdot  \left( 2  \| \bbeta^* \|_2 + \| \bDelta_{\beta} \|_2 \right) \cdot \| \balpha^* \|_2^2. 
\end{align}
On the event $\Gc_\PCR$, \eqref{eq:alpha.beta.bound} gives 
\begin{align}
	&\left| \| \hbalpha \|_2^2 \cdot  \| \hbbeta \|_2^2 - \| \balpha^* \|_2^2 \cdot \| \bbeta^* \|_2^2  \right|
	\\
	&\qquad\lesssim \left( \frac{\sqrt{T_0}}{\lambda_y} + \frac{\sigma \Lambda_{\beta}}{\sqrt{N_1}} \right)^2 
	\cdot \left(\frac{\sigma \sqrt{M} \Lambda_{\alpha}}{\sqrt{L} \lambda_z} + \frac{\sigma^2 \Lambda_{\alpha}^2}{L} \right)
	+ \frac{\sigma M \Lambda_{\beta}}{\sqrt{N_1} \lambda_z^2} \cdot \left( \frac{\sqrt{T_0}}{\lambda_y} + \frac{\sigma \Lambda_{\beta}}{\sqrt{N_1}} \right). 
	\label{eq:var.est.t1}
\end{align}

\noindent {\em Term 2}: 
By a similar argument, applying \eqref{eq:alg.ineq} yields 
\begin{align}
	&\left| \| \hbq_{\beta} \|_2^2 \cdot \left( 1 + \| \hbbeta \|_2^2 \right)  - \| \bq_{\beta}^* \|_2^2 \cdot \left(1 + \| \bbeta^* \|_2^2 \right) \right|
	\\
	&\qquad \le \| \bDelta_{q_{\beta}} \|_2  \cdot \left( 2  \| \bq^*_{\beta} \|_2 + \| \bDelta_{q_{\beta}} \|_2 \right) \cdot \left\{ 1 + \left( \| \bbeta^* \|_2 + \| \bDelta_{\beta} \|_2 \right)^2 \right \} + \| \bq^*_{\beta} \|^2_2 \cdot  \| \bDelta_{\beta} \|_2 \cdot \left( 2  \| \bbeta^* \|_2 + \| \bDelta_{\beta} \|_2 \right). 
\end{align}
On $\Gc_\noise \cap \Gc_\PCR$, Lemma~\ref{lemma:riesz} gives 
\begin{align}
	\| \bDelta_{q_{\beta}} \|_2 \lesssim 
	\frac{\eta_y \sqrt{N_1}}{\lambda_y^2} + \frac{\eta_w \sqrt{M}}{\lambda_y \lambda_z} + \frac{\sigma \left( \| \bbW \|_\txtop + \eta_w \right) \Lambda_{\alpha}}{\lambda_y \sqrt{L}}
	\eqqcolon Q_{\beta}. \label{eq:delta.b}
\end{align}
%
Thus, 
\begin{align}
	&\left| \| \hbq_{\beta} \|_2^2 \cdot \left( 1 + \| \hbbeta \|_2^2 \right)  - \| \bq_{\beta}^* \|_2^2 \cdot \left(1 + \| \bbeta^* \|_2^2 \right) \right|
	\\
	&\qquad \lesssim \left( \frac{\sqrt{N_1} Q_{\beta}}{\lambda_y} + Q^2_{\beta} \right) \cdot \left\{ 1 + \left( \frac{\sqrt{T_0}}{\lambda_y} + \frac{\sigma \Lambda_{\beta}}{\sqrt{N_1}} \right)^2 \right\}
	+ \frac{\sigma \sqrt{N_1} \Lambda_{\beta}}{\lambda_y^2} \cdot \left( \frac{\sqrt{T_0}}{\lambda_y} + \frac{\sigma \Lambda_{\beta}}{\sqrt{N_1}} \right). 
	\label{eq:var.est.t2}
\end{align}

\noindent {\em Term 3}: 
Analogous to \eqref{eq:var.est.t2}, we have that on $\Gc_\noise \cap \Gc_\PCR$,
\begin{align}
	&\| \hbq_{\alpha} \|_2^2 \cdot \left( 1 + \| \hbalpha \|_2^2 \right)  - \| \bq_{\alpha}^* \|_2^2 \cdot \left(1 + \| \balpha^* \|_2^2 \right)
	\\
	&\qquad \lesssim \left( \frac{\sqrt{L} Q_{\alpha}}{\lambda_z} + Q^2_{\alpha} \right) \cdot \left\{ 1 + \left( \frac{\sqrt{M}}{\lambda_z} + \frac{\sigma\Lambda_{\alpha}}{\sqrt{L}} \right)^2 \right\}
	+ \frac{\sigma \sqrt{L} \Lambda_{\alpha}}{\lambda_z^2} \cdot \left( \frac{\sqrt{M}}{\lambda_z} + \frac{\sigma \Lambda_{\alpha}}{\sqrt{L}} \right),
	\label{eq:var.est.t3}
\end{align}
where
\begin{align}
	Q_{\alpha} \coloneqq \frac{\eta_z \sqrt{L}}{\lambda_z^2} + \frac{\eta_w \sqrt{T_0}}{\lambda_y \lambda_z} + \frac{\sigma \left( \| \bbW \|_\txtop + \eta_w \right) \Lambda_{\beta}}{\lambda_z \sqrt{N_1}}. 
	\label{eq:delta.a}
\end{align}

\noindent {\em Putting everything together}: 
On $\Gc_\noise \cap \Gc_\PCR$, \eqref{eq:var.est.t1}, \eqref{eq:var.est.t2}, and \eqref{eq:var.est.t3} imply 
\begin{align}
	\left| \hUpsilon - \Upsilon \right|
	&\lesssim 
	\left( \frac{\sqrt{T_0}}{\lambda_y} + \frac{\sigma \Lambda_{\beta}}{\sqrt{N_1}} \right)^2 
	\cdot \left(\frac{\sigma \sqrt{M} \Lambda_{\alpha}}{\sqrt{L} \lambda_z} + \frac{\sigma^2 \Lambda_{\alpha}^2}{L} \right)
	+ \frac{\sigma M \Lambda_{\beta}}{\sqrt{N_1} \lambda_z^2} \cdot \left( \frac{\sqrt{T_0}}{\lambda_y} + \frac{\sigma \Lambda_{\beta}}{\sqrt{N_1}} \right)\
	\\
	&\quad + \left( \frac{\sqrt{N_1} Q_{\beta}}{\lambda_y} + Q^2_{\beta} \right) \cdot \left\{ 1 + \left( \frac{\sqrt{T_0}}{\lambda_y} + \frac{\sigma \Lambda_{\beta}}{\sqrt{N_1}} \right)^2 \right\}
	+ \frac{\sigma \sqrt{N_1} \Lambda_{\beta}}{\lambda_y^2} \cdot \left( \frac{\sqrt{T_0}}{\lambda_y} + \frac{\sigma\Lambda_{\beta}}{\sqrt{N_1}} \right)
	\\
	&\quad + \left( \frac{\sqrt{L} Q_{\alpha}}{\lambda_z} + Q^2_{\alpha} \right) \cdot \left\{ 1 + \left( \frac{\sqrt{M}}{\lambda_z} + \frac{\sigma\Lambda_{\alpha}}{\sqrt{L}} \right)^2 \right\}
	+ \frac{\sigma \sqrt{L} \Lambda_{\alpha}}{\lambda_z^2} \cdot \left( \frac{\sqrt{M}}{\lambda_z} + \frac{\sigma \Lambda_{\alpha}}{\sqrt{L}} \right). 
\end{align}
Plugging \eqref{eq:lmda.lb} and \eqref{eq:noise.eta} into the above, and simplifying, gives our desired result in \eqref{eq:asymp.var.ub}. 

It remains to bound the probability of the joint event $\Gc_\noise \cap \Gc_\PCR$. 
%
%
By Theorem~\ref{prop:parameter.recovery}, note that $\Pb(\Gc_\PCR^c \mid \Ec) \lesssim \rho$. 
Moreover, by Lemma~\ref{lemma:subg_matrix} and the union bound, $\Pb(\Gc_\noise^c \mid \Ec) \lesssim \rho$. 
As a result, 
\begin{align}
	\Pb\left((\Gc_\noise \cap \Gc_\PCR)^c \mid \Ec \right) &\le \Pb\left(\Gc_\noise^c \mid \Ec\right) + \Pb\left( \Gc_\PCR^c \mid \Ec \right) \lesssim \rho. 
\end{align}
This establishes \eqref{eq:asymp.var.ub}. 
To complete the proof, note that since, by assumption, $\hsigma^2 /\sigma^2  \xrightarrow{p} 1$ and $\Gamma / \Upsilon = o(1)$, 
\begin{align}
	\frac{\hVc^2}{\Vc^2}
	&= \frac{\hsigma^2 }{\sigma^2} \cdot \frac{\hUpsilon}{\Upsilon} \xrightarrow{p} 1. 
\end{align}
Because the square root map is continuous on $(0, \infty)$, we have $\hVc / \Vc \xrightarrow{p} 1$.
%
%
\end{proof} 


\subsection{Proof of Lemma~\ref{lemma:riesz}}

\begin{proof}
Without loss of generality, we focus on the $\bDelta_{q_{\beta}}$ bound as the $\bDelta_{q_{\alpha}}$ bound is analogous. 
Condition on $\Ec$. 
Observe that
\begin{align}
	\bDelta_{q_{\beta}}
	&= (\hbY^\dagger - \bbY^\dagger) \bbW \balpha^*
	+ \hbY^\dagger (\bW - \bbW ) \balpha^*
	+ \hbY^\dagger \bW \bDelta_{\alpha}. 
\end{align}
Taking norms, we have
\begin{align}
	\| \bDelta_{q_{\beta}} \|_2
	&\le 
	\| (\hbY^\dagger - \bbY^\dagger) \bbW \balpha^* \|_2
	+ \| \hbY^\dagger (\bW - \bbW ) \balpha^* \|_2
	+ \| \hbY^\dagger \bW \bDelta_{\alpha} \|_2. \label{eq:delta.beta.general}
\end{align}
We proceed to bound each term separately. 

\bigskip \noindent
{\em Term 1: $\| (\hbY^\dagger - \bbY^\dagger) \bbW \balpha^* \|_2$.} 
Firstly, we have
\begin{align}
	\| (\hbY^\dagger - \bbY^\dagger) \bbW \balpha^* \|_2
	&\le \| \hbY^\dagger - \bbY^\dagger \|_{\text{op}} \cdot \| \bbW \balpha^* \|_2. \label{eq:qb.term.temp}
\end{align} 
By \eqref{eq:barwalpha.bound}, $\| \bbW \balpha^* \|_2 \le \sqrt{N_1}$. 
To bound $\| \hbY^\dagger - \bbY^\dagger \|_{\text{op}}$, we recall a pseudoinverse perturbation decomposition. 


By Lemma~\ref{lemma:pseudoinverse.stewart}, we can write 
\begin{align}
	\hbY^\dagger - \bbY^\dagger = \hbY^\dagger \left(\hbY - \bbY \right) \bbY^\dagger
	+ \hbY^\dagger \hbY^{\dagger, \top} \left(\hbY - \bbY \right)^\top \left(\bI_{N_1} - \bbY \bbY^\dagger\right)
	+ \left(\bI_{N_1} - \hbY^\dagger \hbY \right) \left(\hbY - \bbY \right)^\top \bbY^{\dagger, \top} \bbY^\dagger. 
\end{align}
Taking norms and noting that projection matrices have operator norms bounded by $1$, 
\begin{align}
	\| \hbY^\dagger - \bbY^\dagger \|_{\text{op}}
	&\le \| \hbY^\dagger \|_{\text{op}} \cdot \| \hbY - \bbY \|_{\text{op}} \cdot \| \bbY^\dagger \|_{\text{op}}
	+ \| \hbY^\dagger \|_{\text{op}}^2 \cdot \| \hbY - \bbY \|_{\text{op}}
	+ \| \bbY^\dagger \|_{\text{op}}^2 \cdot \| \hbY - \bbY \|_{\text{op}}. 
	\label{eq:pseudo.y.0}
\end{align}
Observe that 
\begin{align}
	\| \bbY^\dagger \|_{\text{op}} = \frac{1}{\lambda_y},
	\quad
	\| \hbY^\dagger \|_{\text{op}} \le \frac{2}{\lambda_y - \| \bXi_y \|_{\text{op}}},
	\label{eq:pseudo.y.1}
\end{align}
where the second inequality follows from Lemma~\ref{lemma:pcr.pseudo}. 
Additionally, by Lemma~\ref{lemma:pcr.peturb.1},
\begin{align}
	\| \hbY - \bbY \|_{\text{op}} \le 2 \cdot \| \bXi_y \|_{\text{op}}. 
	\label{eq:pseudo.y.2}
\end{align}
Plugging \eqref{eq:pseudo.y.1} and \eqref{eq:pseudo.y.2} into \eqref{eq:pseudo.y.0} yields 
\begin{align}
	\| \hbY^\dagger - \bbY^\dagger \|_{\text{op}}
	&\le  \frac{4 \cdot \| \bXi_y \|_{\text{op}}}{\lambda_y (\lambda_y - \| \bXi_y \|_{\text{op}})}
	+ \frac{8 \cdot \| \bXi_y \|_{\text{op}}}{( \lambda_y - \| \bXi_y \|_{\text{op}})^2}
	+ \frac{2 \cdot \| \bXi_y \|_{\text{op}}}{\lambda_y^2}. 
	\label{eq:pseudo.y.full}
\end{align}
Applying \eqref{eq:pseudo.y.0} to \eqref{eq:qb.term.temp}, 
\begin{align}
	\| (\hbY^\dagger - \bbY^\dagger) \bbW \balpha^* \|_2
	&\le 
	\sqrt{N_1} \cdot \left\{
	\frac{4 \cdot \| \bXi_y \|_{\text{op}}}{\lambda_y (\lambda_y - \| \bXi_y \|_{\text{op}})}
	+ \frac{8 \cdot \| \bXi_y \|_{\text{op}}}{( \lambda_y - \| \bXi_y \|_{\text{op}})^2}
	+ \frac{2 \cdot \| \bXi_y \|_{\text{op}}}{\lambda_y^2}
	\right\}. 
	\label{eq:qb.term.1}
\end{align}

\noindent
{\em Term 2: $\| \hbY^\dagger (\bW - \bbW ) \balpha^* \|_2$.} 
Next, we note that
\begin{align}
	\| \hbY^\dagger (\bW - \bbW ) \balpha^* \|_2
	&\le \| \hbY^\dagger \|_{\text{op}} \cdot \| \bW - \bbW \|_{\text{op}} \cdot \| \balpha^* \|_2
	\\
	&\le \left( \frac{2}{\lambda_y - \| \bXi_y \|_{\text{op}}} \right) \cdot \| \bW - \bbW \|_{\text{op}} \cdot \| \balpha^* \|_2
	&& \because \text{Lemma~\ref{lemma:pcr.pseudo}}
	\\
	&\le \left( \frac{2 \cdot \| \bXi_w \|_{\text{op}}}{\lambda_y - \| \bXi_y \|_{\text{op}}} \right) \cdot  \| \balpha^* \|_2
	&& 
	\\
	&\le \frac{2 \sqrt{M} \cdot \| \bXi_w \|_{\text{op}}}{\lambda_z \left(\lambda_y - \| \bXi_y \|_{\text{op}}\right)}. &&\because \text{\eqref{eq:alpha.beta.bound}}
	\label{eq:qb.term.2}
\end{align} 

\noindent
{\em Term 3: $\| \hbY^\dagger \bW \bDelta_{\alpha} \|_2$.} 
Taking norms, 
\begin{align}
	\| \bW \|_{\text{op}} \le \| \bbW \|_{\text{op}} + \| \bW - \bbW \|_{\text{op}}
	\le \| \bbW \|_{\text{op}} + \| \bXi_w \|_{\text{op}}. \label{eq:w.bound}
\end{align}
Accordingly, we have 
\begin{align}
	\| \hbY^\dagger \bW \bDelta_{\alpha} \|_2
	&\le \| \hbY^\dagger \|_{\text{op}} \cdot \| \bW \|_{\text{op}} \cdot \| \bDelta_{\alpha} \|_2
	\\
	&\le \left( \frac{2}{\lambda_y - \| \bXi_y \|_{\text{op}}} \right) \cdot \| \bW \|_{\text{op}} \cdot \| \bDelta_{\alpha} \|_2
	&& \because \text{Lemma~\ref{lemma:pcr.pseudo}}
	\\
	&\le \left( \frac{2\cdot \| \bbW \|_{\text{op}} + 2 \cdot \| \bXi_w \|_{\text{op}}}{\lambda_y - \| \bXi_y \|_{\text{op}}} \right) \cdot \| \bDelta_{\alpha} \|_2. 
	\label{eq:qb.term.3}
	&&\because \text{\eqref{eq:w.bound}}
\end{align} 

\bigskip \noindent
{\em Putting everything together.}
Inserting \eqref{eq:qb.term.1}, \eqref{eq:qb.term.2}, and \eqref{eq:qb.term.3} into \eqref{eq:delta.beta.general}, we conclude
\begin{align}
	\| \bDelta_{q_{\beta}} \|_2 
	&\le 
	\sqrt{N_1} \cdot \left\{
	\frac{4 \cdot \| \bXi_y \|_{\text{op}}}{\lambda_y (\lambda_y - \| \bXi_y \|_{\text{op}})}
	+ \frac{8 \cdot \| \bXi_y \|_{\text{op}}}{( \lambda_y - \| \bXi_y \|_{\text{op}})^2}
	+ \frac{2 \cdot \| \bXi_y \|_{\text{op}}}{\lambda_y^2}
	\right\}
	\\
	&\qquad +
	\frac{2 \sqrt{M} \cdot \| \bXi_w \|_{\text{op}}}{\lambda_z \left(\lambda_y - \| \bXi_y \|_{\text{op}}\right)}
	+ 
	\left( \frac{2\cdot \| \bbW \|_{\text{op}} + 2 \cdot \| \bXi_w \|_{\text{op}}}{\lambda_y - \| \bXi_y \|_{\text{op}}} \right) \cdot \| \bDelta_{\alpha} \|_2,
\end{align}
provided $\lambda_y > \| \bXi_y \|_{\text{op}}$. 
The proof for $\bDelta_{\alpha}$ is identical, yielding
\begin{align}
	\| \bDelta_{q_{\alpha}} \|_2 
	&\le 
	\sqrt{L} \cdot \left\{
	\frac{4 \cdot \| \bXi_z \|_{\text{op}}}{\lambda_z (\lambda_z - \| \bXi_z \|_{\text{op}})}
	+ \frac{8 \cdot \| \bXi_z \|_{\text{op}}}{( \lambda_z - \| \bXi_z \|_{\text{op}})^2}
	+ \frac{2 \cdot \| \bXi_z \|_{\text{op}}}{\lambda_z^2}
	\right\}
	\\
	&\qquad +
	\frac{2 \sqrt{T_0} \cdot \| \bXi_w \|_{\text{op}}}{\lambda_y \left(\lambda_z - \| \bXi_z \|_{\text{op}}\right)}
	+ 
	\left( \frac{2\cdot \| \bbW \|_{\text{op}} + 2 \cdot \| \bXi_w \|_{\text{op}}}{\lambda_z - \| \bXi_z \|_{\text{op}}} \right) \cdot \| \bDelta_{\beta} \|_2,
\end{align}
provided $\lambda_z > \| \bXi_z \|_{\text{op}}$.
Simplifying the bounds above completes the proof.
\end{proof}


\section{Proof of Theorem~\ref{thm:direct.normality}} \label{sec:proof.normality.direct}
Throughout this section, all assumptions are understood as conditional assumptions given $\Ec_{\eta}$. 
The proof of Theorem~\ref{thm:direct.normality} reduces the direct multi-step estimator to the one-step analysis. 
Lemma~\ref{lemma:dir.alpha} first shows that, for each $\ell \in \Sc_\eta$, the corresponding training response and target conditional mean admit the same $L$-lag population representation. 
Because all active leads share the same temporal design matrix, their weighted responses and coefficients aggregate into a single regression, to which the arguments of Theorem~\ref{thm:inference} and Proposition~\ref{prop:var.asymp} apply after normalization. 
The proof of Lemma~\ref{lemma:dir.alpha} is given in Appendix~\ref{sec:proof.dir.identification}.

\begin{lemma} \label{lemma:dir.alpha}
Let Assumptions~\ref{assump:lfm}, \ref{assump:mean_ind}, and \ref{assump:hankel} hold after conditioning on $\Ec_{\eta}$. Fix a lag length $L \in \Zb_+$ satisfying $L \ge rQ$. 
Then, for each $\ell \in \Sc_\eta$, there exists a coefficients vector $\balpha^\edirect_\ell \in \Rb^{L}$ such that 
\begin{enumerate} [label=(\alph*)]
	\item $\Ex[Y_{j, T+\ell}(1) \mid \Ec_{\eta}] = \sum_{a =1}^{L} \alpha^\edirect_{\ell, a} \cdot \Ex[Y_{j, T - L + a} \mid \Ec_{\eta}]$ for all $j \in \Ic_1$, 
	
	\item $\Ex[Y_{j, T_0 + (b-1)(L+h_\eta) + L + \ell} \mid \Ec_{\eta}] = \sum_{a = 1}^{L} \alpha^\edirect_{\ell, a} \cdot \Ex[ Y_{j, T_0 + (b-1)(L+h_\eta) + a} \mid \Ec_{\eta}]$ for all $j \in \Ic_1$ and $b \in [B_{\eta}-1]$.  
\end{enumerate}
\end{lemma}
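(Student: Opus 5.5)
We build $\balpha^\direct_\ell$ from the one-step recurrence in the proof of Proposition~\ref{prop:alpha}. Iterating the companion map will express the lead-$\ell$ value of any treated latent mean sequence as a fixed linear combination of its previous $L$ values.

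First we fix notation. Conditioning on $\Ec_{\eta}$ instead of $\Ec$ changes nothing in that proof, because the argument concerns only the deterministic sequences $f_a(t) = V_{ta}(1)$ for $t \in \Zb$. So Assumption~\ref{assump:hankel} gives a vector $\balpha \in \Rb^L$, with $L \ge rQ$, such that every $\mu_j(t) \coloneqq \langle \bu_j, \bv_t(1)\rangle$ satisfies \eqref{eq:time.1} for all $t \in \Zb$. Write $\bm_j(t) \coloneqq [\mu_j(t-L+1), \dots, \mu_j(t)]^\top$. Then \eqref{eq:time.1} is equivalent to $\bm_j(t+1) = \bPi(\balpha)\,\bm_j(t)$, with $\bPi$ as in \eqref{eq:pi.matrix}. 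By induction, $\bm_j(t+\ell) = \bPi(\balpha)^\ell \bm_j(t)$ for all $\ell \ge 1$ and $t \in \Zb$.

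Reading off the last coordinate gives $\mu_j(t+\ell) = \be_L^\top \bPi(\balpha)^\ell \bm_j(t)$. We therefore set $\balpha^\direct_\ell \coloneqq \{\bPi(\balpha)^\ell\}^\top \be_L = \bg_\ell(\balpha)$. This vector depends only on the treated time factors and not on $j$, so it is common across donors. For every $j$ and $t$ we then have $\mu_j(t+\ell) = \sum_{a=1}^L \alpha^\direct_{\ell,a}\, \mu_j(t-L+a)$.

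Next we pass to conditional means. As in \eqref{eq:time.2}, Assumptions~\ref{assump:lfm} and~\ref{assump:mean_ind} give $\Ex[Y_{js}(1) \mid \Ec_{\eta}] = \mu_j(s)$ for every $s$ in the relevant range. Here we use that $\Ec_{\eta} \supseteq \Ec$ still makes the factors measurable, and that noise remains mean zero given $\Ec_{\eta}$ (this is the conditional version of Assumption~\ref{assump:mean_ind} that the statement posits). Taking $t = T$ gives part (a). All lag times $T-L+a$ lie in the observed post-treatment period, provided $L \le T_1$, which holds since $B_\eta \ge 2$ for the Page construction to be nonempty. Since $D_{jt} = 1$ there, \eqref{eq:sutva} identifies $Y_{j,T-L+a} = Y_{j,T-L+a}(1)$.

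For part (b), take $t = T_0 + (b-1)(L+h_\eta) + L$. The regressors are then indexed by $T_0+(b-1)(L+h_\eta)+a$ for $a \in [L]$, and the response by $t+\ell$. For $b \le B_\eta - 1$ and $\ell \le h_\eta$, all these indices lie in $\{T_0+1, \dots, T_0 + (B_\eta-1)(L+h_\eta)\} \subseteq \{T_0+1,\dots,T\}$. So again every entry is a treated observation, and \eqref{eq:sutva} applies.

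The main point requiring care is bookkeeping rather than substance. One must check that the index ranges stay inside the treated window. One must also check that the recurrence \eqref{eq:time.1} holds for all $t \in \Zb$, not only at the endpoints used in Proposition~\ref{prop:alpha}. This holds because the annihilating polynomial argument is global in $t$, so the iterated relation is valid at every block start. A minor subtlety is the enlarged sigma-field: $\Ec_{\eta}$ adds only future time factors $\bv_{T+\ell}(1)$, which are needed for part (a) to be a deterministic identity given the conditioning.
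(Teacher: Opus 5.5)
Your proposal is correct and matches the paper's proof: both take the common $L$-lag recurrence from Proposition~\ref{prop:alpha}, iterate the companion map to get $\balpha^{\direct}_\ell = \{\bPi(\balpha)^\ell\}^\top \be_L$, and evaluate at $t = T$ for part (a) and $t = T_0 + (b-1)(L+h_\eta) + L$ for part (b). You also check explicitly that every index lies in the treated window, which the paper leaves implicit; for that check, $B_\eta \in \Nb$ alone already gives $T_1 \ge L + h_\eta$, so you do not need $B_\eta \ge 2$.
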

%

\subsection{Completing the Proof of Theorem~\ref{thm:direct.normality}} 
\begin{proof}
Condition on $\Ec_{\eta}$, and write $\bbX = \Ex[\bX \mid \Ec_{\eta}]$ for any random object $\bX$. 
Further suppress the superscript ``\texttt{dir}.'' 
Set $\bZ \coloneqq \bZ_{\direct\text{-}\lag}$, $\bz_{\ell} \coloneqq \bz_{\direct\text{-}\tnext, \ell}$ for each $\ell \in \Sc_\eta$, and let $\hbZ$ be the rank-$r^{\direct}_z$ truncation of $\bZ$.  

For each $\ell \in \Sc_\eta$, Lemma~\ref{lemma:dir.alpha} supplies a population coefficient $\balpha_\ell \in \Rb^L$ satisfying $\bbz_\ell = \bbZ^\top \balpha_\ell$ and $\bbW \balpha_\ell$. 
Define the aggregate response and population coefficient by 
\begin{align}
	\bz_{\eta} \coloneqq \sum_{\ell \in \Sc_\eta} \eta_\ell \bz_\ell,
	\quad
	\balpha_{\eta} \coloneqq \sum_{\ell \in \Sc_\eta} \eta_\ell \balpha_\ell. 
\end{align}
Linearity gives $\bbz_{\eta} = \bbZ^\top \balpha_{\eta}$. 
Because all active leads use the same truncated design matrix, the \PCR~estimator is also linear in the response: 
\begin{align}
	\hbalpha_{\eta} \coloneqq \sum_{\ell \in \Sc_\eta} \eta_\ell \hbalpha_\ell 
	= \hbZ^{\top, \dagger} \bz_{\eta}. 
\end{align}
Consequently, we have
\begin{align}
	\htheta_{\eta} = \left\langle \hbalpha_{\eta}, \bW^\top \hbbeta \right \rangle,
\end{align}
so the direct estimator is itself a single bilinear \TWSF~estimator based on the aggregate response $\bz_{\eta}$. 

Let $\balpha^*_{\eta} \coloneqq \bbZ^{\top, \dagger} \bbZ^\top \balpha_{\eta}$, and let $\bbeta^*$ be the recoverable coefficients from Corollary~\ref{cor:identification}. 
The direct version of Assumption~\ref{assump:transport} yields
\begin{align}
	\bbW \balpha^*_{\eta} = \bbW \balpha_{\eta},
	\quad
	\bbW^\top \bbeta^* = \bbW^\top \bbeta. 
\end{align}
Lemma~\ref{lemma:dir.alpha} and Proposition~\ref{prop:beta} therefore imply
\begin{align}
	\theta_{\eta} 
	= \left\langle \balpha_{\eta}, \bbW^\top \bbeta \right \rangle
	= \left\langle \balpha^*_{\eta}, \bbW^\top \bbeta^* \right \rangle. 
\end{align}
Define the corresponding minimum-norm Riesz representers by
\begin{align}
	\bq^*_{\beta, \eta} \coloneqq \bbY^\dagger \bbW \balpha^*_{\eta},
	\quad
	\bq^*_{\alpha} \coloneqq \bbZ^\dagger \bbW^\top \bbeta^*. 
\end{align}
By the direct version of Assumption~\ref{assump:transport}, they satisfy
\begin{align}
	\bbY \bq^*_{\beta, \eta} = \bbW \balpha^*_{\eta},
	\quad
	\bbZ \bq^*_{\alpha} = \bbW^\top \bbeta^*,
\end{align}

We now apply the proof of Theorem~\ref{thm:inference} to the direct regression with response $\bz_{\eta} / \| \boldeta \|_1$ and population coefficient $\balpha_{\eta} / \| \boldeta \|_1$. 
The corresponding recoverable coefficient and unit-side representer are $\balpha^*_{\eta} / \| \boldeta \|_1$ and $\bq^*_{\beta, \eta}/\| \boldeta \|_1$, whereas $\bq^*_{\alpha}$ is unchanged. 
Assumption~\ref{assump:bounded} implies, coordinatewise, $\bz_{\eta} / \| \boldeta \|_1$ and $\bbW \balpha^*_{\eta} / \| \boldeta \|_1$ are bounded in absolute value by one. 
Thus, the bounded-signal conditions required by the one-step analysis continue to hold.

Let $\bxi_\ell \coloneqq \bz_\ell - \bbz_\ell$. 
The direct Page construction places the lag rows and the response rows in disjoint unit-time cells. 
Hence, Assumption~\ref{assump:subg} implies that the vectors $\{\bxi_\ell: \ell \in \Sc_\eta \}$ are mutually independent and independent of the lag-design noise. 
Under the Gaussian specialization,
\begin{align}
	\frac{1}{\| \boldeta \|_1} \sum_{\ell \in \Sc_\eta} \eta_\ell \bxi_\ell 
	\sim 
	\Nc\left(0, \sigma^2 \frac{\| \boldeta \|_2^2}{\| \boldeta \|_1^2} \bI_{M_{\eta}} \right). 
\end{align}
Its coordinates also have subgaussian norm bounded by a constant multiple of $\sigma$, since $\| \boldeta \|_2 / \| \boldeta \|_1 \le 1$. 
Therefore, every concentration and remainder estimate in the proof of Theorem~\ref{thm:inference} remains unchanged. 
The only modification to the exact Gaussian variance calculation is that the factor $1$ contributed by the time-side response innovation is replaced by $\| \boldeta \|_2^2 / \| \boldeta \|_1^2$. 
The same replacement in the proof of Proposition~\ref{prop:var.asymp} does not enlarge its bounds because this factor is at most one. 

Multiplying the resulting normalized expansion by $\| \boldeta \|_1$ gives $\htheta_{\eta} - \theta_{\eta} = G_{\eta} + R_{\eta}$, where $G_{\eta} \sim \Nc(0, \Vc^2_{\eta})$ with 
\begin{align}
	\Vc^2_{\eta} \coloneqq \sigma^2 \cdot \left\{
		\| \balpha^*_{\eta} \|_2^2 \cdot \| \bbeta^* \|_2^2
		+ \| \bq^*_{\beta, \eta} \|_2^2 \cdot \left(1 + \| \bbeta^* \|_2^2 \right)
		+ \| \bq^*_{\alpha} \|_2^2 \cdot \left( \| \boldeta \|_2^2 + \| \balpha^*_{\eta} \|_2^2 \right)
	\right\}. 
\end{align}
Moreover, $| R_{\eta} | \lesssim \| \boldeta \|_1 \Psi_{\eta}$ with probability at least $1 - \Oc(\rho_{\eta})$. 

It remains to verify the plug-in variance bound. 
For the normalized regression, the empirical temporal coefficient and unit-side Riesz representer are $\hbalpha_{\eta} / \| \boldeta \|_1$ and $\hbq_{\beta, \eta} / \| \boldeta \|_1$, while $\hbq_{\alpha}$ is unchanged. 
Its time-side response-variance factor is $\| \boldeta \|_2^2 / \| \boldeta \|_1^2$. 
Consequently, multiplying the normalized scale-free plug-invariance by $\| \boldeta \|_1^2$ gives exactly the expression in \eqref{eq:direct.var.est}. 
Repeating the proof of Proposition~\ref{prop:var.asymp} with the response-variance factor above therefore gives 
\begin{align}
	\left| \frac{\hVc^2_{\eta}}{\hsigma^2} - \frac{\Vc^2_{\eta}}{\sigma^2} \right| \lesssim \| \boldeta \|_1^2 \Gamma_{\eta} 
\end{align}
with probability at least $1 - \Oc(\rho_{\eta})$. 

Finally, if $\| \boldeta \|_1 \Psi_{\eta} / \Vc_{\eta} = o(1)$, then $R_{\eta} / \Vc_{\eta} \xrightarrow{p} 0$. 
Since $G_{\eta} / \Vc_{\eta}$ is conditionally standard normal,
\begin{align}
	\frac{\htheta_{\eta} - \theta_{\eta}}{\Vc_{\eta}} = \frac{G_{\eta}}{\Vc_{\eta}} + \frac{R_{\eta}}{\Vc_{\eta}} \rightsquigarrow \Nc(0,1). 
\end{align}
Moreover, the condition $\sigma^2 \| \boldeta \|_1^2 \Gamma_{\eta} / \Vc^2_{\eta} = o(1)$, implies $(\hVc_{\eta} / \hsigma^2) / (\Vc_{\eta} / \sigma^2) \xrightarrow{p} 1$. 
Together with $\hsigma^2 / \sigma^2 \xrightarrow{p} 1$,
\begin{align}
	\frac{\hVc^2_{\eta}}{\Vc^2_{\eta}} = \frac{\hsigma^2}{\sigma^2} \cdot \frac{\hVc_{\eta} / \hsigma^2}{\Vc_{\eta} / \sigma^2} \xrightarrow{p} 1,
\end{align}
and hence, $\hVc_{\eta} / \Vc_{\eta} \xrightarrow{p} 1$. 
Conditional Slutsky then yields 
\begin{align}
	\frac{\htheta_{\eta} - \theta_{\eta}}{\hVc_{\eta}} \rightsquigarrow \Nc(0,1). 
\end{align}
Restoring the superscript ``\direct'' gives all the conclusions of Theorem~\ref{thm:direct.normality}. 

\end{proof}

\subsection{Proof of Lemma~\ref{lemma:dir.alpha}} \label{sec:proof.dir.identification} 

\begin{proof}
Condition on $\Ec_{\eta}$. 
Let $\mu_j(t) \coloneqq \langle \bu_j, \bv_t(1) \rangle$. 
Use the common $L$-lag recurrence furnished by the proof of Proposition~\ref{prop:alpha}. 
Thus, for a conditionally fixed vector $\bphi  \coloneqq [\phi_1, \dots, \phi_L]^\top \in \Rb^L$, for every $j \in \Ic_1$ and $t \in \Zb$ 
\begin{align}
	\mu_j(t+1) = \sum_{a=1}^L \phi_a \mu_j(t-L+a).
\end{align}
Define the lag state $\bx_j(t) \coloneqq \big[ \mu_j(t-L+1), \dots, \mu_j(t) \big]^\top$. 
By the definition of the companion map in \eqref{eq:pi.matrix}, $\bx_j(t+1) = \bPi(\bphi) \bx_j(t)$. 
Hence, for every $\ell \ge 1$, 
\begin{align}
	\mu_j(t+\ell) = \be_L^\top \bPi(\bphi)^\ell \bx_j(t). 
\end{align}
Set $\balpha^\direct_\ell \coloneqq \{\bPi(\bphi)^\ell\}^\top \be_L$. 
Taking $t= T$ gives
\begin{align}
	\mu_j(T+\ell) = \sum_{a=1}^L \alpha^{\direct}_{\ell, a} \cdot \mu_j(T-L+a),
\end{align}	
which gives part (a).
For a training block $b$, take $t = T_0 + (b-1)(L+h_\eta) + L$. 
Then,
\begin{align}
	\mu_j\left(T_0 + (b-1)(L+h_\eta) + L + \ell \right) = \sum_{a=1}^L \alpha^{\direct}_{\ell, a} \cdot \mu_j \left(T_0+(b-1)(L+h_\eta) + a \right),
\end{align}
which gives part (b). 
Assumptions~\ref{assump:lfm} and \ref{assump:mean_ind} convert these latent means into the corresponding conditional expectations; in the right-hand-side of part (a) and entirety of part (b), the donor outcomes are observed treated outcomes. 

\end{proof}

\section{Proof of Theorem~\ref{thm:recursive.normality}} \label{sec:proofs.recursive} 
Throughout this section, all assumptions are understood as conditional assumptions given $\Ec_{\eta}$. 
Appendix~\ref{sec:proofs.recursive.notation} introduces the notation for the recursive analysis, and Appendix~\ref{sec:recursive.lemmas} states the key intermediate results, culminating in Lemmas~\ref{lemma:rec.normality} and \ref{lemma:rec.var.est}; these are the recursive counterparts of Theorem~\ref{thm:inference} and Proposition~\ref{prop:var.asymp}. 
Appendix~\ref{sec:proofs.rec.complete} then completes the proof of Theorem~\ref{thm:recursive.normality}, while Appendix~\ref{sec:recursive.lemmas.proofs} contains the proofs of the supporting lemmas. 

\subsection{Recursive Notation} \label{sec:proofs.recursive.notation}
We carry over the notation from Appendix~\ref{sec:proofs.notation} and introduce bespoke notation for the recursive framework. 
Notably, we now define $\bbX \coloneqq \Ex[\bX \mid \Ec_\eta]$ for any random $\bX$.  
Recall $\Sc_\eta \coloneqq \{\ell \in [h]: \eta_\ell \neq 0\}$ and $h_{\eta} \coloneqq \max \Sc_\eta$. 
For any vector $\bx \in \Rb^L$ and $\ell \ge 1$, further recall $\bg_\ell(\bx) \coloneqq \{ \bPi(\bx)^\ell \}^\top \be_L$ and $\bJ_\ell(\bx) = \nabla \bg_\ell(\bx)$. 
Define the aggregate recursive map and its Jacobian by
\begin{align}
	\bg_{\eta}(\bx) \coloneqq \sum_{\ell \in \Sc_\eta} \eta_\ell \bg_\ell (\bx),
	\quad
	\bJ_{\eta}(\bx) \coloneqq \sum_{\ell \in \Sc_\eta} \eta_\ell \bJ_\ell(\bx). 
\end{align}
For all $\bu, \bx \in \Rb^L$, define the aggregate Taylor remainder by 
\begin{align}
	\bvarrho_{\eta}(\bx, \bu) \coloneqq \bg_{\eta}(\bx + \bu) - \bg_{\eta}(\bx) - \bJ_{\eta}(\bx)  \bu. 
\end{align}
For compactness, we denote 
\begin{align}
	\balpha^*_{\eta} \coloneqq \bg_{\eta}(\balpha^*),
	\quad
	\bJ^*_{\eta} \coloneqq \bJ_{\eta}(\balpha^*),
	\quad
	\bvarrho_{\eta} \coloneqq \bvarrho_{\eta}(\balpha^*, \bDelta_{\alpha}). 
\end{align}
For ease of notation, we drop all superscripts ``$\rec$.'' 
With this convention, we denote the estimates as 
\begin{align}
	\hbalpha_{\eta} &\coloneqq \hbalpha^\rec_{\eta} = \bg_{\eta}(\hbalpha), 
	\quad
	\hbJ_{\eta} \coloneqq \bJ_{\eta}(\hbalpha). 
\end{align}
We define the population aggregate minimum $\ell_2$-norm Riesz representers by 
\begin{align}
	\bq^*_{\beta \eta} \coloneqq  \bbY^\dagger \bbW \balpha^*_{\eta}, 
	\quad
	\bq^*_{\alpha \eta} \coloneqq \bbZ^\dagger \left(\bJ^*_{\eta} \right)^\top \bbW^\top \bbeta^*. \label{eq:riesz.rep.rec}
\end{align} 
Following the proof of Theorem~\ref{thm:inference}, we introduce proof-only projection representers
\begin{align}
	\tbq_{\beta \eta} \coloneqq \hbP_y \bq^*_{\beta \eta},
	\quad
	\tbq_{\alpha \eta} \coloneqq \hbP_z \bq^*_{\alpha \eta}, \label{eq:rec.riesz.proj.def}
\end{align}
where $\hbP_y$ and $\hbP_z$ are defined as in \eqref{eq:riesz.proj.def}. 
We write the empirical estimates as 
\begin{align}
	\hbq_{\beta \eta} &\coloneqq \hbq^\rec_{\beta \eta} = \hbY^\dagger \bW \hbalpha_{\eta},
	\quad
	\hbq_{\alpha \eta} \coloneqq \hbq^\rec_{\alpha \eta} = \hbZ^\dagger \hbJ_{\eta}^\top \bW^\top \hbbeta. 
\end{align}
Finally, denote the estimation errors as  
\begin{align}
	\bDelta_{\tq_{\beta \eta}} &\coloneqq \tbq_{\beta \eta} - \bq^*_{\beta \eta},
	\quad
	\bDelta_{\tq_{\alpha \eta}} \coloneqq \tbq_{\alpha \eta} - \bq^*_{\alpha \eta},
	\\
	\bDelta_{q_{\beta \eta}} &\coloneqq \hbq_{\beta \eta} - \bq^*_{\beta \eta},
	\quad
	\bDelta_{q_{\alpha \eta}} \coloneqq \hbq_{\alpha \eta} - \bq^*_{\alpha \eta}. 
\end{align}

\subsection{Key Lemmas} \label{sec:recursive.lemmas}

\noindent Lemma~\ref{lemma:jacobian} gives deterministic calculus bounds for the recursive companion-map coefficient $\bg_{\eta}$: its size, Jacobian, Jacobian stability, and second-order Taylor remainder. These bounds quantify how the first-stage estimation error in $\hbalpha$ propagates through the recursive forecast coefficients. 
\begin{lemma} \label{lemma:jacobian}
For any $\bx \in \Rb^L$, 
\begin{align}
	\| \bg_{\eta} (\bx) \|_2 &\le \| \boldeta \|_1  \left(1 + \| \bx \|_2 \right)^{h_{\eta}-1}  \| \bx \|_2, \label{eq:jacobian.0}
	\\
	\| \bJ_{\eta}(\bx) \|_{\emph{op}} &\le h_{\eta}  \| \boldeta \|_1  \left( 1 + \| \bx \|_2 \right)^{h_{\eta}-1}. \label{eq:jacobian.1}
\end{align}
If $h_{\eta} \ge 2$, then for every $\bx, \bu \in \Rb^L$, 
\begin{align}
	\| \bJ_{\eta}(\bx+\bu) - \bJ_{\eta}(\bx) \|_{\emph{op}} &\le h_{\eta}(h_{\eta}-1)  \| \boldeta \|_1  \left( 1 + \| \bx \|_2 + \| \bu \|_2 \right)^{h_{\eta}-2}  \| \bu \|_2, \label{eq:jacobian.2}
	\\
	\| \bvarrho_{\eta}(\bx, \bu) \|_2 &\le \frac{h_{\eta}(h_{\eta}-1)}{2}  \| \boldeta \|_1  \left(1 + \| \bx \|_2 + \| \bu \|_2 \right)^{h_{\eta} - 2}  \| \bu \|_2^2. \label{eq:jacobian.3} 
\end{align}
If $h_{\eta} =1$, then $\bg_{\eta}(\bx)= \eta_1 \bx$, $\bJ_{\eta} (\bx)= \eta_1 \bI$, and $\bvarrho_{\eta} (\bx, \bu)= \bzero$. 
\end{lemma}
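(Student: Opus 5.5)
The plan is to work one lead $\ell$ at a time and then sum with weights $\eta_\ell$; by the triangle inequality, a per-lead bound of the form $c_\ell(1+\|\bx\|_2)^{\ell-1}$, with $c_\ell$ nondecreasing in $\ell$, gives the aggregate bound once we use $\ell \le h_\eta$ and $\sum_\ell|\eta_\ell| = \|\boldeta\|_1$. The basic operator estimate is
$$\|\bPi(\bx)\|_{\mathrm{op}} \le 1 + \|\bx\|_2 .$$
To see it, write $\bPi(\bx) = \bS + \be_L \bx^\top$, where $\bS$ is the shift block. Then $\|\bS\|_{\mathrm{op}} = 1$ and $\|\be_L\bx^\top\|_{\mathrm{op}} = \|\bx\|_2$.

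For \eqref{eq:jacobian.0} I would use the recursion
$$\bg_{\ell}(\bx) = \{\bPi(\bx)^{\ell-1}\}^\top \bPi(\bx)^\top \be_L .$$
Since $\bPi(\bx)^\top\be_L = \bx$, this gives $\bg_\ell(\bx) = \{\bPi(\bx)^{\ell-1}\}^\top\bx$, and hence $\|\bg_\ell(\bx)\|_2 \le (1+\|\bx\|_2)^{\ell-1}\|\bx\|_2$. For \eqref{eq:jacobian.1} I would start from the stated Jacobian representation. Each scalar coefficient satisfies $|\be_L^\top\bPi(\bx)^a\be_L| \le (1+\|\bx\|_2)^a$, and each matrix factor has operator norm at most $(1+\|\bx\|_2)^{\ell-1-a}$. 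The product is therefore at most $(1+\|\bx\|_2)^{\ell-1}$, and summing the $\ell$ terms gives $\ell(1+\|\bx\|_2)^{\ell-1}$. The Jacobian representation itself comes from differentiating $\bx \mapsto \bx^\top\{\bPi(\bx)^{\ell}\}\,\cdot$ via the product rule, using $\partial\bPi = \be_L\,\mathrm{d}\bx^\top$. I take it as given, as the paper states it.

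For \eqref{eq:jacobian.2} I would avoid differentiating the representation a second time. Instead I would bound the difference directly, using the telescoping identity
$$\bA^k - \bB^k = \sum_{i}\bA^i(\bA-\bB)\bB^{k-1-i}$$
with $\bA = \bPi(\bx+\bu)$ and $\bB = \bPi(\bx)$. Here $\bA - \bB = \be_L\bu^\top$, so $\|\bA-\bB\|_{\mathrm{op}} = \|\bu\|_2$, and both $\|\bA\|_{\mathrm{op}}$ and $\|\bB\|_{\mathrm{op}}$ are at most $1+\|\bx\|_2+\|\bu\|_2 =: \kappa$. Each power difference $\bA^k-\bB^k$ is therefore bounded by $k\kappa^{k-1}\|\bu\|_2$. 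Each term of $\bJ_\ell$ is a product of a scalar power and a matrix power with exponents summing to $\ell-1$. Perturbing both factors gives a term bound of $(a + \ell-1-a)\kappa^{\ell-2}\|\bu\|_2 = (\ell-1)\kappa^{\ell-2}\|\bu\|_2$, and summing over the $\ell$ terms gives $\ell(\ell-1)\kappa^{\ell-2}\|\bu\|_2$.

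For \eqref{eq:jacobian.3} I would use the integral form of Taylor's theorem:
$$\bvarrho_\eta(\bx,\bu) = \int_0^1\{\bJ_\eta(\bx+s\bu)-\bJ_\eta(\bx)\}\bu\,\mathrm{d}s .$$
Applying \eqref{eq:jacobian.2} with $s\bu$ in place of $\bu$ and bounding $1+\|\bx\|_2+s\|\bu\|_2 \le \kappa$, the integral of $s\,\mathrm{d}s$ contributes the factor $1/2$. Here one has to be careful about which convention $\bJ$ refers to (gradient or Jacobian, i.e.\ a transpose); operator norms are invariant under transposition, so this does not affect the bound.

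The case $h_\eta = 1$ is immediate: $\bPi(\bx)^\top\be_L = \bx$, so $\bg_1(\bx) = \bx$ and $\bg_\eta$ is linear. The main obstacle is the bookkeeping in \eqref{eq:jacobian.2}, namely keeping the exponent exactly $\ell-2$ while both the scalar factor and the matrix factor of each Jacobian term are perturbed. I expect the telescoping argument above to handle it cleanly.
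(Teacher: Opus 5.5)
Your proposal is correct and follows the paper's proof essentially step for step. The steps match: $\bPi(\bx)=\bS+\be_L\bx^\top$ giving $\|\bPi(\bx)\|_{\mathrm{op}}\le 1+\|\bx\|_2$, the identity $\bg_\ell(\bx)=\{\bPi(\bx)^\top\}^{\ell-1}\bx$, norm bounds on the product-rule Jacobian representation, the telescoping power-difference bound with an add-and-subtract on each summand for \eqref{eq:jacobian.2}, and the integral form of Taylor's theorem for \eqref{eq:jacobian.3}. One correction to your closing caveat: transposition invariance covers \eqref{eq:jacobian.1}--\eqref{eq:jacobian.2}, but the integral identity behind \eqref{eq:jacobian.3} needs $\bJ_\eta(\bx)\bu$ to be the directional derivative of $\bg_\eta$ (otherwise $\bvarrho_\eta$ keeps a first-order term), which the stated representation does give and which the paper verifies by differentiating $t\mapsto\bPi(\bx+t\bu)^\ell$.
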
 

\noindent Lemma~\ref{lemma:rec.identification} is an analog of Corollary~\ref{cor:identification} as it establishes identification of the aggregate estimand $\theta_{\eta}$ within the recursive framework. 
\begin{lemma} \label{lemma:rec.identification}
%
%
Let the setup of Corollary~\ref{cor:identification} hold after conditioning on $\Ec_{\eta}$, with Assumption~\ref{assump:recursive} replacing Assumption~\ref{assump:transport}. 
Then, conditional on $\Ec_{\eta}$, 
\begin{align}
	\theta_{\eta} &= \left\langle \balpha^*_{\eta}, \bbW^\top \bbeta^* \right \rangle. \label{eq:identification.rec}
\end{align}
Moreover, the Riesz equations below are feasible
\begin{align}
	\bbY \bq^{*}_{\beta \eta} = \bbW \balpha^*_{\eta},
	\quad
	\bbZ \bq^{*}_{\alpha \eta} = \left(\bJ^*_{\eta} \right)^\top \bbW^\top \bbeta^*, \label{eq:riesz.rec}
\end{align}
where the minimum $\ell_2$-norm Riesz representers are defined as in \eqref{eq:riesz.rep.rec}. 
\end{lemma}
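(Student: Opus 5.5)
The plan is to mimic the proof of Corollary~\ref{cor:identification} while tracking the recursive coefficients $\bg_\ell(\balpha)$. Fix any $\balpha$ satisfying Proposition~\ref{prop:alpha} and any $\bbeta$ satisfying Proposition~\ref{prop:beta}; both statements remain valid conditional on $\Ec_\eta$, since the proofs only use the latent recurrence and Assumptions~\ref{assump:lfm}--\ref{assump:mean_ind}.

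\textbf{Step 1 (population representation).} As in Lemma~\ref{lemma:dir.alpha}, the lag-state dynamics give, for every donor $j$ and $\ell\ge1$, $\mu_j(T+\ell) = \be_L^\top \bPi(\balpha)^\ell \bx_j(T) = \langle \bg_\ell(\balpha), \bx_j(T)\rangle$, where $\bx_j(T)$ is row $j$ of $\bbW$. The unit weights from Proposition~\ref{prop:beta}(a), applied at each $T+\ell$, then give
\begin{align}
\theta_\eta = \sum_{\ell\in\Sc_\eta}\eta_\ell \langle \bg_\ell(\balpha), \bbW^\top\bbeta\rangle = \langle \bg_\eta(\balpha), \bbW^\top\bbeta\rangle .
\end{align}

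\textbf{Step 2 (replace $\bbeta$ by $\bbeta^*$).} The first half of Assumption~\ref{assump:recursive} gives $\bP_\beta\bbW=\bbW$. Hence $\bbW^\top\bbeta=\bbW^\top\bbeta^*$, exactly as in Corollary~\ref{cor:identification}.

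\textbf{Step 3 (replace $\balpha$ by $\balpha^*$ inside $\bg_\eta$).} This is the main obstacle, because $\bg_\eta$ is nonlinear, so projecting $\balpha$ does not commute with it. I would argue by induction on $\ell$ that $\bbW\{\bPi(\balpha)^\ell\}^\top = \bbW\{\bPi(\balpha^*)^\ell\}^\top$ for $0\le \ell\le h_\eta$.

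The key fact is that $\bPi(\bx)^\top = [\,\text{shift part}\;\; \bx\,]$ differs between $\bx=\balpha$ and $\bx=\balpha^*$ only in the last column. For any matrix $\bA$, that column of $\bA\bPi(\bx)^\top$ is $\bA\bx$. Take $\bA = \bbW\{\bPi(\balpha)^{\ell}\}^\top$ with $\ell\le h_\eta-1$. By the second half of Assumption~\ref{assump:recursive}, its rows lie in $\row(\bbZ^\top)$, so $\bA\balpha = \bA\bP_\alpha\balpha = \bA\balpha^*$. Therefore $\bA\bPi(\balpha)^\top=\bA\bPi(\balpha^*)^\top$. Combining this with the induction hypothesis, which lets $\bA$ be rewritten in terms of $\balpha^*$, completes the induction.

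Since $\bbW\bg_\ell(\bx) = \bbW\{\bPi(\bx)^\ell\}^\top\be_L$, we obtain $\bbW\bg_\eta(\balpha)=\bbW\bg_\eta(\balpha^*)=\bbW\balpha^*_\eta$. This proves \eqref{eq:identification.rec}. The induction uses the assumption for every $\balpha$ satisfying Proposition~\ref{prop:alpha}; $\balpha^*$ need not satisfy it, which is why the induction is run on the $\balpha$-side matrices.

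\textbf{Step 4 (feasibility of the Riesz equations).} For $\bq^*_{\beta\eta}$: $\bbW\balpha^*_\eta\in\col(\bbW)\subseteq\col(\bbY)$, so $\bbY\bbY^\dagger\bbW\balpha^*_\eta=\bbW\balpha^*_\eta$.

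For $\bq^*_{\alpha\eta}$, I need $(\bJ^*_\eta)^\top\bbW^\top\bbeta^*\in\col(\bbZ)=\row(\bbZ^\top)$. Using the stated Jacobian representation, $\bJ_\ell(\balpha^*)^\top\bbW^\top\bbeta^*$ is a linear combination of the vectors $\bPi(\balpha^*)^{\ell-1-a}\bbW^\top\bbeta^*$ for $0\le \ell-1-a\le h_\eta-1$. These are transposes of rows of $\bbW\{\bPi(\balpha^*)^k\}^\top$. By Step 3 they equal the corresponding rows of $\bbW\{\bPi(\balpha)^k\}^\top$, which lie in $\row(\bbZ^\top)$ by Assumption~\ref{assump:recursive}. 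Hence $\bbZ\bbZ^\dagger$ fixes this vector, and the minimum-norm solution \eqref{eq:riesz.rep.rec} solves \eqref{eq:riesz.rec}.
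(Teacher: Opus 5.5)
Your proposal is correct and follows essentially the same route as the paper: write $\theta_\eta$ through $\bbW\bg_\eta(\balpha)$, swap $\bbeta$ for $\bbeta^*$ using the column-space condition, show $\bbW\{\bPi(\balpha^*)^\ell\}^\top = \bbW\{\bPi(\balpha)^\ell\}^\top$ for $\ell \le h_\eta$, and then obtain feasibility of both Riesz equations from the Jacobian representation. The only difference is cosmetic. The paper gets the key identity in one step from the telescoping formula $\bB^\ell - \bA^\ell = \sum_{a} \bB^a(\bB-\bA)\bA^{\ell-1-a}$ with the rank-one difference $\bPi(\balpha^*) - \bPi(\balpha) = \be_L(\balpha^*-\balpha)^\top$, and your induction on the last column of $\bPi(\bx)^\top$ is the unrolled version of that argument. (In Step~4, the vectors $\bPi(\balpha^*)^{k}\bbW^\top\bbeta^*$ are $\bbeta^*$-weighted combinations of rows rather than single rows, which does not affect the conclusion.)
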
 

\noindent Lemma~\ref{lemma:rec.decomp} is an analog of Lemma~\ref{lemma:decomp} in that it decomposes the pointwise error in terms of two distinct components. 

\begin{lemma} \label{lemma:rec.decomp}
Let the setup of Lemma~\ref{lemma:rec.identification} hold. 
Further, let Assumption~\ref{assump:bounded} hold, as well as $k_y = r_y$ and $k_z = r_z$. 
Define
\begin{align}
	G_{\eta} &\coloneqq \left\langle \balpha^*_{\eta}, \bXi_w^\top \bbeta^* \right\rangle + \left\langle \bq^*_{\beta \eta}, \bdelta_{\beta} \right\rangle + \left\langle \bq^*_{\alpha \eta}, \bdelta_{\alpha} \right \rangle, \label{eq:lead.rec}
	\\
	R_{\eta} &\coloneqq
	\left\langle \bJ_{\eta}^* \bDelta_{\alpha}, \bbW^\top \bDelta_{\beta} \right\rangle
	+ \left\langle \bJ_{\eta}^* \bDelta_{\alpha}, \bXi_w^\top \bbeta^* \right\rangle
	+ \left\langle \balpha^*_{\eta}, \bXi_w^\top \bDelta_{\beta} \right\rangle
	+ \left\langle \bJ_{\eta}^* \bDelta_{\alpha}, \bXi_w^\top \bDelta_{\beta} \right\rangle
	\\ &\qquad
		+ \left\langle \bvarrho_{\eta}, \bbW^\top \bbeta^* \right \rangle
		+ \left\langle \bvarrho_{\eta}, \bbW^\top \bDelta_{\beta} \right \rangle
		+ \left\langle \bvarrho_{\eta}, \bXi_w^\top \bbeta^* \right \rangle
		+ \left\langle \bvarrho_{\eta}, \bXi_w^\top \bDelta_{\beta} \right \rangle
	\\ &\qquad
	-  \left\langle \bq^{*}_{\beta \eta}, \bXi_y^\top \bDelta_{\beta} \right\rangle
	- \left\langle \bq^{*}_{\alpha \eta}, \bXi_z^\top \bDelta_{\alpha} \right\rangle
	\\ &\qquad
	+ \left\langle \bDelta_{\tq_{\beta \eta}}, \bdelta_{\beta} \right\rangle - \left\langle \bDelta_{\tq_{\beta \eta}}, \bbY^\top \bDelta_{\beta} \right\rangle - \left\langle \bDelta_{\tq_{\beta \eta}}, \bXi_y^\top \bDelta_{\beta} \right\rangle
	\\ &\qquad
	+  \left\langle \bDelta_{\tq_{\alpha \eta}}, \bdelta_{\alpha} \right\rangle - \left\langle \bDelta_{\tq_{\alpha \eta}}, \bbZ^\top \bDelta_{\alpha} \right\rangle - \left\langle \bDelta_{\tq_{\alpha \eta}}, \bXi_z^\top \bDelta_{\alpha} \right\rangle.  \label{eq:remain.rec}
\end{align}
Then, conditional on $\Ec_{\eta}$, we have $\htheta_{\eta} - \theta_{\eta} = G_{\eta} + R_{\eta}$. 
\end{lemma}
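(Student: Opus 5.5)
We prove Lemma~\ref{lemma:rec.decomp} by repeating the algebra of Lemma~\ref{lemma:decomp}. The only new ingredient is a first-order Taylor expansion of the nonlinear coefficient map $\bg_{\eta}$ around $\balpha^*$.

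\emph{Step 1 (primal expansion).} Condition on $\Ec_{\eta}$. By Lemma~\ref{lemma:rec.identification}, $\theta_{\eta} = \langle \balpha^*_{\eta}, \bbW^\top \bbeta^*\rangle$. By the definition of the aggregate Taylor remainder,
\begin{align}
	\hbalpha_{\eta} = \bg_{\eta}(\balpha^* + \bDelta_{\alpha}) = \balpha^*_{\eta} + \bJ^*_{\eta}\bDelta_{\alpha} + \bvarrho_{\eta}.
\end{align}
Write $\bW = \bbW + \bXi_w$ and $\hbbeta = \bbeta^* + \bDelta_{\beta}$, and expand the trilinear-looking product $\langle \hbalpha_{\eta}, \bW^\top \hbbeta\rangle - \theta_{\eta}$. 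This mirrors \eqref{eq:general}, with $\bDelta_{\alpha}$ replaced by $\bJ^*_{\eta}\bDelta_{\alpha} + \bvarrho_{\eta}$. The expansion produces four groups of terms:
\begin{itemize}
	\item the pure noise term $\langle \balpha^*_{\eta}, \bXi_w^\top\bbeta^*\rangle$;
	\item the two nuisance-linear terms $\langle \bJ^*_{\eta}\bDelta_{\alpha}, \bbW^\top\bbeta^*\rangle$ and $\langle \balpha^*_{\eta}, \bbW^\top\bDelta_{\beta}\rangle$;
	\item the Jacobian cross terms listed in the first line of \eqref{eq:remain.rec};
	\item the four $\bvarrho_{\eta}$-terms in the second line of \eqref{eq:remain.rec}.
\end{itemize}

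\emph{Step 2 (zero corrections).} Add the corrections $\langle \tbq_{\beta\eta}, \by - \bY^\top\hbbeta\rangle$ and $\langle \tbq_{\alpha\eta}, \bz - \bZ^\top\hbalpha\rangle$. Both vanish exactly. Indeed, $\tbq$ lies in the row space of the truncated design, $\bY^\top\hbbeta = \hbY^\top\hbbeta$ and $\bZ^\top\hbalpha = \hbZ^\top\hbalpha$, and the \PCR\ normal equations make the residuals orthogonal to those row spaces.

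Two points need care here. First, the time-side correction uses the one-step residual $\bz - \bZ^\top\hbalpha$, not an aggregated one, because recursion uses only the one-step fit. Second, the residual identity $\bz - \bZ^\top\hbalpha = \bdelta_{\alpha} - \bbZ^\top\bDelta_{\alpha} - \bXi_z^\top\bDelta_{\alpha}$ requires $\bbz = \bbZ^\top\balpha^*$. This follows from Proposition~\ref{prop:alpha}(b) together with projection onto $\row(\bbZ^\top)$, and likewise $\bby = \bbY^\top\bbeta^*$. Expanding with $\tbq = \bq^* + \bDelta_{\tq}$ then yields exactly the analogues of \eqref{eq:unit.correct} and \eqref{eq:time.correct}.

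\emph{Step 3 (cancellation).} Use the feasible Riesz equations \eqref{eq:riesz.rec}:
\begin{align}
	\langle \bq^*_{\beta\eta}, \bbY^\top\bDelta_{\beta}\rangle &= \langle \bDelta_{\beta}, \bbW\balpha^*_{\eta}\rangle = \langle \balpha^*_{\eta}, \bbW^\top\bDelta_{\beta}\rangle,\\
	\langle \bq^*_{\alpha\eta}, \bbZ^\top\bDelta_{\alpha}\rangle &= \langle \bDelta_{\alpha}, (\bJ^*_{\eta})^\top\bbW^\top\bbeta^*\rangle = \langle \bJ^*_{\eta}\bDelta_{\alpha}, \bbW^\top\bbeta^*\rangle.
\end{align}
These exactly cancel the two nuisance-linear terms from Step 1. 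Collecting the survivors gives $G_{\eta}$ as in \eqref{eq:lead.rec} and every term of $R_{\eta}$ as in \eqref{eq:remain.rec}.

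\emph{Main obstacle.} The delicate point is Step 3's time-side cancellation. The representer must be defined through the transposed Jacobian, and its feasibility requires $(\bJ^*_{\eta})^\top\bbW^\top\bbeta^* \in \col(\bbZ)$. This is precisely why Assumption~\ref{assump:recursive} imposes the shifted row-space condition; Lemma~\ref{lemma:rec.identification} supplies it, so the only remaining task is bookkeeping that every cross term is accounted for once.
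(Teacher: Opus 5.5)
Your proposal is correct and follows the paper's proof essentially step for step: the projected-representer corrections vanish by the \PCR\ normal equations, $\hbalpha_{\eta} = \balpha^*_{\eta} + \bJ^*_{\eta}\bDelta_{\alpha} + \bvarrho_{\eta}$ gives the eleven-term primal expansion, and the two Riesz identities from \eqref{eq:riesz.rec} cancel the nuisance-linear terms. You also check explicitly that $\bbz = \bbZ^\top\balpha^*$ and $\bby = \bbY^\top\bbeta^*$, and that the Taylor identity holds by definition of $\bvarrho_{\eta}$ rather than by Lemma~\ref{lemma:jacobian}; the paper leaves these points implicit.
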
 

\noindent Lemma~\ref{lemma:rec.normality} is an analog of Theorem~\ref{thm:inference} that establishes the Gaussian law of the lead term and offers a high-probability bound on the remainder term. 
\begin{lemma} \label{lemma:rec.normality}
Let the setup of Theorem~\ref{thm:inference} hold after conditioning on $\Ec_{\eta}$, with Assumption~\ref{assump:recursive} replacing Assumption~\ref{assump:transport}. 
Then, conditional on $\Ec_{\eta}$, $G_{\eta} \sim \mathcal{N}(0, \Vc^2_{\eta})$, where
\begin{align}
	\Vc_{\eta}^2 &\coloneqq \sigma^2  
	\left\{ 
	\| \balpha_{\eta}^* \|_2^2  \cdot \| \bbeta^* \|_2^2 
	+ \| \bq_{\beta \eta}^* \|_2^2  \left(1 + \| \bbeta^* \|_2^2 \right)
 	+ \| \bq_{\alpha \eta}^* \|_2^2  \left(1 + \| \balpha^* \|_2^2 \right)
	\right\};
	\label{eq:rec.asymp.var} 
\end{align}
moreover, with probability at least $1 - \Oc(\rho)$, $|R_{\eta}| \lesssim \mathfrak{C} \| \boldeta \|_1 \Psi$.
%
\end{lemma}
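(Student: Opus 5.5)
The plan mirrors the proofs of Lemmas~\ref{lemma:lead} and \ref{lemma:remainder}, starting from the exact decomposition of Lemma~\ref{lemma:rec.decomp}. Throughout I condition on $\Ec_\eta$.

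\emph{Gaussian law of $G_\eta$.} The three summands of $G_\eta$ in \eqref{eq:lead.rec} are linear forms in three noise blocks: $\bXi_w$, $(\bxi_y,\bXi_y)$, and $(\bxi_z,\bXi_z)$. By Lemma~\ref{lemma:rec.identification}, the coefficient vectors $\balpha^*_\eta$, $\bbeta^*$, $\bq^*_{\beta\eta}$ and $\bq^*_{\alpha\eta}$ are $\Ec_\eta$-measurable, because $\bJ^*_\eta$ is evaluated at the population $\balpha^*$. The Page construction places the three blocks in disjoint unit--time cells, so under Gaussian homoskedastic noise they are mutually independent. I will then reuse the computation of Lemma~\ref{lemma:lead}:
\begin{itemize}
\item $\bdelta_\beta\sim\Nc(0,\sigma^2(1+\|\bbeta^*\|_2^2)\bI)$;
\item $\bdelta_\alpha\sim\Nc(0,\sigma^2(1+\|\balpha^*\|_2^2)\bI)$, where the factor uses the one-step $\balpha^*$ because $\bdelta_\alpha$ is the one-step residual noise;
\item $\langle\balpha^*_\eta,\bXi_w^\top\bbeta^*\rangle$ has variance $\sigma^2\|\balpha^*_\eta\|_2^2\|\bbeta^*\|_2^2$.
\end{itemize}
Summing these variances gives \eqref{eq:rec.asymp.var}.

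\emph{Bounding $R_\eta$.} I will group the terms of \eqref{eq:remain.rec} and work on the same master event $\Gc_\star$ as in Lemma~\ref{lemma:remainder}, which has probability $1-\Oc(\rho)$.

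\begin{itemize}
\item \emph{Terms with $\bJ^*_\eta\bDelta_\alpha$ in place of $\bDelta_\alpha$.} By \eqref{eq:jacobian.1}, $\|\bJ^*_\eta\|_\txtop\le h_\eta\|\boldeta\|_1(1+\|\balpha^*\|_2)^{h_\eta-1}$. Here $\|\balpha^*\|_2\le\sqrt M/\lambda_z\lesssim\sqrt{r_z/L}$ by \eqref{eq:alpha.beta.bound} and \eqref{eq:lmda.lb}, which is bounded by a constant under Assumption~\ref{assump:spectra}. So every one-step bound carries over, multiplied by $C_{h_\eta}\|\boldeta\|_1$. For the stochastic $\bXi_w$ term, Lemma~\ref{lemma:noise} applies with $\bu=\bJ^*_\eta\bDelta_\alpha$, which is measurable with respect to the non-$\bW$ data.
\item \emph{Terms with $\balpha^*_\eta$ replacing $\balpha^*$.} These are bounded via \eqref{eq:jacobian.0} in the same way.
\item \emph{Taylor-remainder terms in $\bvarrho_\eta$.} Using \eqref{eq:jacobian.3} on $\Gc_\PCR$,
\[
\|\bvarrho_\eta\|_2\lesssim \|\boldeta\|_1 C_{h_\eta}\big(1+\sigma\Lambda_\alpha/\sqrt L\big)^{h_\eta-2}\,\sigma^2\Lambda_\alpha^2/L .
\]
Then $\langle\bvarrho_\eta,\bbW^\top\bbeta^*\rangle\le\|\bvarrho_\eta\|_2\sqrt L\lesssim\sigma^2\Lambda_\alpha^2/\sqrt L$. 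This is dominated by the $\Psi_\alpha$ contribution, since $\Lambda_\alpha/\sqrt L\lesssim\Omega_\alpha$ up to the $\sqrt{r_z M}/\min\{\sqrt L,\sqrt M\}$ factor. The other three $\bvarrho_\eta$ terms are products with already-small quantities and are handled by Cauchy--Schwarz and Lemma~\ref{lemma:noise}.
\item \emph{Representer terms.} These need $\|\bq^*_{\alpha\eta}\|_2\lesssim\|\bJ^*_\eta\|_\txtop\sqrt L/\lambda_z$, using $\|\bbW^\top\bbeta^*\|_2\le\sqrt L$ from \eqref{eq:barwbeta.bound}, and $\|\bq^*_{\beta\eta}\|_2\le\|\bbW\balpha^*_\eta\|_2/\lambda_y\lesssim\|\boldeta\|_1\sqrt{N_1}/\lambda_y$. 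With these, Lemma~\ref{lemma:riesz.0}'s argument (via Lemma~\ref{lemma:pcr.peturb.2}) bounds $\bDelta_{\tq_{\beta\eta}}$ and $\bDelta_{\tq_{\alpha\eta}}$ by the one-step rates times $C_{h_\eta}\|\boldeta\|_1$. The term-by-term bounds of Lemma~\ref{lemma:remainder} then go through verbatim.
\end{itemize}

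\emph{Main obstacle.} The hardest step is $\|\bbW\balpha^*_\eta\|_2$. It requires the recursive-recovery Assumption~\ref{assump:recursive}, which guarantees that $\bbW\bg_\ell(\balpha^*)$ equals the true lead-$\ell$ treated means, each bounded by one. I would establish this inductively on $\ell$ using the row-space inclusions $\bbW\{\bPi(\balpha)^\ell\}^\top\in\row(\bbZ^\top)$, as in Lemma~\ref{lemma:rec.identification}.

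Collecting the factors gives $|R_\eta|\lesssim C_{h_\eta}(1+\sigma\Lambda_\alpha/\sqrt L)^{h_\eta-1}\|\boldeta\|_1\Psi=\mathfrak C\|\boldeta\|_1\Psi$.
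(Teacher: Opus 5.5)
Your proposal follows the paper's proof essentially step for step. The Gaussian law of $G_\eta$ is read off from the three mutually independent noise blocks exactly as in Lemma~\ref{lemma:lead}, and the remainder is bounded term by term as in Lemma~\ref{lemma:remainder}, with the one-step bounds inflated by the Jacobian and Taylor-remainder estimates of Lemma~\ref{lemma:jacobian} (the paper packages these into the envelope $A_\eta$) and with $\|\bq^*_{\beta\eta}\|_2\lesssim\|\boldeta\|_1\sqrt{N_1}/\lambda_y$ and $\|\bq^*_{\alpha\eta}\|_2\lesssim A_\eta\sqrt{L}/\lambda_z$. The identity $\bbW\bg_\ell(\balpha^*)=\bbW\bg_\ell(\balpha)$ that you single out as the main obstacle is already proved in Lemma~\ref{lemma:rec.identification}, so the paper simply cites it; it uses a telescoping identity there, and your induction on $\ell$ via $\{\bPi(\balpha^*)\}^\top=\{\bPi(\balpha)\}^\top+(\balpha^*-\balpha)\be_L^\top$ works equally well.
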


\noindent Finally, Lemma~\ref{lemma:rec.var.est} is an analog of Proposition~\ref{prop:var.asymp} bounding the variance estimation error. 
\begin{lemma} \label{lemma:rec.var.est}
Let the setup and conditions of Lemma~\ref{lemma:rec.normality} hold. 
Then, conditional on $\Ec_{\eta}$, with probability at least $1 - \Oc(\rho)$, 
\begin{align}
	\left| \frac{\hVc^2_{\eta}}{\hsigma^2} - \frac{\Vc^2_{\eta}}{\sigma^2} \right| &\lesssim \mathfrak{C}^2 \| \boldeta \|_1^2 \Gamma.
\end{align}
\end{lemma}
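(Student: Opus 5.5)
The plan is to repeat the proof of Proposition~\ref{prop:var.asymp} term by term, with the recursive objects substituted for the one-step ones. The three summands of $\hVc^2_{\eta}/\hsigma^2 - \Vc^2_{\eta}/\sigma^2$ are handled separately using the elementary inequality \eqref{eq:alg.ineq}. Every bound is proved on the event $\Gc_\noise \cap \Gc_\PCR$, which has probability at least $1-\Oc(\rho)$ by Theorem~\ref{prop:parameter.recovery} and Lemma~\ref{lemma:subg_matrix}.

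\textbf{Step 1: the primal weight.} We first bound $\| \hbalpha_{\eta} - \balpha^*_{\eta} \|_2 = \| \bJ^*_{\eta} \bDelta_{\alpha} + \bvarrho_{\eta} \|_2$ using Lemma~\ref{lemma:jacobian}. This gives $\| \hbalpha_\eta - \balpha^*_\eta\|_2 \lesssim h_\eta \| \boldeta \|_1 (1+\|\balpha^*\|_2+\|\bDelta_\alpha\|_2)^{h_\eta-1}\|\bDelta_\alpha\|_2$. The same lemma gives $\|\balpha^*_\eta\|_2 \lesssim \|\boldeta\|_1(1+\|\balpha^*\|_2)^{h_\eta-1}\|\balpha^*\|_2$. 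Under Assumption~\ref{assump:spectra}, $\|\balpha^*\|_2 \le \sqrt{M}/\lambda_z \lesssim \sqrt{r_z/L}$, which we treat as bounded by the constants behind $C_{h_\eta}$. Combined with $\|\bDelta_\alpha\|_2 \lesssim \sigma\Lambda_\alpha/\sqrt{L}$, the prefactor is absorbed into $\mathfrak{C}_\eta \|\boldeta\|_1$. Term 1 then has exactly the one-step form \eqref{eq:var.est.t1}, with $(\hbalpha,\balpha^*,\bDelta_\alpha)$ scaled by $\mathfrak{C}_\eta\|\boldeta\|_1$. Since the term is quadratic in these quantities, the factor $\mathfrak{C}_\eta^2\|\boldeta\|_1^2$ appears.

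\textbf{Step 2: the unit-side representer.} Next we bound $\bDelta_{q_{\beta\eta}} = \hbY^\dagger\bW\hbalpha_\eta - \bbY^\dagger\bbW\balpha^*_\eta$. We split it exactly as in the proof of Lemma~\ref{lemma:riesz}, with the pseudoinverse perturbation handled by Lemma~\ref{lemma:pseudoinverse.stewart}. The one-step inputs $\|\bbW\balpha^*\|_2\le\sqrt{N_1}$, $\|\balpha^*\|_2$ and $\|\bDelta_\alpha\|_2$ are replaced by their recursive counterparts from Step 1. For $\|\bbW\balpha^*_\eta\|_2$ we use the identification in Lemma~\ref{lemma:rec.identification}. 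Each coordinate $(\bbW\balpha^*_\eta)_j = \sum_\ell \eta_\ell \Ex[Y_{j,T+\ell}(1)\mid\Ec_\eta]$ is bounded by $\|\boldeta\|_1$ under Assumption~\ref{assump:bounded}, so $\|\bbW\balpha^*_\eta\|_2 \le \|\boldeta\|_1\sqrt{N_1}$. This yields $\|\bDelta_{q_{\beta\eta}}\|_2 \lesssim \mathfrak{C}_\eta\|\boldeta\|_1 Q_\beta$, and Term 2 follows as in \eqref{eq:var.est.t2}.

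\textbf{Step 3: the time-side representer (main obstacle).} The difficulty is $\hbq_{\alpha\eta} = \hbZ^\dagger \hbJ_\eta^\top \bW^\top\hbbeta$, because the Jacobian is itself estimated. We decompose
\[
\bDelta_{q_{\alpha\eta}} = (\hbZ^\dagger - \bbZ^\dagger)\bJ^{*\top}_\eta\bbW^\top\bbeta^* + \hbZ^\dagger(\hbJ_\eta - \bJ^*_\eta)^\top\bW^\top\hbbeta + \hbZ^\dagger\bJ^{*\top}_\eta(\bW^\top\hbbeta - \bbW^\top\bbeta^*).
\]
\begin{itemize}
\item The first and third pieces are the one-step terms of Lemma~\ref{lemma:riesz}, multiplied by $\|\bJ^*_\eta\|_\txtop \lesssim \mathfrak{C}_\eta\|\boldeta\|_1$ from \eqref{eq:jacobian.1}.
\item The new middle piece uses \eqref{eq:jacobian.2}. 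It is at most $\mathfrak{C}_\eta\|\boldeta\|_1\|\bDelta_\alpha\|_2$ times $\|\hbZ^\dagger\|_\txtop\|\bW^\top\hbbeta\|_2$. This is of the same order as the $\|\bDelta_\beta\|_2$-type terms already in $Q_\alpha$, since $\|\bW^\top\hbbeta\|_2\lesssim\sqrt{L}$ plus noise. So it is absorbed into $\Gamma_\alpha$ and $\Gamma_{\alpha\beta}$ after enlarging constants.
\end{itemize}
The factor $(1+\|\hbalpha\|_2^2)$ stays as in the one-step case, because $\bdelta_\alpha$ involves only the one-step regression.

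Finally, we sum the three terms, substitute \eqref{eq:lmda.lb} and \eqref{eq:noise.eta}, and collect the constants depending on $h_\eta$ into $C_{h_\eta}$. This gives $|\hVc^2_\eta/\hsigma^2 - \Vc^2_\eta/\sigma^2| \lesssim \mathfrak{C}_\eta^2\|\boldeta\|_1^2\Gamma$ on $\Gc_\noise\cap\Gc_\PCR$.
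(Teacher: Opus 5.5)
Your proposal is correct and follows the paper's proof essentially step for step. It uses the same three-term split via \eqref{eq:alg.ineq} on $\Gc_\noise\cap\Gc_\PCR$, and Lemma~\ref{lemma:jacobian} for $\hbalpha_\eta-\balpha^*_\eta$ and $\balpha^*_\eta$; your mean-value bound is a slightly cleaner variant of the paper's $\bJ^*_\eta\bDelta_\alpha+\bvarrho_\eta$ split. It then adapts Lemma~\ref{lemma:riesz} with $\|\bbW\balpha^*_\eta\|_2\le\|\boldeta\|_1\sqrt{N_1}$, and uses the same time-side decomposition: your first and third pieces together are the paper's $\bDelta^\sharp_{q_{\alpha\eta}}$, and your middle piece is its $\bDelta^J_{q_{\alpha\eta}}$.

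The one loose spot is the justification for absorbing that middle piece. It is of order $\sigma\Lambda_\alpha/\lambda_z\asymp\sigma\sqrt{r_z}\Lambda_\alpha/\sqrt{LM}$, which is not generally comparable to the $\Lambda_\beta$-part of $Q_\alpha$ as you claim. It is, however, always dominated by the $r_z/\min\{L,M\}$ component of $\sigma\Omega_\alpha$, using $r_z\le\min\{L,M\}$ and $1+\log(LM)\lesssim\sqrt{LM}$, and that is what places it inside $\Gamma_\alpha$.
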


\subsection{Completing Proof of Theorem~\ref{thm:recursive.normality}} \label{sec:proofs.rec.complete}

\begin{proof}
The result is immediate from following the proof of Theorem~\ref{thm:inference} using Lemmas~\ref{lemma:rec.normality} and \ref{lemma:rec.var.est}, and  
restoring the superscript ``\rec.'' 
\end{proof}

\subsection{Proofs of Key Lemmas} \label{sec:recursive.lemmas.proofs}

\subsubsection{Proof of Lemma~\ref{lemma:jacobian}}

\begin{proof}
We prove each result separately. 

\bigskip \noindent \underline{\em Derivation of \eqref{eq:jacobian.0}}: 
Observe that only the last row of $\bPi(\bx)$ depends on $\bx^\top$. 
As a result, $\bPi(\bx)^\top \be_L = \bx$. 
Therefore, for every $\ell \ge 1$,
\begin{align}
	\bg_\ell(\bx) = \left\{\bPi(\bx)^\ell\right\}^\top \be_L = \left\{\bPi(\bx)^\top \right\}^{\ell-1} \bx. \label{eq:rec.temp.0}
\end{align}
Let $\bS \in \{0,1\}^{L \times L}$ denote a shift matrix with ones on the superdiagonal. 
Then, 
\begin{align}
	\bPi(\bx) = \bS + \be_L  \bx^\top. 
\end{align}
Because $\| \bS \|_\txtop \le 1$, it follows that
\begin{align}
	\| \bPi(\bx) \|_\txtop \le \| \bS \|_\txtop + \| \be_L \bx^\top \|_\txtop \le 1 + \| \bx \|_2. \label{eq:jac.1}
\end{align}
Combining \eqref{eq:rec.temp.0} with \eqref{eq:jac.1} yields
\begin{align}
	\|\bg_\ell(\bx)\|_2
	\le 
	\|\bPi(\bx) \|_\txtop^{\ell-1}  \cdot \| \bx \|_2
	\le \left(1 + \| \bx \|_2 \right)^{\ell-1}  \| \bx \|_2. 
\end{align}
Summing with weights $\eta_\ell$ gives 
\begin{align}
	\| \bg_{\eta}(\bx) \|_2 \le \sum_{\ell \in \Sc_\eta} | \eta_\ell | \cdot  \| \bg_\ell(\bx) \|_2
	\le \| \boldeta \|_1  \left(1 + \| \bx \|_2 \right)^{h_{\eta}-1}  \| \bx \|_2
\end{align}

\noindent \underline{\em Derivation of \eqref{eq:jacobian.1}}:
For any $\bu \in \Rb^L$, observe that 
\begin{align}
	\bPi(\bx + \bu) - \bPi(\bx) = \be_L  \bu^\top. \label{eq:jac.diff}
\end{align}
For any $t \in [0,1]$, the product rule for matrix powers gives
\begin{align}
	\frac{d}{dt} \bPi(\bx + t \bu)^\ell = \sum_{a=0}^{\ell-1} \bPi(\bx + t \bu)^a  \left\{ \bPi(\bx + \bu) - \bPi(\bx) \right\}  \bPi(\bx + t\bu)^{\ell-1-a}.
\end{align}
Accordingly, we obtain
\begin{align}
	\frac{d}{dt} \bg_\ell(\bx + t \bu) = \left( \sum_{a=0}^{\ell-1} \bPi(\bx + t \bu)^a  \left\{ \bPi(\bx + \bu) - \bPi(\bx) \right\}  \bPi(\bx + t\bu)^{\ell-1-a} \right)^\top \be_L. 
\end{align}
Evaluating at $t=0$ and applying \eqref{eq:jac.diff} yields 
\begin{align}
	\bJ_{\ell}(\bx)  \bu = \sum_{a=0}^{\ell-1} \left\{ \be_L^\top  \bPi(\bx)^a  \be_L \right\} \left\{ \bPi(\bx)^{\ell-1-a} \right\}^\top \bu.
	\label{eq:jacobian}
\end{align}
Taking norms of \eqref{eq:jacobian}, we obtain
\begin{align}
	\| \bJ_{\ell}(\bx) \|_\txtop \le \sum_{a=0}^{\ell-1} \left| \be_L^\top  \bPi(\bx)^a  \be_L \right| \cdot \| \bPi(\bx)^{\ell-1-a} \|_\txtop. 
\end{align}
Since $| \be_L^\top  \bPi(\bx)^a  \be_L| \le \| \bPi(\bx)^a \|_\txtop \le \| \bPi(\bx) \|_\txtop^a$, we further have
\begin{align}
	\| \bJ_{\ell}(\bx) \|_\txtop \le \sum_{a=0}^{\ell-1} \| \bPi(\bx) \|_\txtop^{\ell-1}
	\le \ell  \left(1 + \| \bx \|_2 \right)^{\ell-1}, 
\end{align}
where the second inequality uses \eqref{eq:jac.1}. Thus,
\begin{align}
	\| \bJ_{\eta}(\bx) \|_\txtop &\le \sum_{\ell \in \Sc_\eta} | \eta_\ell | \cdot \| \bJ_\ell(\bx)\|_\txtop
	\le h_{\eta}  \| \boldeta \|_1  \left( 1 + \| \bx \|_2 \right)^{h_{\eta}-1}. 
\end{align}

\noindent \underline{\em Derivation of \eqref{eq:jacobian.2}}:
Suppose $h_{\eta} \ge 2$. 
Let $\bA \coloneqq \bPi(\bx)$ and $\bB \coloneqq \bPi(\bx + \bu)$. 
For any integer $m \ge 1$, the telescoping identity gives 
\begin{align}
	\bB^m - \bA^m = \sum_{a=0}^{m-1} \bB^a  \left(\bB - \bA\right)  \bA^{m-1-a}. \label{eq:telescope}
\end{align}
Let $D_x(\bu) \coloneqq 1 + \| \bx \|_2 + \| \bu \|_2$. 
Then,
\begin{align}
	\| \bA \|_\txtop \le D_x(\bu),
	\quad
	\| \bB \|_\txtop \le D_x(\bu),
	\quad
	\| \bB - \bA \|_\txtop = \| \bu \|_2,
\end{align}
where the third inequality uses \eqref{eq:jac.diff}. Hence, 
\begin{align}
	\| \bB^m - \bA^m \|_\txtop \le m  D_x^{m-1}(\bu)  \| \bu \|_2. \label{eq:power}
\end{align}
Now, fix a direction $\bv \in \Rb^L$. 
By \eqref{eq:jacobian}, 
\begin{align}
	\left\{ \bJ_{\ell}(\bx+\bu) - \bJ_{\ell}(\bx) \right\}  \bv 
	&= \left[ \sum_{a=0}^{\ell-1} \left( \bB^a  \be_L  \bv^\top  \bB^{\ell-1-a} - \bA^a  \be_L  \bv^\top  \bA^{\ell-1-a} \right) \right]^\top \be_L.
\end{align}
For a fixed $a$, we add and subtract $\bA^a \be_L \bv^\top \bB^{\ell-1-a}$ to rewrite each summand as 
\begin{align}
	&\bB^a  \be_L  \bv^\top  \bB^{\ell-1-a} - \bA^a  \be_L  \bv^\top  \bA^{\ell-1-a}
	\\&\qquad = \left(\bB^a - \bA^a\right)  \be_L \bv^\top  \bB^{\ell-1-a} + \bA^a  \be_L \bv^\top  \left(\bB^{\ell-1-a} - \bA^{\ell-1-a} \right). 
\end{align}
From \eqref{eq:power}, each summand has norm at most $ (\ell-1)  D_x^{\ell-2}(\bu) \cdot \| \bu \|_2 \cdot \| \bv \|_2$. 
Therefore, summing over all $a$ and taking the supremum over $\| \bv \|_2 = 1$ yields
\begin{align}
	\| \bJ_{\ell}(\bx + \bu) - \bJ_{\ell}(\bx) \|_\txtop \le \ell \left(\ell-1\right)  D_x^{\ell-2} \cdot \| \bu \|_2. 
\end{align}
Consequently,
\begin{align}
	\| \bJ_{\eta}(\bx + \bu) - \bJ_{\eta}(\bx) \|_\txtop &\le \sum_{\ell \in \Sc_\eta} |\eta_\ell| \cdot \ell (\ell-1)  D_x^{\ell-2} \cdot \| \bu \|_2
	\\ &\le h_{\eta} (h_{\eta}-1)  D_x^{h_{\eta}-2} \cdot \| \boldeta \|_1 \cdot \| \bu \|_2. 
\end{align}

\noindent \underline{\em Derivation of \eqref{eq:jacobian.3}}:
To complete the proof, note that $\bg_{\eta}$ is continuously differentiable and thus, the fundamental theorem of calculus states 
\begin{align}
	\bg_{\eta}(\bx + \bu) - \bg_{\eta}(\bx) = \int_{0}^1 \bJ_{\eta}(\bx + t \bu)   \bu  dt. 
\end{align}
Therefore, 
\begin{align}
	\bvarrho_{\eta}(\bx, \bu) = \int_0^1 \left\{\bJ_{\eta}(\bx + t\bu) - \bJ_{\eta}(\bx) \right\}  \bu  dt. 
\end{align}
Invoking \eqref{eq:jacobian.2} with $t\bu$ in place of $\bu$, we conclude
\begin{align}
	\| \bvarrho_{\eta}(\bx, \bu) \|_2 &\le \int_0^1 \| \bJ_{\eta}(\bx + t\bu) - \bJ_{\eta}(\bx) \|_\txtop  \cdot \| \bu \|_2  t dt
	\\
	&\le \frac{h_{\eta} (h_{\eta}-1)}{2}  \| \boldeta \|_1  \left( 1 + \| \bx \|_2 + \| \bu \|_2 \right)^{h_{\eta}-2}  \| \bu \|_2^2. 
\end{align}
This completes the proof. 
\end{proof}

\subsubsection{Proof of Lemma~\ref{lemma:rec.identification}}

\begin{proof}
Condition on $\Ec_{\eta}$. 
First, because $\Fc_{\eta} \subseteq \Ec_{\eta}$, the tower property gives for each $\ell \in \Sc_\eta$, 
\begin{align}
	\Ex\left[ \varepsilon_{N, T+\ell}(1) \mid \Fc_{\eta} \right]
	= \Ex\left[ \Ex\left[ \varepsilon_{N, T+\ell}(1) \mid \Ec_{\eta} \right] \mid \Fc_{\eta} \right]
	= 0. 
\end{align}
Consequently, it follows that 
\begin{align}
	\theta_{\eta} = \sum_{\ell \in \Sc_\eta} \eta_\ell  \left\langle \bu_N, \bv_{T+\ell}(1) \right \rangle.
\end{align}
By Proposition~\ref{prop:alpha} and its proof, Assumption~\ref{assump:hankel} implies that there exists a fixed $\balpha \in \Rb^L$ satisfying the common recurrence: for all $j \in \Ic_1$ and $t \in \Zb$,
\begin{align}
	\mu_j(t+1) = \sum_{a=1}^L \alpha_a \mu_j(t-L+a),
\end{align}
where we recall $\mu_j(t) \coloneqq \langle \bu_j, \bv_t(1) \rangle$. 
Let us define the state vector $\bx_j(t) \coloneqq \left[\mu_j(t-L+1), \dots, \mu_j(t) \right]^\top \in \Rb^L$. 
Then the recurrence is equivalent to $\bx_j(t+1) = \bPi(\balpha)  \bx_j(t)$;
iterating $\ell$ times gives
\begin{align}
	\bx_j(T+\ell) = \bPi(\balpha)^\ell  \bx_j(T). 
\end{align}
The final coordinate can then be expressed as 
\begin{align}
	\mu_j(T+\ell) = \be_L^\top  \bPi(\balpha)^\ell  \bx_j(T) = \left \langle \bg_\ell(\balpha), \bx_j(T) \right \rangle. 
\end{align}
Note that the $j$th row of $\bbW$ is precisely $\bx_j(T)^\top$, and thus for every $\ell \in \Sc_\eta$, $\bbW \bg_\ell(\balpha) = [\mu_j(T+\ell): j \in \Ic_1]$. 
%
Therefore, following the proofs of Proposition~\ref{prop:beta} and Corollary~\ref{cor:identification}, we conclude
\begin{align}
	\theta_{\eta} = \left\langle \bbeta, \bbW \bg_{\eta}(\balpha) \right \rangle = \left\langle \bbeta^*, \bbW \bg_{\eta}(\balpha) \right \rangle. \label{eq:rec.ident.0}
\end{align}
It remains to replace $\alpha$ with $\alpha^*$ in \eqref{eq:rec.ident.0}. 
In this pursuit, note that $(\balpha^* - \balpha)$ is orthogonal to $\row(\bbZ^\top)$. 
Since Assumption~\ref{assump:recursive} states that for all $m \in \{0, \dots, h_{\eta}-1\}$, $\row(\bbW \{\bPi(\balpha)^m \}^\top ) \subseteq \row(\bbZ^\top)$, we have 
\begin{align}
	\bbW  \left\{ \bPi(\balpha)^m \right\}^\top  \left(\balpha^* - \balpha \right) = \bzero. \label{eq:rec.identity.0}
\end{align}
Now, by the telescoping identity \eqref{eq:telescope}, we have that for all $\ell \in [h_{\eta}]$
\begin{align}
	\bPi(\balpha^*)^\ell - \bPi(\balpha)^\ell &= \sum_{a=0}^{\ell-1} \bPi(\balpha^*)^a  \left\{\bPi(\balpha^*) - \bPi(\balpha) \right\}  \bPi(\balpha)^{\ell-1-a}
	\\
	&= \sum_{a=0}^{\ell-1} \bPi(\balpha^*)^a  \be_L \left(\balpha^* - \balpha \right)^\top  \bPi(\balpha)^{\ell-1-a}. &&\because \text{\eqref{eq:jac.diff}}
\end{align}
Taking transposes and multiplying by $\bbW$ further gives
\begin{align}
	\bbW \left\{ \bPi(\balpha^*)^\ell - \bPi(\balpha)^\ell \right\}^\top &= 
	\sum_{a=0}^{\ell-1} \bbW  \left\{ \bPi(\balpha)^{\ell-1-a} \right\}^\top  \left(\balpha^* - \balpha \right) \be_L^\top  \left\{ \bPi(\balpha^*)^a \right\}^\top. 
\end{align}
For each summand, $m = \ell - 1 - a \in \{0, \dots, \ell-1\} \subseteq \{0, \dots, h_{\eta}-1\}$. 
Hence, by \eqref{eq:rec.identity.0}, every summand vanishes so that we have for all $\ell \in \{0, \dots, h_{\eta}\}$
\begin{align}
	\bbW  \left\{ \bPi(\balpha^*)^\ell \right\}^\top = \bbW  \left\{ \bPi(\balpha)^\ell \right\}^\top. \label{eq:rec.useful}
\end{align}
Multiplying \eqref{eq:rec.useful} by $\be_L$, we then obtain $\bbW \bg_\ell(\balpha^*) = \bbW \bg_\ell(\balpha)$ for each $\ell \in \Sc_\eta$, and hence, $\bbW \bg_{\eta}(\balpha^*) = \bbW \bg_{\eta}(\balpha)$. Therefore,
\begin{align}
	\theta_{\eta} = \left \langle \bbeta^*, \bbW \bg_{\eta}(\balpha) \right \rangle = \left \langle \bbeta^*, \bbW \bg_{\eta}(\balpha^*) \right \rangle. 
\end{align}
This completes the derivation for \eqref{eq:identification.rec}. 

We now prove the feasibility of the two recursive Riesz equations. 
Define the recursive score as 
\begin{align}
	S_{\eta}(\ba, \boldsymbol{b}) \coloneqq \left\langle \bg_{\eta}(\ba), \bbW^\top \boldsymbol{b} \right \rangle + \left\langle \bq_{\beta \eta}, \bby - \bbY^\top \boldsymbol{b} \right\rangle + \left\langle \bq_{\alpha \eta}, \bbz - \bbZ^\top \ba \right \rangle. 
\end{align}
At $(\ba, \boldsymbol{b}) = (\balpha^*, \bbeta^*)$, the residuals vanish. 
Beginning with the unit-side, differentiating $S_{\eta}$ with respect to $\boldsymbol{b}$ and evaluating at $(\balpha^*, \bbeta^*)$ gives the orthogonality requirement
\begin{align}
	\bbY \bq_{\beta \eta} = \bbW \bg_{\eta}(\balpha^*). 
\end{align}
This equation is feasible because Assumption~\ref{assump:recursive} suggests $\bbW \bg_{\eta}(\balpha^*) \in \col(\bbW) \subseteq \col(\bbY)$, and thus the minimum-norm Riesz representer is
\begin{align}
	\bq^{*}_{\beta \eta} = \bbY^\dagger \bbW \bg_{\eta}(\balpha^*). 
\end{align}
Moving on the time-side, we fix a direction $\bu \in \Rb^L$. 
By Lemma~\ref{lemma:jacobian}, for every sufficiently small $t$ 
\begin{align}
	\bg_{\eta}(\balpha^* + t \bu) = \bg_{\eta}(\balpha^*) + t \bJ_{\eta}(\balpha^*)  \bu + \bvarrho_{\eta}(\balpha^*, t\bu),
\end{align}
where $\| \bvarrho_\eta(\balpha^*, t\bu) \|_2 / |t| \rightarrow 0$. Hence, 
%
%
differentiating $S_{\eta}$ with respect to $\ba$ and evaluating at $(\balpha^*, \bbeta^*)$ gives the orthogonality requirement
\begin{align}
	\bbZ \bq_{\alpha \eta} = \left\{ \bJ_{\eta}(\balpha^*) \right\}^\top \bbW^\top \bbeta^*. \label{eq:riesz.time.1}
\end{align}
Applying \eqref{eq:jacobian}, we obtain 
\begin{align}
	\bbW \bJ_{\eta}(\balpha^*) = \sum_{\ell \in \Sc_\eta} \eta_\ell \sum_{a=0}^{\ell-1} \left\{ \be_L^\top  \bPi(\balpha^*)^a  \be_L \right\}  \bbW  \left\{ \bPi(\balpha^*)^{\ell-1-a} \right\}^\top,
\end{align}
For each $a$, take $m \coloneqq \ell - 1 -a$. Then, $0 \le m \le \ell-1 \le h_{\eta}-1$. By \eqref{eq:rec.useful} and Assumption~\ref{assump:recursive}, $\row(\bbW \{\bPi(\balpha^*)^m\}^\top) \subseteq \row(\bbZ^\top)$. 
Each summand in $\bbW \bJ_{\eta}(\balpha^*)$ consequently has its row space contained in $\row(\bbZ^\top)$, and hence, $\row(\bbW \bJ_{\eta}(\balpha^*)) \subseteq \row(\bbZ^\top)$. Equivalently, $\{\bJ_{\eta}(\balpha^*) \}^\top \bbW^\top \bbeta^* \in \col(\bbZ)$. 
Thus, \eqref{eq:riesz.time.1} is feasible and the minimum-norm solution is 
\begin{align}
	\bq^{*}_{\alpha \eta} = \bbZ^\dagger  \left\{ \bJ_{\eta}(\balpha^*)\right\}^\top  \bbW^\top \bbeta^*. 
\end{align}
The proof is complete. 
\end{proof}

\subsubsection{Proof of Lemma~\ref{lemma:rec.decomp}} 

\begin{proof}
Condition on $\Ec_{\eta}$. 
Following the proof of Lemma~\ref{lemma:decomp}, note that by construction of \eqref{eq:rec.riesz.proj.def}, we have  
\begin{align}
	\left \langle \tbq_{\alpha \eta}, \bz - \bZ^\top \hbalpha \right \rangle = 0,
	\quad
	\left \langle \tbq_{\beta \eta}, \by - \bY^\top \hbbeta \right \rangle = 0. 
\end{align}
Thus, we can write 
\begin{align}
	\htheta_{\eta} = \htheta_{\eta} + \left \langle \tbq_{\beta \eta}, \by - \bY^\top \hbbeta \right \rangle + \left \langle \tbq_{\alpha \eta}, \bz - \bZ^\top \hbalpha \right \rangle. 
\end{align}
Next, we apply Lemma~\ref{lemma:jacobian} to obtain 
\begin{align}
	\hbalpha_{\eta} = \balpha^*_{\eta} + \bJ^*_{\eta}  \bDelta_{\alpha} + \bvarrho_{\eta}. 
\end{align}
Expand the unit-side correction as 
\begin{align}
	\left\langle \tbq_{\beta \eta}, \by - \bY^\top \hbbeta \right \rangle
	&=  \left \langle \bq^*_{\beta \eta}, \bdelta_{\beta} \right \rangle 
	- \left\langle \bq^*_{\beta \eta}, \bbY^\top \bDelta_{\beta} \right \rangle - \left \langle \bq^*_{\beta \eta}, \bXi_y^\top \bDelta_{\beta} \right \rangle 
	\\
	&\quad +  \left \langle \bDelta_{\tq_{\beta \eta}}, \bdelta_{\beta} \right \rangle - \left \langle \bDelta_{\tq_{\beta \eta}}, \bbY^\top \bDelta_{\beta} \right \rangle - \left \langle \bDelta_{\tq_{\beta \eta}}, \bXi_y^\top \bDelta_{\beta} \right \rangle .  \label{eq:decomp.rec.1}
\end{align}
At the same time, the time-side correction expands as 
\begin{align}
	\left\langle \tbq_{\alpha \eta}, \bz - \bZ^\top \hbalpha \right \rangle
	&= \left \langle \bq^*_{\alpha \eta}, \bdelta_{\alpha} \right \rangle 
	- \left\langle \bq^*_{\alpha \eta}, \bbZ^\top \bDelta_{\alpha} \right \rangle - \left \langle \bq^*_{\alpha \eta}, \bXi_z^\top \bDelta_{\alpha} \right \rangle 
	\\
	&\quad +  \left \langle \bDelta_{\tq_{\alpha \eta}}, \bdelta_{\alpha} \right \rangle - \left \langle \bDelta_{\tq_{\alpha \eta}}, \bbZ^\top \bDelta_{\alpha} \right \rangle - \left \langle \bDelta_{\tq_{\alpha \eta}}, \bXi_z^\top \bDelta_{\alpha} \right \rangle .   \label{eq:decomp.rec.2}
\end{align}
Moving on to the targets, we follow the arguments that led to \eqref{eq:general} to obtain 
\begin{align}
	&\left\langle \hbalpha_{\eta}, \bW^\top \hbbeta \right \rangle - \left\langle \balpha^*_{\eta}, \bbW^\top \bbeta^* \right \rangle
	\\
	&\qquad = \left\langle \balpha^*_{\eta}, \bbW^\top \bDelta_{\beta} \right \rangle + \left\langle \bJ^*_{\eta} \bDelta_{\alpha}, \bbW^\top \bbeta^* \right \rangle + \left\langle \balpha^*_{\eta}, \bXi_w^\top \bbeta^* \right\rangle 
	\\
	&\qquad \quad + \left\langle \balpha^*_{\eta}, \bXi_w^\top \bDelta_{\beta} \right\rangle + \left\langle \bJ^*_{\eta} \bDelta_{\alpha}, \bbW^\top \bDelta_{\beta} \right \rangle + \left\langle \bJ^*_{\eta} \bDelta_{\alpha}, \bXi_w^\top \bbeta^* \right \rangle + \left\langle \bJ^*_{\eta} \bDelta_{\alpha}, \bXi_w^\top \bDelta_{\beta} \right \rangle
	\\
	&\qquad \quad + \left\langle \bvarrho_{\eta}, \bbW^\top \bbeta^* \right \rangle + \left\langle \bvarrho_{\eta}, \bbW^\top \bDelta_{\beta} \right \rangle + \left\langle \bvarrho_{\eta}, \bXi_w^\top \bbeta^* \right\rangle + \left\langle \bvarrho_{\eta}, \bXi_w^\top \bDelta_{\beta} \right\rangle. 
	\label{eq:decomp.rec.3}
\end{align}
Moreover, by \eqref{eq:riesz.rec}, 
\begin{align}
	\left\langle \bq^*_{\beta \eta}, \bbY^\top \bDelta_{\beta} \right\rangle &= \left\langle \balpha^*_{\eta}, \bbW^\top \bDelta_{\beta} \right \rangle,
	\\
	\left\langle \bq^*_{\alpha \eta}, \bbZ^\top \bDelta_{\alpha} \right\rangle &= \left\langle \bJ^*_{\eta} \bDelta_{\alpha}, \bbW^\top \bbeta^* \right \rangle. 
	\label{eq:decomp.rec.4}
\end{align}
Merging \eqref{eq:decomp.rec.1}, \eqref{eq:decomp.rec.2}, and \eqref{eq:decomp.rec.3}, and applying the cancellations in \eqref{eq:decomp.rec.4} gives the desired result. 
\end{proof}

\subsubsection{Proof of Lemma~\ref{lemma:rec.normality}} 

\begin{proof}
Condition on $\Ec_{\eta}$. 

\bigskip \noindent \underline{\em Normality of $G_{\eta}$}: 
Observe that
\begin{align}
	\Var\left(\left\langle \balpha^*_{\eta}, \bXi_w^\top \bbeta^* \right\rangle\right) 
	&= \sigma^2  \| \balpha^*_{\eta} \|_2^2  \cdot \| \bbeta^* \|_2^2,
	\\
	\Var\left(\left\langle \bq^*_{\beta \eta}, \bdelta_{\beta} \right\rangle \right)
	&= \sigma^2  \| \bq^*_{\beta \eta} \|_2^2  \left(1 + \| \bbeta^* \|_2^2 \right),
	\\
	\Var\left(\left\langle \bq^*_{\alpha \eta}, \bdelta_{\alpha} \right \rangle\right)
	&= \sigma^2  \| \bq^*_{\alpha \eta} \|_2^2  \left(1 + \| \balpha^* \|_2^2 \right). 
\end{align}
With this, we obtain our desired result from following the proof of Lemma~\ref{lemma:lead} verbatim to conclude \eqref{eq:rec.asymp.var}. 

\bigskip \noindent \underline{\em High probability bound on $R_{\eta}$}: 
We follow the proof of Lemma~\ref{lemma:remainder}. To that end, let 
\begin{align}
	\varphi_\rho \coloneqq \sqrt{C_\varphi \log(C_\varphi / \rho)}. 
	\label{eq:rec.varphi}
\end{align}
We begin by recording some useful algebraic facts. 
Recall \eqref{eq:eta.lambda.simple}, \eqref{eq:lmda.lb}, and \eqref{eq:pop.op}.  
Define the envelope
\begin{align}
	A_{\eta} \coloneqq C_\eta  \| \boldeta \|_1  \left(1 + \frac{\sqrt{M}}{\lambda_z} + \frac{\sigma \Lambda_{\alpha}}{\sqrt{L}} \right)^{h_{\eta}-1}, \label{eq:ch}
\end{align}
where $C_\eta > 0$ depends only on $h_\eta$ and the fixed constants in Assumption~\ref{assump:spectra}. 
Define the events $\Gc_{\PCR, \alpha}$ and $\Gc_{\PCR, \beta}$ as in \eqref{eq:event.pcr}, and the joint event $\Gc_\PCR \coloneqq \Gc_{\PCR, \alpha} \cap \Gc_{\PCR, \beta}$.  
A straightforward adaptation of the proof Theorem~\ref{prop:parameter.recovery} with the conditioning on $\Ec_{\eta}$ yields 
\begin{align}
	\Pb(\Gc_\PCR^c \mid \Ec_{\eta}) \lesssim \rho. \label{eq:rec.event.pcr} 
\end{align} 
Notably, on $\Gc_{\PCR, \alpha}$, Lemma~\ref{lemma:jacobian} with Assumption~\ref{assump:spectra} states
\begin{align}
	\| \balpha^*_{\eta} \|_2 &\le A_{\eta}  \frac{\sqrt{M}}{\lambda_z},     \label{eq:rec.alpha.bound} 
	\\
	\| \bJ^*_{\eta} \|_\txtop &\le A_{\eta}, \label{eq:rec.J.bound}
	\\
	\| \hbJ_{\eta} - \bJ^*_{\eta} \|_\txtop &\le A_{\eta}  \frac{\sigma \Lambda_{\alpha}}{\sqrt{L}}, \label{eq:rec.dJ.bound}
	\\
	\| \bvarrho_{\eta} \|_2 &\le A_{\eta}  \frac{\sigma^2 \Lambda_{\alpha}^2}{K}. \label{eq:rec.R.bound}
\end{align}
As a direct consequence, we have 
\begin{align}
	\| \bq^*_{\beta \eta} \|_2 \le \| \bbY^\dagger \|_\txtop  \cdot \| \bbW \balpha^*_{\eta} \|_2
	\le \frac{\| \bbW \balpha^*_{\eta} \|_2}{\lambda_y}. 
\end{align}
By the proof of Lemma~\ref{lemma:rec.identification}, the $j$th coordinate of $(\bbW \balpha^*_{\eta})_j = \sum_{\ell \in \Sc_\eta} \eta_\ell \Ex[Y_{j, T+\ell}(1) \mid \Ec_{\eta}]$. Assumption~\ref{assump:bounded}, therefore, gives $\| \bbW \balpha^*_{\eta} \|_2 \le \| \boldeta \|_1 \sqrt{N_1}$ and hence,
\begin{align}
	\| \bq^*_{\beta \eta} \|_2 \lesssim \| \boldeta \|_1 \frac{\sqrt{N_1}}{\lambda_y}. 	\label{eq:rec.qb.bound}
\end{align}
At the same time, by \eqref{eq:barwbeta.bound} and \eqref{eq:rec.J.bound} 
\begin{align}
	\| \bq^*_{\alpha \eta} \|_2 
	\le \| \bbZ^\dagger \|_\txtop  \cdot \| \bJ^*_{\eta} \|_\txtop  \cdot \| \bbW^\top \bbeta^* \|_2
	\le 
	A_{\eta}  \frac{\sqrt{L}}{\lambda_z}. 
	\label{eq:rec.qa.bound} 
\end{align}
Armed with these results, we proceed to bound each term in $R_{\eta}$ based on \eqref{eq:remain.rec}. 

\bigskip \noindent {\em Stochastic $\bXi_w$-terms.} 
Define the events
\begin{align}
	\Gc_{\alpha \Delta} &\coloneqq \left\{ \left| \left\langle \balpha^*_{\eta}, \bXi_w^\top \bDelta_{\beta} \right \rangle \right| \le C_w A_{\eta}  \frac{\sigma^2 \varphi_\rho \sqrt{M} \Lambda_{\beta}}{\lambda_z \sqrt{N_1}} \right\}, 
	\\
	\Gc_{\Delta \beta} &\coloneqq \left\{ \left| \left\langle \bJ^*_{\eta} \bDelta_{\alpha}, \bXi_w^\top \bbeta^* \right \rangle \right| \le C_w A_{\eta}  \frac{\sigma^2 \varphi_\rho \sqrt{T_0} \Lambda_{\alpha}}{\lambda_y \sqrt{L}} \right\}, 
	\\
	\Gc_{\Delta \Delta} &\coloneqq \left\{ \left| \left\langle \bJ^*_{\eta} \bDelta_{\alpha}, \bXi_w^\top \bDelta_{\beta} \right \rangle \right| \le C_w A_{\eta}  \frac{\sigma^3 \varphi_\rho \Lambda_{\alpha} \Lambda_{\beta}}{\sqrt{N_1L}} \right\}, 
	\\
	\Gc_{\varrho \beta} &\coloneqq \left\{ \left| \left\langle \bvarrho_{\eta}, \bXi_w^\top \bbeta^* \right \rangle \right| \le C_w A_{\eta}  \frac{\sigma^3 \varphi_\rho \Lambda^2_{\alpha} \sqrt{T_0}}{\lambda_y L} \right\}, 
	\\
	\Gc_{\varrho \Delta} &\coloneqq \left\{ \left| \left\langle \bvarrho_{\eta}, \bXi_w^\top \bDelta_{\beta} \right \rangle \right| \le C_w A_{\eta}  \frac{\sigma^4 \varphi_\rho \Lambda^2_{\alpha} \Lambda_{\beta}}{L \sqrt{N_1}} \right\},
	\label{eq:rec.w.bounds} 
\end{align}
where $C_w > 0$ is a sufficiently large constant. 
Define $\Gc_w \coloneqq \Gc_{\alpha \Delta} \cap \Gc_{\Delta \beta} \cap \Gc_{\Delta \Delta} \cap \Gc_{\varrho \beta} \cap \Gc_{\varrho \Delta}$. 
Following the arguments that led to \eqref{eq:rho.temp},
\begin{align}
	\Pb\left(\Gc_\PCR \cap \Gc_w^c \mid \Ec_{\eta} \right) \lesssim \rho. \label{eq:rec.event.w} 
\end{align}
On $\Gc_\PCR \cap \Gc_w$, the $\bXi_w$-block stochastic terms satisfy exactly the bounds in \eqref{eq:rec.w.bounds}.

\bigskip \noindent {\em Deterministic $\bbW$-terms.} 
On the event $\Gc_\PCR$, note that \eqref{eq:rec.J.bound} gives 
\begin{align}
	\left| \left\langle \bJ_{\eta}^* \bDelta_{\alpha}, \bbW^\top \bDelta_{\beta} \right \rangle \right| 
	&\le \| \bJ^*_{\eta} \|_\txtop \cdot \| \bDelta_{\alpha} \|_2  \cdot \| \bbW \|_\txtop \cdot \| \bDelta_{\beta} \|_2
	\lesssim A_{\eta}  \sigma^2 \Lambda_{\alpha} \Lambda_{\beta}.
	\label{eq:rec.det.w.1}
\end{align}
Note that the above uses \eqref{eq:barw.bound.op} to simplify $\| \bbW \|_\txtop \le \sqrt{N_1L}$. 
Similarly, by \eqref{eq:rec.R.bound} and \eqref{eq:barwbeta.bound}, 
\begin{align}
	\left| \left\langle \bvarrho_{\eta}, \bbW^\top \bbeta^* \right \rangle \right| &\le \| \bvarrho_{\eta} \|_2  \cdot \| \bbW^\top \bbeta^* \|_2
	\lesssim A_{\eta}  \frac{\sigma^2 \Lambda^2_{\alpha}}{\sqrt{L}},
	\label{eq:rec.det.w.2} 
	\\
	\left| \left\langle \bvarrho_{\eta}, \bbW^\top \bDelta_{\beta} \right \rangle \right| &\le \| \bvarrho_{\eta} \|_2  \cdot \| \bbW \|_\txtop \cdot \| \bDelta_{\beta} \|_2
	\lesssim A_{\eta}  \frac{\sigma^3 \Lambda^2_{\alpha} \Lambda_{\beta}}{\sqrt{L}}.
	\label{eq:rec.det.w.3} 
\end{align}

\noindent {\em Riesz error terms: unit-side.} 
Define the event $\Gc_{\noise}$ as in \eqref{eq:event.noise}. 
By Lemma~\ref{lemma:subg_matrix} and the union bound,  
\begin{align}
	\Pb\left(\Gc_{\noise, y}^c \mid \Ec_{\eta} \right) \lesssim \rho. \label{eq:rec.event.noise} 
\end{align} 
Following the proof of Lemma~\ref{lemma:riesz.0}, on $\Gc_{\noise, y}$, we use \eqref{eq:rec.qb.bound} to obtain
\begin{align}
	\| \bDelta_{\tq_{\beta \eta}} \|_2 \lesssim \| \boldeta \|_1 \tQ_{\beta},
\end{align}
where $\tQ_{\beta}$ is defined as in \eqref{eq:tQ.beta}. 
%
%
Next, define the event
\begin{align}
	\Gc_{\xi \beta} \coloneqq \left\{ \left| \left\langle \bDelta_{\tq_{\beta \eta}}, \bxi_y \right\rangle \right| \le C_\xi  \sigma \varphi_\rho  \| \boldeta \|_1 \tQ_{\beta} \right\}
\end{align}
where $C_\xi > 0$ is a sufficiently large constant. 
Following the arguments that led to \eqref{eq:remain.t7}, we have on $\Gc_{\noise, y} \cap \Gc_{\PCR} \cap \Gc_{\xi \beta}$,
\begin{align}
	\left| \left\langle \bDelta_{\tq_{\beta \eta}}, \bdelta_{\beta} \right \rangle \right|
	+ \left| \left\langle \bDelta_{\tq_{\beta \eta}}, \bbY^\top \bDelta_{\beta} \right \rangle \right|
	+ \left| \left\langle \bDelta_{\tq_{\beta \eta}}, \bXi_y^\top \bDelta_{\beta} \right \rangle \right|
	&\lesssim 
	 \| \boldeta \|_1 \tQ_{\beta}  \left( \sigma \varphi_\rho + \frac{\eta_y \sqrt{T_0}}{\lambda_y} + \frac{\sigma \lambda_y \Lambda_{\beta}}{\sqrt{N_1}} \right).
	\label{eq:rec.unit.1} 
\end{align} 
Define the sigma-field $\Hc_y \coloneqq \Ec_{\eta} \vee \sigma(\bY)$. 
Critically, $\bDelta_{\tq_{\beta \eta}}$ is $\Hc_y$-measurable and $\bxi_y$ is conditionally independent of $\Hc_y$ given $\Ec_{\eta}$. 
Hence, Lemma~\ref{lemma:hoeffding} and the tower property yield 
\begin{align}
	\Pb\left(\Gc_{\noise, y}  \cap \Gc_{\xi \beta}^c \mid \Ec_{\eta} \right) 
	&= \Ex\left[ \mathds{1}\left\{\Gc_{\noise, y}  \right\}  \Pb \left( \Gc^c_{\xi \beta} \mid \Hc_y \right) \mid \Ec_{\eta} \right]
	 \lesssim \rho. 
	\label{eq:rec.event.unit} 
\end{align} 
Moreover, on $\Gc_{\noise,y} \cap \Gc_{\PCR, \beta}$ and applying \eqref{eq:rec.qb.bound} 
\begin{align}
	\left| \left \langle \bq^*_{\beta \eta}, \bXi_y^\top \bDelta_{\beta} \right \rangle \right|
	\le \| \bq^*_{\beta \eta} \|_2 \cdot \| \bXi_y \|_\txtop \cdot  \| \bDelta_{\beta} \|_2
	\lesssim \| \boldeta \|_1 \frac{\sigma \eta_y \Lambda_{\beta}}{\lambda_y}. 
	\label{eq:rec.unit.2}
\end{align}

\noindent {\em Riesz error terms: time-side.} 
Following the proof of Lemma~\ref{lemma:riesz.0}, on $\Gc_{\noise, z}$, we use \eqref{eq:rec.qa.bound} to obtain 
\begin{align}
	\| \bDelta_{\tq_{\alpha \eta}} \|_2 \lesssim A_{\eta}  \tQ_{\alpha},
\end{align}
where $\tQ_{\alpha}$ is defined as in \eqref{eq:tQ.alpha}. 
Next, define the event
\begin{align}
	\Gc_{\xi \alpha} \coloneqq \left\{ \left| \left\langle \bDelta_{\tq_{\alpha \eta}}, \bxi_z \right\rangle \right| \le C_\xi A_{\eta}  \sigma \varphi_\rho  \tQ_{\alpha} \right\}.
\end{align} 
Following the arguments that led to \eqref{eq:remain.t8}, we have on $\Gc_{\noise, z} \cap \Gc_{\PCR} \cap \Gc_{\xi \alpha}$,
\begin{align}
	\left| \left \langle \bDelta_{\tq_{\alpha \eta}}, \bdelta_{\alpha} \right \rangle \right|
	+ \left| \left \langle \bDelta_{\tq_{\alpha \eta}}, \bbZ^\top \bDelta_{\alpha} \right \rangle \right|
	+ \left| \left \langle \bDelta_{\tq_{\alpha \eta}}, \bXi_z^\top \bDelta_{\alpha} \right \rangle \right|
	&\lesssim 
	 A_{\eta} \tQ_{\alpha}  \left( \sigma \varphi_\rho + \frac{\eta_z \sqrt{M}}{\lambda_z} + \frac{\sigma \lambda_z \Lambda_{\alpha}}{\sqrt{L}} \right).
	\label{eq:rec.time.1} 
\end{align} 
Define the sigma-field $\Hc_z \coloneqq \Ec_{\eta} \vee \sigma(\bZ)$. 
Note $\bDelta_{\tq_{\alpha \eta}}$ is $\Hc_z$-measurable and $\bxi_z$ is conditionally independent of $\Hc_z$ given $\Ec_{\eta}$. 
Hence, by Lemma~\ref{lemma:hoeffding} and the tower property, 
\begin{align}
	\Pb\left(\Gc_{\noise, z}  \cap \Gc_{\xi \alpha}^c \mid \Ec_{\eta} \right) 
	&= \Ex\left[ \mathds{1}\left\{\Gc_{\noise, z}  \right\}  \Pb \left( \Gc^c_{\xi \alpha} \mid \Hc_z \right) \mid \Ec_{\eta} \right]
	 \lesssim \rho. 
	\label{eq:rec.event.time} 
\end{align} 
Next, on $\Gc_{\noise, z} \cap \Gc_{\PCR, \alpha}$ and applying \eqref{eq:rec.qa.bound} 
\begin{align}
	\left| \left \langle \bq^*_{\alpha \eta}, \bXi_z^\top \bDelta_{\alpha} \right \rangle \right|
	\le \| \bq^*_{\alpha \eta} \|_2 \cdot \| \bXi_z \|_\txtop \cdot \| \bDelta_{\alpha} \|_2
	\lesssim A_{\eta}  \frac{\sigma \eta_z \Lambda_{\alpha}}{\lambda_z}. 
	\label{eq:rec.time.2}
\end{align}

\noindent {\em Putting everything together.} 
On the master event $\Gc_\star \coloneqq \Gc_{\noise, y} \cap \Gc_{\noise, z} \cap \Gc_\PCR \cap \Gc_w \cap \Gc_{\xi \beta} \cap \Gc_{\xi \alpha} $, we combine \eqref{eq:rec.w.bounds}, \eqref{eq:rec.det.w.1}, \eqref{eq:rec.det.w.2}, \eqref{eq:rec.det.w.3}, \eqref{eq:rec.unit.1}, \eqref{eq:rec.unit.2}, \eqref{eq:rec.time.1}, and \eqref{eq:rec.time.2} to arrive at the inequality
\begin{align}
	|R_{\eta}|
	& \lesssim 
	A_{\eta}  \sigma^2 \varphi_\rho 
	 \left\{ \frac{\sqrt{M} \Lambda_{\beta}}{\lambda_z \sqrt{N_1}}
	+  \frac{\sqrt{T_0} \Lambda_{\alpha}}{\lambda_y \sqrt{L}}
	+  \frac{\sigma \Lambda_{\alpha} \Lambda_{\beta}}{\sqrt{N_1L}}
	+  \frac{\sigma \Lambda^2_{\alpha} \sqrt{T_0}}{\lambda_y L}
	+  \frac{\sigma^2 \Lambda^2_{\alpha} \Lambda_{\beta}}{L \sqrt{N_1}} \right\}
	\\ 
	&\quad
	+ A_{\eta}  \sigma^2 \Lambda_{\alpha} \left\{
		 \Lambda_{\beta} + \frac{\Lambda_{\alpha}}{\sqrt{L}}
		 + \sigma \frac{\Lambda_{\alpha} \Lambda_{\beta}}{\sqrt{L}}
	\right\}
	\\
	&\quad
	+ \| \boldeta \|_1\tQ_{\beta}  \left( \sigma \varphi_\rho + \frac{\eta_y \sqrt{T_0}}{\lambda_y} + \frac{\sigma \lambda_y \Lambda_{\beta}}{\sqrt{N_1}} \right)
	+ \| \boldeta \|_1 \frac{\sigma \eta_y \Lambda_{\beta}}{\lambda_y}
	\\
	&\quad
	+ A_{\eta}  \tQ_{\alpha}  \left( \sigma \varphi_\rho + \frac{\eta_z \sqrt{M}}{\lambda_z} + \frac{\sigma \lambda_z \Lambda_{\alpha}}{\sqrt{L}} \right) 
	+ A_{\eta}  \frac{\sigma \eta_z \Lambda_{\alpha}}{\lambda_z}.
	\label{eq:rec.remain.bound}
\end{align}
Using \eqref{eq:noise.eta}, \eqref{eq:eta.lambda.simple}, \eqref{eq:lmda.lb}, \eqref{eq:pop.op}, and \eqref{eq:rec.varphi}, every term in \eqref{eq:rec.remain.bound} is bounded above by $\mathfrak{C}_{\eta} \| \boldeta \|_1 \Psi$. 
Note that this simplification uses \eqref{eq:lmda.lb}. 

It remains to bound the probability of $\Gc_\star$. 
In this pursuit, we leverage \eqref{eq:rec.event.pcr}, \eqref{eq:rec.event.w}, \eqref{eq:rec.event.noise}, \eqref{eq:rec.event.unit}, and \eqref{eq:rec.event.time}, and take a union bound to conclude $\Pb(\Gc_\star^c \mid \Ec_{\eta}) \lesssim \rho$. 

\bigskip \noindent \underline{\em Completing the proof}: 
The desired result is obtained by following the arguments in the proof of Theorem~\ref{thm:inference} verbatim with $(G_{\eta}, R_{\eta}, \mathfrak{C}_{\eta} \| \boldeta \|_1 \Psi, \Vc_{\eta})$ in place of $(G, R, \Psi, \Vc)$. 
\end{proof}

\subsubsection{Proof of Lemma~\ref{lemma:rec.var.est}} 

\begin{proof}
Condition on $\Ec_{\eta}$. 
Define $\Upsilon_{\eta} \coloneqq \Vc^2_{\eta}/ \sigma^2$ and $\hUpsilon_{\eta} \coloneqq \hVc_{\eta}^2 / \hsigma^2$. 
Observe that
\begin{align}
	\hUpsilon_{\eta} - \Upsilon_{\eta}
	&= \left\{ \| \hbalpha_{\eta} \|_2^2  \cdot \| \hbbeta \|_2^2 - \| \balpha_{\eta}^* \|_2^2 \cdot \| \bbeta^* \|_2^2 \right\}
	 + \left\{ \| \hbq_{\beta \eta} \|_2^2   \left( 1 + \| \hbbeta \|_2^2 \right)  - \| \bq_{\beta \eta}^* \|_2^2  \left(1 + \| \bbeta^* \|_2^2 \right) \right\}
	\\ &\quad+ \left\{ \| \hbq_{\alpha \eta} \|_2^2  \left( 1 + \| \hbalpha \|_2^2 \right)  - \| \bq_{\alpha \eta}^* \|_2^2  \left(1 + \| \balpha^* \|_2^2 \right) \right\}. 
\end{align}
Following the proof of Proposition~\ref{prop:var.asymp}, we control each individual term. 

\bigskip \noindent {\em Term 1}: 
Leveraging \eqref{eq:alg.ineq}, we obtain 
\begin{align}
	&\left| \| \hbalpha_{\eta} \|_2^2 \cdot  \| \hbbeta \|_2^2 - \| \balpha_{\eta}^* \|_2^2 \cdot \| \bbeta^* \|_2^2   \right|
	\\ &\quad \lesssim  \| \hbalpha_{\eta} - \balpha^*_{\eta} \|_2   \left( \| \balpha_{\eta}^* \|_2 + \| \hbalpha_{\eta} - \balpha^*_{\eta} \|_2 \right)  \left(\| \bbeta^* \|_2 + \| \bDelta_{\beta} \|_2 \right)^2
 + \| \bDelta_{\beta} \|_2   \left(  \| \bbeta^* \|_2 + \| \bDelta_{\beta} \|_2 \right)  \| \balpha_{\eta}^* \|_2^2. 
\end{align}
To enable progress on the above bound, observe that on $\Gc_{\PCR, \alpha}$, Lemma~\ref{lemma:jacobian} gives
\begin{align}
	\| \hbalpha_{\eta} - \balpha^*_{\eta} \|_2 
	&\le \| \bJ_{\eta}^* \bDelta_{\alpha} \|_2 + \| \bvarrho_{\eta} \|_2
	\\
	&\le \| \bJ_{\eta}^* \|_\txtop \cdot \| \bDelta_{\alpha} \|_2 + \| \bvarrho_{\eta} \|_2
	\\
	&\le A_{\eta}  \frac{\sigma \Lambda_{\alpha}}{\sqrt{L}} \left(1 + \frac{\sigma \Lambda_{\alpha}}{\sqrt{L}} \right). &&\because \text{\eqref{eq:rec.J.bound} and \eqref{eq:rec.R.bound}}
	\label{eq:rec.delta.alpha.bound}
\end{align}
Hence, by \eqref{eq:alpha.beta.bound},  \eqref{eq:rec.alpha.bound}, and \eqref{eq:rec.delta.alpha.bound}, we have on $\Gc_{\PCR}$
\begin{align}
	&\left| \| \hbalpha_{\eta} \|_2^2   \cdot\| \hbbeta \|_2^2 - \| \balpha_{\eta}^* \|_2^2 \cdot \| \bbeta^* \|_2^2   \right|
	\\&\quad \lesssim 
	A^2_{\eta}  \frac{\sigma \Lambda_{\alpha}}{\sqrt{L}} \left(1 + \frac{\sigma \Lambda_{\alpha}}{\sqrt{L}} \right)
	 \left\{ \frac{\sqrt{M}}{\lambda_z} + \frac{\sigma \Lambda_{\alpha}}{\sqrt{L}} \left(1 + \frac{\sigma \Lambda_{\alpha}}{\sqrt{L}} \right) \right\}
	 \left( \frac{\sqrt{T_0}}{\lambda_y} + \frac{\sigma \Lambda_{\beta}}{\sqrt{N_1}} \right)^2
	\\
	&\quad 
	+ A^2_{\eta}  \frac{\sigma M \Lambda_{\beta}}{\lambda^2_z \sqrt{N_1}} 
	 \left( \frac{\sqrt{T_0}}{\lambda_y} + \frac{\sigma \Lambda_{\beta}}{\sqrt{N_1}} \right).
	\label{eq:rec.var.est.1} 
\end{align}

\noindent {\em Term 2}: 
By a similar argument, applying \eqref{eq:alg.ineq} yields 
\begin{align}
	&\left| \| \hbq_{\beta \eta} \|_2^2  \left( 1 + \| \hbbeta \|_2^2 \right)  - \| \bq_{\beta \eta}^* \|_2^2  \left(1 + \| \bbeta^* \|_2^2 \right) \right|
	\\
	&\quad \lesssim \| \bDelta_{q_{\beta \eta}} \|_2   \left(   \| \bq^*_{\beta \eta} \|_2 + \| \bDelta_{q_{\beta \eta}} \|_2 \right)  \left\{ 1 + \left( \| \bbeta^* \|_2 + \| \bDelta_{\beta} \|_2 \right)^2 \right \} + \| \bq^*_{\beta \eta} \|^2_2 \cdot  \| \bDelta_{\beta} \|_2  \left(   \| \bbeta^* \|_2 + \| \bDelta_{\beta} \|_2 \right). 
\end{align}
Following the proof of Lemma~\ref{lemma:riesz}, we have on $\Gc_\noise \cap \Gc_{\PCR, \alpha}$ 
\begin{align}
	\| \bDelta_{q_{\beta \eta}} \|_2 \lesssim A_{\eta} Q_{\beta},   \label{eq:rec.delta.bl}
\end{align}
where $Q_{\beta}$ is defined as in \eqref{eq:delta.b}.
Invoking \eqref{eq:alpha.beta.bound}, \eqref{eq:rec.qb.bound}, and \eqref{eq:rec.delta.bl}, we have on $\Gc_\noise \cap \Gc_\PCR$, 
\begin{align}
	&\left| \| \hbq_{\beta \eta} \|_2^2  \left( 1 + \| \hbbeta \|_2^2 \right)  - \| \bq_{\beta \eta}^* \|_2^2  \left(1 + \| \bbeta^* \|_2^2 \right) \right|
	\\
	&~~ \lesssim 
	A_{\eta}  Q_{\beta}  \left( \| \boldeta \|_1 \frac{\sqrt{N_1}}{\lambda_y}
	+ A_{\eta}  Q_{\beta} \right)
	\left\{1 + \left( \frac{\sqrt{T_0}}{\lambda_y} + \frac{\sigma \Lambda_{\beta}}{\sqrt{N_1}} \right)^2 \right\}
	 + \| \boldeta \|_1^2 \frac{\sigma \sqrt{N_1} \Lambda_{\beta}}{\lambda_y^2} \left( \frac{\sqrt{T_0}}{\lambda_y} + \frac{\sigma \Lambda_{\beta}}{\sqrt{N_1}} \right). 
	\label{eq:rec.var.est.2} 
\end{align}

\noindent {\em Term 3}: 
Again, applying \eqref{eq:alg.ineq} yields 
\begin{align}
	&\left| \| \hbq_{\alpha \eta} \|_2^2  \left( 1 + \| \hbalpha \|_2^2 \right)  - \| \bq_{\alpha \eta}^* \|_2^2  \left(1 + \| \balpha^* \|_2^2 \right) \right|
	\\
	&\quad \lesssim \| \bDelta_{q_{\alpha \eta}} \|_2   \left( \| \bq^*_{\alpha \eta} \|_2 + \| \bDelta_{q_{\alpha \eta}} \|_2 \right)  \left\{ 1 + \left( \| \balpha^* \|_2 + \| \bDelta_{\alpha} \|_2 \right)^2 \right \} + \| \bq^*_{\alpha \eta} \|^2_2 \cdot  \| \bDelta_{\alpha} \|_2  \left(  \| \balpha^* \|_2 + \| \bDelta_{\alpha} \|_2 \right). 
\end{align}
In order to bound the time-side Riesz error terms, we first introduce 
\begin{align}
	\bq^\sharp_{\alpha \eta} \coloneqq \hbZ^\dagger  \left(\bJ^*_{\eta} \right)^\top  \bW^\top \hbbeta. 
\end{align}
This enables us to rewrite
\begin{align}
	\bDelta_{q_{\alpha \eta}} = \underbrace{\left(\hbq_{\alpha \eta} - \bq^\sharp_{\alpha \eta} \right)}_{\eqqcolon \bDelta^J_{q_{\alpha \eta}}} + 	\underbrace{\left(\bq^\sharp_{\alpha \eta} - \bq^*_{\alpha \eta}\right)}_{\eqqcolon \bDelta^\sharp_{q_{\alpha \eta}}}. 
	\label{eq:rec.dalpha.decomp}
\end{align}
We will control each individual term. 
Adapting the proof of Lemma~\ref{lemma:riesz} with $\bJ^*_{\eta}$ treated as a fixed matrix, we have on $\Gc_\noise \cap \Gc_{\PCR, \beta}$,
\begin{align}
	\| \bDelta^\sharp_{q_{\alpha \eta}} \|_2 \lesssim A_{\eta}   Q_{\alpha}, 
	\label{eq:rec.dalpha.sharp} 
\end{align}
where $Q_{\alpha}$ is defined as in \eqref{eq:delta.a}. 
We now turn to $\bDelta^J_{q_{\alpha \eta}}$. 
In this pursuit, observe 
\begin{align}
	\bDelta^J_{q_{\alpha \eta}} = \hbZ^\dagger \left( \hbJ_{\eta} - \bJ_{\eta}^* \right)^\top \bW^\top \hbbeta. 
\end{align}
On $\Gc_{\noise, z} \cap \Gc_{\PCR, \alpha}$, Lemma~\ref{lemma:pcr.pseudo} and \eqref{eq:rec.dJ.bound} yield
\begin{align}
	\| \bDelta^J_{q_{\alpha \eta}} \|_2 \le \| \hbZ^\dagger \|_\txtop  \cdot \| \hbJ_{\eta} - \bJ^*_{\eta} \|_\txtop \cdot \| \bW^\top \hbbeta \|_2
	\le A_{\eta}  \frac{\sigma \Lambda_{\alpha}}{(\lambda_z - \eta_z)  \sqrt{L}}  \| \bW^\top \hbbeta \|_2. 
\end{align} 
Next, we write 
\begin{align}
	\bW^\top \hbbeta &= \bbW^\top \bbeta^* + \bbW^\top \bDelta_{\beta} + \left(\bW - \bbW\right)^\top \bbeta^* + \left(\bW - \bbW\right)^\top \bDelta_{\beta}. 
\end{align}
On $\Gc_{\noise, w} \cap \Gc_{\PCR, \beta}$, we apply \eqref{eq:barw.bound.op}, \eqref{eq:barwbeta.bound}, and \eqref{eq:alpha.beta.bound} to obtain
\begin{align}
	\| \bW^\top \hbbeta \|_2 \lesssim \sqrt{L} + \sigma \sqrt{L} \Lambda_{\beta} + \eta_w \left( \frac{\sqrt{T_0}}{\lambda_y} + \frac{\sigma \Lambda_{\beta}}{\sqrt{N_1}} \right). 
\end{align}
Combining the above, we have on $\Gc_\noise \cap \Gc_\PCR$, 
\begin{align}
	\| \bDelta^J_{q_{\alpha \eta}} \|_2 \lesssim
	A_{\eta}  \frac{\sigma \Lambda_{\alpha}}{(\lambda_z - \eta_z) \sqrt{L}}  \left\{ \sqrt{L} + \sigma \sqrt{L} \Lambda_{\beta} + \eta_w \left( \frac{\sqrt{T_0}}{\lambda_y} + \frac{\sigma \Lambda_{\beta}}{\sqrt{N_1}} \right) \right\}. 
	\label{eq:rec.dalpha.J} 
\end{align}
Applying \eqref{eq:alpha.beta.bound}, \eqref{eq:rec.qa.bound}, \eqref{eq:rec.dalpha.decomp}, \eqref{eq:rec.dalpha.sharp}, and \eqref{eq:rec.dalpha.J}, we have on $\Gc_\noise \cap \Gc_\PCR$, 
\begin{align}
	&\left| \| \hbq_{\alpha \eta} \|_2^2  \left( 1 + \| \hbalpha \|_2^2 \right)  - \| \bq_{\alpha \eta}^* \|_2^2  \left(1 + \| \balpha^* \|_2^2 \right) \right|
	\\
	&\quad \lesssim 
	\tQ_{\alpha \eta}  \left( A_{\eta}  \frac{\sqrt{L}}{\lambda_z} + \tQ_{\alpha \eta} \right)
	 \left\{ 1 + \left(\frac{\sqrt{M}}{\lambda_z} + \frac{\sigma \Lambda_{\alpha}}{\sqrt{L}} \right)^2 \right\}
	+ A^2_{\eta}  \frac{\sigma \sqrt{L} \Lambda_{\alpha}}{\lambda_z^2}  \left( \frac{\sqrt{M}}{\lambda_z} + \frac{\sigma \Lambda_{\alpha}}{\sqrt{L}} \right),
	\label{eq:rec.var.est.3} 
\end{align}
where
\begin{align}
	\tQ_{\alpha \eta}
	\coloneqq
	A_{\eta}   \left\{  Q_{\alpha}
	+ 
	\frac{\sigma \Lambda_{\alpha}}{(\lambda_z - \eta_z) \sqrt{L}}  \left[ \sqrt{L} + \sigma \sqrt{L} \Lambda_{\beta}  + \eta_w \left( \frac{\sqrt{T_0}}{\lambda_y} + \frac{\sigma \Lambda_{\beta}}{\sqrt{N_1}} \right) \right] \right\}. 
\end{align}

\noindent {\em Putting everything together}: 
On $\Gc_\noise \cap \Gc_\PCR$, \eqref{eq:rec.var.est.1}, \eqref{eq:rec.var.est.2}, and \eqref{eq:rec.var.est.3} imply
\begin{align}
	\left| \hUpsilon_{\eta} - \Upsilon_{\eta} \right|
	&\lesssim 
	A^2_{\eta}  \frac{\sigma \Lambda_{\alpha}}{\sqrt{L}} \left(1 + \frac{\sigma \Lambda_{\alpha}}{\sqrt{L}} \right)
	 \left\{ \frac{\sqrt{M}}{\lambda_z} + \frac{\sigma \Lambda_{\alpha}}{\sqrt{L}} \left(1 + \frac{\sigma \Lambda_{\alpha}}{\sqrt{L}} \right) \right\}
	 \left( \frac{\sqrt{T_0}}{\lambda_y} + \frac{\sigma \Lambda_{\beta}}{\sqrt{N_1}} \right)^2
	\\
	&\quad 
	+ A^2_{\eta}  \frac{\sigma M \Lambda_{\beta}}{\lambda^2_z \sqrt{N_1}} 
	 \left( \frac{\sqrt{T_0}}{\lambda_y} + \frac{\sigma \Lambda_{\beta}}{\sqrt{N_1}} \right)
	+ A^2_{\eta}  \frac{\sigma \sqrt{L} \Lambda_{\alpha}}{\lambda_z^2} \left( \frac{\sqrt{M}}{\lambda_z} + \frac{\sigma \Lambda_{\alpha}}{\sqrt{L}} \right)
	\\
	&\quad
	+ A_{\eta} Q_{\beta} \left( \| \boldeta \|_1 \frac{\sqrt{N_1}}{\lambda_y} + A_{\eta} Q_{\beta} \right)
	\left\{1 + \left( \frac{\sqrt{T_0}}{\lambda_y} + \frac{\sigma \Lambda_{\beta}}{\sqrt{N_1}} \right)^2 \right\}
	+ \| \boldeta \|_1^2 \frac{\sigma \sqrt{N_1} \Lambda_{\beta}}{\lambda_y^2} \left( \frac{\sqrt{T_0}}{\lambda_y} + \frac{\sigma \Lambda_{\beta}}{\sqrt{N_1}} \right)
	\\
	&\quad 
	+ \tQ_{\alpha \eta} \left( A_{\eta} \frac{\sqrt{L}}{\lambda_z} + \tQ_{\alpha \eta} \right)
	 \left\{ 1 + \left(\frac{\sqrt{M}}{\lambda_z} + \frac{\sigma \Lambda_{\alpha}}{\sqrt{L}} \right)^2 \right\}. 
	\label{eq:rec.var.est.4}
\end{align}
Simplifying \eqref{eq:rec.var.est.4} via \eqref{eq:noise.eta} and \eqref{eq:lmda.lb} gives the desired inequality.  
Moreover, by \eqref{eq:rec.event.pcr} and \eqref{eq:rec.event.noise}, we take several union bounds to conclude $\Pb(\Gc_\noise^c \cup \Gc_\PCR^c \mid \Ec_{\eta}) \lesssim \rho$.
The desired result is obtained by following the arguments in the proof of Proposition~\ref{prop:var.asymp} verbatim with $(\hUpsilon_{\eta}, \Upsilon_{\eta}, \hVc^2_{\eta}, \Vc^2_{\eta})$ in place of $(\hUpsilon, \Upsilon, \hVc^2, \Vc^2)$ and leveraging Lemma~\ref{lemma:rec.normality}. 
\end{proof}

\end{document}